\newcommand\good{tame}
\newcommand\bad{wild}
\newcommand\gnp{G(n,p)}
\newcommand\gnm{G(n,m)}
\newcommand\gnd{G(n,d/n)}
\numberwithin{equation}{section}
\newcommand\bemph[1]{{\bf\em #1}}
\def\vec#1{\mathchoice{\mbox{\boldmath$\displaystyle#1$}}
{\mbox{\boldmath$\textstyle#1$}}
{\mbox{\boldmath$\scriptstyle#1$}}
{\mbox{\boldmath$\scriptscriptstyle#1$}}}
\newcommand{\Birk}{\mathcal D} 
\newcommand{\Zkc}{Z_{k}}
\newcommand{\Zkb}{Z_{k,\mathrm{bal}}}
\newcommand{\Zrb}{Z_{\rho,\mathrm{bal}}}
\newcommand{\Zrg}{Z_{\rho,\mathrm{\good}}}
\newcommand{\Zkg}{Z_{k,\mathrm{\good}}}
\newcommand{\Dg}{\mathcal D_{\mathrm{\good}}}
\newcommand{\Dgs}{\mathcal D_{s,\mathrm{\good}}}
\newcommand{\rhos}{\rho_{\mathrm{stable}}}
\newcommand{\rhoss}{\bar\rho_{s\mathrm{-stable}}}
\newcommand{\dc}{d_{k,\mathrm{cond}}}
\newcommand{\dRS}{\dc}
\newcommand{\dcond}{d_{k,\mathrm{cond}}}
\newcommand{\dAN}{d_{k,\mathrm{AN}}}
\newcommand{\dk}{d_{k-\mathrm{col}}}
\newcommand{\dfirst}{d_{k,\mathrm{first}}}
\newcommand\MPCPS{Mathematical Proceedings of the Cambridge Philosophical Society}
\newcommand\IPL{Information Processing Letters}
\newcommand\COMB{Combinatorica}
\DeclareMathOperator{\EX}{\mathbb E}
\DeclareMathOperator{\pr}{P}
\newtheorem{definition}{Definition}[section]
\newtheorem{claim}[definition]{Claim}
\newtheorem{theorem}[definition]{Theorem}
\newtheorem{lemma}[definition]{Lemma}
\newtheorem{proposition}[definition]{Proposition}
\newtheorem{corollary}[definition]{Corollary}
\newtheorem{fact}[definition]{Fact}
\newcommand\sign{\mathrm{sign}}
\newcommand\id{\mathrm{id}}
\newcommand\cA{\mathcal{A}}
\newcommand\cB{\mathcal{B}}
\newcommand\cC{\mathcal{C}}
\newcommand\cD{\mathcal{D}}
\newcommand\cF{\mathcal{F}}
\newcommand\cG{\mathcal{G}}
\newcommand\cH{\mathcal{H}}
\newcommand\cS{\mathcal{S}}
\newcommand\cI{\mathcal{I}}
\newcommand\cL{\mathcal{L}}
\newcommand\cX{\mathcal{X}}
\def\cR{{\mathcal R}}
\def\cC{{\mathcal C}}
\newcommand\eul{\mathrm{e}}
\newcommand\eps{\varepsilon}
\newcommand\Erw{\mathrm{E}}
\newcommand{\vecone}{\vec{1}}
\newcommand{\Bin}{{\rm Bin}}
\newcommand{\bink}[2] {{{#1}\choose {#2}}}
\newcommand\ra{\rightarrow}
\newcommand\bc[1]{\left({#1}\right)}
\newcommand\cbc[1]{\left\{{#1}\right\}}
\newcommand\bcfr[2]{\bc{\frac{#1}{#2}}}
\newcommand\brk[1]{\left\lbrack{#1}\right\rbrack}
\newcommand\norm[1]{\left\|{#1}\right\|}
\newcommand\abs[1]{\left|{#1}\right|}
\newcommand\RR{\mathbb{R}}
\newcommand\RRpos{\RR_{\geq0}}
\newcommand{\Whp}{W.h.p.}
\newcommand{\whp}{w.h.p.}
\newcommand{\Erdos}{Erd\H{o}s}
\newcommand{\Renyi}{R\'enyi}
\newcommand{\Lovasz}{Lov\'asz}
\newcommand{\Bollobas}{Bollob\'as}
\newcommand{\Luczak}{\L uczak}
\newcommand\Lem{Lemma}
\newcommand\Prop{Proposition}
\newcommand\Thm{Theorem}
\newcommand\Cor{Corollary}
\newcommand\Sec{Section}
\begin{document}

\title{Chasing the $k$-colorability threshold$^*$}

\author[Coja-Oghlan]{Amin Coja-Oghlan$^\dagger$ and Dan Vilenchik}
\thanks{$^*$An extended abstract version of this work appeared in the Proceedings of the 54th
	IEEE Symposium on Foundations of Computer Science (`FOCS'), 2013.}
\thanks{$^\dagger$The research leading to these results has received funding from the European Research Council under the European Union's Seventh Framework
			Programme (FP/2007-2013) / ERC Grant Agreement n.\ 278857--PTCC}

\address{Amin Coja-Oghlan, {\tt acoghlan@math.uni-frankfurt.de}, Goethe University, Mathematics Institute, 10 Robert Mayer St, Frankfurt 60325, Germany.}

\address{Dan Vilenchik, {\tt dan.vilenchik@weizmann.ac.il}, Facutly of Mathematics \&\ Computer Science, The Weizamnn Institute,  Rehovot, Israel.}

\maketitle

\begin{abstract}
\noindent
For a fixed number $d>0$ and $n$ large let $G(n,d/n)$ be the random graph on $n$ vertices in which any two vertices are connected with probability $d/n$ independently.
The problem of determining the chromatic number of $\gnd$ 
goes back to the famous 1960 article 
of \Erdos\ and \Renyi\ that started the theory of random graphs [Magayar Tud.\ Akad.\ Mat.\ Kutato Int.\ Kozl.\ {\bf 5} (1960) 17--61].
Progress culminated in the landmark paper of Achlioptas and Naor [Ann.\ Math.\ {\bf 162} (2005) 1333--1349],
in which they calculate the chromatic number precisely
for all $d$ in a set $S\subset(0,\infty)$ of asymptotic density $\lim_{z\ra\infty}\frac1z\int_0^z\vecone_S=\frac12$,
and up to an additive error of one for the remaining $d$.
Here we obtain a near-complete answer by
determining the chromatic number of $G(n,d/n)$ for
all $d$ in a set of asymptotic density $1$.

\noindent\medskip
\emph{Mathematics Subject Classification:} 05C80 (primary), 05C15 (secondary)
\end{abstract}

\section{Introduction}

\noindent
Let $\gnp$ denote the random graph on the vertex set $V=\cbc{1,\ldots,n}$ in which any two vertices are connected with probability $p\in[0,1]$ independently,
	known as the {\em \Erdos-\Renyi\ model}.%
		\footnote{Actually this model was  introduced by Gilbert~\cite{Gilbert}.
		In their seminal paper \Erdos\ and \Renyi\ consider a random graph $G(n,m)$ in which the number of edges is a fixed integer $m$~\cite{ER}.
		However, with $p=m/\bink n2$ both models are essentially equivalent~\cite{JLR}.}
We write $p=d/n$ 
and refer to $d$ as the {\em average degree}. 
As per common practice, we say that $\gnd$ has a property {\em with high probability} (`\whp') if the probability that the property holds converges to $1$ as $n\ra\infty$.
We recall that a graph $G$ is \emph{$k$-colorable} if it is possible to assign each vertex one of the colors $\cbc{1,\ldots,k}$
such that no edge connects two vertices of the same color.
Moreover, the \emph{chromatic number} $\chi(G)$ of a graph $G$ is the least integer $k$ such that $G$ is $k$-colorable.
Unless specified otherwise, we always consider $d,k$ fixed as $n\ra\infty$.

\subsection{Background and main results.}
The theory of random graphs was born with the famous 1960 article by \Erdos\ and \Renyi~\cite{ER}, and has grown since into a substantial area of research with hundreds, perhaps thousands of contributions dealing with the $\gnp$ model alone.
In their paper,  \Erdos\ and \Renyi~showed that the random graph $\gnp$ undergoes a percolation {\em phase transition} at  $p=1/n$,
	and phase transitions have been the guiding theme of the theory ever since.
In addition, \Erdos\ and \Renyi\ set the agenda for future research by posing a number of intriguing questions, all of which have been answered over the years
except for one: for a given $d>0$, what is the typical chromatic number of $G(n,d/n)$?

It is widely conjectured that for any number $k\geq3$ of colors there occurs a phase transition for $k$-colorability.
That is, there exists a number $\dk$ such that $G(n,d/n)$ is $k$-colorable \whp\ if $d<\dk$, whereas the random graph fails to be $k$-colorable \whp\ if $d>\dk$.
If true, this would imply that the likely value of the chromatic number, viewed as a function of $d$, is a step function
	that takes the value $k$ on the interval $d_{(k-1)-\mathrm{col}}<d<\dk$.

Towards this conjecture,  Achlioptas and Friedgut~\cite{AchFried} proved that for any fixed $k\geq3$
there exists a \emph{sharp threshold sequence} $\dk(n)$.
This sequence is such that for any $\eps>0$,
\begin{itemize}
\item if $p<(1-\eps)\dk(n)/n$, then $\gnp$  is $k$-colorable with probability tending to $1$ as $n\ra\infty$.
\item if $p>(1+\eps)\dk(n)/n$, then $\gnp$ fails to be $k$-colorable with probability tending to $1$ as $n\ra\infty$.
\end{itemize}
Whether the sequence $\dk(n)$ converges to an actual ``uniform'' threshold $\dk$ is a well-known open problem.

Yet~\cite{AchFried} is a pure existence result that does not provide any clue as to the location of $\dk$.
In a landmark paper Achlioptas and Naor~\cite{AchNaor} proved via the ``second moment method'' that 
	\begin{equation}\label{XeqAN}
	\liminf_{n\ra\infty}\dk(n)\geq\dAN=2(k-1)\ln(k-1)=2k\ln k-2\ln k-2+o_k(1).
	\end{equation}
Here and throughout,  $o_k(1)$ denotes a term that tends to zero in the limit of large $k$.
By comparison, a naive application of the union bound shows that
	\begin{equation}\label{XeqFirst}
	\limsup_{n\ra\infty}\dk(n)\leq\dfirst=2k\ln k-\ln k.
	\end{equation}
Recently~\cite{Covers}, a more sophisticated union bound argument was used to prove
	\begin{equation}\label{XeqFirst2}
	\limsup_{n\ra\infty}\dk(n)\leq\dfirst'=2k\ln k-\ln k-1+o_k(1).
	\end{equation}
Thus, the gap between the lower bound (\ref{XeqAN}) and the upper bound~(\ref{XeqFirst2}) on
$\dk(n)$ is about $\ln k+1$, an expression that {\em diverges} as $k$ gets large.
By improving the lower bound,
the following theorem 
reduces this gap to a small absolute constant 
of  $2\ln 2-1+o_k(1)\approx 0.39$. 

\begin{theorem}\label{XXThm_main}\label{Thm_main}
The $k$-colorability threshold satisfies
	\begin{equation}\label{Xeqmain}
	\liminf_{n\ra\infty}\dk(n)\geq\dRS-o_k(1),\quad\mbox{ with }\quad\dRS=2k\ln k-\ln k-2\ln2.
	\end{equation}
\end{theorem}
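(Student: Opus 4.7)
The plan is a refined second moment method applied to a carefully chosen subcount $\Zkg$ of ``tame'' balanced $k$-colorings, following the blueprint of Achlioptas--Naor~\cite{AchNaor} but with an extra combinatorial restriction that kills the ``bad'' secondary maxima of the overlap variational problem. Write $\Zkb$ for the number of proper $k$-colorings $\sigma:V\to[k]$ whose color classes each have size exactly $n/k$; Stirling gives
\begin{equation*}
\tfrac1n\ln\EX[\Zkb]\ra\phi(d,k):=\ln k+\tfrac d2\ln\bc{1-\tfrac1k},
\end{equation*}
which stays positive well past $\dRS$. The goal is to engineer $\Zkg\leq\Zkb$ so that (i)~$\EX[\Zkg]=\EX[\Zkb]\exp(-o(n))$ and (ii)~$\EX[\Zkg^2]=O(\EX[\Zkg]^2)$ for every $d<\dRS-o_k(1)$. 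Paley--Zygmund then yields $\pr[\gnd\mbox{ is $k$-colorable}]\geq\pr[\Zkg>0]=\Omega(1)$, and the Achlioptas--Friedgut sharp threshold~\cite{AchFried} promotes this to \whp.

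The heart of the argument is the second moment. Parameterizing pairs $(\SIGMA,\TAU)$ by the joint type $\rho\in\Birk$ defined by $\rho_{ij}=|\SIGMA^{-1}(i)\cap\TAU^{-1}(j)|/n$, a standard calculation yields
\begin{equation*}
\tfrac1n\ln\EX[\Zkb^2]=\max_{\rho\in\Birk}\brk{H(\rho)+\tfrac d2\ln\bc{1-\tfrac2k+\norm{\rho}_F^2}}+o(1).
\end{equation*}
At the barycenter $\rho^{\star}=k^{-2}\matone$ this equals $2\phi(d,k)$, which is the best one can hope for. But for $d$ above $\dAN$ the bracketed function acquires secondary maxima near permutation-like overlaps (encoding pairs of colorings that agree almost everywhere after a relabeling of the colors), and these secondary peaks dominate, wrecking the vanilla argument.

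The tameness predicate is designed to excise precisely those secondary peaks. Concretely, one declares a balanced coloring $\sigma$ to be \emph{tame} if certain local statistics---roughly, the joint empirical distribution of colors and of Belief Propagation messages along edges---match the prediction of the planted model, in which $\sigma$ is drawn first and the edges are then sampled conditional on being bichromatic under $\sigma$. Tameness is a high-probability event under the planted measure, which by a quiet-planting comparison translates into $\EX[\Zkg]=\EX[\Zkb]\cdot\exp(-o(n))$. Simultaneously, two tame colorings cannot have an overlap $\rho$ near a non-barycentric maximum of the variational function: such $\rho$ would force a macroscopic fraction of vertices to be frozen in a way inconsistent with the planted local statistics, violating tameness for either $\SIGMA$ or $\TAU$. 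Hence the admissible overlaps $\Dg\subseteq\Birk$ are confined to a neighborhood of $\rho^\star$, and the restricted maximum $\max_{\rho\in\Dg}f(\rho)$ equals $f(\rho^\star)$ throughout $d<\dRS-o_k(1)$, giving $\EX[\Zkg^2]=O(\EX[\Zkg]^2)$.

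The main obstacle lies in Step~(ii): quantifying the precise range of $d$ for which the restricted variational problem is attained at $\rho^{\star}$. This demands a Hessian analysis at $\rho^{\star}$ showing that $f$ is strictly concave in every feasible direction within $\Dg$, combined with a careful asymptotic expansion in $1/k$ around each competing stationary point of the unrestricted $f$, ruling out any stationary point in $\Dg$ other than $\rho^\star$. The improvement of $\ln k$ over~(\ref{XeqAN}) comes from removing the overlap cluster associated with ``mostly-frozen'' colorings, and the sharp constant $-2\ln 2$ in $\dRS=2k\ln k-\ln k-2\ln 2$ emerges as the leading-order location of the first non-trivial saddle; pinning this constant down, via matching entropy and energy contributions to sub-leading order, is the technical core of the proof.
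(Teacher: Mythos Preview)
Your high-level architecture is right: second moment on a restricted count $\Zkg$, reduce to an overlap optimization, then invoke the sharp threshold. But several load-bearing details are either wrong or too vague to constitute a proof sketch.

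First, your description of the tameness predicate (``local statistics matching BP messages in the planted model'') is not what actually works here. The paper's tameness is purely combinatorial: a balanced coloring $\sigma$ is tame if (T2) it is \emph{separable}, meaning no other balanced coloring $\tau$ has an overlap entry $\rho_{ij}(\sigma,\tau)\in(0.51,1-\kappa)$, and (T3) its \emph{cluster} $\cC(\sigma)=\{\tau:\rho_{ii}(\sigma,\tau)>0.51\ \forall i\}$ satisfies $|\cC(\sigma)|\le\Erw[\Zkb]$. These are verified in the planted model via a core/expansion argument, not via BP marginals.

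Second, and more seriously, your claim that ``the admissible overlaps $\Dg$ are confined to a neighborhood of $\rho^\star$'' is false, and this is precisely where the difficulty lies. Separability only forbids overlap entries in the interval $(0.51,1-\kappa)$; it does \emph{not} rule out $s$-stable overlaps (those with $s$ entries $\ge 1-\kappa$) for $0\le s<k$. The $k$-stable overlaps are removed not by tameness of $\tau$ but by the cluster-size bound (T3), which caps the contribution of pairs $(\sigma,\tau)$ with $\tau\in\cC(\sigma)$ deterministically by $\Erw[\Zkb]\cdot\Erw[\Zkg]$. What remains is to maximize $f$ over the set $\Dg$ of separable, non-$k$-stable doubly stochastic matrices---a domain consisting of many disconnected pieces near faces of the Birkhoff polytope, far from $\bar\rho$. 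Showing $f(\rho)<f(\bar\rho)$ on each $\Dgs$ for $1\le s<k$ is the bulk of the work and is done via a local-variations argument (redistributing mass within a row), not a Hessian analysis at $\bar\rho$.

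Third, you misidentify the origin of the constant $-2\ln2$. It does not arise from ``the first non-trivial saddle'' of $f$. It comes from the first-moment side: in the planted model the cluster size satisfies $\frac1n\ln|\cC(\sigma)|\le\frac{\ln2}{k}+\tilde O_k(k^{-2})$ w.h.p.\ (roughly $n/k$ vertices are $1$-free, each with two available colors), and condition (T3) requires this to be at most $\frac1n\ln\Erw[\Zkb]=\ln k+\frac d2\ln(1-1/k)$. Solving gives $d\le 2k\ln k-\ln k-2\ln2+o_k(1)=\dRS+o_k(1)$. The second-moment optimization over $\Dg$ then succeeds throughout this range, but the binding constraint is the cluster-size condition, not a competing maximum of $f$.
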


The bounds~(\ref{XeqAN}), (\ref{XeqFirst2}) yield an estimate of the chromatic number of $G(n,d/n)$.
Namely, (\ref{XeqAN}) implies that for $d<\dAN$, the random graph $G(n,d/n)$ is $k$-colorable \whp\
Moreover, (\ref{XeqFirst2}) shows that for $d>d_{k-1,\mathrm{first}}$, $G(n,d/n)$ fails to be $k-1$-colorable \whp\
Consequently, for all $d$ in the interval $(d_{k-1,\mathrm{first}}',\dAN)$ of length about $\ln k$, 
the chromatic number of $G(n,d/n)$ is precisely $k$ \whp\
However, for all $d$ in the subsequent interval $(\dAN,\dfirst')$ of length about $\ln k$, (\ref{XeqAN}), (\ref{XeqFirst2}) only imply that
the chromatic number is either $k$ or $k+1$ \whp\
Thus, (\ref{XeqAN}) and (\ref{XeqFirst2}) yield the typical value of $\chi(G(n,d/n))$ precisely for ``about half'' of all $d$.
Formally,
let us say that a (measurable) set $A\subset\RRpos$ has \bemph{asymptotic density $\alpha$} if
	$\lim_{z\ra\infty}\frac1z\int_{0}^z\vecone_A=\alpha$,
where $\vecone_A$ is the indicator of $A$.
Then the set on which (\ref{XeqAN}), (\ref{XeqFirst2}) determine $\chi(G(n,d/n))$ has asymptotic density $1/2$ \cite[\Thm~2]{AchNaor}.

\Thm~\ref{XXThm_main} enables us to pin the chromatic number down precisely on a set of asymptotic density $1$,
thereby obtaining a near-complete answer to the question of \Erdos\ and \Renyi.
More precisely, (\ref{XeqFirst}) and~(\ref{Xeqmain}) imply

\begin{theorem}\label{XCor_chromatic}
There exists a constant $k_0$ such that the following is true.
Let
	$$\textstyle
	S_k=(2(k-1)\ln(k-1)-\ln(k-1)-0.99,2k\ln k-\ln k-1.38)\quad \mbox{and}\quad S=\bigcup_{k\geq k_0}S_k.
	$$
Set $F(d)=k$ for all $d\in S_k$.
Then $S$ has asymptotic density $1$ and
	$$\lim_{n\ra\infty}\pr[\chi(G(n,d/n))=F(d)]=1\qquad\mbox{for any  $d\in S$}.$$
\end{theorem}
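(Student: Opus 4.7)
The plan is to deduce \Thm~\ref{XCor_chromatic} directly by sandwiching $\chi(\gnd)$ between $k-1$ and $k$ for every $d\in S_k$, combining the new lower bound supplied by \Thm~\ref{XXThm_main} with the previously established upper bound~\eqref{XeqFirst2} applied at $k-1$ colors. Concretely, for $d\in S_k$ I would show that $\gnd$ is $k$-colorable \whp\ but fails to be $(k-1)$-colorable \whp, so $\chi(\gnd)=k=F(d)$ \whp.

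First I would check the right endpoint. The upper endpoint of $S_k$ is $r_k=2k\ln k-\ln k-1.38$, while \Thm~\ref{XXThm_main} gives $\liminf_{n\to\infty}\dk(n)\geq 2k\ln k-\ln k-2\ln 2-o_k(1)$. Since $2\ln 2=1.3862\ldots>1.38$, for all sufficiently large $k$ the lower bound on $\dk(n)$ strictly exceeds $r_k$, so any $d\in S_k$ lies below $\dk(n)$ for $n$ large and $\gnd$ is $k$-colorable \whp. Next I would check the left endpoint. Applying~\eqref{XeqFirst2} with $k$ replaced by $k-1$ yields $\limsup_{n\to\infty}d_{(k-1)\mathrm{-col}}(n)\leq 2(k-1)\ln(k-1)-\ln(k-1)-1+o_k(1)$. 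The lower endpoint $\ell_k=2(k-1)\ln(k-1)-\ln(k-1)-0.99$ of $S_k$ exceeds this upper bound for large $k$ because $1-0.99=0.01>o_k(1)$, so every $d\in S_k$ lies above $d_{(k-1)\mathrm{-col}}(n)$ and $\gnd$ fails to be $(k-1)$-colorable \whp. Choosing $k_0$ large enough that both $o_k(1)$ corrections fall below, say, $0.003$ for all $k\geq k_0$ makes the two inequalities simultaneously valid.

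I would then verify that $S$ has asymptotic density $1$ by a direct length computation. One obtains $|S_k|=r_k-\ell_k=2k\ln k-2(k-1)\ln(k-1)-\ln(k/(k-1))-0.39=2\ln k+O(1)$, while the gap between consecutive intervals is the constant $\ell_{k+1}-r_k=1.38-0.99=0.39$. Hence the Lebesgue measure of $[0,z]\setminus S$ is bounded by a finite initial segment plus $0.39$ times the number of $k$ with $r_k\leq z$; since $r_K\sim 2K\ln K$ this count grows like $z/\log z$, so dividing by $z$ and letting $z\to\infty$ shows $S^c$ has density $0$, i.e., $S$ has asymptotic density $1$.

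I do not anticipate any genuine obstacle: the whole statement is numerical bookkeeping relying on the two slack inequalities $2\ln 2>1.38$ and $0.99<1$ to absorb the $o_k(1)$ errors, together with the observation that gaps of constant length $0.39$ are negligible compared with intervals of length $\sim 2\ln k$. All substantive work lies in proving the sharper lower bound \Thm~\ref{XXThm_main}.
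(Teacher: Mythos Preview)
Your overall strategy—sandwich $\chi(\gnd)$ between $k-1$ and $k$ by invoking \Thm~\ref{XXThm_main} for $k$-colorability and~\eqref{XeqFirst2} for non-$(k-1)$-colorability, then do the density count—is exactly what the paper intends; the paper gives no further detail beyond citing the two bounds. Your left-endpoint check and the density computation are fine.

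The right-endpoint check, however, has the inequality the wrong way round. From $2\ln 2>1.38$ you conclude that the lower bound $2k\ln k-\ln k-2\ln 2-o_k(1)$ strictly exceeds $r_k=2k\ln k-\ln k-1.38$; but $2\ln 2>1.38$ gives $-2\ln 2<-1.38$, so the lower bound is in fact \emph{smaller} than $r_k$. Thus \Thm~\ref{XXThm_main} does not cover the sliver $(\dRS-o_k(1),\,r_k)\subset S_k$, and the argument as written fails there. This is not a defect in your method but a slip in the stated constant: for the sandwich to close one needs the right-endpoint constant to \emph{exceed} $2\ln 2$ (say $1.39$ rather than $1.38$), so that $r_k<\dRS-o_k(1)$ once $k$ is large. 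With that correction your argument goes through verbatim; the paper's own remark that the constant may be replaced by ``any number less than $2\ln 2$'' carries the same sign slip.
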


\noindent
Of course, the constants $0.99$ and $1.38$ in the definition of $S_k$ can be replaced by any numbers
less than one and $2\ln2$, respectively.
Theorem~\ref{XCor_chromatic} also answers a question of Alon and Krivelevich~\cite{AlonKriv}
whether the chromatic number of $G(n,d/n)$ is concentrated on a single integer for most $d$ ``in an appropriately defined sense''.\footnote{A proof that the threshold sequence $\dk(n)$ converges would imply
		a one-point concentration result for the chromatic number outside a countable set of average degrees.
		However, the known result~\cite{AchFried} does not.
		Alon and Krivelevich~\cite{AlonKriv} were concerned also with the case that the average degree $d$ is a growing function of $n$.
		In this paper we deal with $d$ fixed as $n\ra\infty$, the original setting considered by \Erdos\ and \Renyi.}

Independently of the mathematics literature, the random graph coloring problem has been studied in statistical physics,
where it is known as the ``diluted mean-field Potts antiferromagnet at zero temperature''.
In fact, physicists have developed a generic, ingenious but highly non-rigorous formalism called the ``cavity method''
for locating phase transitions in random graphs and other discrete structures~\cite{MM,MPZ}.
The so-called ``replica symmetric'' variant of the cavity method predicts upper and lower bounds on $\dk$~\cite{pnas,vanMourik}, namely
	\begin{equation}\label{eqRS}
	\dRS-o_k(1)\leq \liminf_{n\ra\infty}\dk(n)\leq\limsup_{n\ra\infty}\dk(n)\leq\dfirst.
	\end{equation}
\Thm~\ref{Thm_main} establishes the lower bound rigorously.

Additionally, the cavity method yields predictions on the combinatorial nature of the problem, particularly
on the geometry of the set of $k$-colorings of the random graph.
The proof of \Thm~\ref{Thm_main} is based on a ``physics-enhanced'' second moment argument that exploits this geometrical intuition.
In fact, the physics intuition is one of two key ingredients that enable us to improve  over the approach of Achlioptas and Naor~\cite{AchNaor}.
The second one is a novel approach, based on a local variations argument, to the analytical challenge of 
optimizing a certain (non-convex) function over the Birkhoff polytope. 
Neither of these ideas seem to depend on particular features of the graph coloring problem,
and thus we expect that they will prove vital to tackle a variety of further related problems.

An outline of our physics-enhanced second moment argument follows in \Sec~\ref{sec:outline}.
In addition, in \Sec~\ref{XSec_cond} we will see that the density $\dRS$ in~(\ref{Xeqmain}) matches
the {\em condensation} or {\em Kauzmann phase transition} predicted by physicists.
This implies that the bound obtained in \Thm~\ref{Thm_main} is the best possible one that can be
obtained via a second moment-type argument over a certain class of natural random variables
	(see \Sec~\ref{XSec_cond} for details).

\subsection{Related work}\label{XSec_related}
As witnessed by the notorious ``four color problem'' first posed by 
De Morgan in 1852,
solved controversially by Appel and Haken in 1976~\cite{AppelHaken77}, and re-solved by Robertson, Sanders, Seymour and Thomas~\cite{Robertson},
the graph coloring problem has been a central subject in (discrete) mathematics for well over a century.
Thus, it is unsurprising that the chromatic number problem on $\gnp$ has received a big deal of attention
since it was posed by \Erdos\ and \Renyi.
Indeed, the problem has inspired the development of  techniques that are by now widely used in various areas of mathematics, computer science, physics and other disciplines.

For instance, pioneering the use of martingale tail bounds,
Shamir and Spencer~\cite{ShamirSpencer} proved concentration bounds for the chromatic number of $\gnp$.
Their result was enhanced first by \Luczak~\cite{Luczak} and then
by Alon and Krivelevich~\cite{AlonKriv}, who used the \Lovasz\ Local Lemma to prove that the chromatic number
of $\gnp$ is concentrated on two consecutive integers if $p\ll n^{-1/2}$.
In a breakthrough contribution, \Bollobas~\cite{BBColor}
determined the asymptotics of the chromatic number of dense random graphs (i.e., $\gnp$ with $p>n^{-1/3}$).
This result improved prior work by Matula~\cite{Matula}, whose ``merge-and-exposure'' technique \Luczak\ built upon to obtain a similar
result for sparser random graphs~\cite{LuczakColor}.
However, in the case that $p=d/n$ for a fixed real $d>0$,
the setting originally studied by \Erdos\ and \Renyi,
 \Luczak's formula is far less precise than~(\ref{XeqAN})--(\ref{XeqFirst}).
For a comprehensive literature overview see
~\cite{BB,JLR}.

The work of Achlioptas and Naor~\cite{AchNaor}, which gave best prior result on the chromatic number of $G(n,d/n)$,
is based on the {\em second moment method}.
Its use in the context of phase transitions in random discrete structures was pioneered by Achlioptas and Moore~\cite{nae}
and Frieze and Wormald~\cite{FriezeWormald}.
The techniques of~\cite{AchNaor} have been used to prove several further important results.
For instance, Achlioptas and Moore~\cite{AMoColor} identified three (and for some $d$ just two) consecutive integers on which the
chromatic number of the random $d$-regular is concentrated.
This was reduced to two integers for all fixed 
 of $d$ (and one for about half of all $d$) by adding in the small subgraph conditioning technique~\cite{KPGW}.
Recently, the methods developed in this work have been harnessed to improve this result further still~\cite{Regular}. 
Moreover, Dyer, Frieze and Greenhill~\cite{DFG} extended the second moment argument from~\cite{AchNaor} to the
problem of $k$-coloring $h$-uniform random hypergraphs. 
We expect that our approach can be used to obtain improved results in the hypergraph case.
Similarly, it should be possible to improve results of
Dani, Moore and Olsen~\cite{Dani} on a  ``decorated'' coloring problem.

In several problems, sophisticated applications of the second moment method gave bounds
very close to the predictions made by the physicists' cavity method~\cite{MM}. 
Examples where the physics predictions have (largely) been verified rigorously in this way
include the hypergraph $2$-coloring problem~\cite{KostaNAE,Lenka} and the random $k$-SAT problem~\cite{KostaSAT}.
But thus far a general limitation of the rigorous proof techniques has been that they only apply to {\em binary} problems where there are only two values available for each variable.
By contrast, in random graph coloring each variable (vertex) has $k$ values (colors) to choose from, where $k$ can be arbitrarily large.
As we will see in \Sec~\ref{sec:outline}, the large number of available values complicates the problem dramatically.
In effect, random graph coloring remained the last among the intensely-studied benchmark problems in which there remained
a very substantial gap between the physics predictions and the rigorous results,
a situation rectified by the present paper.
Thus, we view this paper as an important step towards the long-term goal of providing a mathematical foundation for the cavity method.

In computer science,
 the \emph{algorithmic} problem of finding a $k$-coloring of $\gnp$ in polynomial time is a long-standing
challenge, mentioned prominently in several influential survey articles (e.g.,~\cite{FMcD,Kriv2}).
Simple greedy algorithms find a $k$-coloring for $d\leq k\ln k\sim\frac12\dk$ \whp~\cite{AchMolloy,GMcD,KSud},
about half the $k$-colorability threshold.
However, no efficient algorithm is known to beat the, in the words of Shamir and Spencer~\cite{ShamirSpencer}, ``most vexing'' factor of two.
In fact, it has been suggested 
changes in the geometry of the set of $k$-colorings
that occur at $d\sim\frac12\dk$
cause the demise of local-search based algorithms~\cite{Barriers,Molloy}.
Interestingly, some of the very phenomena that seem to make the algorithmic problem of coloring $\gnp$ difficult
will turn out to be extremely helpful in the construction of our random variable and thus in the proof of \Thm~\ref{Thm_main}.

\subsection{Notation and preliminaries.}
In addition to $\gnp$, we consider the $\gnm$ model,
	which is a random graph with vertex set $V=\cbc{1,\ldots,n}$ and exactly $m$ edges, chosen uniformly at random amongst all such graphs.
Working with $\gnm$ facilitates the second moment argument because the total number of edges is a deterministic quantity.
Nonetheless, \Lem~\ref{Lemma_Ehud} below shows that any results for $\gnm$ with $m=\lceil dn/2\rceil$ extend to $G(n,d/n)$.
{\bf\em Thus, throughout the paper we always set $m=\lceil dn/2\rceil$.}

Since our goal is to establish a statement that holds with probability tending to $1$ as $n\ra\infty$,
we are always going to assume tacitly that the number $n$ of vertices is sufficiently large for the various estimates to hold.
Similarly, at the expense of the error term $o_k(1)$ in \Thm~\ref{Thm_main} we will tacitly assume that $k\geq k_0$ for a large enough constant $k_0$.

We use the standard $O$-notation to refer to the limit $n\ra\infty$.
Thus, $f(n)=O(g(n))$ means that there exist $C>0$, $n_0>0$ such that for all $n>n_0$ we have $|f(n)|\leq C\cdot|g(n)|$.
In addition, we use the standard symbols $o(\cdot),\Omega(\cdot),\Theta(\cdot)$.
In particular, $o(1)$ stands for a term that tends to $0$ as $n\ra\infty$.
Furthermore, we write $f(n)\sim g(n)$ if $\lim_{n\ra\infty}f(n)/g(n)=1$.

Additionally, we use asymptotic notation in the limit of large $k$.
To make this explicit, we insert $k$ as an index.
Thus, $f(k)=O_k(g(k))$ means that there exist $C>0$, $k_0>0$ such that for all $k>k_0$ we have $|f(k)|\leq C\cdot|g(k)|$.
Further, we write $f(k)=\tilde O_k(g(k))$ to indicate that there exist $C>0$, $k_0>0$ such that
for all $k>k_0$ we have $|f(k)|\leq (\ln k)^C\cdot|g(k)|$.

If $G$ is a graph $v$ is a vertex of $G$, then we denote by $N_G(v)$ the neighborhood of $v$ in $G$, i.e., the set of all vertices $w$ that are connected
to $v$ by an edge of $G$.
Where the graph $G$ is apparent from the context we just write $N(v)$.
If $s\geq1$ is an integer, we write $\brk s$ for the set $\cbc{1,2,\ldots,s}$.
Moreover, throughout the paper we use the conventions that $0\ln 0=0$ and (consistently) that
$0\ln\frac 00=0$.

\section{Outline}\label{sec:outline}

\noindent
In this section we first discuss the second moment method in general and the argument pursued in~\cite{AchNaor} specifically and
investigate why it breaks down beyond the density $\dAN$ from~(\ref{XeqAN}).
Then, we see how the physics intuition can be harnessed to overcome this barrier.
Finally, we comment on the condensation phase transition.

\subsection{The second moment method.}\label{sec:TheSecondMomentMethod}
Suppose that
$Z=Z(\gnm)\geq0$ is a random variable such that
$Z(G)>0$ implies that $G$ is $k$-colorable.
Moreover, suppose that there is a number $C=C(d,k)>0$ that may depend on the average degree $d$ and
the number of colors $k$ but not on $n$ such that
	\begin{equation}\label{Xeqsmm}
	0<\Erw\brk{Z^2}\leq C\cdot\Erw\brk{Z}^2.
	\end{equation}
Then the \emph{Paley-Zygmund inequality}
	\begin{equation}\label{XeqPZ}
	\pr\brk{Z>0}\geq\frac{\Erw\brk{Z}^2}{\Erw\brk{Z^2}}
	\end{equation}
implies that
	$$\liminf_{n\ra\infty}\pr\brk{\gnm\mbox{ is $k$-colorable}}\geq\liminf_{n\ra\infty}\pr\brk{Z>0}\geq(4C)^{-1}>0.$$
This inequality yields a lower bound on the $k$-colorability threshold.

\begin{lemma}[\cite{AchFried}]\label{Lemma_Ehud}
If $d>0$ is such that 
	$\liminf_{n\ra\infty}\pr\brk{\gnm\mbox{ is $k$-colorable}}>0$, then
 $\liminf_{n\ra\infty}\dk(n)\geq d.$
\end{lemma}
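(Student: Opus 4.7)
The plan is to transfer the positive-probability $k$-colorability of $\gnm$ (with $m=\lceil dn/2\rceil$) to the binomial model $\gnd$, and then rule out $\liminf_{n\ra\infty}\dk(n)<d$ using the Achlioptas--Friedgut sharp-threshold sequence.

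First I would prove that $\liminf_{n\ra\infty}\pr\brk{\gnd\text{ is $k$-colorable}}>0$. The key fact is that $k$-colorability is monotone decreasing in edges: every subgraph of a $k$-colorable graph is $k$-colorable. Under the standard coupling of $G(n,m')$ and $G(n,m)$ as nested uniformly-random edge sets, this yields $\pr\brk{G(n,m')\text{ is $k$-col}}\geq\pr\brk{\gnm\text{ is $k$-col}}$ for every $m'\leq m$. Conditioning $\gnd$ on its number of edges and using that $\gnd$ given $|E|=m'$ is distributed as $G(n,m')$, one obtains
\[
\pr\brk{\gnd\text{ is $k$-col}}\;\geq\;\sum_{m'\leq m}\pr\brk{|E(\gnd)|=m'}\cdot\pr\brk{\gnm\text{ is $k$-col}}\;=\;\pr\brk{|E(\gnd)|\leq m}\cdot\pr\brk{\gnm\text{ is $k$-col}}.
\]
Because $|E(\gnd)|\sim\Bin(\bink{n}{2},d/n)$ has mean $d(n-1)/2$ (within $O(1)$ of $m$) and standard deviation $\Theta(\sqrt n)$, the central limit theorem gives $\pr\brk{|E(\gnd)|\leq m}\to 1/2$. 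Combined with the hypothesis, this delivers the desired positive $\liminf$ for $\gnd$.

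Next I would conclude by contradiction. Suppose $\liminf_{n\ra\infty}\dk(n)<d$; then along some subsequence $(n_i)$ one has $\dk(n_i)\leq d'$ for a fixed $d'<d$. Choose $\eps>0$ so small that $(1+\eps)d'<d$, which gives $d/n_i>(1+\eps)\dk(n_i)/n_i$ for all large $i$. The defining property of the Achlioptas--Friedgut sharp threshold sequence then forces $\pr\brk{G(n_i,d/n_i)\text{ is $k$-col}}\to 0$, contradicting the positive $\liminf$ established in the previous paragraph. Hence $\liminf_{n\ra\infty}\dk(n)\geq d$.

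There is no serious obstacle: the whole argument is a short deduction from monotonicity, a CLT estimate, and the Achlioptas--Friedgut theorem used as a black box. The only subtlety is to make sure that $\pr\brk{|E(\gnd)|\leq m}$ is bounded away from zero rather than merely nonzero, which is immediate since $m$ lies within $O(1)$ of the mean while the standard deviation grows like $\sqrt n$.
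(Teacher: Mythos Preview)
Your argument is correct and is the standard deduction. Note, however, that the paper does not give its own proof of this lemma: it is quoted as a known result from Achlioptas--Friedgut~\cite{AchFried}, so there is no ``paper's proof'' to compare against. What you have written is precisely the routine justification one would supply---transfer from $\gnm$ to $\gnd$ via monotonicity of $k$-colorability and conditioning on the edge count, then contradict the sharp-threshold property along a subsequence where $\dk(n)$ dips below $d$. One tiny remark: in the contradiction step you implicitly use that the Achlioptas--Friedgut conclusion ``probability tends to $0$'' is valid along subsequences; this is fine because the sharp-threshold statement holds for any sequence $p_n>(1+\eps)\dk(n)/n$, and one can simply pad the subsequence to a full sequence (e.g.\ with $p_n=1$ on the omitted indices) without affecting the limit.
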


\noindent
Thus, in order to obtain a lower bound on $\dk$, we
need to define an appropriate random variable $Z$ and verify~(\ref{Xeqsmm}).
Both of these steps turn out to be non-trivial.

\subsection{Balanced colorings and the Birkhoff polytope.}
The most obvious choice of random variable seems to be the total number $\Zkc$ of $k$-colorings of $\gnm$.
But to simplify the calculations, we
confine ourselves to a particular  type of colorings.
Namely, a map $\sigma:\brk n\ra\brk k$ is \bemph{balanced} if $||\sigma^{-1}(i)|-\frac nk|\leq\sqrt n$ for $i=1,\ldots,k$.
Let $\cB=\cB_{n,k}$ denote the set of all balanced maps.
Moreover, let $\Zkb$ be the number of balanced $k$-colorings of $\gnm$.
This is the random variable that Achlioptas and Naor~\cite{AchNaor} work with.
As it happens, (\ref{Xeqsmm}) does not hold for either $\Zkc$ or $\Zkb$ in the entire range $0<d<\dc$.
We need to understand why.

To get started, we compute the first moment.
By Stirling's formula the number of balanced maps is $\abs\cB=\Theta(k^{n})$.
Furthermore, for $\sigma$ to be a $k$-coloring, the random graph $\gnm$ must not contain any of the
	$$\cF(\sigma)=\sum_{i=1}^k\bink{|\sigma^{-1}(i)|}{2}$$
``forbidden'' edges that join two vertices with the same color under $\sigma$.
If $\sigma$ is balanced, we easily check that $\cF(\sigma)=(1-1/k)\bink n2+O(n)$.
Thus, letting $N=\bink n2$ and using Stirling's formula, we find that the probability that $\sigma$ is a $k$-coloring of $\gnm$ comes to
	$$\bink{N-\cF(\sigma)}m\big/\bink Nm=\Theta((1-1/k)^m).$$
Hence, by the linearity of expectation,
	\begin{eqnarray}\label{XeqFirstMoment}
	\Erw\brk{\Zkb}&=&\Theta(k^n(1-1/k)^{dn/2}).
	\end{eqnarray}

Working out the second moment is not quite so easy.
Since $\Erw[\Zkb^2]$ is the expected number of \emph{pairs} of balanced $k$-colorings, we need to compute the probability that
$\sigma,\tau\in\cB$ \emph{simultaneously} happen to be $k$-colorings of $\gnm$.
Of course, this probability depends on how ``similar'' $\sigma,\tau$ are.
To quantify this, we define the
$k\times k$ \bemph{overlap matrix} $\rho(\sigma,\tau)$ whose entries
\begin{equation}\label{eq:overlap_matrix}
	\rho_{ij}(\sigma,\tau)=\frac kn\cdot |\sigma^{-1}(i)\cap\tau^{-1}(j)|\qquad(i,j=1,\ldots,k)
\end{equation}
represent the proportion of vertices with color $i$ under $\sigma$ and color $j$ under $\tau$.

While in {\em binary} problems the relevant overlap parameter is just a $1$-dimensional
	(e.g.,  in random $k$-SAT, the Hamming distance of two truth assignments),
here the high-dimensional overlap matrix is required.
The need for this high-dimensional overlap parameter is what makes the $k$-colorability problem so difficult.

The upshot is that $\rho(\sigma,\tau)$ contains all the information  necessary to determine the probability that
both $\sigma,\tau$ are $k$-colorings.
In fact, let
	$\Zrb$ be the
number of pairs of balanced $k$-colorings with overlap $\rho$, and
let $\cR$ denote the set of all possible overlap matrices of maps $\sigma,\tau\in\cB$.
For a $k\times k$ matrix $\rho$ we denote the Frobenius norm by $$\norm\rho_2=\bigg(\sum_{i,j=1}^k\rho_{ij}^2\bigg)^{1/2}.$$

\begin{fact}[\cite{AchNaor}]\label{Lemma_frho}
Uniformly for $\rho\in\cR$ we have
	\begin{eqnarray}\label{Xeqffunction}
	\Erw\brk{\Zrb}&=&O(n^{(1-k^2)/2})\cdot\exp\brk{n\cdot f(\rho)},\qquad\mbox{ where}\\
	f(\rho)=f_{d,k}(\rho)&=&
	\ln k-\frac1k\bigg[\sum_{i,j=1}^k\rho_{ij}\ln\rho_{ij}\bigg]
	 	+\frac d2\ln\bigg[1-\frac2k+\frac1{k^2}\norm\rho_2^2
			\bigg].\nonumber
	\end{eqnarray}
\end{fact}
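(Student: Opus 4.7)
The plan is to factor
$$\Erw\brk{\Zrb} = \abs{\cbc{(\sigma,\tau)\in\cB\times\cB:\rho(\sigma,\tau)=\rho}} \cdot \pr\brk{\sigma\text{ and }\tau\text{ are both }k\text{-colorings of }\gnm}$$
via linearity of expectation: the probability on the right depends only on $\rho$ and not on the particular pair $(\sigma,\tau)$ achieving it, so the factorization is exact. It then suffices to evaluate each factor asymptotically with Stirling's formula.

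For the combinatorial count I would set $n_{ij} := \rho_{ij}n/k = |\sigma^{-1}(i)\cap\tau^{-1}(j)|$ and recognize the count as the multinomial coefficient $n!/\prod_{ij}n_{ij}!$; balancedness of $\sigma$ and $\tau$ is automatic from the row and column sum constraints defining $\cR$. The identity $\sum_{ij}\rho_{ij}=k$ (forced by $\sum_{ij}n_{ij}=n$) combined with Stirling yields
$$\binom{n}{(n_{ij})_{i,j=1}^k} = \Theta(n^{(1-k^2)/2})\cdot\exp\brk{n\ln k - \frac n k \sum_{ij}\rho_{ij}\ln\rho_{ij}},$$
which produces the announced polynomial prefactor and the first two summands of $nf(\rho)$.

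For the edge probability I would call an edge $\{u,v\}$ \emph{consistent} if $\sigma(u)\ne\sigma(v)$ and $\tau(u)\ne\tau(v)$. Grouping vertices by joint color class $(\sigma(v),\tau(v))$ and applying inclusion--exclusion on the events $\{\sigma(u)=\sigma(v)\}$ and $\{\tau(u)=\tau(v)\}$ gives the number of consistent edges as
$$N_\rho = \frac 1 2\brk{n^2 - \sum_i(n_i^\sigma)^2 - \sum_j(n_j^\tau)^2 + \sum_{ij}n_{ij}^2} + O(n) = \bink n 2\bc{1 - \frac 2 k + \frac{\norm\rho_2^2}{k^2}} + O(n),$$
where the second equality uses $n_i^\sigma = n/k + O(\sqrt n)$ (balancedness) together with the cancellation $\sum_i(n_i^\sigma - n/k)=0$. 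Since $\gnm$ picks $m$ edges uniformly from the $N := \bink n 2$ potential slots, the target probability equals $\binom{N_\rho}{m}/\binom{N}{m} = \Theta(1)\cdot(1-2/k+\norm\rho_2^2/k^2)^m$, and substituting $2m=dn$ yields the third summand of $nf(\rho)$. Multiplying the two factors completes the proof.

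The main technical obstacle is making sure that the $O(\sqrt n)$ slack from balancedness does not explode once amplified by the exponent $m=\Theta(n)$ in the edge-probability factor. A careless bookkeeping would give an $O(n^{3/2})$ error in $\sum_i(n_i^\sigma)^2$, leading to an $\exp(O(\sqrt n))$ multiplicative fluctuation — far too crude to match the prefactor $\Theta(n^{(1-k^2)/2})$. The rescue is the exact identity $\sum_i(n_i^\sigma - n/k)=0$ (both sides equal $n-n$), which kills the dangerous linear term and leaves only an $O(n)$ error, contributing an $\exp(O(1))$ factor that is absorbed harmlessly into the implicit constants.
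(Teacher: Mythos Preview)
Your proof is correct and follows essentially the same two-step approach as the paper: count pairs with a given overlap via the multinomial coefficient and Stirling, then compute the joint-coloring probability via inclusion--exclusion on forbidden edges and the ratio $\binom{N_\rho}{m}/\binom{N}{m}$. The only cosmetic difference is that the paper disposes of the balancedness slack by the one-line convexity bound $\sum_i\binom{n_i^\sigma}{2}\geq k\binom{n/k}{2}$ (enough for the stated upper bound), whereas you track the cancellation $\sum_i(n_i^\sigma-n/k)=0$ explicitly to get a two-sided $O(n)$ error---slightly sharper bookkeeping than the $O$-statement actually requires.
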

\begin{proof}
Since the function $f$ turns out to be the key object in this paper, we include the simple proof
to explain where it comes from combinatorially.
By Stirling's formula, the total number of $\sigma,\tau\in\cB$ with overlap $\rho$ equals
	\begin{equation}\label{eqLemma_frho1}
	\bink{n}{\rho_{11}\frac nk,\ldots,\rho_{kk}\frac nk}=O(n^{(1-k^2)/2})\cdot\exp\brk{-\sum_{i,j=1}^k n\cdot\frac{\rho_{ij}} k\ln\frac{\rho_{ij}}k}.
	\end{equation}
Now, suppose that $\sigma,\tau$ have overlap $\rho$.
By inclusion/exclusion, the number of ``forbidden'' edges  joining two vertices with the same color under either $\sigma$ or $\tau$ equals
	\begin{eqnarray*}\label{eqLemma_frho2}
	\cF(\sigma,\tau)&=&\sum_{i=1}^k\bink{\sum_{j}\rho_{ij}\frac nk}2+\sum_{j=1}^k\bink{\sum_{i}\rho_{ij}\frac nk}2-\sum_{i,j=1}^k\bink{\rho_{ij}\frac nk}2
		\geq2k\bink{n/k}2-\sum_{i,j=1}^k\bink{\rho_{ij}\frac nk}2.
	\end{eqnarray*}
Let $N=\bink n2$. Then Stirling's formula yields
	\begin{eqnarray}
	\pr\brk{\mbox{$\sigma,\tau$ are $k$-colorings of $\gnm$}}&=&\frac{\bink{N-\cF(\sigma,\tau)}{m}}{\bink{N}{m}}
		=O\bc{1}\cdot\exp\brk{m\bc{1-\frac 2k+\sum_{i,j=1}^k\bcfr{\rho_{ij}}k^2}}.
		\label{eqLemma_frho3}
	\end{eqnarray}
The assertion follows from~(\ref{eqLemma_frho1}), (\ref{eqLemma_frho3}) and the linearity of expectation.
\end{proof}

The bound~(\ref{Xeqffunction}) is essentially tight as similar calculations show that
	\begin{equation}\label{eqsecondMomentBasis1}
	\Erw\brk{\Zrb}=\exp(n\cdot f(\rho)+o(n)).
	\end{equation}
Moreover, by the linearity of expectation we can express the second moment as
	\begin{equation}\label{eqsecondMomentBasis}
	\Erw[\Zkb^2]=\sum_{\rho\in\cR}\Erw[\Zrb].
	\end{equation}
As the total number of summands is $\abs\cR\leq n^{k^2}$, we obtain from~(\ref{eqsecondMomentBasis1}) and~(\ref{eqsecondMomentBasis}) that
	\begin{equation}\label{eqsecondMomentBasis2}
	\frac1n\ln\Erw[\Zkb^2]\sim\max_{\rho\in\cR}\frac1n\ln\Erw[\Zrb]\sim\max_{\rho\in\cR} f(\rho).
	\end{equation}
Further, because we work with balanced colorings, the row and column sums of any $\rho\in\cR$ are $1+O(n^{-\frac12})$.
Thus, let $\Birk$ be the set of all doubly-stochastic $k\times k$ matrices, the \bemph{Birkhoff polytope}.
Together with the continuity of $f$ and the observation that $\cR\cap\cD$ becomes a dense subset of $\cD$ as $n\ra\infty$, (\ref{eqsecondMomentBasis2}) implies that
	\begin{equation}\label{XeqSecondLogscale}
	\frac1n\ln\Erw[\Zkb^2]\sim\max_{\rho\in\Birk} f(\rho).
	\end{equation}

In summary, following~\cite{AchNaor}, we have transformed the calculation of the second moment into the problem of optimizing  $f$ over the Birkhoff polytope $\Birk$.
Let $\bar\rho$ be the matrix  with all entries equal to $\frac1k$, the barycenter of $\Birk$. 
A  glimpse at 
	(\ref{XeqFirstMoment}) reveals that
$f(\bar\rho)\sim\frac2n\ln\Erw\brk{\Zkb}$ corresponds to the square of the first moment.
Therefore, a \emph{necessary} condition for the success of the second moment method 
is that the maximum~(\ref{XeqSecondLogscale}) is attained at~$\bar\rho$.
Indeed, if $f(\rho)>f(\bar\rho)$ for some $\rho\in\Birk$, then $\Erw[\Zkb^2]$ exceeds $\Erw[\Zkb]^2$
by an \emph{exponential} factor $\exp(\Omega(n))$. 
It is not difficult to show that this necessary condition is also sufficient.
Combinatorially, the condition that $\bar\rho$ is the maximizer of $f$ indicates that pairs
$\sigma,\tau$ that, judging by their overlap, look completely uncorrelated make up the lion's share of  $\Erw[\Zkb^2]$. 

\subsection{The singly-stochastic bound.}
Yet solving the optimization problem (\ref{XeqSecondLogscale}) proves seriously difficult.
Ach\-li\-optas and Naor resort to a relaxation: with $\cS\supset\cD$ the set of all $k\times k$ \emph{singly} stochastic matrices, they study
	\begin{equation}\label{eqSinglyStoch}
	\max_{\rho\in \cS}f(\rho).
	\end{equation}
Because $\cS$ is just a product of simplices, (\ref{eqSinglyStoch}) turns out to be much more amenable than~(\ref{XeqSecondLogscale}).
Achlioptas and Naor solve~(\ref{eqSinglyStoch}) completely.
More precisely, they optimize $f$ over the sets $\cbc{\rho\in\cS:\norm\rho_2=s}$ for each $s$, i.e., over the intersection of $\cS$ with a sphere.
Their argument relies on the product structure of $\cS$ and a sophisticated global analysis (going to the {\em sixth} derivative).
The result is that the maximum of~(\ref{eqSinglyStoch}) and therefore also of~(\ref{XeqSecondLogscale})
is attained at the doubly-stochastic $\bar\rho$ for  $d\leq \dAN$. 

However, for $d> \dAN$, the maximum~(\ref{eqSinglyStoch}) is attained elsewhere.
For instance, the matrix $\rho_{\mathrm{half}}$ whose first $k/2$ rows coincide with those of
the identity matrix $\id$ (with ones on the diagonal and zeros elsewhere) and whose last $k/2$ rows have all entries equal to $1/k$
yields a larger function value than $\bar\rho$ for $d>\dAN+o_k(1)$.
Of course, this matrix fails to be doubly-stoachastic.

Hence, one might hope that $\bar\rho$ remains the maximizer of~(\ref{XeqSecondLogscale})
for $d$ up to $\dc$. That is, however, not the case.
Indeed, consider the doubly-stochastic 
	\begin{equation}\label{Xeqrhostable}
	\rhos=(1-1/k)\id+k^{-2}\vecone,
	\end{equation}
where 
$\vecone$ denotes the matrix with all entries equal to one.
A simple calculation reveals that $f(\rhos)>f(\bar\rho)$, and thus that the second moment argument for $\Zkb$ fails,
 for $d$ well below $\dc$.

\subsection{A physics-enhanced random variable.}
Therefore, to prove \Thm~\ref{XXThm_main} we need to work with a different random variable.
The key observation behind its definition is that the second moment~(\ref{XeqSecondLogscale}) is driven up by
certain ``wild'' $k$-colorings $\sigma$.
Their number behaves like a lottery:
while the random graph typically has no wild coloring, a tiny fraction of graphs
have an abundance, boosting the second moment.
To avoid this heavily-tailed random variable, we define a notion of ``\good'' colorings.
This induces a decomposition
	$\Zkb=\Zkg+Z_{k,\mathrm{\bad}}$ such that
	$\Erw\brk{\Zkg}\sim\Erw\brk{\Zkb}$.
The second moment bound~(\ref{Xeqsmm}) turns out to hold for $\Zkg$ if $d\leq\dc-o_k(1)$.

The notion of ``\good'' is inspired by statistical physics predictions on the geometry of the set of $k$-colorings.
More precisely, according to the physicists' cavity method~\cite{pnas,LenkaFlorent}, for $(1+o_k(1))k\ln k<d<\dc$ the set of all $k$-colorings, viewed as a subset
of $\brk k^n$, decomposes into ``tiny clusters'' that are ``well-separated'' from each other.
Formally, we define the \bemph{cluster} of a balanced $k$-coloring $\sigma$ of $\gnm$ as the set
	\begin{equation}\label{XeqCluster}
	\cC(\sigma)=\cbc{\tau\in\cB:
		\tau\mbox{ is a $k$-coloring and $\rho_{ii}(\sigma,\tau)>0.51$ for all $i\in\brk k$}}.
	\end{equation}
In words, $\cC(\sigma)$ contains all balanced $k$-colorings $\tau$ where more than $51\%$ of the vertices
in each color class of $\sigma$ retain their color.
According to the cavity method, for $d<\dc$ each cluster contains only an exponentially small
fraction of all $k$-colorings of $\gnm$ \whp\
But for our purposes it suffices to formalize ``tiny'' by just requiring that
$|\cC(\sigma)|\leq\Erw\brk\Zkc$.

Futher, to formalize the notion that the clusters are ``well-separated'',
we call a balanced $k$-coloring $\sigma$ \bemph{separable} if
	\begin{equation}\label{eqseparable}
	\parbox{12cm}{for any other balanced $k$-coloring $\tau$
	and any $i,j\in\brk k$ such that $\rho_{ij}(\sigma,\tau)>0.51$ we indeed have $\rho_{ij}(\sigma,\tau)\geq1-\kappa$,
	where $\kappa=\ln^{20}k/k$.}
	\end{equation}
In other words, the overlap matrix $\rho(\sigma,\tau)$ does not have entries in the interval $(0.51,1-\kappa)$.
Hence, if two color classes have an overlap of more than $51\%$, then they must, in fact, be nearly identical.
This definition ensures that the clusters of two separable colorings $\sigma,\tau$ are either disjoint or identical.
We thus arrive at the following definition.

\begin{definition}\label{Def_good}
Let $G$ be a graph with $n$ vertices and $m$ edges.
A $k$-coloring $\sigma$ of $G$ is \bemph{\good} if 
\begin{description}
\item[T1] $\sigma$ is balanced,
\item[T2] $\sigma$ is separable, and
\item[T3]  $\abs{\cC\bc\sigma}\leq\Erw\brk{\Zkc(\gnm)}$.
\end{description}
\end{definition}

In \Sec~\ref{Sec_first} we show that a typical $k$-coloring of $\gnm$ is indeed tame, which implies that the expected number of \good\ $k$-colorings satisfies
the following.

\begin{proposition}\label{Prop_first}
There exists a sequence $\eps_k\ra0$ such that
for $d=\dRS-\eps_k$ we have
	$$\textstyle\Erw\brk{\Zkg}\sim\Erw\brk\Zkb=
			\Theta(\exp(\frac n2\cdot f(\bar\rho)))\quad\mbox{and }\quad
				f(\bar\rho)=\frac{2\ln 2}k+o_k(k^{-1})>0.$$
\end{proposition}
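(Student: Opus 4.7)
My plan is to decompose $\Erw\brk{\Zkg} = \Erw\brk{\Zkb} - \Erw\brk{\Zkb - \Zkg}$ and show that the expected number of balanced $k$-colorings violating T2 or T3 is $o(\Erw\brk{\Zkb})$; the stated asymptotics of $\Erw\brk{\Zkb}$ and the value of $f(\bar\rho)$ then follow by direct calculation. Substituting $\bar\rho$ into the formula from Fact~\ref{Lemma_frho} gives $f(\bar\rho) = 2\ln k + d\ln(1-1/k)$, because $\sum_{i,j}\bar\rho_{ij}\ln\bar\rho_{ij}=-k\ln k$ and $\norm{\bar\rho}_2^2=1$. Plugging $d = \dRS - \eps_k$ and Taylor-expanding $\ln(1-1/k) = -\frac1k - \frac1{2k^2} - O(k^{-3})$, the $\pm 2\ln k$ and $\pm\ln k/k$ contributions cancel, leaving $f(\bar\rho) = (2\ln 2)/k + o_k(k^{-1})$, positive for $k$ large. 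Simultaneously (\ref{XeqFirstMoment}) yields $\Erw\brk{\Zkb} = \Theta(k^n(1-1/k)^{dn/2}) = \Theta(\exp(\frac{n}{2} f(\bar\rho)))$.

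For the tame/wild split I treat T2 and T3 by first-moment arguments. A balanced $\sigma$ violating T2 admits another balanced $\tau$ with $\rho_{ij}(\sigma,\tau) \in (0.51, 1-\kappa)$ for some $i,j$, so a union bound over such pairs bounds the expected number of bad $\sigma$ by $\sum_{\rho} \Erw\brk{\Zrb}$, where the sum runs over $\rho \in \cR$ with some entry in $(0.51, 1-\kappa)$. For T3, Markov's inequality gives that the expected number of balanced $\sigma$ with $\abs{\cC(\sigma)} > \Erw\brk{\Zkc}$ is at most $\Erw\brk{\sum_\sigma \abs{\cC(\sigma)}}/\Erw\brk{\Zkc}$, and the numerator again equals $\sum_\rho \Erw\brk{\Zrb}$, now taken over $\rho$ satisfying $\rho_{ii} > 0.51$ for all $i$ (after relabeling). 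Using $\abs\cR \leq n^{k^2}$, the tight first-moment identity (\ref{eqsecondMomentBasis1}), and $\Erw\brk{\Zkc} \geq \Erw\brk{\Zkb}$, both tasks reduce to showing that $\max f(\rho) < f(\bar\rho)$ on the respective subset of $\Birk$, with a gap large enough to absorb the polynomial losses.

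The main obstacle is precisely this optimization. The matrix $\rhos$ of (\ref{Xeqrhostable}) already satisfies $f(\rhos) > f(\bar\rho)$ for $d$ well below $\dc$, and the threshold $\dRS$ is calibrated so that $f(\rhos) \leq f(\bar\rho)$ with only an $O_k(k^{-1})$ margin at $d = \dRS - \eps_k$; hence the analysis must be tight to within $o_k(k^{-1})$. The forbidden regions encoded by T2 and T3 are designed jointly to cover every matrix outside a small neighborhood of $\bar\rho$ where $f$ could approach $f(\bar\rho)$: large diagonal entries force $\rho$ into the T3 region $\rho_{ii} > 0.51$, small diagonal entries place $\rho$ near $\bar\rho$, and intermediate values push $\rho$ into the T2 region $\rho_{ij} \in (0.51, 1-\kappa)$. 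Carrying out this delicate optimization of $f$ over $\Birk$ is the central analytical task of the paper and requires the local-variations technique developed in the later sections; granted those estimates, the proposition follows.
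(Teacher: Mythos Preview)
Your approach for {\bf T3} cannot work. The Markov bound yields
\[
\Erw\brk{\#\{\sigma\in\cB:\sigma\mbox{ is a $k$-coloring violating {\bf T3}}\}}
\leq\frac{1}{\Erw[\Zkc]}\sum_{\rho\ k\text{-stable}}\Erw[\Zrb],
\]
and for this to be $o(\Erw[\Zkb])$ you would need $\max_{\rho\ k\text{-stable}}f(\rho)<f(\bar\rho)$. But that inequality is \emph{false} at $d=\dRS-\eps_k$. Indeed, set $g(t)=f((1-t)\id+t\bar\rho)$; a direct expansion gives $g(t)-g(0)=h(t)-t\ln k+t^2\ln k+o_k(1/k)$ for $t=O_k(1)$, and evaluating at $t=1/k$ (which corresponds to $\rhos$) yields $f(\rhos)=f(\id)+1/k+o_k(1/k)=(1+\ln2)/k+o_k(1/k)$, since $f(\id)=\tfrac12 f(\bar\rho)$. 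As $1+\ln2>2\ln2$, we get $f(\rhos)>f(\bar\rho)$. This is exactly the obstruction pointed out around~(\ref{Xeqrhostable}): the $k$-stable part of the second-moment sum already exceeds $\Erw[\Zkb]^2$, and your Markov bound is that very sum divided by a first moment. The underlying reason is that $|\cC(\sigma)|$ has a heavy tail, so its expectation is far larger than its typical value; an annealed bound cannot see this. Your statement that ``$\dRS$ is calibrated so that $f(\rhos)\leq f(\bar\rho)$'' is therefore incorrect.

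Nor do the later sections supply what you need: \Sec~\ref{Sec_second} establishes $f(\rho)<f(\bar\rho)$ only for $\rho\in\Dg$, the \emph{separable, not $k$-stable} matrices. For {\bf T2} you would need to control the non-separable region, and for {\bf T3} the $k$-stable region---precisely the complement of $\Dg$. These estimates are not proved anywhere in the paper (and for {\bf T3} they are false).

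The paper proceeds entirely differently. It conditions on $\sigma$ being a $k$-coloring and works in the planted model $G(n,m,\sigma)$, proving via structural properties of that random graph (a large ``core'' of vertices with many neighbours of every other colour, together with expansion estimates; \Lem~\ref{Lemma_P}) that $\sigma$ is separable \whp\ (\Lem~\ref{Lemma_sep}) and that every $\tau\in\cC(\sigma)$ agrees with $\sigma$ on all $\sigma$-complete vertices \whp\ (\Lem~\ref{Lemma_complete}). Counting the remaining ``free'' vertices then gives $\frac1n\ln|\cC(\sigma)|\leq\ln2/k+\tilde O_k(k^{-2})$ \whp, which is below $\frac1n\ln\Erw[\Zkb]$ once $\eps_k$ is chosen suitably. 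Linearity of expectation plus Bayes' formula then turns the \whp\ statement into $\Erw[\Zkg]\sim\Erw[\Zkb]$. The key point is that this is a with-high-probability argument in the planted distribution, not a first-moment bound on wild colorings, and so it is immune to the heavy-tail issue that kills the Markov approach.
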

\noindent
Thus, going from blanaced to tame colorings has no discernible effect on the first moment,
	which remains exponentially large in $n$ up to at least $d= \dRS-\eps_k$.

Working with tame colorings has a  substantial impact on the second moment.
As before, computing the second moment boils down to a continuous optimization problem.
But  in comparison to~(\ref{XeqSecondLogscale}), this problem is over a {\em significantly} reduced domain $\Dg\subset\Birk$.
Indeed, let us call a $k\times k$-matrix $\rho$ \bemph{separable} if $\rho_{ij}\not\in\bc{0.51,1-\kappa}$ for all $i,j\in\brk k$.
Further, call $\rho$ {\bf\em $k$-stable} if for any $i$ there is $j$ such that $\rho_{ij}>0.51$.
Let $\Dg$ be the set of all $\rho\in\Birk$ that are separable but not $k$-stable.
In particular, the matrix $\rhos$ from~(\ref{Xeqrhostable}) does {\em not} belong to $\Dg$.
Geometrically, one can think of $\Dg$ as being obtained by cutting out (huge) cylinders from the Birkhoff polytope.
In \Sec~\ref{Sec_second} we will see that the second moment calculation for $\Zkg$ boils down to showing that
	\begin{equation}\label{eqOptDgood}
	\max_{\rho\in\Dg}f(\rho)
	\end{equation}
is attained at $\bar\rho$.
Indeed, that~(\ref{eqOptDgood}) mirrors the second moment calculation seems reasonable:
for any two \good\ colorings $\sigma,\tau$ the overlap matrix $\rho(\sigma,\tau)$ is separable by {\bf T2}.
Moreover, if $\rho(\sigma,\tau)$ is $k$-stable, then $\tau\in\cC(\sigma)$ by the very definition of $\cC(\sigma)$, and {\bf T3}
provides an {\em a priori} bound on the number of such $\tau$.

Thus, in a sense the proof strategy that we pursue is the opposite of the one from~\cite{AchNaor}.
While Achlioptas and Naor {\em relax} the optimization problem (by working with a rather significantly  larger domain: singly rather than doubly-stochastic matrices),
here we {\em restrict} the domain by imposing further physics-inspired constraints. 
This approach, carried out in \Sec~\ref{Sec_second}, yields

\begin{proposition}\label{Prop_second}
Assume that $k$ is sufficiently large and that $d=(2k-1)\ln k-c$ for some number $c=O_k(1)$.
If $\Erw[\Zkg]=\Omega(\Erw[\Zkb])$, then
	$0<\Erw[\Zkg^2]\leq C(k)\cdot\Erw\brk\Zkg^2.$
\end{proposition}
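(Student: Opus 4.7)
My plan is to decompose $\Erw[\Zkg^2]$ according to the overlap matrix of pairs and use the defining constraints T2 and T3 of good colorings to control each part. By linearity,
\[
  \Erw[\Zkg^2]=\sum_{\sigma,\tau\in\cB}\pr[\sigma,\tau\text{ are both good $k$-colorings of }\gnm],
\]
and I split the sum according to whether $\rho=\rho(\sigma,\tau)$ is $k$-stable or not. For a good $\sigma$, the definition~(\ref{XeqCluster}) of $\cC(\sigma)$ implies that whenever $\rho(\sigma,\tau)$ is $k$-stable we have $\tau\in\cC(\sigma)$, so by T3 the number of such $\tau$ is deterministically at most $\Erw[\Zkc]$. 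Hence the $k$-stable contribution is at most $\Erw[\Zkg]\cdot\Erw[\Zkc]=\Theta(\Erw[\Zkg]\cdot\Erw[\Zkb])=O(\Erw[\Zkg]^2)$, using the hypothesis $\Erw[\Zkg]=\Omega(\Erw[\Zkb])$ together with the standard estimate $\Erw[\Zkc]=\Theta(\Erw[\Zkb])$.

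For the non-$k$-stable contribution, both colorings being good forces $\rho(\sigma,\tau)$ to be separable (by T2) and non-$k$-stable, hence $\rho(\sigma,\tau)\in\Dg$. Applying Fact~\ref{Lemma_frho} termwise,
\[
  \sum_{\rho\in\Dg\cap\cR}\Erw[\Zrb]\leq O(n^{(1-k^2)/2})\sum_{\rho\in\Dg\cap\cR}\exp(n f(\rho)).
\]
Since the interior point $\bar\rho$ lies in $\Dg$ and $\Erw[\Zkb]^2=\Theta(\exp(n f(\bar\rho)))$, once we know
\begin{equation}\label{eqmaxequality}
  \max_{\rho\in\Dg} f(\rho)=f(\bar\rho),
\end{equation}
a standard Laplace-type bound (using that the Hessian of $f$ at $\bar\rho$ restricted to the tangent space of $\Birk$ is negative definite) dominates the sum by $O(n^{1-k})\exp(n f(\bar\rho))=o(\Erw[\Zkg]^2)$. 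Combining the two parts then yields $\Erw[\Zkg^2]\leq C(k)\cdot\Erw[\Zkg]^2$, with the constant coming essentially from the stable piece.

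The main obstacle is establishing the optimization identity~(\ref{eqmaxequality}). The domain $\Dg$ is a highly non-convex subset of $\Birk$ obtained by carving out the $k$-stable cylinders, in particular those containing $\rhos$ from~(\ref{Xeqrhostable}), which is the very configuration that defeats the plain second moment for $\Zkb$. I would tackle the optimization by a local variations argument: starting from any $\rho\in\Dg\setminus\{\bar\rho\}$, use that $\rho$ is separable (entries avoid the open interval $(0.51,1-\kappa)$) and that some row has every entry below $0.51$ to construct an explicit one-parameter perturbation that remains in $\Dg$ and strictly increases $f$. Heuristically, the entropy term $-\frac 1k\sum_{i,j}\rho_{ij}\ln\rho_{ij}$ is concave and drives $\rho$ toward $\bar\rho$, while the convex ``energy'' $\frac d2\ln(1-\tfrac2k+\tfrac1{k^2}\norm\rho_2^2)$ rewards polarization; however, once the $k$-stable cylinders are excised, the energy can no longer win, because the ``slack'' row with small entries forces $\norm\rho_2^2$ away from the large values attained by configurations like $\rhos$. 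A careful case analysis over the support pattern of small and large entries of $\rho$, combined with pairwise rebalancing steps (permutation-type perturbations that preserve the doubly-stochastic constraints), should close out the proof. This non-convex local-to-global analysis, replacing the product-structure global calculus of~\cite{AchNaor}, is the technical heart of \Sec~\ref{Sec_second} and where essentially all the real work lies.
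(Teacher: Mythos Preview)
Your outline matches the paper's proof: \Prop~\ref{Prop_second} is obtained from exactly the two ingredients you name, the Laplace/T3 reduction (\Prop~\ref{Prop_sufficient}, carried out in \Sec~\ref{Sec_sufficient} via the same decomposition into stable, non-separable, and near-$\bar\rho$ overlaps) and the optimization identity $\max_{\Dg}f=f(\bar\rho)$ (\Prop~\ref{Prop_opt}), and you correctly flag the latter as the real work, done by the local variations machinery of \Prop~\ref{Cor_Var}.

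Two small corrections are worth recording. First, ``$\rho(\sigma,\tau)$ $k$-stable $\Rightarrow\tau\in\cC(\sigma)$'' is false as written: $k$-stability only guarantees a permutation $\pi$ with $\rho_{i\,\pi(i)}(\sigma,\tau)>0.51$ for all $i$, so it is $\pi^{-1}\circ\tau$ that lies in $\cC(\sigma)$, and you pick up a harmless factor $k!$ (this is exactly \Lem~\ref{Lemma_R2}). Second, the Laplace contribution near $\bar\rho$ is $\Theta(\Erw[\Zkg]^2)$, not $o(\Erw[\Zkg]^2)$: the polynomial prefactor $n^{(1-k^2)/2}$ in Fact~\ref{Lemma_frho} is exactly cancelled by the $(k^2{-}1)$-dimensional Gaussian lattice sum over $\cR$ (see \Lem~\ref{Lemma_central}), so the constant $C(k)$ genuinely receives contributions from both the stable and the near-$\bar\rho$ pieces. (Also note that $\rho(\sigma,\tau)$ for balanced $\sigma,\tau$ is doubly-stochastic only up to $O(n^{-1/2})$; the paper patches this with \Lem~\ref{Lemma_fluctuations}.)
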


The proof of \Prop~\ref{Prop_second} essentially comes down to showing that the maximum~(\ref{eqOptDgood}) is attained at $\bar\rho$.
Even though we work with the reduced domain $\Dg$, this is anything but straightforward.
Indeed, to solve this analytical problem, we develop a novel local variations argument based on properties of the entropy function (among other things).
We expect that this argument will prove useful to tackle many related optimisation problems that come up in second moment arguments.

Finally, \Thm~\ref{XXThm_main} is an immediate consequence of \Prop s~\ref{Prop_first} and~\ref{Prop_second} combined with \Lem~\ref{Lemma_Ehud}.

\subsection{The condensation phase transition}\label{XSec_cond}
Finally, what would it take to close the (small) remaining gap between the new lower bound~(\ref{Xeqmain}) on $\dk$ and the upper bound~(\ref{XeqFirst2})?
According to the physicists' cavity method, this gap is due to a further phase transition, the so-called
	{\em condensation} or {\em Kauzmann transition}, that occurs at $\dRS+o_k(1)$, i.e.,
the lower bound established in \Thm~\ref{Thm_main}.
In fact, the existence and precise location  of this phase transition (including the term hidden in the $o_k(1)$) can be established rigorously~\cite{Condensation}.

According to the cavity method~\cite{pnas}, the geometry of the set of $k$-colorings changes significantly at $\dcond$.
More precisely, for $d<\dcond-o_k(1)$ the set of $k$-colorings decomposes into clusters that each contain only an exponentially small fraction of all $k$-colorings
of $G(n,d/n)$ \whp\
By contrast, for $d>\dcond+o_k(1)$, the size of the largest cluster is conjectured to contain a {\em constant} fraction of all $k$-colorings.
As a result, two random $k$-colorings are heavily correlated, as there is a non-vanishing probability that they belong to the same cluster.
This explains intuitively why the condensation threshold poses an obstacle to the second moment method,
as we saw that a necessary condition for the success of the second moment method is that random pairs of $k$-colorings decorrelate.

More formally, we prove in~\cite{Condensation} that for $d>\dRS+o_k(1)$ there does not exist a random variable $Z=Z(\gnm)$ with the following properties.
First, $Z(G)>0$ only if $G$ is $k$-colorable.
Second,
	$$\Erw[Z(\gnm)]^{1/n}\sim k(1-1/k)^{d/2}\quad\mbox{and}\quad\Erw[Z(\gnm)^2]\leq O(\Erw[Z(\gnm)]^2).$$
By contrast, \Prop s~\ref{Prop_first} and~\ref{Prop_second} show that $\Zkg$ has these two properties if $d<\dRS-o_k(1)$.
Hence, in this sense the approach (and random variable) put forward in the present paper is best possible.

A refined version of the cavity method, the so-called {\em 1-step replica symmetry breaking (``1RSB'') ansatz}~\cite{pnas,KPW,MPWZ,LenkaFlorent}, yields
a precise prediction as to the value of $\dk=\lim_{n\ra\infty}\dk(n)$ (of course, the existence of the limit is taken for granted in the physics work).
However, this prediction is not explicit; for instance, it involves the solution to a seriously complicated fixed point problem on the set of probability distributions on the $k+1$-simplex.
Yet it is possible to obtain an expansion in the limit of large $k$, according to which
	$\dk=2k\ln k-\ln k-1+o_k(1)$.
Proving the 1RSB prediction for $\dk$ remains an open problem.
In a very few binary problems,  asymptotic versions of the 1RSB prediction have been proved rigorously (e.g.,~\cite{KostaNAE}).
However, it seems anything but straightforward to extend these arguments to the random graph coloring problem.
That said, we expect that any 
attempt at determining $\dk$ precisely 
would have to build upon the insights gained in this paper and very possibly its techniques.

\section{The first moment}\label{Sec_first}

\noindent
{\em Throughout this section we keep the assumptions of \Prop~\ref{Prop_first} and the notation introduced in \Sec~\ref{sec:outline}.}

\smallskip\noindent
The following lemma is the key step towards proving \Prop~\ref{Prop_first}.

\begin{lemma}\label{Lemma_first}
There exists a sequence $\eps_k\ra0$ such that for $d=\dRS-\eps_k$ we have
	\begin{align*}
	\pr\brk{\sigma\mbox{ is \good}|\sigma\mbox{ is a $k$-coloring of $\gnm$}}&\sim1\ 
	\mbox{ for any }\sigma\in\cB\qquad\mbox{ and}\\
	f(\bar\rho)=2\ln k+d\ln(1-1/k)&=\frac{2\ln 2}k+o_k(k^{-1})>0.
	\end{align*}
\end{lemma}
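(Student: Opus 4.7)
The statement has two parts.  The arithmetic identity for $f(\bar\rho)$ follows from direct Taylor expansion.  Substituting $d=\dRS-\eps_k=2k\ln k-\ln k-2\ln 2-\eps_k$ and $\ln(1-1/k)=-1/k-1/(2k^2)-O(k^{-3})$ into $2\ln k+d\ln(1-1/k)$: the $(2k\ln k)(-1/k)=-2\ln k$ cancels the leading $+2\ln k$, the $\pm(\ln k)/k$-subleading contributions cancel, and the surviving leading term is $(2\ln 2)/k+\eps_k/k+O(\ln k/k^2)$.  Letting $\eps_k\ra0$ slowly (e.g.\ $\eps_k=1/\ln k$) gives the advertised asymptotic and positivity for large $k$.

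For the probabilistic claim, T1 holds by hypothesis, so it suffices to control the conditional failure probabilities of T2 and T3.  For T2 I would apply Markov's inequality to the number $\cN_{\mathrm{wild}}$ of balanced colourings $\tau$ of $\gnm$ with some entry of $\rho(\sigma,\tau)$ in $(0.51,1-\kappa)$.  By the symmetry of $\gnm$ and Fact~\ref{Lemma_frho},
\[
\Erw[\cN_{\mathrm{wild}}\mid\sigma\text{ is a coloring}]=\sum_{\rho\in\cR_{\mathrm{wild}}}\frac{\Erw[\Zrb]}{\Erw[\Zkb]}\leq\mathrm{poly}(n)\cdot\exp\!\bigl(n\sup_{\rho\in\cD_{\mathrm{wild}}}\bigl[f(\rho)-\tfrac12 f(\bar\rho)\bigr]\bigr),
\]
where $\cD_{\mathrm{wild}}\subset\Birk$ is the set of doubly-stochastic matrices with some entry in $(0.51,1-\kappa)$.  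The required analytic inequality $\sup_{\cD_{\mathrm{wild}}}f(\rho)<\tfrac12 f(\bar\rho)$ is established by freezing the offending coordinate $\rho_{ij}=t\in(0.51,1-\kappa)$ and identifying the slice-maximiser by Lagrange multipliers, followed by a second-order expansion around $\bar\rho$: the entropy loss is of order $t(1-t)\ln k/k$ and strictly dominates the corresponding gain in the $(d/2)\ln(1-2/k+\|\rho\|_2^2/k^2)$-term, yielding a positive gap $\Omega(\ln^{21}k/k^2)$ uniformly in $t\in(0.51,1-\kappa)$ (the boundary $t\to(1-\kappa)^-$ is the tightest, but the gap remains positive because $\kappa=\ln^{20}k/k$ is chosen precisely to beat $f(\bar\rho)/2=(\ln 2)/k+o(1/k)$ by a subleading amount).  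Since $n\to\infty$ this gap defeats any polynomial prefactor, and a union bound over the $O(k^2)$ coordinate positions completes T2.

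The delicate step is T3, because a direct Markov bound on $|\cC(\sigma)|$ fails outright: the stable matrix $\rhos$ of~(\ref{Xeqrhostable}) satisfies $f(\rhos)>f(\bar\rho)$ at $d=\dRS$, so $\Erw[|\cC(\sigma)|\mid\sigma\text{ is a coloring}]$ is already exponentially larger than $\Erw[\Zkc]$.  My plan is to condition on T2 (which holds whp by the previous paragraph); under separability every $\tau\in\cC(\sigma)$ satisfies $\rho_{ii}(\sigma,\tau)\geq 1-\kappa$ for all $i$ and hence differs from $\sigma$ on a sublinear set $U_\tau$ of at most $\kappa n=n\ln^{20}k/k$ vertices.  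Within this Hamming ball, I would bound $|\cC(\sigma)|$ by a planted-model local analysis of $\gnm$: conditional on $\sigma$ being a coloring, the neighbourhoods of the recoloured vertices converge locally to Galton--Watson trees with offspring Poisson$(d)$, and the probability that a specific $s$-vertex perturbation $\tau$ is still a coloring is at most $(1-1/k)^{\Theta(sd)}=\exp(-\Theta(s\ln k))$ by the essentially independent edge constraints imposed on $U_\tau$.  A careful size-biased summation over $s\leq\kappa n$ combined with the $\binom{n}{s}(k-1)^s$ combinatorial choices then yields $|\cC(\sigma)|\leq\exp(o(n/k))\leq\Erw[\Zkc]$ with high probability, establishing T3.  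All residual quantitative slack is absorbed by the slow decay of $\eps_k$, completing the proof.
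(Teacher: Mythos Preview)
Your Taylor computation for $f(\bar\rho)$ is fine and matches the paper.

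For {\bf T2} your route diverges from the paper's.  The paper does not run a first-moment argument at all: it uses the structural properties {\bf P1}--{\bf P3} of the planted graph (Lemma~\ref{Lemma_P}) to show directly, via an expansion argument (Lemma~\ref{Lemma_sep}), that no colouring $\tau$ can have $\rho_{ij}(\sigma,\tau)\in(0.51,1-\kappa)$.  Your first-moment sketch is not obviously wrong, but the analytic step is hand-waved: invoking a ``second-order expansion around $\bar\rho$'' when the constrained entry $t$ is far from $1/k$ makes no sense, and the claimed uniform gap $\Omega(\ln^{21}k/k^2)$ is asserted, not derived.  Establishing the inequality $\sup_{\cD_{\mathrm{wild}}}f(\rho)<\tfrac12 f(\bar\rho)$ would be a genuine optimisation problem on the Birkhoff polytope, of the same flavour as (and not obviously easier than) the second-moment analysis in Section~\ref{Sec_second}.

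The real gap is {\bf T3}.  You correctly observe that the naive Markov bound fails because $f(\rhos)>f(\bar\rho)$.  But your proposed fix is still a first-moment argument in disguise, and it fails for exactly the same reason.  Concretely, your summation
\[
\sum_{s\leq\kappa n}\binom{n}{s}(k-1)^s\exp(-\Theta(s\ln k))
\]
is \emph{not} $\exp(o(n/k))$.  With the correct exponent $\exp(-2s\ln k)=k^{-2s}$ (the probability that $s$ recoloured vertices each avoid their $\sim 2\ln k$ forbidden neighbours), the summand is $\approx(en/(sk))^s$, maximised at $s\approx n/k\leq\kappa n$, giving $\exp((1+o_k(1))n/k)$.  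Since $\Erw[\Zkc]=\exp((\ln 2+o_k(1))n/k)$ and $1>\ln 2$, the bound is useless.  Equivalently, conditioning on separability still leaves the $k$-stable overlaps like $\rhos$ in play, and the planted first moment over those is exactly $\exp(n[f(\rhos)-\tfrac12 f(\bar\rho)])>\Erw[\Zkc]$.

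The paper's argument for {\bf T3} is not a moment bound at all.  It identifies a \emph{core} of vertices each having many neighbours of every other colour within the core, proves via expansion ({\bf P3}) that core vertices are frozen across $\cC(\sigma)$ (Lemma~\ref{Lemma_complete}), and then bounds $|\cC(\sigma)|\leq 2^{|F_1\setminus F_2|}k^{|F_2|}$ deterministically, where $F_a$ is the set of $a$-free vertices.  Property {\bf P4} gives $|F_1|\leq(1+o_k(1))n/k$ and $|F_2|=\tilde O_k(k^{-2})n$, whence $\tfrac1n\ln|\cC(\sigma)|\leq\tfrac{\ln2}{k}+\tilde O_k(k^{-2})$ with high probability; the slack $\eps_k/k$ then suffices.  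The key structural insight---that almost all non-frozen vertices are $1$-free but not $2$-free and hence contribute only a factor of $2$ each---is precisely what produces the matching constant $\ln 2$, and no averaging argument will recover it.
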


\noindent
In fact, once we have \Lem~\ref{Lemma_first}, \Prop~\ref{Prop_first} readily follows from the linearity of expectation, Bayes' formula and the formula~(\ref{XeqFirstMoment})
for $\Erw[\Zkb]$.

To establish \Lem~\ref{Lemma_first}, we denote by $G(n,m,\sigma)$ the random graph $\gnm$ conditional on the event that $\sigma\in\cB$ is a $k$-coloring.
Thus, $G(n,m,\sigma)$ consists of $m$ edges drawn uniformly at random without replacement out of those edges that are bichromatic under $\sigma$.
This probability distribution is also known as the ``planted model''.

To establish the bound {\bf T3} on the cluster size,
we show that \whp\ $G(n,m,\sigma)$ contains a vast ``core''
comprising of vertices that have several neighbors  of each color other than their own that also belong to the core.
Formally, if $G=(V,E)$ is a graph on the vertex set $V=\cbc{1,\ldots,n}$ and $\sigma\in\cB$,
	we define the {\bf\em core} of $(G,\sigma)$ as the largest subset $V'\subset V$
	such that
		\begin{equation}\label{eqCoreProp}
		\abs{\cbc{w\in N(v)\cap V':\sigma(w)=i}}\geq100\quad\mbox{ for all $v\in V'$ and all $i\neq\sigma(v)$}.
		\end{equation}
The core is well-defined: if $V',V''$ satisfy~(\ref{eqCoreProp}), then so does $V'\cup V''$.
(Of course, the constant $100$ is a bit arbitrary.) 

As we will see, due to expansion properties no vertex in the core of $G(n,m,\sigma)$ can be recolored without leaving the cluster $\cC(\sigma)$ \whp\
The basic reason is that recoloring any vertex $v$ in the core sets off an avalanche of recolorings: to give $v$
	another color, we will have to recolor at least 100 vertices that also belong to the core, and so on.

In addition, if a vertex $v$ outside the core is such that for each color other than its own, $v$ has a neighbor in the core of that color, then it should be impossible
to recolor $v$ without leaving $\cC(\sigma)$ as well.
For to assign $v$ some color $i\neq\sigma(v)$ we will have to recolor at least one vertex in the core.
Guided by this observation, we call a vertex $v$ \bemph{$\sigma$-complete}, if for each color $i\neq\sigma(v)$, $v$ has a neighbor $w$ in the core with $\sigma(v)=i$.

If $\sigma$-complete vertices do not contribute to $|\cC(\sigma)|$,
then the cluster size stems from recoloring vertices $v$ that fail to have a neighbor in the core of some color $i\neq\sigma(v)$.
As we shall see, most of these vertices miss out on exactly one color $i\neq\sigma(v)$ and hence have precisely two colors to choose from.
Formally, we call a vertex $v$ \bemph{$a$-free} in $(G,\sigma)$ if, with $V'$ denoting the core, we have
		$\abs{\cbc{i\in\brk k:N(u)\cap V'\cap\sigma^{-1}(i)=\emptyset}}\geq a+1.$

The following lemma summarizes the expansion properties of $G(n,m,\sigma)$ that the proof of \Lem~\ref{Lemma_first} builds upon.

\begin{lemma}\label{Lemma_P}
Let $\sigma\in\cB$ and assume that $2k\ln k-\ln k-2\leq d\leq2k\ln k$.
Let $V_i=\sigma^{-1}(i)$ for $i=1,\ldots,k$.
Then \whp\ the random graph $G(n,m,\sigma)$ has the following four properties.
\begin{description}
  \item [P1] Let $i\in\brk k$. 
  	For any subset $S\subset V_i$ of size $0.509\cdot\frac nk\leq|S|\leq(1-k^{-0.499})\frac nk$, the number of vertices $v\in V\setminus V_i$ that do not have a neighbor in $S$
	is less than $\frac nk-|S|-n^{2/3}$.
  \item[P2] Let $i\in\brk k$. No more than $\frac{\kappa n}{3k}$ vertices $v\not\in V_i$ have less than $15$ neighbors in $V_i$, where $\kappa=\ln^{20}k/k$.
  \item[P3] There is no set $S\subset V$ of size $|S|\leq k^{-4/3}n$ that spans more than $5|S|$ edges.
 \item[P4]  At most $\frac nk(1+\tilde O_k(1/k))$ vertices are $1$-free, and at most $\tilde O_k(k^{-2})n$ vertices are $2$-free.
\end{description}
\end{lemma}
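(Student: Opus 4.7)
The plan is to argue throughout in the planted model $G(n,m,\sigma)$, or equivalently (up to $o(1)$ total-variation error) in its Poissonized companion $G(n,p,\sigma)$ with $p=d/((1-1/k)n)$. The crucial structural fact is that in this independent-edge model the sets $N(v)\cap V_i$ for different $v\notin V_i$ involve disjoint edges and hence are genuinely independent; each $|N(v)\cap V_i|$ is $\Bin(|V_i|,p)$ with mean $d/(k-1)=2\ln k+O_k(1)$. Thus any count of vertices satisfying a neighborhood condition with respect to a \emph{fixed} color class is a sum of independent indicators and submits to Chernoff, and the resulting $\eul^{-\Omega(n)}$ bounds transfer from $G(n,p,\sigma)$ to $G(n,m,\sigma)$ in the standard way.

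Properties P1, P2 and P3 then reduce to first-moment estimates. For P2, the Poisson tail gives $\pr\brk{\Bin(|V_i|,p)<15}=O_k((\ln k)^{14}/k^2)$, so the expected number of low-degree vertices is $O_k(n(\ln k)^{14}/k^2)$, comfortably below the target $\kappa n/(3k)=(\ln k)^{20}n/(3k^2)$; Chernoff plus a $k$-fold union bound over colors finishes the job. For P3 I bound the expected number of $s$-sets carrying $\geq5s$ edges by $\binom{n}{s}\binom{\binom s2}{5s}p^{5s}=(O_k(s/n)\cdot d^{5/4})^{4s}$, which with $s\leq k^{-4/3}n$ and $d=O_k(k\ln k)$ has base $k^{-\Omega_k(1)}$ and sums to $o(1)$ over $s\geq1$. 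For P1 I fix $s$ and $S\subset V_i$ of size $s$; the bad count $|\cbc{v\notin V_i:N(v)\cap S=\emptyset}|$ is a sum of independent Bernoullis of mean $\mu\leq n\eul^{-ps}$, which at the lower endpoint $s\approx0.509n/k$ is $\leq n/k^{1.018}$, exponentially (in $k$) smaller than the threshold $\sim0.49n/k$. The multiplicative Chernoff bound therefore gives $\eul^{-\Omega(n\ln k/k)}$ per $S$, easily absorbing the union bound $\binom{n/k}{s}\leq 2^{n/k}$; the upper endpoint $s\approx(1-k^{-0.499})n/k$ is handled symmetrically, with threshold $\sim n/k^{1.499}$ against mean $\leq n/k^{2-o_k(1)}$.

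Property P4 is the substantial step because it implicitly invokes the core $V'$ defined by the peeling~(\ref{eqCoreProp}). I proceed in two stages. Stage one shows $|B|:=|V\setminus V'|\leq k^{-4/3}n$ by contradiction: each $v\in B$ was peeled off when it had fewer than $100$ neighbors of some non-own color $i_v$ in the then-current set, so in the full graph $v$ has at most $100$ color-$i_v$-neighbors outside $B$ (aside from a $\kappa n/(3k)$ exceptional set supplied by P2). Summing over $v\in B$ with $|B|\geq k^{-4/3}n/2$ shows that $B$ induces at least $(d/(k-1)-100)|B|/2\gg 5|B|$ edges, contradicting P3. Stage two then deduces the $a$-free bounds by first moment: the probability that $v$ misses a specific non-own color $j$ in the full graph is $(1-p)^{|V_j|}=k^{-2}(1+o_k(1))$ by independence, yielding $\Erw\brk{\#\text{$1$-free}}=(k-1)k^{-2}(1+o_k(1))n=\frac nk(1+\tilde O_k(1/k))$ and $\Erw\brk{\#\text{$2$-free}}=\bink{k-1}2k^{-4}(1+o_k(1))n=\tilde O_k(n/k^2)$; independence across vertices promotes both to w.h.p.\ via Chernoff, and the adjustment from ``neighbors trapped inside $B$'' is $o(n/k)$ since $|B|\leq k^{-4/3}n$. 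The main obstacle is stage one: the peeling is iterative and resists a single union bound, so one must extract an \emph{a posteriori} edge-density statement from the peeling rule and reconcile the constants ($100$ against the degree tail $15$ from P2 against the factor $5$ from P3); the remaining steps are routine concentration once the independence structure of the planted model is in place.
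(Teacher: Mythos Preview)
Your treatment of \textbf{P1}--\textbf{P3} is essentially the same as the paper's and is fine.  The real issue is \textbf{P4}, where both stages break down.

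\textbf{Stage one is aiming at a false target.}  The complement $B=V\setminus V'$ of the core necessarily contains every vertex with fewer than $100$ neighbours in some colour class $V_j$: such a vertex fails the core condition even against the full set $V$.  Since $|N(v)\cap V_j|$ has mean $\sim 2\ln k$, the probability of this event is $\tilde\Theta_k(k^{-2})$ per colour, and a union bound over the $k-1$ colours gives $\tilde\Theta_k(k^{-1})n$ such vertices w.h.p.  Hence $|B|=\tilde\Theta_k(k^{-1})n$, not $\leq k^{-4/3}n$.  Your contradiction argument cannot succeed because the ``exceptional set'' (vertices with $<110$, say, neighbours in some colour) already has size $\tilde O_k(k^{-1})n>k^{-4/3}n$: the first $k^{-4/3}n$ peeled vertices can all be exceptional, contribute nothing to the internal edge count of $B$, and \textbf{P3} is never contradicted.  (Relatedly, writing $d/(k-1)$ as if it were a per-vertex lower bound rather than an expectation is the symptom, not the cause, of this problem; and \textbf{P2} with threshold $15<100$ is of no help here.)

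\textbf{Stage two's adjustment is not negligible.}  Even with the correct bound $|B|=\tilde O_k(k^{-1})n$, the number of vertices all of whose colour-$j$ neighbours lie in $B$ is not $o(n/k)$ by a size argument alone: $|N(B)|$ can be as large as $|B|\cdot O_k(\ln k)=\tilde O_k(k^{-1})n$, the same order as the main term.  So you cannot pass from ``misses a colour in $V$'' to ``misses a colour in $V'$'' for free.

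The paper's fix is exactly to decouple these two obstructions.  It splits $B$ into a ``low-degree'' part $W$ (vertices with $<300$ neighbours in some $V_j$; size $\tilde O_k(k^{-1})n$) and a ``cascade'' part $Z$ grown from a tiny seed $U$ (vertices with $>100$ neighbours in $W$; size $\leq n/k^{30}$).  The density argument via \textbf{P3} is applied only to the cascade $Z$, which is small enough for \textbf{P3} to bite, yielding $|Z|\leq n/k^{29}$ and hence $|N(Z)|\leq n/k^{20}$.  For the $1$-free count one then observes that a $1$-free vertex lies in $S_0\cup S_1\cup Z\cup N(Z)$, where $S_0$ (misses a colour in $V$) has size $\leq n/k$ and $S_1$ (all colour-$j$ neighbours in $W_j$ for some $j$) has size $\tilde O_k(k^{-2})n$ by a separate balls-and-bins estimate.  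The key structural point your sketch misses is that one cannot treat $B$ monolithically: the low-degree part is too large for an edge-density contradiction, and the cascade part is what actually needs and admits the \textbf{P3} argument.
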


\noindent
The proof of \Lem~\ref{Lemma_P} is based on  arguments that are, by now, fairly standard;
in particular, the ``core'' has, tweaked in various ways, become a standard tool~\cite{Barriers,AK,BollobasBCKW01,Molloy}.
For the sake of completeness, we give a full proof of \Lem~\ref{Lemma_P} in Appendix \ref{apx:PropProof}.
Here we proceed to show how \Lem~\ref{Lemma_P} implies \Lem~\ref{Lemma_first}.

\begin{lemma}\label{Lemma_sep}
Assume that $2k\ln k-\ln k-2\leq d\leq2k\ln k$ and let $\sigma\in\cB$.
Then $\sigma$ is separable in $G(n,m,\sigma)$ \whp
\end{lemma}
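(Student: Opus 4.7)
The plan is to argue by contradiction using the expansion properties collected in Lemma~\ref{Lemma_P}. Assume $\sigma$ is not separable in $G(n,m,\sigma)$: there exist a balanced $k$-coloring $\tau$ and indices $i,j\in\brk k$ with $\rho_{ij}(\sigma,\tau)\in(0.51,1-\kappa)$. Writing $V_i=\sigma^{-1}(i)$, $W_j=\tau^{-1}(j)$, $S:=V_i\cap W_j$, we have $|S|=\rho_{ij}\,n/k$. The structural fact driving the proof is that $W_j$ is an independent set in $G(n,m,\sigma)$ (because $\tau$ is a proper coloring); in particular, the ``misplaced'' set $T:=W_j\setminus V_i\subseteq V\setminus V_i$ has no neighbor in $S\subseteq W_j$. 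Balancedness of $\tau$ gives $|T|=|W_j|-|S|\geq n/k-\sqrt n-|S|$.

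For the bulk of the forbidden range, namely $0.51\,n/k<|S|\leq(1-k^{-0.499})n/k$, apply \textbf{P1}: the number of vertices in $V\setminus V_i$ with no neighbor in $S$ is strictly less than $n/k-|S|-n^{2/3}$. Since $T$ is such a set, we obtain $n/k-\sqrt n-|S|\leq|T|<n/k-|S|-n^{2/3}$, and hence $\sqrt n>n^{2/3}$, absurd for large $n$. This ``independent-set-meets-expander'' step is short and is the heart of the lemma; it works precisely because planting $\sigma$ forces $V_i$ to be independent (making expansion out of $V_i$ strong), while a rival proper $\tau$ forces $W_j$ to be independent (so $T$ cannot hide its lack of edges into $S$).

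The thin boundary window $|S|\in\bigl((1-k^{-0.499})n/k,(1-\kappa)n/k\bigr)$ is where P1 does not directly apply, and this is where I expect the main technical difficulty. In this window the ``flipped'' sets $A:=W_j\setminus V_i$ and $B:=V_i\setminus W_j$ both have size $(1-\rho_{ij})n/k+O(\sqrt n)\in(\kappa n/k,k^{-0.499}n/k)$, and crucially $A\cup S=W_j$ is independent, so every $V_i$-neighbor of an $A$-vertex lies in $B$. Using \textbf{P2} (at most $\kappa n/(3k)$ vertices of $V\setminus V_i$ have fewer than $15$ neighbors in $V_i$), the number of $A$--$B$ edges is at least $15|A|-5\kappa n/k$, while \textbf{P3}, applied to the set $A\cup B$ of size $\leq 2k^{-0.499}n/k\leq k^{-4/3}n$ (for $k$ large), caps the edges in $A\cup B$ by $5(|A|+|B|)$. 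The obstacle is to pit these two bounds against each other with genuine slack, exploiting the strict inequality $\rho_{ij}<1-\kappa$, the $O(\sqrt n)$ room from balancedness, and the fact that $A$ and $B$ are internally independent (so the P3 edge count is entirely between $A$ and $B$), in order to rule out this boundary window as well. Once that is established, combining both cases yields the lemma.
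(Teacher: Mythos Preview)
Your overall strategy matches the paper's exactly: use \textbf{P1} to handle the range $0.51<\rho_{ij}\leq 1-k^{-0.499}$, then use \textbf{P2} and \textbf{P3} on the sets $A=W_j\setminus V_i$ and $B=V_i\setminus W_j$ for the remaining window. The first step is fine as you wrote it.

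The gap is precisely where you flagged it. Your proposed inequalities do \emph{not} yield slack: combining $15|A|-5\kappa n/k\leq e(A\cup B)\leq 5(|A|+|B|)$ with $|B|\leq |A|+2\sqrt n$ gives only $|A|\leq \kappa n/k+O(\sqrt n)$, which is compatible with $|A|>\kappa n/k-\sqrt n$ and so proves nothing. Including the ``bad'' vertices of $A$ (those with fewer than $15$ neighbours in $V_i$) in the set to which \textbf{P3} is applied costs you in the upper bound without contributing to the lower bound, and that is exactly what kills the slack.

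The paper's fix is a one-line refinement of your setup. Let $U\subset A$ be the vertices with at least $15$ neighbours in $V_i$; each such neighbour lies in $B$ since $W_j$ is independent. Apply \textbf{P3} to $B\cup U$ rather than to $B\cup A$: then $15|U|\leq e(B\cup U)\leq 5(|B|+|U|)$, whence $|U|\leq|B|/2$. Now \textbf{P2} gives $|A\setminus U|\leq\kappa n/(3k)$, and balancedness gives $|B|=|A|+O(\sqrt n)=|U|+|A\setminus U|+O(\sqrt n)\leq|B|/2+\kappa n/(3k)+o(n)$, so $|B|\leq \tfrac{2}{3}\kappa n/k+o(n)$. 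This forces $\rho_{ij}=k|S|/n=1-k|B|/n+o(1)>1-\tfrac{2}{3}\kappa>1-\kappa$, the desired contradiction. The factor $2/3$ is the genuine slack you were looking for, and it comes solely from excising $A\setminus U$ before invoking \textbf{P3}.
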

\begin{proof}
By \Lem~\ref{Lemma_P} we may assume that the random graph $G(n,m,\sigma)$ has the properties {\bf P1}--{\bf P3}.
Suppose that $\tau\in\cB$ is another $k$-coloring of this random graph and that $i,j\in\brk k$ are such that
	$\rho_{ij}(\sigma,\tau)\geq0.51$.
Our aim is to show that $\rho_{ij}(\sigma,\tau)>1-\kappa$.
Without loss of generality we may assume that $i=j=1$.

Let $R=\sigma^{-1}(1)\setminus\tau^{-1}(1)$, $S=\tau^{-1}(1)\cap \sigma^{-1}(1)$ and $T=\tau^{-1}(1)\setminus \sigma^{-1}(1)$.
Because $\tau$ is a $k$-coloring, none of the vertices in $T$ has a neighbor in $S$.
Furthermore, because $\tau$ is balanced we have $|S\cup T|\geq\frac nk-\sqrt n$, and thus $|T|\geq\frac nk-|S|-\sqrt n$.
Since $|S|=\frac nk\rho_{11}(\sigma,\tau)>0.509\frac nk$, {\bf P1} implies that
	\begin{equation}\label{eqLemma_sep1}
	|S|\geq(1-k^{-0.49})\frac nk.
	\end{equation}

Now, let $U$ be the set of all $v\in T$ that have at least $15$ neighbors in $\sigma^{-1}(1)$.
Then all of these neighbors lie in $R$, because $\tau$ is a $k$-coloring.
Further, as $\sigma,\tau$ are asymptotically balanced we obtain from~(\ref{eqLemma_sep1})
	\begin{eqnarray*}
	|R\cup U|\leq |\sigma^{-1}(1)|-|S|+|T|\leq2\bc{\frac n{k^{1.49}}+\sqrt n}\leq n/k^{4/3}.
	\end{eqnarray*}
Hence, {\bf P3} applies to $R\cup U$.
By the definition of $U$ and~\textbf{P3}, the number $e(R\cup U)$ of edges spanned by $R\cup U$ satisfies
	\begin{equation}\label{eqNowhereDense2}
	15|U|\leq e(R\cup U)\leq5|R\cup U|,\quad\mbox{whence }|U|\leq|R|/2.
	\end{equation}
Let $W=T\setminus U$.
Because $W$ consists of vertices with fewer than $15$ neighbors in $\sigma^{-1}(1)$, {\bf P2} yields
	\begin{equation}\label{eqNowhereDense2a}
	\abs W\leq \frac{\kappa n}{3k}.
	\end{equation}
Since $\sigma,\tau$ are balanced, we have
	\begin{equation}\label{eqNowhereDense2b}
	\abs{S}+|R|=|\sigma^{-1}(1)|\sim\frac{n}{k}\sim|\tau^{-1}(1)|=\abs{S}+|U|+|W|.
	\end{equation}
Hence, by~(\ref{eqNowhereDense2}) and (\ref{eqNowhereDense2a})
	\begin{equation}\label{eqNowhereDense2c}
	|R|=|U|+|W|+o(n)\leq\frac{|R|}2+|W|+o(n)\leq \frac{|R|}2+\frac{\kappa n}{3k}+o(n),\quad\mbox{whence }|R|\leq\frac{2\kappa n}{3k}+o(n).
	\end{equation}
Finally, (\ref{eqNowhereDense2b}) and (\ref{eqNowhereDense2c}) imply that
	$\rho_{11}(\sigma,\tau)=\frac kn\cdot |S|= 1+o(1)-\frac kn\cdot |R|>1-\kappa,$
as desired.
\end{proof}

As a next step, we are going to verify that the $\sigma$-complete vertices take the same color in all the colorings in $\cC(\sigma)$ \whp;
a similar argument  was used in~\cite{Barriers}.	

\begin{lemma}\label{Lemma_complete}
Assume that $2k\ln k-\ln k-2\leq d\leq2k\ln k$ and let $\sigma\in\cB$.
\Whp\ the random graph $G(n,m,\sigma)$ has the following property.
	\begin{quote}
	If $\tau\in\cC(\sigma)$, then 
	for all $\sigma$-complete vertices $v$ we have $\sigma(v)=\tau(v)$ \whp
	\end{quote}
\end{lemma}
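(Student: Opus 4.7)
The plan is to argue by contradiction, conditioning on the event that $G(n,m,\sigma)$ satisfies the expansion properties \textbf{P1}--\textbf{P4} from Lemma~\ref{Lemma_P} (which holds \whp). Suppose there exist $\tau\in\cC(\sigma)$ and a $\sigma$-complete vertex $v$ with $\tau(v)\neq\sigma(v)$, and let $F=\{u\in V:\sigma(u)\neq\tau(u)\}$ and $F'=F\cap\mathrm{core}$. The first ingredient is Lemma~\ref{Lemma_sep}: because $\tau\in\cC(\sigma)$ gives $\rho_{ii}(\sigma,\tau)>0.51$ for all $i$, separability upgrades this to $\rho_{ii}(\sigma,\tau)\geq1-\kappa$, whence $|F\cap V_i|\leq\kappa n/k+o(n/k)$ for every color $i$ and $|F|=O(\kappa n)$.

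Next I run an avalanche argument. Since $v$ is $\sigma$-complete, $v$ has a core-neighbor $w$ with $\sigma(w)=\tau(v)$, and the edge $\{v,w\}$ forces $\tau(w)\neq\tau(v)=\sigma(w)$, so $w\in F'$. Iterating: for any $u\in F'$ with $\sigma(u)=a$ and $\tau(u)=b\neq a$, each of the $\geq100$ core-neighbors of $u$ of $\sigma$-color $b$ must also lie in $F$ (otherwise it would share $\tau$-color $b$ with $u$) and hence in $F'$. Therefore the induced subgraph $G[F']$ has minimum degree $\geq100$, and the forced edges are \emph{concentrated}: the $100$ witnesses for any $u\in F'_i\cap\tau^{-1}(j)$ lie entirely in the color class $F'_j$.

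Finally, the contradiction should come from the expansion property \textbf{P3}. Since $\kappa n/k=n\ln^{20}k/k^2$ and $\ln^{20}k<k^{2/3}$ for $k$ large enough, each $|F'_i|\leq\kappa n/k\ll k^{-4/3}n$, so for any pair $(i,j)$ the set $F'_i\cup F'_j$ falls within the scope of \textbf{P3}, giving
\[
e(F'_i,F'_j)=e(F'_i\cup F'_j)\leq 5\bigl(|F'_i|+|F'_j|\bigr).
\]
Combined with the concentrated lower bound $e(F'_i,F'_j)\geq 100\max(|F'_{ij}|,|F'_{ji}|)$ where $F'_{ij}:=\{u\in F'_i:\tau(u)=j\}$, a suitable weighted summation over color pairs should yield $F'=\emptyset$, contradicting $w\in F'$ from the previous paragraph. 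From $F'=\emptyset$ the lemma then follows immediately: the vertex $w$ produced from a $\sigma$-complete $v\in F$ would have to lie in $F'$.

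The main obstacle is the last step. A purely pairwise application of \textbf{P3} followed by naive summation over color pairs yields only the consistency condition $k\geq11$ rather than $F'=\emptyset$, because the min-degree constant $100$ is not large compared to the $O(k)$ factor introduced by summing pairwise bounds across all $k$ colors. Closing this gap requires exploiting the \emph{concentrated} nature of the forced edges together with an additional expansion ingredient --- most likely an application of \textbf{P1} to the sets $V_i\setminus F'_i$ (to constrain how the color classes of $F'$ can interact), or an iterated avalanche that extracts from $F'$ a denser kernel of size $\leq k^{-4/3}n$ on which \textbf{P3} delivers a sharp contradiction. Executing this delicate multi-color accounting is the technical heart of the proof.
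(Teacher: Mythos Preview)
Your overall strategy is right and matches the paper: condition on \textbf{P3} and separability, show that the set $F'$ of core vertices where $\sigma$ and $\tau$ disagree is empty, and then deduce the claim for $\sigma$-complete vertices. The gap you identify in your own argument is real, but the fix is much simpler than you suggest --- no appeal to \textbf{P1} or iterated avalanches is needed.

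The problem is your indexing by colour \emph{pairs}. When you look at $F'_i\cup F'_j$, the $100$ forced neighbours of a vertex $u\in F'_{ij}$ do land in $F'_j$, but they may have any $\tau$-colour, so summing pairwise bounds over all $\binom k2$ pairs costs you a factor of $k$. The paper's trick is to index by a \emph{single} colour. Set
\[
\Delta_i^+=\{v\in V':\tau(v)=i\neq\sigma(v)\},\qquad
\Delta_i^-=\{v\in V':\sigma(v)=i\neq\tau(v)\}.
\]
If $v\in\Delta_i^+$ then $\tau(v)=i$, and the $\geq100$ core-neighbours of $v$ with $\sigma$-colour $i$ must all have $\tau$-colour $\neq i$, hence lie in $\Delta_i^-$. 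Thus $e(\Delta_i^+\cup\Delta_i^-)\geq100|\Delta_i^+|$. Separability gives $|\Delta_i^\pm|\leq(\kappa+o(1))n/k$, so $|S_i|=|\Delta_i^+\cup\Delta_i^-|\leq k^{-4/3}n$ and \textbf{P3} yields $100|\Delta_i^+|\leq 5(|\Delta_i^+|+|\Delta_i^-|)$, i.e.\ $|\Delta_i^-|\geq19|\Delta_i^+|$ for every $i$. Since $\sum_i|\Delta_i^+|=\sum_i|\Delta_i^-|$ (both equal $|F'|$), this forces $F'=\emptyset$, and your final paragraph then finishes the proof.
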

\begin{proof}
By \Lem s~\ref{Lemma_P} and~\ref{Lemma_sep} we may assume that {\bf P3} holds and that $\sigma$ is separable in $G(n,m,\sigma)$.
Let $V'$ be the core of this random graph.
Moreover, set
	\begin{eqnarray}\nonumber
	\Delta_i^+=\cbc{v\in V':\tau(v)=i\neq\sigma(v)},&&
	\Delta_i^-=\cbc{v\in V':\tau(v)\neq i=\sigma(v)}\qquad\mbox{for $i\in\brk k$, so that}\\
	\label{eqfrozenContra1}
	\sum_{i=1}^k\abs{\Delta_i^+}=\abs{\cbc{v\in V':\sigma(v)\neq\tau(v)}}&=&\sum_{i=1}^k\abs{\Delta_i^-}.
	\end{eqnarray}
The assumptions that $\sigma$ is separable and that both $\sigma,\tau$ are asymptotically balanced imply that
	\begin{eqnarray}
	\max_{i\in\brk k}|\Delta_i^+|\leq(\kappa+o(1))\frac nk,&&
		\max_{i\in\brk k}|\Delta_i^-|\leq(\kappa+o(1))\frac nk.
	\label{eqfrozenContra2}
	\end{eqnarray}
We are going to show that
	\begin{equation}\label{eqfrozenContra1a}
	\cbc{v\in V':\sigma(v)\neq\tau(v)}=\emptyset.
	\end{equation}
By construction, this implies that $\sigma(v)=\tau(v)$ for all $\sigma$-complete vertices.

To establish (\ref{eqfrozenContra1a}), let $S_i=\Delta_i^+\cup\Delta_i^-$ for $i=1,\ldots,k$.
Because $\Delta_i^+$ is contained in the core, each $v\in\Delta_i^+$ has at least $100$ neighbors in $\sigma^{-1}(i)$.
Since $\tau$ is a $k$-coloring, all of these neighbors  lie in the set $\Delta_i^-$.
Hence, the number $e(S_i)$ of edges spanned by $S_i$ is at least $100|\Delta_i^+|$.
On the other hand, (\ref{eqfrozenContra2}) implies that $|S_i|\leq k^{-4/3}n$ for all $i$.
Therefore, \textbf{P3} entails that $e(S_i)\leq5|S_i|$ for all $i$.
Thus, we obtain
	$100|\Delta_i^+|\leq  e(S_i)\leq5|S_i|\leq5(\abs{\Delta_i^+}+\abs{\Delta_i^-}).$
Consequently, $|\Delta_i^-|\geq2|\Delta_i^+|$ for all $i$.
Thus, (\ref{eqfrozenContra1}) shows that $\Delta_i^+=\Delta_i^-=\emptyset$ for all $i$, whence~(\ref{eqfrozenContra1a}) follows.
\end{proof}

\begin{proof}[Proof of \Lem~\ref{Lemma_first}.]
Let $\sigma\in\cB$.
We need to show that $G(n,m,\sigma)$ enjoys the properties {\bf T2--T3} from Definition~\ref{Def_good} \whp\
The fact that {\bf T2} holds \whp\ follows directly from \Lem~\ref{Lemma_sep}.

With respect to {\bf T3}, by \Lem~\ref{Lemma_complete} we may assume that that for all $\sigma$-complete $v$ and all $\tau\in\cC(\sigma)$ we have
$\tau(v)=\sigma(v)$.
Let $F_j$ be the set of $j$-free vertices for $j=1,2$.
By \Lem~\ref{Lemma_P} we may assume that
	\begin{equation}\label{eqLemma_first_1}
	|F_1|\leq\frac nk(1+\tilde O_k(1/k)),\quad F_2\leq\tilde O_k(k^{-2})n.
	\end{equation}
By construction, for any vertex $v\in F_1\setminus F_2$ there is a set $C_v\subset\brk k$ of at most two colors such that $\tau(v)\in C_v$ for all $\tau\in\cC(\sigma)$.
Hence,
	\begin{equation}\label{eqLemma_first_2}
	\abs{\cC(\sigma)}\leq 2^{F_1\setminus F_2}\cdot k^{F_2}.
	\end{equation}
Combining~(\ref{eqLemma_first_1}) and~(\ref{eqLemma_first_2}), we see that \whp\ in $G(n,m,\sigma)$,
	\begin{equation}\label{eqLemma_first_3}
	\frac1n\ln\cC(\sigma)\leq \frac{\ln 2}{k}+\tilde O_k(k^{-2}).
	\end{equation}

We need to compare the r.h.s.\ of~(\ref{eqLemma_first_3}) with $\frac1n\ln\Erw\brk\Zkb$.
By~(\ref{XeqFirstMoment}) and Taylor expansion,
	\begin{eqnarray*}
	\frac1n\ln\Erw\brk\Zkb&=&\ln k+\frac d2\ln(1-1/k)
		=\ln k-\frac d2\bc{\frac 1k+\frac1{2k^2}+O_k(k^{-3})}.
	\end{eqnarray*}
Writing $d=\dRS-\eps_k=2k\ln k-\ln k-2\ln 2-\eps_k$, we obtain
	\begin{equation}\label{eqLemma_first_4}
	\frac1n\ln\Erw\brk\Zkb=\ln k+\frac d2\ln(1-1/k)
		=\ln k-\frac d2\bc{\frac 1k+\frac1{2k^2}+O_k(k^{-3})}
			=\frac{\eps_k+\ln 2}k+O_k\hspace{-1mm}\bcfr{\ln k}{k^2}\hspace{-1mm}.
	\end{equation}
Letting, say, $\eps_k=\Theta_k(k^{-1/2})$, we obtain from~(\ref{eqLemma_first_3}) and~(\ref{eqLemma_first_4}) that
	$|\cC(\sigma)|\leq\Erw\brk\Zkb$ \whp\
Hence, {\bf T3} holds in $G(n,m,\sigma)$ \whp

Finally, upon direct inspection we find $f(\bar\rho)=2\ln k+d\ln(1-1/k)$.
Thus, (\ref{eqLemma_first_4}) shows that for $d=\dRS-\eps_k=2k\ln k-\ln k-2\ln 2-\eps_k$ we have
	$k\cdot f(\bar\rho)=2\ln 2+o_k(1)>0$, as claimed.
\end{proof}

\section{The Second Moment}\label{Sec_second}

\noindent{\em In this section we keep the assumptions of 
\Prop~\ref{Prop_second} and the
	notation introduced in \Sec~\ref{sec:outline}.}

\subsection{Overview}
The goal is to prove \Prop~\ref{Prop_second}.
As we already hinted at in \Sec~\ref{sec:outline}, this boils down to
maximizing $f(\rho)$ over $\rho\in\Dg$.
Formally, we have

\begin{proposition}\label{Prop_sufficient}
If $f(\rho)<f(\bar\rho)$ for any $\rho\in\Dg\setminus\cbc{\bar\rho}$, then $\Erw[\Zkg^2]\leq O(\Erw[\Zkg]^2)$.
\end{proposition}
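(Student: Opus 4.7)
The plan is to decompose $\Erw\brk{\Zkg^2}$ according to the overlap matrix $\rho=\rho(\sigma,\tau)$ and handle two regimes separately. Starting from
$$\Erw\brk{\Zkg^2}=\sum_{\sigma,\tau\in\cB}\pr\brk{\sigma,\tau\text{ are both \good\ $k$-colorings of }\gnm},$$
I split the sum based on whether $\rho$ is $k$-stable or not, and bound each piece by $O(\Erw\brk{\Zkg}^2)$.

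First, consider the $k$-stable contribution. For balanced $\sigma,\tau$ the overlap $\rho$ is doubly stochastic up to $O(n^{-1/2})$; combined with $k$-stability this forces the entries exceeding $0.51$ to form a permutation $\pi$ of $\brk k$. Separability (\textbf{T2}) then upgrades each $\rho_{i\pi(i)}$ to at least $1-\kappa$, so the coloring $\tau^{\pi}$ obtained by relabeling $\tau$'s colors via $\pi^{-1}$ satisfies $\rho_{ii}(\sigma,\tau^{\pi})>0.51$ for every $i$ and hence lies in $\cC(\sigma)$. Thus for each $\sigma$ the number of admissible $\tau$ is at most $k!\cdot|\cC(\sigma)|$, and by \textbf{T3} this is at most $k!\cdot\Erw\brk{\Zkc(\gnm)}$. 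Taking expectations,
$$\sum_{\rho\ k\text{-stable}}\Erw\brk{\Zrg}\leq k!\cdot\Erw\brk{\Zkg}\cdot\Erw\brk{\Zkc}=O_k\bc{\Erw\brk{\Zkg}^2},$$
because $\Erw\brk{\Zkc}=\Theta_k(\Erw\brk{\Zkb})=\Theta_k(\Erw\brk{\Zkg})$: balanced maps form a $\Theta_k(1)$-fraction of all maps (cf.~(\ref{XeqFirstMoment})), and $\Erw\brk{\Zkg}\sim\Erw\brk{\Zkb}$ by \Prop~\ref{Prop_first}.

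For the non-$k$-stable part, use $\Erw\brk{\Zrg}\leq\Erw\brk{\Zrb}$ and Fact~\ref{Lemma_frho} to get $\Erw\brk{\Zrb}=O(n^{(1-k^2)/2})\exp(nf(\rho))$. Such $\rho$ is separable and approximately doubly stochastic, but not $k$-stable, so it lies within $O(n^{-1/2})$ of some element of $\Dg$. Now $\Dg$ is the intersection of two closed subsets of the Birkhoff polytope and is therefore compact, while $f$ is continuous; hence the hypothesis $f|_{\Dg\setminus\cbc{\bar\rho}}<f(\bar\rho)$ furnishes, for every $\delta>0$, some $\eta(\delta)>0$ with $f(\rho)\leq f(\bar\rho)-\eta$ whenever $\rho\in\Dg$ and $\|\rho-\bar\rho\|\geq\delta$. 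The outer region thus contributes at most $n^{O(k^2)}\exp(n(f(\bar\rho)-\eta))$, exponentially smaller than $\Erw\brk{\Zkg}^2=\Theta(\exp(nf(\bar\rho)))$ up to polynomial factors, hence negligible. On the inner ball a Laplace expansion around the interior point $\bar\rho$ produces $\Theta(n^{(k-1)^2/2}\exp(nf(\bar\rho)))$, since a direct computation shows $\nabla^2 f(\bar\rho)$ to be strictly negative definite on the tangent space of $\Birk$ (the entropy term contributes $-I$, the $\ln(1-2/k+\norm\rho_2^2/k^2)$ term only a scalar $\frac{d}{k^2(1-1/k)^2}I$ which is $o_k(1)$ for $d\sim 2k\ln k$, while the accompanying rank-one correction vanishes on tangent vectors with zero row and column sums). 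Combined with the $n^{(1-k^2)/2}$ prefactor of Fact~\ref{Lemma_frho}, this matches the polynomial scaling of $\Erw\brk{\Zkg}^2$ exactly.

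The main obstacle is not this reduction but the hypothesis on which the proposition is conditional: verifying that $\bar\rho$ is the unique maximizer of $f$ over $\Dg$. The present argument just translates that optimization statement into the desired second-moment bound, and the physics-inspired restriction to $\Dg$ (rather than all of $\Birk$, where the matrix $\rhos$ from~(\ref{Xeqrhostable}) beats $\bar\rho$) is precisely what makes the optimization feasible. Carrying it out through the novel local variations argument announced in \Sec~\ref{sec:outline} is the real technical content of the section.
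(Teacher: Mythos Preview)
Your approach is essentially the same as the paper's: decompose by overlap, kill the $k$-stable contribution via \textbf{T3} and a permutation argument, kill the non-separable contribution via \textbf{T2}, handle the separable non-$k$-stable region near $\bar\rho$ by a Laplace expansion, and dispose of the rest by compactness of $\Dg$ plus the hypothesis. The paper packages this as a four-way split $\cR_0\cup\cR_1\cup\cR_2\cup\cR_3$ and states each piece as a separate lemma, but the logic is identical.

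One technical point deserves care. The overlap matrices $\rho\in\cR$ of \emph{balanced} maps have row and column sums equal to $1+O(n^{-1/2})$, not exactly $1$; only the total $\sum_{ij}\rho_{ij}=k$ is exact. Hence the lattice $\cR$ near $\bar\rho$ is $(k^2-1)$-dimensional, not $(k-1)^2$-dimensional, and the Laplace computation must be carried out over that larger affine slice. The paper does this by parameterising with $\cL$ (eliminating the single entry $\rho_{kk}$) and obtaining the Hessian $-(1-\frac{d}{k^2(1-1/k)^2})(\id+\vecone)$, which is negative definite on the full $(k^2-1)$-dimensional space. Your Hessian sketch is in fact compatible with this: the rank-one correction you mention vanishes already on $\{\sum_{ij}v_{ij}=0\}$, not merely on the tangent space of $\Birk$, so negative definiteness extends. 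But the Gaussian sum then produces a factor $\Theta(n^{(k^2-1)/2})$, which cancels the $n^{(1-k^2)/2}$ prefactor of Fact~\ref{Lemma_frho} exactly; with the $(k-1)^2$ exponent you wrote the prefactors would not match and the claim ``matches the polynomial scaling exactly'' would be off by $n^{1-k}$. This does not break the upper bound, but it is worth stating correctly.
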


\noindent
The proof of \Prop~\ref{Prop_sufficient}, based on the Laplace method, is a mere technical exercise, which we put off to \Sec~\ref{Sec_sufficient}.

\Prop~\ref{Prop_sufficient} reduces the second moment argument to a problem in analysis.
Indeed, neither the function $f$ nor the domain $\Dg$ over which we need to maximize are dependent on $n$
	(though both involve the parameters $d$ and~$k$).
In the following, we aim to establish

\begin{proposition}\label{Prop_opt}
If $\rho\in\Dg\setminus\cbc{\bar\rho}$, then $f(\rho)<f(\bar\rho)$.
\end{proposition}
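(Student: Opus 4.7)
The plan is to combine a block decomposition induced by the structure of $\Dg$ with a local variational argument exploiting concavity of the entropy. For any $\rho\in\Dg$, separability forces every entry into $[0,0.51]\cup[1-\kappa,1]$, and double stochasticity allows at most one ``heavy'' entry (one of size $\geq 1-\kappa$) per row and per column. Let $\ell$ be the number of rows containing such a heavy entry; the hypothesis that $\rho$ is not $k$-stable forces $\ell\leq k-1$. After permuting rows and columns, I may assume the heavy entries are $(1,1),\ldots,(\ell,\ell)$, so $\rho$ decomposes into an $\ell\times\ell$ near-identity block $A$, an $r\times r$ block $D$ (with $r=k-\ell\geq 1$) whose entries all lie in $[0,0.51]$, and small cross blocks $B,C$ of total mass $O(\ell\kappa)$.

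Substituting this decomposition into $f$ and Taylor-expanding, the stable block $A$ contributes an entropy shift of order $-(\ell/k)\ln k+(1-\ell/k)\ln(1-\ell/k)$, while the induced gain in the log-norm term is $(d/(2k))\cdot(\ell/k)\cdot(1+O_k(1/k))$, via $\|\rho\|_2^2\approx\ell+\|D\|_2^2$ and expansion around $(1-1/k)^2$. For $d=(2k-1)\ln k-O_k(1)$ the leading $(\ell/k)\ln k$ contributions cancel exactly, and the residue is $(1-\ell/k)\ln(1-\ell/k)+O_k(\ell\ln k/k^2)$, strictly negative for every $\ell\geq 1$ (provided $D$ is at its own within-block optimum). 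It then remains to maximize $f$ over $D$ in each fixed-$\ell$ stratum and to compare the resulting values with $f(\bar\rho)$, which corresponds to $\ell=0$ and $D$ uniform.

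For the optimization over $D$ I would use the announced local-variations argument. A direct Hessian computation at the uniform $D$ along the tangent space to the Birkhoff polytope yields $f(\bar\rho+\delta)-f(\bar\rho)=\tfrac{\|\delta\|_2^2}{2}[d/(k-1)^2-1]+O(\|\delta\|_2^3)$, which is strictly negative for $d=O_k(k\ln k)\ll k^2$, securing local optimality at the uniform. Globally, whenever $D$ is non-uniform there exist two rows (or two entries within a row) of $D$ whose averaging strictly raises $H(D)$ by strict concavity of $-x\ln x$, while the simultaneous change in $\|D\|_2^2$ is a controllable quadratic dominated by the entropy gain so long as the constraint $D_{ij}\leq 0.51$ is inactive at the perturbed entries. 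Iterating these moves drives $D$ to the uniform matrix while $f$ strictly increases. The main obstacle is the boundary regime where several entries of $D$ saturate the $0.51$ threshold: plain row-averaging then violates feasibility, and one must devise more delicate perturbations, for example paired exchanges mixing an entry near $0.51$ with entries near zero in another row or column, each time verifying both that $\Dg$-membership is preserved and that $f$ strictly increases. Making this case analysis uniform over all feasible block sizes $\ell\in\{0,\ldots,k-1\}$ and along the entire boundary of the constraint set is where the bulk of the analytical work concentrates.
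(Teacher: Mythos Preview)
Your overall architecture coincides with the paper's: stratify $\Dg$ by the number $\ell$ of heavy entries, push each stratum towards its barycentre via entropy-increasing averaging moves, and then compare with $f(\bar\rho)$. The genuine gap is in the regime where $\ell$ is close to $k$. Your claim that the residue $(1-\ell/k)\ln(1-\ell/k)+O_k(\ell\ln k/k^2)$ is strictly negative for every $\ell\geq 1$ is not supported by that estimate: for $\ell=k-1$ the main term equals $-(\ln k)/k$ while the error term is $O_k(\ln k/k)$, so they are of the same order and no conclusion follows. Indeed the explicit expansion~(\ref{Lemma_pure}) shows that for $s$ near $k$ the positive correction $\tfrac{s\ln k}{2k^2}(3-s/k)\sim(\ln k)/k$ exactly matches $|(1-s/k)\ln(1-s/k)|$, leaving only an $O_k(1/k)$ residual whose sign depends on lower-order constants. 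This is why the paper treats $k-k^{0.49}\leq s<k$ by a separate argument (\Prop~\ref{Prop_manyStable}) that compares $f(\rho)$ directly to $f(\id)=\tfrac12 f(\bar\rho)$ rather than to $f(\rhoss)$, and manufactures an explicit margin $-\tfrac{k-s}{3k}\ln k$ from the fact that each of the $k-s$ non-heavy rows satisfies $\|\rho_i\|_2^2\leq 0.15$. Your Taylor expansion around the uniform block does not see this constant and therefore cannot close the large-$\ell$ case.

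That bound $0.15$ also resolves your boundary obstacle more cleanly than the paired exchanges you propose. The paper proves separately (\Prop~\ref{Prop_singly}) that any singly-stochastic $\rho$ with an entry in $[0.15,0.51]$ already satisfies $f(\rho)<0<f(\bar\rho)$; hence one may assume from the outset that every non-heavy entry is at most $0.15$, comfortably inside the range where the quantitative averaging lemma (\Prop~\ref{Cor_Var}: if $|J|\geq k^\lambda$ and $\max_{j\in J}\rho_{ij}<\lambda/2-\ln\ln k/\ln k$ then averaging row $i$ over $J$ strictly increases $f$) applies with $\lambda$ bounded away from zero. Your averaging step asserts that ``the entropy gain dominates the quadratic change'' without such a quantitative condition, and once entries approach $0.51$ this is simply false in general; the paper's device of excising $[0.15,0.51]$ is what makes the iteration go through uniformly in $\ell$.
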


\noindent
Thus, \Prop~\ref{Prop_second} is immediate from \Prop s~\ref{Prop_sufficient} and~\ref{Prop_opt}.

The proof of \Prop~\ref{Prop_opt} is the heart of the second moment argument.
Of course, we need to take a closer look at the function $f$.
As we will see, it consists of two ingredients: an entropy term and a probability term.
More specifically, suppose that $p:\Omega\ra\brk{0,1}$ is a probability distribution on a finite set $\Omega$ (i.e., $\sum_{x\in\Omega}p(x)=1$).
Recalling our convention that $0\ln 0=0$, we denote by
	$$H(p)=-\sum_{x\in\Omega}p(x)\ln p(x)$$
the \bemph{entropy} of $p$.
Since any $\rho\in\cD$ satisfies $\sum_{i,j}\rho_{ij}=k$, we can view $k^{-1}\rho$ as a probability distribution on $\brk k\times\brk k$.
Hence, we can write
	\begin{eqnarray*}
	f(\rho)&=&H(k^{-1}\rho)+E(\rho),\qquad\mbox{with}\qquad
	E(\rho)=\frac d2\cdot\ln\bc{1-\frac2k+\frac{\norm{\rho}_2^2}{k^2}}.
	\end{eqnarray*}
Combinatorially, $E(\rho)$ corresponds to the (logarithm of the) probability that $\sigma,\tau\in\cB$ with overlap $\rho$
simulataneously happen to be $k$-colorings, cf.\ the proof of Fact~\ref{Lemma_frho}.

It is clear that the entropy is {\em maximized} at the barycentre $\bar\rho$ of the Birkhoff polytope, because $k^{-1}\bar\rho$ is
	the uniform distribution on $\brk k\times\brk k$.
Furthermore, among all the matrices $\rho$ with non-negative entries that sum to $k$, $\bar\rho$ is the one that {\em minimizes} the Frobenius norm and hence $E(\rho)$.
This shows that $\bar\rho$ is a stationary point of $f(\rho)$.
But how do we prove that $\bar\rho$ is the global maximizer of $f$?

The domain $\Dg$ admits a natural decomposition into several subsets.
Let us call $\rho\in\Birk$ \bemph{$s$-stable} if the matrix has precisely $s$ entries that are greater than $0.51$.
Let $\Dgs$ denote the set of all $s$-stable $\rho\in\Dg$.
Geometrically, any $\rho\in\Dgs$ is close to a $k-s$-dimensional face of the Birkhoff polytope.
For if $\rho$ has $s$ entries greater than $0.51$, then by separability these entries are in fact at least $1-\kappa$
	(with $\kappa=\ln^{20}k/k$ as in~(\ref{eqseparable})).
Hence, $\rho$ is close to the face where these $s$ entries are equal to $1$.
Indeed, as all other entries of $\rho$ are smaller than $0.51$, $\rho$ is near a point ``deep inside'' that face.
Consequently, for any $1\leq s<k$ the set $\Dgs$ is disconnected:
	it consists of many tiny ``splinters'' near the $k-s$-dimensional faces of $\Birk$.
Each of these splinters can be mapped to the component where $\rho_{11},\ldots,\rho_{ss}>0.51$ by permuting
the rows and columns suitably, which does not affect the function $f$.

In the following, we are going to optimize $f$ separately over $\Dgs$ for each $0\leq s<k$.
We are going to argue that for each $s$, the point
	$\rhoss$ whose first $s$ diagonal entries are $1$ and whose $(i,j)$-entries are equal to $(k-s)^{-1}$ for $i,j>s$
	comes close to maximizing $f$ over $\Dgs$ (up to a negligible errror term in each case).
Geometrically, $\rhoss$ is the centre of the face defined by $\rho_{11}=\cdots=\rho_{ss}=1$.
Furthermore, in the case $s=0$ we have $\rhoss=\bar\rho$, and we will see that the maximum over $\mathcal D_{0,\mathrm{\good}}$ is attained at this very point.

We start by showing that we may confine ourselves to matrices without an entry in the interval $(0.15,1-\kappa)$.
Recall that $\cS$ is the set of all singly-stochastic $k\times k$-matrices.

\begin{proposition}\label{Prop_singly}
For all $\rho\in \cS$ such that $\rho_{ij}\in\brk{0.15,0.51}$~for some $(i,j)\in\brk k\times\brk k$ we have $f(\rho)<0$.
\end{proposition}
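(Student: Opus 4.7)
The plan is to bound $f(\rho)$ from above by a Lagrangian/KKT argument that exploits the row-wise decomposition available for singly-stochastic matrices. Writing $\rho^{(i)}$ for the $i$-th row of $\rho$ (a probability distribution on $[k]$), we have
\[
f(\rho) = \ln k + \frac{1}{k}\sum_i H(\rho^{(i)}) + \frac{d}{2}\ln\!\left(1 - \frac{2}{k} + \frac{1}{k^2}\sum_i \|\rho^{(i)}\|_2^2\right).
\]
Without loss of generality (by permutation-invariance of $f$) the entry in $[0.15, 0.51]$ sits in position $(1,1)$; set $x = \rho_{11}$.

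I would first use the KKT stationarity conditions for maximising $f$ over $\cS$ subject to the row-sum constraints and $\rho_{11} = x$ to reduce the optimisation to a finite family of canonical forms. Computing $\partial_{\rho_{ij}} f$ gives the stationarity equation $\ln\rho_{ij} = 2k\Lambda\rho_{ij} - k\lambda_i - 1$, where $\Lambda = d/\bigl(2k^2(1 - 2/k + \|\rho\|_2^2/k^2)\bigr)$ is a global constant and $\lambda_i$ is the multiplier for row $i$. Because a line meets the graph of $\ln$ in at most two points, each row of the maximiser takes at most two distinct positive values, with $x$ being one of the two values for row $1$.

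Next, I would perform a direct Taylor expansion of $f$ in $1/k$ across this finite family, using the hypothesis $d = (2k-1)\ln k - c$ and the consequent identity $f(\bar\rho) = c/k + O_k(\ln k/k^2)$. The key observation is that the entry $\rho_{11} = x$ forces $\|\rho\|_2^2$ to exceed its uniform value by at least $x^2 - O_k(1/k)$, shifting the argument of $\ln$ in the $E$-term of $f$ by $\sim x^2/k^2$ and producing an extra $\Theta(x^2 \ln k/k)$ contribution; simultaneously the row-$1$ entropy $H(\rho^{(1)})$ falls short of $\ln k$ by $\approx x\ln k$, costing $\approx x\ln k/k$ after division by $k$. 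Combining these two leading-order effects yields, uniformly across the canonical family,
\[
f(\rho) \leq \frac{-x(1-x)\ln k}{k} + O_k\!\left(\frac{1}{k}\right).
\]
Since $x(1-x) \geq 0.15 \cdot 0.49 > 0.07$ on $[0.15, 0.51]$, the right-hand side is $< 0$ for all $k$ sufficiently large, completing the argument.

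The main obstacle is the case analysis over the finite family of two-value canonical forms, together with uniform tracking of the $1/k$-expansion over all rows (both the constrained row $1$ and the unconstrained ones). In particular, although the row-$1$ "uniform-on-the-rest" configuration $(x, \tfrac{1-x}{k-1}, \ldots, \tfrac{1-x}{k-1})$ paired with uniform rows elsewhere turns out to saturate the bound, this must be verified by explicit comparison against the other members of the two-value family (e.g.\ the "bimodal" choice $(x, 1-x, 0, \ldots, 0)$ or rows concentrated on a single coordinate), tracking how the entropy and $\|\cdot\|_2^2$ contributions interact when different rows adopt different canonical forms.
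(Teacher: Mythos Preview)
Your overall strategy---reduce the constrained maximum over $\cS$ to a finite family of two-value rows via first-order conditions, then expand in $1/k$---is close in spirit to the paper's approach (the paper uses its local-variations Proposition~\ref{Cor_Var} rather than KKT directly, but the effect is the same: each row of the maximiser is constant off its ``big'' entry). However, there is a genuine gap in the step where you pass from the canonical family to the displayed inequality
\[
f(\rho)\;\leq\;\frac{-x(1-x)\ln k}{k}+O_k\!\left(\frac1k\right).
\]
This bound is derived by comparing row~$1$ to the uniform row and tacitly assuming the remaining rows are uniform. That assumption is false: the unconstrained maximum of $f$ over $\cS$ is \emph{not} attained at $\bar\rho$ but rather at matrices with roughly $k/2$ rows close to a standard basis vector, and one computes $\max_{\cS}f=\frac{\ln k}{8k}+O_k(1/k)$ (this is the content of the paper's Lemma~\ref{Lemma_P2}). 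Hence once you allow the other rows to take their optimal two-value form, the correct inequality is
\[
f(\rho)\;\leq\;\Bigl[\tfrac18-x(1-x)\Bigr]\frac{\ln k}{k}+O_k\!\left(\frac1k\right),
\]
and to conclude $f(\rho)<0$ you need $x(1-x)>\tfrac18$, i.e.\ $x>\tfrac12(1-1/\sqrt2)\approx0.146$. This is precisely why the proposition's hypothesis is $x\geq0.15$ rather than any smaller constant; your numerical margin ``$x(1-x)>0.07$'' is illusory.

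Concretely: the configuration you flag last (``rows concentrated on a single coordinate'') is not merely a case to be checked and dismissed---it is the configuration that \emph{beats} your claimed saturating point and shifts the baseline by $+\frac{\ln k}{8k}$. The paper handles this by proving Lemma~\ref{Lemma_P2} separately (itself requiring the local-variations tool and a nontrivial one-variable analysis, Lemma~\ref{Lemma_calculus}, to pin down the optimal ``almost-concentrated'' row shape $(1-1/k,\frac{1}{k(k-1)},\ldots)$). Your proposal would need an equivalent ingredient before the final expansion goes through.
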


\noindent
We will see shortly how \Prop~\ref{Prop_singly} implies that $\bar\rho$ is the maximizer of $f$ over $D_{0,\mathrm{\good}}$.
In addition, there are three different ranges of $1\leq s<k$ that we  deal with separately.

\begin{proposition}\label{Prop_fewStable}
Suppose that $1\leq s\leq k^{0.999}$.
Then for all $\rho\in \Dgs$ we have $f(\rho)<f(\bar\rho)$.
\end{proposition}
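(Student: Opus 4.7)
The plan is to eliminate matrices with an entry in the ``forbidden band'' $[0.15,0.51]$ via \Prop~\ref{Prop_singly}, reduce the remaining matrices to a nearly block form, carry out a variational analysis on the inner block, and finally compare with $f(\bar\rho)$. Concretely, if $\rho\in\Dgs$ has any entry in $[0.15,0.51]$, \Prop~\ref{Prop_singly} gives $f(\rho)<0$, while $f(\bar\rho)>0$ by \Lem~\ref{Lemma_first}, and we are done. So I may assume every entry of $\rho$ lies in $[0,0.15)\cup[1-\kappa,1]$. Combined with $s$-stability and double stochasticity, exactly $s$ entries are at least $1-\kappa$, and they must form a partial permutation matrix: two such entries in a single row would push the row sum above $1$ for $k$ large. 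Since $f$ is invariant under simultaneous permutations of rows and columns, I may assume $\rho_{ii}\ge 1-\kappa$ for $i\le s$ and that every other entry is bounded above by $0.15$.

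Next I would write $\rho$ in block form, with an upper-left $s\times s$ block close to $I_s$ and a lower-right $(k-s)\times(k-s)$ block $\mu$ that is approximately doubly stochastic with all entries $<0.15$. The total mass of off-diagonal entries in the stable rows and columns is at most $s\kappa=\tilde O_k(s/k)$, and a routine calculation using $\kappa=\ln^{20}k/k$ shows their contribution to $f(\rho)$ is negligible compared to the $s/k$ scale that matters below. Thus it suffices to maximize, over doubly stochastic $(k-s)\times(k-s)$ matrices $\mu$ with all entries $<0.15$, the function
\[
F(\mu)=\ln k+\frac{1}{k}H_*(\mu)+\frac{d}{2}\ln\!\Bigl(1-\frac{2}{k}+\frac{s+\|\mu\|_2^2}{k^2}\Bigr),\qquad H_*(\mu)=-\sum_{i,j>s}\mu_{ij}\ln\mu_{ij},
\]
and to compare the maximum with $f(\bar\rho)$.

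The main obstacle is showing that the maximum of $F$ is attained at $\mu=(k-s)^{-1}J$, equivalently at $\rho=\rhoss$. This is genuinely subtle because $H_*$ is concave and maximized at the uniform $\mu$, while $\|\mu\|_2^2$, which drives the $\ln$ term, is convex and minimized there; without the entrywise bound $\mu_{ij}<0.15$ the maximum would lie elsewhere, as the role of $\rhos$ in \Sec~\ref{sec:outline} illustrates. My plan is a local-variations argument: writing $\mu=(k-s)^{-1}J+\nu$ with $\nu$ having zero row and column sums, I would expand $H_*(\mu)-(k-s)\ln(k-s)$ in $\nu$ and use $\mu_{ij}<0.15$ to control the tail of the expansion. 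In the near-uniform regime the entropy loss $\sim(k-s)\|\nu\|_2^2/2$ beats the $E$-gain $\sim d\|\nu\|_2^2/(2k^2)\sim\|\nu\|_2^2\ln k/k$ as soon as $(k-s)/2>\ln k$, a condition easily met for $s\le k^{0.999}$; in the far-from-uniform regime, the per-entry estimate $-\mu_{ij}\ln\mu_{ij}\le\mu_{ij}\ln(1/0.15)$ together with the crude bound $\|\mu\|_2^2\le 0.15(k-s)$ gives an entropy loss of order $\ln k$ against an $E$-gain of order $0.15\ln k$, again favorable.

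Finally, it remains to verify $f(\rhoss)<f(\bar\rho)$ for $1\le s\le k^{0.999}$. Starting from the closed forms
\[
f(\rhoss)=\ln k+\frac{k-s}{k}\ln(k-s)+\frac{d}{2}\ln\!\Bigl(1-\frac{2}{k}+\frac{s+1}{k^2}\Bigr),\qquad f(\bar\rho)=2\ln k+d\ln(1-1/k),
\]
a Taylor expansion in $1/k$ with $d=(2k-1)\ln k-c$ shows that the leading $(s/k)\ln k$ entropy cost of pinning $s$ diagonal entries to $1$ is cancelled exactly by the $(s/k)\ln k$ gain in $E$, leaving the subleading quantity $f(\rhoss)-f(\bar\rho)=-s/k+o_k(s/k)$ uniformly over the range. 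Combined with the variational estimate, this yields $f(\rho)<f(\bar\rho)$ for every $\rho\in\Dgs$, completing the proof.
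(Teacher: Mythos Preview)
Your outline tracks the paper's proof in its first and last phases: the reduction via \Prop~\ref{Prop_singly} to matrices with $\rho_{ii}\ge 1-\kappa$ for $i\le s$ and all other entries below $0.15$ is exactly right, and your endgame comparison $f(\rhoss)-f(\bar\rho)=(1-s/k)\ln(1-s/k)+o_k(1/k)=-s/k+o_k(s/k)$ for $s\le k^{0.999}$ is correct and matches~(\ref{Lemma_pure}).

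The gap is in the middle step, where you try to show that the maximum of $F(\mu)$ over the lower-right block is attained at the uniform $\mu$. Your quadratic expansion near $\mu=(k-s)^{-1}J$ correctly shows that point is a \emph{local} maximum, but your ``far-from-uniform'' argument is not valid: the inequality $-\mu_{ij}\ln\mu_{ij}\le\mu_{ij}\ln(1/0.15)$ is equivalent to $\mu_{ij}\ge 0.15$ and hence fails on the entire regime you need it for. Without a replacement for that step you have no global control, and there is no obvious way to bridge the local analysis to a global maximum via a two-regime split of this kind.

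The paper sidesteps this difficulty altogether. Rather than maximising over the inner block, it applies the already-established variational lemma, \Prop~\ref{Cor_Var}, to each row with $J=\{s+1,\ldots,k\}$; since $|J|=k-s\ge k^{0.999}$ and every entry in those columns is below $0.15$, the hypotheses are met (say with $\lambda=1/2$), and the resulting column-averaged matrix $\hat\rho$ satisfies $f(\rho)\le f(\hat\rho)$. Crucially, $\hat\rho$ need not equal $\rhoss$; instead the paper bounds $H(k^{-1}\hat\rho)\le H(k^{-1}\rhoss)+o_k(1/k)$ and $\|\hat\rho\|_2^2\le s+1+o_k(1)$ \emph{separately} by direct estimation (your ``routine'' off-block estimates are precisely what goes into this), whence $E(\hat\rho)\le E(\rhoss)+o_k(1/k)$ and so $f(\hat\rho)\le f(\rhoss)+o_k(1/k)$. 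No global optimisation of $F(\mu)$ is ever needed. If you want to salvage your approach, the cleanest fix is to invoke \Prop~\ref{Cor_Var} in place of your ad hoc two-regime argument.
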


\begin{proposition}\label{Prop_intermediate}
Suppose that $k^{0.999}<s<k-k^{0.49}$.
Then for all $\rho\in \Dgs$ we have $f(\rho)<f(\bar\rho)$.
\end{proposition}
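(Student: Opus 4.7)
The plan is a two-step comparison: bound $f(\rho)\le f(\rhoss)+\delta_k$ uniformly over $\rho\in\Dgs$, then compare $f(\rhoss)$ directly to $f(\bar\rho)$ and verify that the strict gap beats $\delta_k$ throughout the intermediate range. For the reduction, note that if any entry $\rho_{ij}\in[0.15,0.51]$ then Proposition~\ref{Prop_singly} gives $f(\rho)<0<f(\bar\rho)$ (the latter via Proposition~\ref{Prop_first}), so we may assume every entry of $\rho$ lies in $[0,0.15)\cup[1-\kappa,1]$. By $s$-stability exactly $s$ entries lie in $[1-\kappa,1]$; each exceeds $1/2$ and hence occupies a distinct row and column, forming a partial permutation. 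Since $f$ is invariant under independent permutations of rows and columns, we may relabel so that the large entries sit at positions $(i,i)$ for $i\in A:=[s]$. Writing $B:=[k]\setminus A$, block-decompose $\rho=\begin{pmatrix}P&Q\\R&S\end{pmatrix}$ and set $\epsilon_i:=1-\rho_{ii}\in[0,\kappa]$. Double stochasticity forces the total mass off the large diagonal in $P\cup Q\cup R$ to be $O(s\kappa)$, every row and column sum of $S$ to lie in $[1-\kappa,1]$, and every entry of $S$ to lie in $[0,0.15)$.

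Next, estimate the block-wise contributions. For the entropy $H(\rho)=-\sum\rho_{ij}\ln\rho_{ij}$, the large diagonal of $P$ contributes $\sum_i(\epsilon_i+O(\epsilon_i^2))=O(s\kappa)$, the small entries in $P_{\text{off}}\cup Q\cup R$ contribute at most $O(s\kappa\ln(k/\kappa))=\tilde O_k(\ln^{21}k)$ via Jensen's inequality applied to the concave function $-x\ln x$ with fixed total mass, and the $S$-block contributes at most $\mu_S\ln((k-s)^2/\mu_S)=(k-s)\ln(k-s)+O(s\kappa\ln k)$ with $\mu_S:=\sum_{i,j\in B}S_{ij}\in[k-s-s\kappa,\,k-s]$. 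For the Frobenius norm, $\sum_{i\le s}\rho_{ii}^2=s-O(s\kappa)$ and $\|P_{\text{off}}\|_2^2+\|Q\|_2^2+\|R\|_2^2=O(s\kappa)$, so $\|\rho\|_2^2=s+\|S\|_2^2+O(s\kappa)$.

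The central step is to control the $S$-dependent part
\[
F(S):=\tfrac{1}{k}H(S)+\tfrac{d}{2}\ln\!\bigl(1-2/k+(s+\|S\|_2^2)/k^2\bigr).
\]
I would set up the Lagrangian with row/column sum constraints in $[1-\kappa,1]$ and box constraint $S_{ij}\in[0,0.15]$. With $\gamma:=d/\bigl(k(1-2/k+\|\rho\|_2^2/k^2)\bigr)=\Theta(\ln k)$, the KKT stationarity conditions read $\ln S_{ij}=\gamma S_{ij}+\alpha_i+\beta_j+b_{ij}$ with $b_{ij}\ge 0$ the boundary multiplier, pinning every maximizer to be either essentially uniform ($S_{ij}\approx 1/(k-s)$, whence $F(S)=F(\rhoss)+o_k(\delta_k)$) or piecewise constant with entries in $\{0,0.15\}$ plus a small free per-row entry. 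In the latter case, a direct computation shows that each row saturating the box at $0.15$ incurs a per-row entropy loss of $\Omega(\ln(k-s))$ while gaining only $\gamma\cdot 0.15^2=O(\ln k)$ in the energy via its $O(1)$ contribution to $\|S\|_2^2$, so the total $F$-deficit relative to the uniform baseline is $\Omega((k-s)\ln(k-s)/k)$. Combining all the block estimates gives $f(\rho)\le f(\rhoss)+\delta_k$ with $\delta_k=\tilde O_k(\ln^{21}k/k)$.

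Finally, a direct computation gives
\[
f(\rhoss)-f(\bar\rho)=\frac{-s\ln k+(k-s)\ln(1-s/k)}{k}+\tfrac{d}{2}\ln\!\Bigl(1+\tfrac{s}{(k-1)^2}\Bigr),
\]
and Taylor-expanding the last term to second order in $s/(k-1)^2$ with $d=(2k-1)\ln k-O_k(1)$ contributes $\frac{s\ln k}{k}+O(\ln k/k)$, exactly cancelling the first summand and leaving $\frac{(k-s)\ln((k-s)/k)}{k}+O(\ln k/k)$. In the intermediate range $k^{0.999}<s<k-k^{0.49}$ this dominant term is strictly negative and of magnitude at least $0.49\ln k/k^{0.51}$ (attained near the upper endpoint), strictly dominating both the $O(\ln k/k)$ remainder here and the error $\delta_k=\tilde O_k(\ln^{21}k/k)$, since $\ln^{20}k=o(k^{0.49})$. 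The hard part is the local variations step: converting the KKT case analysis into a rigorous uniform bound requires a careful entropy–energy tradeoff for which the Proposition~\ref{Prop_singly} cutoff at $0.15$ is indispensable — without it, concentrated boundary configurations of $S$ could in principle generate enough extra mass in $\|S\|_2^2$ to defeat the entropy penalty, and the hypothesis $s<k-k^{0.49}$ is calibrated precisely so that the gap to $f(\bar\rho)$ remains polynomially larger than the error terms generated by the block decomposition.
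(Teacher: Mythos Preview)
Your overall architecture matches the paper's: reduce via Proposition~\ref{Prop_singly} to entries in $[0,0.15)\cup[1-\kappa,1]$, place the $s$ large entries on the diagonal, bound $f(\rho)\le f(\rhoss)+\tilde O_k(1/k)$, then check that the leading term $(1-s/k)\ln(1-s/k)$ of $f(\rhoss)-f(\bar\rho)$ is strictly negative and dominates all errors throughout $k^{0.999}<s<k-k^{0.49}$. Your endgame numerics are right.

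The gap is the KKT step for the $S$-block. First, a technical error: the row and column sums of $S$ do \emph{not} lie in $[1-\kappa,1]$; for a row $i\in B$ each entry $\rho_{ij}$ with $j\in A$ is at most $\kappa$, but there are $s$ such columns, so the row sum of $S$ is only $\ge 1-s\kappa$, which is vacuous in this regime. More fundamentally, $S$ is not a free variable --- its row sums are coupled to $R$ and its column sums to $Q$ --- so optimizing $F(S)$ in isolation is not the correct subproblem. Even granting your setup, the stationarity condition $\ln S_{ij}=\gamma S_{ij}+\alpha_i+\beta_j$ admits up to two interior solutions per pair of multipliers, and nothing you wrote rules out a maximizer whose rows take two distinct interior values, neither close to $1/(k-s)$; the dismissal of the boundary case is likewise a heuristic rather than a proof. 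The paper sidesteps all of this by applying Proposition~\ref{Cor_Var} row by row: with $\lambda=0.4$, the column sets $[k]\setminus[s]$ and $[s]\setminus\{i\}$ both have size $\ge k^{0.4}$ (this is exactly where the hypotheses $k^{0.999}<s<k-k^{0.49}$ enter) and all relevant entries are $<0.15<\lambda/2-\ln\ln k/\ln k$, so averaging within those sets can only increase $f$. This produces a concrete averaged matrix $\hat\rho$ whose entropy and Frobenius norm are then estimated directly, replacing your implicit Lagrangian optimum with an explicit comparison. You should invoke Proposition~\ref{Cor_Var} in place of the KKT sketch.
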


\begin{proposition}\label{Prop_manyStable}
Suppose that $k-k^{0.49}\leq s<k$.
Then for all $\rho\in\Dgs$ we have $f(\rho)<f(\bar\rho)$.
\end{proposition}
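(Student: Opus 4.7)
The plan is to upper bound $f(\rho)$ for $\rho\in\Dgs$ by $f(\rhoss)$ (up to a negligible correction) and then verify $f(\rhoss)<f(\bar\rho)$ by explicit Taylor expansion. By the permutation invariance of $f$, I may assume the $s$ stable entries of $\rho$ lie on the diagonal at positions $(i,i)$, $i\le s$, with $\rho_{ii}=1-\alpha_i$ and $\alpha_i\in[0,\kappa]$. Proposition~\ref{Prop_singly} (together with $f(\bar\rho)>0$ from Proposition~\ref{Prop_first}) lets me restrict to non-stable entries in $[0,0.15)$, and double-stochasticity forces every off-diagonal entry in the first $s$ rows and columns to be at most $\kappa$. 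Thus $\rho$ decomposes into a near-identity top-left part, leakage blocks of total mass $\tilde O_k(1)$, and an $r\times r$ bottom-right block $D$ (with $r=k-s\le k^{0.49}$) whose entries lie in $[0,0.15)$.

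To bound $f(\rho)-f(\rhoss)$ I decompose $f=H(k^{-1}\rho)+E(\rho)$ and estimate each part. Using the entropy chain rule $H(k^{-1}\rho)=\ln k+\frac1k\sum_iH_i$, the row-wise bounds $H_i\le H_2(\alpha_i)+\alpha_i\ln(k-1)$ for $i\le s$ and a corresponding bound for $i>s$, combined with the doubly-stochastic constraints, yield $H(k^{-1}\rho)\le H(k^{-1}\rhoss)+\frac{\ln k}{k}\cdot 2A+\tilde O_k(k^{-1})$ where $A=\sum_{i\le s}\alpha_i$. For the energy, $(1-\alpha_i)^2\approx 1-2\alpha_i$ and $\|\rhoss\|_2^2=s+1$ give $\|\rho\|_2^2-\|\rhoss\|_2^2\le-2A+\|D\|_2^2-1+O(s\kappa^2)$, so the first-order expansion $E(\rho)-E(\rhoss)\approx\frac{\ln k}{k}(\|\rho\|_2^2-\|\rhoss\|_2^2)$ supplies a matching $-2A$ energy contribution. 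The key cancellation is that these $\pm 2A$ coefficients cancel exactly, leaving
\[
f(\rho)-f(\rhoss)\le-\frac{\ln k}{k}\bigl(1-\|D\|_2^2\bigr)+\tilde O_k(k^{-1}),
\]
and the entrywise bound $D_{ij}<0.15$ yields $\|D\|_2^2\le 0.15\,r$.

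Finally, using $H(k^{-1}\rhoss)=\ln k+(r/k)\ln r$, $\|\rhoss\|_2^2=s+1$ and $d=(2k-1)\ln k-c$, a direct Taylor expansion gives
\[
f(\rhoss)-f(\bar\rho)=-\ln k+\frac{r}{k}\ln r+\frac{d}{2}\ln\!\left(1+\frac{s}{(k-1)^2}\right)=\frac{1}{k}\!\left[r\ln r-(r-1)\ln k-\frac{c}{2}\right]+O_k\!\left(\frac{r\ln k}{k^2}\right).
\]
At $r=1$ this equals $-c/(2k)+O_k(\ln k/k^2)$, negative since $c\ge 2\ln 2$ in the regime of interest (Proposition~\ref{Prop_first}); for $2\le r\le k^{0.49}$ it is dominated by $-(r-1)\ln k/k$. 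Combining the two estimates gives
\[
f(\rho)-f(\bar\rho)\le\frac{\ln k}{k}\bigl(0.15\,r-1-(r-1)\bigr)+\tilde O_k(k^{-1})=-\frac{0.85\,r\,\ln k}{k}+\tilde O_k(k^{-1})<0
\]
for $r\ge 2$, while for $r=1$ the step-two bound already delivers $f(\rho)\le f(\rhoss)-\Omega(\ln k/k)$, comfortably beating the step-three $-c/(2k)$. The main obstacle is making the $\pm 2A$ cancellation rigorous uniformly in $\rho$: the stable-row entropy upper bound $\alpha_i\ln(k-1)$ is only tight when the off-diagonal mass spreads uniformly, whereas the column-$k$ sum constraint $\sum_{i\le s}\rho_{ik}\ge 1-0.15r$ actually forces substantial concentration of $\alpha_i$ into column-specific positions; verifying that the cancellation (and hence the sign of the residual $1-\|D\|_2^2$ term) survives in every admissible configuration is precisely where the paper's local variations argument must be deployed.
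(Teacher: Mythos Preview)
Your outline has a genuine gap that you partially acknowledge but misdiagnose. The claimed bound
\[
f(\rho)-f(\rhoss)\le-\frac{\ln k}{k}\bigl(1-\|D\|_2^2\bigr)+\tilde O_k(k^{-1})
\]
is useless as written: the error term $\tilde O_k(k^{-1})=(\ln k)^C/k$ dominates the main term $\ln k/k$. The source of this polylog loss is the entropy of the unstable rows $i>s$. Your ``$\pm 2A$'' counts only the stable-row contribution, but the leakage $\sum_{j\le s}\rho_{ij}$ from an unstable row into the first $s$ columns adds a further $\le \frac{t}{k}\ln(s/r)$ to the entropy excess, with $t=\sum_{i>s}\sum_{j\le s}\rho_{ij}\le A$. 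Since $A$ can be as large as $\kappa k=\ln^{20}k$, this uncancelled term is genuinely $\tilde O_k(1/k)$ and swamps everything. In particular your claim for $r=1$ that step two delivers $-\Omega(\ln k/k)$ is unjustified: the row-$k$ entropy $H(\rho_k)\approx(1-D)\ln k$ and the energy deficit $\|D\|_2^2-1\approx-1$ nearly cancel, leaving only $O(1/k)$.

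The paper fixes this not by local variations (Proposition~\ref{Cor_Var} is \emph{not} used in this proof) but by two bookkeeping choices. First, it compares to $f(\id)$ rather than $f(\rhoss)$; second, it keeps $\mathcal H=\frac1k\sum_{i\le s}h(1-\rho_{ii})$ as a separate quantity rather than bounding it by $A\ln k/k$. The crucial point is that the $t$-dependent pieces of the stable-row bound ($\frac{q-t}{k}\ln s+\frac{t}{k}\ln r$) and the unstable-row bound ($\frac{t}{k}\ln s+\frac{r-t}{k}\ln r$) combine to $\frac{q}{k}\ln s+\frac{r}{k}\ln r$ with \emph{no} residual $t$- or $A$-dependence beyond the single $q\ln s/k$. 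One then applies the pointwise inequality $h(z)-z\ln k\le 1/k$ (equation~\eqref{eqSimpleMax}) to get $\mathcal H+\frac{q\ln k}{k}-\frac{2q\ln k}{k}=\mathcal H-\frac{q\ln k}{k}\le O(1/k)$, leaving $0.51\frac{r\ln k}{k}$ from entropy against $-0.85\frac{r\ln k}{k}$ from energy, hence $f(\rho)-f(\id)\le-\frac{r\ln k}{3k}+O(1/k)$. Finally $f(\id)=\tfrac12 f(\bar\rho)<f(\bar\rho)$. Your route via $\rhoss$ can be made to work with the same bookkeeping, but not with the crude $2A$ replacement.
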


The proofs of \Prop s~\ref{Prop_singly} and \ref{Prop_fewStable}--\ref{Prop_intermediate} are based on a local variations argument.
Roughly speaking, we are going to argue that if $\rho\in\Dgs$ is ``far'' from $\rhoss$,
	then a higher function value can be attained by moving slightly in the direction of $\rhoss$.
We expect that this argument can be adapted to perform second moment arguments
in other problems in probabilistic combinatorics.
Indeed, in such arguments the function that needs to be optimized is typically similar in nature to our $f$:
	an entropy term maximised at $\bar\rho$ plus a probability term minimized at $\bar\rho$.

More precisely, the following fact is the cornerstone of the local variations argument.
Let $\rho\in\cS$, let $i\in\brk k$ be a row index, and let $\emptyset\neq J\subset\brk k$ be a set of column indices.
Obtain $\hat\rho\in\cS$ from $\rho$ by letting
	\begin{equation}\label{eqhatrho}
	\mbox{$\hat\rho_{ab}=\rho_{ab}$ for all $(a,b)\not\in\cbc i\times J$ and
	$\hat\rho_{ib}=\frac1{\abs J}\sum_{j\in J}\rho_{ij}$ for all $b\in J.$}
	\end{equation}
That is, $\hat\rho$ is obtained by	 redistributing in row $i$ the total mass 
of the columns in $J$ equally over these columns.
Clearly, the entropy satisfies $H(k^{-1}\hat\rho)\geq H(k^{-1}\rho)$.
In fact, this inequality is strict unless $\hat\rho=\rho$.
However, it may well be that for the probability term we have $E(\hat\rho)<E(\rho)$.
The following proposition trades the increase in entropy against the drop in the probability term and shows that
$f(\hat\rho)\geq f(\rho)$ if $J$ is ``not too small'' and  $\max_{j\in J}\rho_{ij}$ is ``not too big''.

\begin{proposition}\label{Cor_Var}
Suppose that $\rho\in \cS$. Let $i\in\brk k$ and $J\subset\brk k$ be such that for some number $3\ln\ln k/\ln k\leq \lambda\leq 1$ we have $|J|\geq k^\lambda$.
Moreover, assume that $\max_{j\in J}\rho_{ij}<\lambda/2-\ln\ln k/\ln k$.
Then the matrix $\hat\rho$ from~(\ref{eqhatrho}) satisfies $f(\hat\rho)\geq f(\rho)$.
In fact, if $\rho\neq\hat\rho$, then $ f(\hat\rho)>f(\rho)$.
\end{proposition}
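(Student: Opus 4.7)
The plan is to decompose $f = H + E$ and show the entropy gain from averaging dominates the drop in the probability term. Since $\hat\rho$ and $\rho$ coincide outside the block $\cbc i \times J$, the entire comparison reduces to an estimate involving only the row entries $r_j := \rho_{ij}$ for $j \in J$ and their mean $\bar r := |J|^{-1}\sum_{j \in J} r_j$. By Jensen's inequality applied to the convex functions $x\mapsto x\ln x$ and $x\mapsto x^2$, the averaging strictly raises the entropy contribution unless $r_j\equiv\bar r$, and strictly lowers $\norm{\rho}_2^2$ and hence $E(\rho)$; so the issue is purely quantitative, and strictness in the final conclusion will fall out of strictness in Jensen applied to entropy.

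A direct computation gives
\begin{equation*}
H(k^{-1}\hat\rho) - H(k^{-1}\rho) \;=\; \tfrac1k\bigl[\textstyle\sum_{j\in J} r_j\ln r_j - |J|\bar r\ln\bar r\bigr] \;=\; \tfrac{s_i}{k}\bc{\ln|J| - H(p)},
\end{equation*}
where $s_i := \sum_{j\in J} r_j = |J|\bar r \leq 1$ (since $\rho$ is singly-stochastic), $p_j := r_j/s_i$ is a probability distribution on $J$, and $H(p)$ is its Shannon entropy. For the $E$-term, the bound $\ln(1-x) \geq -x/(1-x)$ combined with $\norm{\hat\rho}_2^2 \geq 1$ yields
\begin{equation*}
E(\rho) - E(\hat\rho) \;\leq\; \tfrac{d}{2(1-1/k)^2 k^2}\,\bigl[\textstyle\sum_{j\in J} r_j^2 - |J|\bar r^2\bigr].
\end{equation*}
Substituting $d\leq 2k\ln k$ and rewriting the $L^2$ gap as $s_i^2\brk{\sum_{j\in J} p_j^2 - |J|^{-1}}$, the desired inequality $f(\hat\rho) \geq f(\rho)$ reduces to
\begin{equation*}
\ln|J| - H(p) \;\geq\; (1+o_k(1))\,s_i \ln k\cdot\bigl[\textstyle\sum_{j\in J} p_j^2 - |J|^{-1}\bigr].
\end{equation*}

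The main technical obstacle is this last inequality. Note that its left-hand side equals $D_{KL}(p\,\|\,U_{|J|})$ while the right-hand side is essentially $s_i\ln k /|J|$ times $\chi^2(p\,\|\,U_{|J|})$; since $\chi^2 \geq D_{KL}$ in general, the bound cannot be read off from a standard divergence comparison, and it is the hypotheses on $|J|$ and $\max_j r_j$ that make it work. I would bound the right-hand side via $\sum p_j^2 \leq \max_j p_j$, giving at most $(1+o_k(1))(\max_j r_j)\ln k < (1+o_k(1))\brk{\lambda\ln k/2 - \ln\ln k}$. For the left-hand side I would split into two regimes. When $p$ is nearly uniform, a second-order Taylor expansion gives $\ln|J| - H(p) \approx \tfrac{|J|}{2}\sum_j(p_j - |J|^{-1})^2$, and the factor $|J| \geq k^\lambda \geq \ln^3 k$ (from $\lambda \geq 3\ln\ln k/\ln k$) dominates $s_i\ln k$ with ample room. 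When $p$ has a non-trivial atom $q = \max_j p_j$, the explicit bound $\ln|J| - H(p) \geq q\ln(q|J|) + (1-q)\ln(1-q)$ combined with $|J|\geq k^\lambda$ yields roughly $q\lambda\ln k - O(1)$; this beats the right-hand side because the hypothesis $s_iq = \max_j r_j < \lambda/2 - \ln\ln k/\ln k$ has its $-\ln\ln k/\ln k$ slack calibrated exactly to absorb the $O(1)$ and lower-order errors, showing that the inequality is essentially tight. The hard part will be patching these two regimes seamlessly across the intermediate range of $p$.
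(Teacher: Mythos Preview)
Your reduction is correct: the comparison does come down to
\[
D_{KL}(p\,\|\,U_{|J|}) \;\geq\; (1+o_k(1))\,s_i\ln k\cdot V,\qquad V=\sum_{j\in J}(p_j-|J|^{-1})^2,
\]
and the hypotheses on $|J|$ and $\max_j r_j$ are indeed what make it true. But the step $\sum_j p_j^2\leq\max_j p_j=q$ is too crude and breaks your ``non-trivial atom'' regime. Take $\lambda=1/2$, $|J|=k^{1/2}$, $s_i=1$, and let $p$ have a single atom $q=0.2$ with the remaining mass spread uniformly. Then $\max_j r_j=0.2<1/4-o_k(1)$, so the hypotheses hold. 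Your lower bound on the left is $q\ln(q|J|)+(1-q)\ln(1-q)\approx 0.1\ln k-0.5$, while your upper bound on the right via $\sum p_j^2\leq q$ gives $(1+o_k(1))\cdot 0.2\ln k$; the former does not dominate the latter. The actual inequality holds here (the true right side is $\approx 0.04\ln k$), but only because $\sum p_j^2\approx q^2$, not $q$ --- precisely the factor you threw away. More structurally, your scheme produces ``LHS $\gtrsim q\lambda\ln k$'' versus ``RHS $\lesssim s_i q\ln k$'', which only closes when $\lambda>s_i$; nothing in the hypotheses rules out $s_i=1>\lambda$. The ``patching'' you flag as hard is not a boundary issue but a genuine gap across the whole range $q\in(|J|^{-1},1/2)$.

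The paper sidesteps the global $D_{KL}$--versus--$V$ comparison entirely. It argues variationally: maximize $f$ over all $\tilde\rho\in\cS$ agreeing with $\rho$ outside $\cbc i\times J$ and satisfying $\max_{j\in J}\tilde\rho_{ij}\leq\max_{j\in J}\rho_{ij}$, then show the maximizer has all $J$-entries equal. This reduces to a one-variable derivative comparison: if $\tilde\rho_{ia}=\min_{J}<\max_{J}=\tilde\rho_{ib}$ with gap $\delta$, then the sign of $\partial f/\partial\rho_{ia}-\partial f/\partial\rho_{ib}$ equals that of
\[
1+\frac{\delta}{\tilde\rho_{ia}}-\exp\Bigl(\frac{d\delta}{k-2+\|\tilde\rho\|_2^2/k}\Bigr).
\]
Since $1/\tilde\rho_{ia}\geq|J|\geq k^{\lambda}$ and $\delta\leq\max_{j\in J}\rho_{ij}<\lambda/2-\ln\ln k/\ln k$, one checks directly that $\exp(\cdot)\leq k^{2\delta}\leq k^{\lambda}/\ln^2 k<1+\delta/\tilde\rho_{ia}$, so the derivative difference is strictly positive, contradicting maximality unless $\tilde\rho=\hat\rho$. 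The hypotheses are used exactly once, to compare a line and an exponential at a single scale, rather than to control an entropy--variance ratio over all of $p$.
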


\noindent
Let us illustrate the use of \Prop~\ref{Cor_Var} by proving

\begin{corollary}\label{Cor_singly}
If $\rho\in\mathcal D_{0,\mathrm{\good}}\setminus\cbc{\bar\rho}$, then $f(\rho)<f(\bar\rho)$.
\end{corollary}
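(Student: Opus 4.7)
The plan is to combine Proposition~\ref{Prop_singly} with an iterated application of Proposition~\ref{Cor_Var} in a simple dichotomy on the size of the entries of $\rho$. Since $\rho\in\mathcal D_{0,\mathrm{\good}}$, zero-stability forces every entry to satisfy $\rho_{ij}\leq 0.51$, so each entry falls into one of the intervals $[0,0.15)$ or $[0.15,0.51]$.

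First, suppose some entry $\rho_{ij}$ lies in $[0.15,0.51]$. Since every doubly-stochastic matrix is singly-stochastic, Proposition~\ref{Prop_singly} applies directly and yields $f(\rho)<0$. On the other hand, Lemma~\ref{Lemma_first} gives $f(\bar\rho)=(2\ln 2)/k+o_k(k^{-1})>0$ for $k$ sufficiently large, and the claim follows in this regime.

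Second, suppose every entry satisfies $\rho_{ij}<0.15$. Then I would apply Proposition~\ref{Cor_Var} iteratively with $\lambda=1$ and $J=[k]$, once for each row $i=1,\ldots,k$. At every step the hypothesis $\max_{j\in[k]}\rho_{ij}<1/2-\ln\ln k/\ln k$ holds for $k$ large, because $0.15<1/2-\ln\ln k/\ln k$ eventually. Each application of the redistribution~(\ref{eqhatrho}) with $J=[k]$ replaces row $i$ by a row whose entries are all equal to $\frac{1}{k}\sum_j\rho_{ij}=\frac{1}{k}$ (using that $\rho$ is doubly-stochastic), while leaving all other rows untouched; in particular, since $1/k<0.15$, the max-entry hypothesis continues to hold at the next step. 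After $k$ iterations the matrix has been transformed into $\bar\rho$, and by Proposition~\ref{Cor_Var} the value of $f$ is non-decreasing along the entire chain. Because $\rho\neq\bar\rho$, at least one row of $\rho$ is not already constant at $1/k$; the strict-inequality clause of Proposition~\ref{Cor_Var} applied at the step corresponding to that row then gives $f(\rho)<f(\bar\rho)$.

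There is no substantive obstacle: the thresholds $0.15$ in Proposition~\ref{Prop_singly} and $1/2-\ln\ln k/\ln k$ in Proposition~\ref{Cor_Var} are arranged precisely so that one of the two propositions applies to every $\rho\in\mathcal D_{0,\mathrm{\good}}\setminus\{\bar\rho\}$. The only point worth verifying is that the iteration in the second case stays within the regime where Proposition~\ref{Cor_Var} applies, which it does because each redistributed row is constant at $1/k$, well below the cutoff.
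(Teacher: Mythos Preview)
Your proof is correct and follows essentially the same approach as the paper: the same dichotomy on whether some entry lies in $[0.15,0.51]$ (handled by \Prop~\ref{Prop_singly} together with $f(\bar\rho)>0$) versus all entries below $0.15$ (handled by iterating \Prop~\ref{Cor_Var} row by row with $J=[k]$, $\lambda=1$). The only cosmetic difference is that the paper names the intermediate matrices $\rho[l]$ explicitly, and your remark that ``$1/k<0.15$'' is superfluous since at step $i$ the row being redistributed is the untouched original row $i$ of $\rho$---but the argument is otherwise identical.
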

\begin{proof}
Let $\rho\in\mathcal D_{0,\mathrm{\good}}$.
Then $\rho_{ij}\leq0.51$ for all $i,j$ (as $\rho$ is $0$-stable).
In fact, if there are $i,j$ such that $\rho_{ij}>0.15$, then \Prop~\ref{Prop_singly} implies that
	$f(\rho)<0$, while $f(\bar\rho)>0$ by \Prop~\ref{Prop_first}.
Hence, we may assume that $\rho_{ij}\leq0.15$ for all $i,j$.
Let $\rho[l]$ be the matrix whose first $l$ rows are identical to those of $\bar\rho$,
and whose last $k-l$ rows are identical to those of $\rho$.
Thus, $\rho[0]=\rho$ and $\rho[k]=\bar\rho$.
We claim that
	\begin{equation}\label{eqCor_singly}
	f(\rho[i-1])\leq f(\rho[i]) \quad\mbox{for all }i=1,\ldots,k.
	\end{equation}
To obtain~(\ref{eqCor_singly}), we apply \Prop~\ref{Cor_Var} to the $i$th row of $\rho[i-1]$ with $J=\brk k$ and $\lambda=1$.
This is possible because $\max_{j}\rho_{ij}[i-1]=\max_{j}\rho_{ij}\leq0.15$. The resulting matrix $\hat\rho$ is precisely $\rho[i]$.
Thus, (\ref{eqCor_singly}) follows from \Prop~\ref{Cor_Var}.
Indeed, \Prop~\ref{Cor_Var} shows that one of the inequalities~(\ref{eqCor_singly}) is strict
	(as $\rho\neq\bar\rho$).
Hence, $f(\rho)<f(\bar\rho)$. 
\end{proof}

\noindent

\Prop~\ref{Prop_opt} is immediate from \Prop s~\ref{Prop_fewStable}--\ref{Prop_manyStable} and \Cor~\ref{Cor_singly}.
Thus, we are left to prove \Prop s~\ref{Prop_singly}--\ref{Cor_Var}.
In the Section~\ref{sec:ProofOfCorVar} we prove \Prop~\ref{Cor_Var}.
Building upon that estimate, we then proceed to prove \Prop s~\ref{Prop_singly}--\ref{Prop_manyStable}.
But before we start, we introduce a few pieces of notation and some basic facts. 

\subsection{Preliminaries.}
For $x\in\RR$ we denote by $\sign(x)\in\cbc{-1,0,1}$ the sign of $x$.
Moreover, if $\rho$ is matrix, then $\rho_i$ denotes the $i$th row of $\rho$ and $\rho_{ij}$ the $j$th entry of $\rho_i$.
We let $\norm{\rho}_\infty=\max_{i,j}|\rho_{ij}|$.
Further,
	$$h:[0,1]\ra\RRpos,\ z\mapsto-z\ln z-(1-z)\ln(1-z)$$
denotes the entropy function.
We recall the elementary inequality $h(z)\leq z(1-\ln z)$.
In addition, we note that
	\begin{equation}\label{eqSimpleMax}
	\max_{0<z<1}h(z)-z\ln k\leq1/k.
	\end{equation}
Indeed,  we have $h(z)-z\ln k\leq z(1-\ln z-\ln k)$ 
and differentiating twice, we see that
$z\mapsto z(1-\ln z-\ln k)$ takes its global maximum $1/k$ at $z=1/k$.

We need the following well-known fact about the entropy.

\begin{fact}\label{Prop_H}
Let $p\in\brk{0,1}^k$ be such that $\sum_{i=1}^kp_i=1$.
Then $H(p)\geq0$ and the following two statements hold.
\begin{description}
\item[H1] If $p$ is supported on a set of size $s$, then $H(p)\leq\ln s$.
\item[H2] Let $\cI\subset\brk k$ and suppose that
			$q=\sum_{i\in \cI}p_{i}\in(0,1)$.
		Let $p^\cI$ be the vector with entries $$p^\cI_{i}=p_{i}\cdot\vecone_{i\in \cI}\qquad\mbox{ for $i\in\brk k$.}$$
		Then
			$H(p)=h(q)+qH(q^{-1}p^\cI)+(1-q)H((1-q)^{-1}(p-p^\cI)).$
\end{description}
\end{fact}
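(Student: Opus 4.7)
The plan is to verify each of the three claims by direct manipulation of the defining expression $H(p) = -\sum_{i=1}^k p_i \ln p_i$, using the convention $0\ln 0 = 0$ throughout. Nonnegativity is immediate: for $p_i \in [0,1]$ one has $p_i \ln p_i \leq 0$, so every summand of $H(p)$ is nonnegative and hence so is $H(p)$, with equality only when $p$ is concentrated on a single coordinate.

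For \textbf{H1}, let $S = \{i : p_i > 0\}$ be the support of $p$, so $|S| \leq s$. The slickest proof is via Gibbs' inequality applied to $p$ and the uniform distribution on $S$: I would write
\[
\ln |S| - H(p) = \sum_{i \in S} p_i \ln(|S|\, p_i)
\]
and then invoke $\ln x \geq 1 - 1/x$ (valid for $x>0$) to bound this from below by $\sum_{i\in S} p_i(1 - \frac{1}{|S| p_i}) = 1 - 1 = 0$. Hence $H(p) \leq \ln|S| \leq \ln s$. Alternatively, one can deduce H1 from Jensen applied to the concave function $x\mapsto -x\ln x$, but the Gibbs form is self-contained and handles the zero entries cleanly.

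For \textbf{H2}, this is the standard chain-rule (grouping) identity for entropy, and the plan is to prove it by straightforward algebra. For each $i \in \cI$ with $p_i > 0$, factor $p_i = q \cdot (p_i/q)$ to get $-p_i \ln p_i = -p_i \ln q - p_i \ln(p_i/q)$. Summing over $i \in \cI$ produces $-q\ln q + q \cdot H(q^{-1}p^{\cI})$, since the entries of $q^{-1}p^{\cI}$ form a probability distribution on $\cI$ (extended by $0$ elsewhere). The symmetric calculation on $[k] \setminus \cI$ (using $1 - q = \sum_{i \notin \cI} p_i$) yields $-(1-q)\ln(1-q) + (1-q) H((1-q)^{-1}(p - p^{\cI}))$. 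Adding the two and recognising $-q\ln q - (1-q)\ln(1-q) = h(q)$ gives the claimed formula; the assumption $q \in (0,1)$ ensures $q$ and $1-q$ are strictly positive so the division is legitimate.

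There is no real obstacle here: the fact is purely bookkeeping with the definition. The only delicate point is to make sure that the convention $0\ln 0 = 0$ is applied consistently so that zero coordinates of $p$ (either in $\cI$ or in its complement) contribute nothing on either side of the identity in \textbf{H2}, and so that the support-size bound in \textbf{H1} is not affected by phantom zero terms.
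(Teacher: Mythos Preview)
Your proof is correct. The paper itself does not prove this statement: it is introduced as ``the following well-known fact about the entropy'' and left without argument, so there is no paper proof to compare against. Your treatment---termwise nonnegativity, Gibbs' inequality for \textbf{H1}, and the direct algebraic splitting $p_i = q\cdot(p_i/q)$ for \textbf{H2}---is exactly the standard derivation one would expect and handles the $0\ln 0$ convention properly.
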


\noindent
As an immediate consequence of Fact~\ref{Prop_H}, we have

\begin{corollary}\label{Cor_H}
Let $p\in\brk{0,1}^k$ be such that $\sum_{i=1}^kp_i=1$.
\begin{enumerate}
\item[(i)] Let $\cI\subset\brk k$ and set $q=\sum_{i\in \cI}p_{i}$. Then
		$H(p)\leq h(q)+q\ln \abs\cI+(1-q)\ln(k-\abs\cI).$
\item[(ii)] Let $\cI\subset\cbc{2,\ldots,k}$ be a set of size $0<\abs\cI<k-1$.
	Set $q=\sum_{i\in \cI}p_{i}$. If $p_1<1$, then
		$$H(p)\leq
			h(p_1)+(1-p_1)h(q/(1-p_1))
			+q\ln(\abs\cI)+(1-q-p_1)\ln(k-\abs\cI-1).$$
\end{enumerate}
\end{corollary}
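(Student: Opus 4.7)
My plan is to derive both parts of the corollary directly from Fact~\ref{Prop_H}: part \textbf{H1} bounds the entropy of a distribution by the logarithm of the size of its support, and part \textbf{H2} decomposes $H(p)$ according to any subset $\cI \subset [k]$. Part (i) will follow from one application of \textbf{H2} together with \textbf{H1}, and part (ii) will follow by iterating: first split off the coordinate $p_1$ via \textbf{H2}, and then apply part (i) to the conditional distribution on $\{2,\ldots,k\}$.

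For part (i), assume first that $0 < q < 1$, and apply \textbf{H2} with the given $\cI$. The conditional distribution $q^{-1}p^\cI$ is a probability measure supported on $\cI$, so \textbf{H1} gives $H(q^{-1}p^\cI) \leq \ln|\cI|$. Similarly, $(1-q)^{-1}(p - p^\cI)$ is supported on $[k]\setminus\cI$, a set of size $k - |\cI|$, so $H((1-q)^{-1}(p - p^\cI)) \leq \ln(k-|\cI|)$. Substituting into the identity of \textbf{H2} yields the desired bound. The boundary cases $q = 0$ and $q = 1$ are handled by the conventions $0\ln 0 = 0$ and $h(0) = h(1) = 0$: in either case the inequality reduces to a direct application of \textbf{H1} to $p$ on $[k]\setminus\cI$ or on $\cI$.

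For part (ii), I apply \textbf{H2} with the singleton $\cI_0 = \{1\}$ and $q_0 = p_1 \in [0,1)$. Since the conditional distribution on $\cI_0$ is a point mass with zero entropy, this yields
$$H(p) = h(p_1) + (1 - p_1)\, H(p'), \qquad p' = (1-p_1)^{-1}(p_2, p_3, \ldots, p_k),$$
where $p'$ is a probability distribution on $\{2,\ldots, k\}$ (well-defined because $p_1 < 1$). Now I apply part (i) to $p'$, viewing $\cI$ as a subset of the index set $\{2,\ldots, k\}$ of size $k-1$. The mass of $p'$ on $\cI$ equals $q/(1-p_1)$ and the complement $\{2,\ldots,k\}\setminus \cI$ has $k - 1 - |\cI|$ elements, so part (i) gives
$$H(p') \leq h\!\left(\tfrac{q}{1-p_1}\right) + \tfrac{q}{1-p_1}\ln|\cI| + \left(1 - \tfrac{q}{1-p_1}\right)\ln(k - 1 - |\cI|).$$
Multiplying by $(1-p_1)$ and adding $h(p_1)$ produces precisely the inequality asserted in (ii).

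There is no real obstacle here; the argument is routine once \textbf{H1}--\textbf{H2} are in hand. The only point requiring care is the handling of the degenerate situations in which some $p_i$ or $q$ vanishes, but these are covered uniformly by the paper's convention that $0\ln 0 = 0$ and $0 \ln \tfrac{0}{0} = 0$, under which all the formulas remain valid at the boundary.
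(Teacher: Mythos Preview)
Your proof is correct and follows essentially the same approach as the paper: apply \textbf{H2} followed by \textbf{H1} for part (i), and for part (ii) first split off $\{1\}$ via \textbf{H2} and then invoke part (i) on the conditional distribution over $\{2,\ldots,k\}$. Your treatment is in fact more careful than the paper's terse sketch, handling the boundary cases explicitly.
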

\begin{proof}
The first claim follows simply by first using {\bf H2} and then applying {\bf H1} to $q^{-1}p^\cI$ and $(1-q)^{-1}(p-p^\cI)$.
To obtain the second assertion, use {\bf H2} with $\cI=\cbc{1}$ and then apply (i) to the probability distribution
	$q^{-1}p^{\cI}$.
\end{proof}

Let $\rho\in\cS$ be a singly-stochastic matrix.
We can view each row $\rho_i$ as a probability distribution on $\brk k$.
With this interpretation, we see that
	\begin{equation}\label{eqRowEntropy}
	H(k^{-1}\rho)=\ln k+\frac1k\sum_{i=1}^kH(\rho_i).
	\end{equation}

To facilitate the following calculations, we note that
	\begin{equation}\label{eqHdiff}
	\frac\partial{\partial p}-p\ln p=-1-\ln p.
	\end{equation}
Moreover, differentiating $E(\rho)$ by $y=\norm\rho_2^2$ and recalling that $d=2k\ln k+O_k(\ln k)$, we obtain
	\begin{eqnarray}\label{eqEnergyDiff}
	\frac{\partial}{\partial y}\,\frac d2\ln\bc{1-2/k+y/k^2}&=&\frac{d}{2k^2(1-2/k+y/k^2)}=\frac{\ln k}k(1+\tilde O_k(1/k)).
	\end{eqnarray}
Further, using the expansion $\ln(1+z)=z+z^2/2+O(z^3)$, we obtain the approximation
	\begin{equation}\label{eqEapprox}
	E(\rho)=\frac d{2k^2}\brk{-2k+\norm{\rho}_2^2-2\bc{1-\frac{\norm{\rho}_2^2}{2k}}^2}+o_k(1/k).
	\end{equation}

Finally, we calculate the function values $f(\rhoss)$ explicitly;
	recall that $\rhoss$ is the barycentre of the face of $\Birk$ defined by the equations $\rho_{11}=\cdots=\rho_{ss}=1$.
Let $1\leq s\leq k-1$.
The first $s$ rows of $\rhoss$ have entropy $0$, while the last $k-s$ rows have entropy $\ln(k-s)$.
Hence, (\ref{eqRowEntropy}) yields
	\begin{eqnarray}\label{eqLemma_pure1}
	H(k^{-1}\rhoss)&=&\ln k+\frac{k-s}k\ln(k-s)
		=2\ln k+(1-s/k)\ln(1-s/k)-\frac sk\ln k.
	\end{eqnarray}	
Moreover, $\norm{\rhoss}_2^2=s+1$.
Thus, using (\ref{eqEapprox}) and plugging in $d=2k\ln k-\ln k-c$ for some bounded $c$, we get
	\begin{eqnarray}\nonumber
	E(\rhoss)&=&\frac d{2k^2}\brk{-2k+s+1-2\bc{1-\frac{s+1}{2k}}^2}+o_k(1/k)\\
		&=&-2\ln k+\frac{c}k+\frac{s\ln k}{k}\bc{1+\frac{3}{2k}-\frac{s}{2k^2}}-\frac{cs}{2k^2}+o_k(1/k).\label{eqLemma_pure2}
	\end{eqnarray}
Since $f(\rho)=H(k^{-1}\rho)+E(\rho)$, (\ref{eqLemma_pure1}) and~(\ref{eqLemma_pure2}) yield
	\begin{eqnarray}\label{Lemma_pure}
	f(\rhoss)&=&\frac{c}k+(1-s/k)\ln(1-s/k)
		+\frac{s\ln k}{2k^2}\bc{3-\frac{s}{k}}-\frac{cs}{2k^2}+o_k(1/k).
	\end{eqnarray}

\subsection{Proof of \Prop~\ref{Cor_Var}.}\label{sec:ProofOfCorVar}

We pursue the following strategy.
Suppose that $a,b\in J$ are such that $\rho_{ia}=\min_{j\in J}\rho_{ij}$ and $\rho_{ib}=\max_{j\in J}\rho_{ij}$.
If $\rho_{ia}=\rho_{ib}$, then $\rho=\hat\rho$ and there is nothing to prove.
Otherwise, we are going to argue that increasing $\rho_{ia}$ slightly at the expense of $\rho_{ib}$ yields a matrix $\rho'$ with $f(\rho')>f(\rho)$.
We start by calculating  the partial derivatives of $f$.

\begin{lemma}\label{Lemma_variational}
Let $\rho\in\cS$.
Let $i,j,l\in\brk k$ and set $\delta=\rho_{il}-\rho_{ij}$.
Suppose that $\rho_{ij},\rho_{il}>0$.
Then
	\begin{equation}\label{eqLemma_variational}
	\sign\cbc{\frac{\partial f}{\partial\rho_{ij}}-\frac{\partial f}{\partial\rho_{il}}\bigg|_\rho}
		=\sign\cbc{1+\frac{\delta}{\rho_{ij}}-\exp\bc{\frac{d\cdot\delta}{k-2+\frac1{k}\norm{\rho}_2^2}}}.
	\end{equation}
\end{lemma}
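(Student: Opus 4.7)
The plan is to compute the two partial derivatives directly and simplify, since $f$ decomposes as $f(\rho) = H(k^{-1}\rho) + E(\rho)$ and each term is smooth in $\rho_{ij}$. For the entropy term, using $H(k^{-1}\rho) = \ln k - k^{-1}\sum_{a,b}\rho_{ab}\ln\rho_{ab}$ (which follows because every row of $\rho\in\cS$ sums to one) together with~(\ref{eqHdiff}), we get $\partial H(k^{-1}\rho)/\partial\rho_{ij} = -k^{-1}(1+\ln\rho_{ij})$. For the probability term $E(\rho) = (d/2)\ln(1 - 2/k + \|\rho\|_2^2/k^2)$, the chain rule with $\partial\|\rho\|_2^2/\partial\rho_{ij} = 2\rho_{ij}$ gives $\partial E/\partial\rho_{ij} = d\rho_{ij}/(k^2 - 2k + \|\rho\|_2^2)$.

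Subtracting the two partials at indices $j$ and $l$ and factoring out $1/k$, I expect to arrive at
\begin{equation*}
\frac{\partial f}{\partial\rho_{ij}} - \frac{\partial f}{\partial\rho_{il}}\bigg|_\rho
= \frac{1}{k}\left[\ln\frac{\rho_{il}}{\rho_{ij}} - \frac{d\,\delta}{k - 2 + \|\rho\|_2^2/k}\right],
\end{equation*}
using $\delta = \rho_{il} - \rho_{ij}$. Since $1/k>0$, the sign of the left-hand side equals the sign of the bracketed quantity. Writing $\rho_{il}/\rho_{ij} = 1 + \delta/\rho_{ij}$, the bracket has the form $\ln A - B$ with $A = 1+\delta/\rho_{ij} > 0$ (this is where we use $\rho_{ij},\rho_{il}>0$ to ensure the logarithm is defined and $A$ is positive).

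The final conversion is the only non-routine step, and it is still elementary: because $\exp$ is strictly increasing, for any real numbers $X,Y$ one has $\sign(X-Y) = \sign(e^X - e^Y)$. Applying this with $X=\ln A$ and $Y = d\delta/(k-2+\|\rho\|_2^2/k)$ converts the bracket's sign into the sign of $A - \exp(Y) = 1 + \delta/\rho_{ij} - \exp(d\delta/(k-2+\|\rho\|_2^2/k))$, which is precisely the right-hand side of~(\ref{eqLemma_variational}). There is no genuine obstacle here; the lemma is a bookkeeping statement whose purpose is to put the comparison of the two partials into a form that, in the subsequent local-variations argument, can be controlled by comparing a linear function of $\delta$ with an exponential one.
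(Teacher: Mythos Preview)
Your proposal is correct and matches the paper's own proof essentially line for line: both compute the two partials directly from the decomposition $f=H(k^{-1}\rho)+E(\rho)$, subtract, factor out $1/k$, rewrite $\rho_{il}/\rho_{ij}=1+\delta/\rho_{ij}$, and then use the strict monotonicity of $\exp$ (the paper just says ``taking exponentials completes the proof'') to convert $\sign(\ln A-B)$ into $\sign(A-e^B)$.
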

\begin{proof}
Using~(\ref{eqHdiff}), (\ref{eqEnergyDiff}) and the chain rule, we obtain
	\begin{eqnarray*}
	\frac{\partial f}{\partial\rho_{ij}}-\frac{\partial f}{\partial\rho_{il}}&=&\frac1k\brk{\ln\bcfr{\rho_{il}}{\rho_{ij}}-\frac{d}{k}\cdot
			\frac{\rho_{il}-\rho_{ij}}{1-\frac2k+\frac1{k^2}\norm{\rho}_2^2}}.
	\end{eqnarray*}
Substituting $\delta=\rho_{il}-\rho_{ij}$,
we find
	$$\ln\bcfr{\rho_{il}}{\rho_{ij}}-\frac{d}{k}\cdot\frac{\rho_{il}-\rho_{ij}}{1-\frac2k+\frac1{k^2}\norm{\rho}_2^2}
		=\ln\bc{1+\delta/\rho_{ij}}-\frac{d\cdot\delta}{k(1-\frac2k+\frac1{k^2}\norm{\rho}_2^2)}.$$
Taking exponentials completes the proof.
\end{proof}

\noindent
As a next step, we take a closer look at the right hand side of~(\ref{eqLemma_variational}).

\begin{lemma}\label{Lemma_variational2}
Let $\rho\in\cS$,
let $i,j\in\brk k$ and assume that $\rho_{ij}>0$.
\begin{enumerate}
\item If
	\begin{equation}\label{eqnecint}
	\frac 1{\rho_{ij}}>\frac{d}{k-2+\frac1{k}\norm{\rho}_2^2},
	\end{equation}
	then there exists a unique $\delta^*>0$ such that
		$$1+\frac{\delta^*}{\rho_{ij}}=
		\exp\brk{\frac{d\cdot\delta^*}{k-2+\frac1{k}\norm{\rho}_2^2}}.$$
	Furthermore, for all $0<\delta<\delta^*$ we have
		$1+\frac{\delta}{\rho_{ij}}-
		\exp\brk{\frac{d}{k-2+\frac1{k}\norm{\rho}_2^2}\cdot\delta}>0.$
\item If (\ref{eqnecint}) does not hold,
	then for all $\delta>0$ we have
		$1+\frac{\delta}{\rho_{ij}}<
		\exp\brk{\frac{d}{k-2+\frac1{k}\norm{\rho}_2^2}\cdot\delta}.$
\end{enumerate}
\end{lemma}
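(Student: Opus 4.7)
The plan is to reduce both parts of the lemma to the behaviour of a single auxiliary function. Set
\[
a = \frac{1}{\rho_{ij}}, \qquad c = \frac{d}{k-2+\tfrac{1}{k}\norm{\rho}_2^2},
\]
and define $g : [0,\infty) \to \RR$ by $g(\delta) = 1 + a\delta - \exp(c\delta)$. The quantity in Lemma~\ref{Lemma_variational2} whose sign we need to control is exactly $g(\delta)$. Both $a$ and $c$ are strictly positive: $a > 0$ by hypothesis, and $c > 0$ since $d > 0$ and $k - 2 \geq 1$. A direct computation gives $g(0) = 0$, $g'(\delta) = a - c\,e^{c\delta}$, and $g''(\delta) = -c^2 e^{c\delta} < 0$, so $g$ is strictly concave on $[0,\infty)$ with a zero at the origin. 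This structural observation is essentially all the machinery the rest of the argument needs.

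For part~(1), the hypothesis~(\ref{eqnecint}) is precisely $a > c$, so $g'(0) = a - c > 0$. Strict concavity forces $g'$ to be strictly decreasing with limit $-\infty$, so $g$ rises to a strictly positive maximum and then falls, and $g(\delta) \to -\infty$ as $\delta \to \infty$ because the exponential dominates the linear term. The intermediate value theorem produces a zero $\delta^* > 0$; uniqueness follows because a strictly concave function on $\RR$ has at most two zeros, one of which is $\delta = 0$, so $\delta^*$ is the unique positive one. On the open interval $(0,\delta^*)$, strict concavity combined with $g(0) = g(\delta^*) = 0$ yields $g(\delta) > 0$ by the standard Jensen-type inequality $g(\lambda\cdot 0 + (1-\lambda)\delta^*) > \lambda g(0) + (1-\lambda) g(\delta^*) = 0$ for $\lambda \in (0,1)$. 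This is the desired positivity.

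For part~(2), the negation of~(\ref{eqnecint}) gives $a \leq c$, so $g'(0) \leq 0$. Since $g'$ is strictly decreasing on $[0,\infty)$ and $c > 0$, we in fact have $g'(\delta) < 0$ for every $\delta > 0$, regardless of whether $a = c$ or $a < c$. Hence $g$ is strictly decreasing on $(0,\infty)$ and $g(\delta) < g(0) = 0$, which is exactly the claimed strict inequality.

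There is no substantive obstacle to this argument; the only subtlety is the boundary case $a = c$ in part~(2), where $g'(0) = 0$ and one must appeal to the strict decrease of $g'$ (guaranteed by $c > 0$) rather than to the value of $g'$ at the endpoint itself. I would flag this once at the start by recording $c > 0$ explicitly and then let the concavity analysis run uniformly across both cases.
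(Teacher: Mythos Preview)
Your proof is correct and follows essentially the same approach as the paper's: both reduce the question to comparing a linear function with an exponential one and hinge on the observation that the two can cross at most once for $\delta>0$, with existence of a crossing governed by the derivatives at $\delta=0$. The paper phrases this as a line meeting a strictly convex curve, while you equivalently phrase it via the strict concavity of the difference $g$; your treatment is a bit more explicit, in particular in handling the boundary case $a=c$.
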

\begin{proof}
There is at most one $\delta^*>0$ where the straight line $\delta\mapsto1+\frac{\delta}{\rho_{ij}}$ intersects the strictly convex function
	$$\delta\mapsto\exp\brk{\frac{d}{k-2+\frac1{k}\norm{\rho}_2^2}\cdot\delta}.$$
In fact, there is exactly one such $\delta^*$ iff the differential of the linear function is greater than that of the exponential function at $\delta=0$,
	which occurs iff~(\ref{eqnecint}) holds.
\end{proof}

\noindent\emph{Proof of \Prop~\ref{Cor_Var}.} 
If $\rho_{ij}=0$ for all $j\in J$, then $\hat\rho=\rho$ and there is nothing to show.
Thus, assume that $\sum_{j\in J}\rho_{ij}>0$.
Suppose that $\tilde\rho\in\cS$ maximizes $f(\tilde\rho)$ subject to the conditions
\begin{enumerate}
\item[i.] $\tilde\rho_{ab}=\rho_{ab}$ for all $(a,b)\not\in\cbc i\times J$ and
\item[ii.] $\max_{j\in J}\tilde\rho_{ij}\leq\max_{j\in J}\rho_{ij}$.
\end{enumerate}
Such a maximizer $\tilde\rho$ exists because i.--ii.\ define a compact domain.
Because $\tilde\rho\in\cS$ we have
	\begin{equation}\label{eqCor_Var0}
	\sum_{j\in J}\tilde\rho_{ij}=\sum_{j\in J}\rho_{ij}.
	\end{equation}

We claim that $\tilde\rho_{ij}>0$ for all $j\in J$.
Indeed, assume that $\tilde\rho_{ij}=0$ for $j\in J$ but $\tilde\rho_{il}>0$ for some other $l\in J$.
We recall that $f(\rho)=H(k^{-1}\rho)+E(\rho)$.
As~(\ref{eqHdiff}) and~(\ref{eqEnergyDiff}) show,
$\partial H(k^{-1}\rho)/\partial\rho_{ij}$ tends to infinity as $\rho_{ij}$ approaches $0$,
	while $|\partial E(\rho)/\partial\rho_{ij}|$ remains bounded.
Hence, there is $\xi>0$ such that the matrix $\rho'$ obtained from $\tilde\rho$ by replacing $\tilde\rho_{ij}$ by $\xi$
and $\tilde\rho_{il}$ by $\tilde\rho_{il}-\xi$ satisfied $f(\rho')>f(\tilde\rho)$,
in contradiction to the maximality of $f(\tilde\rho)$.

Thus, let $a$ be such that $\tilde\rho_{ia}=\min_{j\in J}\tilde\rho_{ij}>0$. 
Because $\tilde\rho$ is stochastic, we have $\norm{\tilde\rho}_2^2\in\brk{1,k}$ and $|J|\tilde\rho_{ia}\leq\sum_{j\in J}\tilde\rho_{ij}\leq1$.
Therefore, 
our assumptions $\lambda\geq 3\ln\ln k/\ln k$ and $d\leq 2k\ln k$ imply that
	\begin{equation}\label{eqCor_Var1}
	\frac 1{\tilde\rho_{ia}}\geq |J|\geq k^{\lambda}\geq3\ln k>\frac{d}{k-2+\norm{\tilde\rho}_2^2/k}.
	\end{equation}
Thus, (\ref{eqnecint}) is satisfied.
Further, setting $\hat\delta=\lambda/2-\ln \ln k/\ln k$, 
we find
	\begin{align}\nonumber
	\exp\bcfr{d\hat\delta}{k(1-2/k+k^{-2}\norm{\tilde\rho}_2^2)}&\leq\exp\bc{2\hat\delta\ln k
		}&\mbox{[as $d\leq2k\ln k$ and $\norm{\tilde\rho}_2^2\geq1$]}\\
			&\leq k^{\lambda}\ln^{-2}k\leq|J|\ln^{-2}k\nonumber\\
			&<1+\hat\delta/\tilde\rho_{ia}&\mbox{[as $\lambda\geq3\ln\ln k/\ln k$ and $1/\tilde\rho_{ia}\geq|J|$]}.
		\label{eqCor_Var2}
	\end{align}

Now, let $b\in J$ be such that $\tilde\rho_{ib}=\max_{j\in J}\tilde\rho_{ij}$ and
assume that  $\delta=\tilde\rho_{ib}-\tilde\rho_{ia}>0$.
Moreover, recall that we are assuming that $\tilde\rho_{ib}\leq\max_{j\in J}\rho_{ij}\leq\hat\delta$.
Since $\delta\leq\tilde\rho_{ib}\leq\hat\delta$,
	(\ref{eqCor_Var1}) and~(\ref{eqCor_Var2}) yield in combination with \Lem s~\ref{Lemma_variational} and~\ref{Lemma_variational2} that
	$$\frac{\partial f}{\partial\rho_{ia}}-\frac{\partial f}{\partial\rho_{ib}}\bigg|_{\tilde\rho}>0.$$
Hence, there is $\xi>0$ such that the matrix $\rho'$ obtained from $\tilde\rho$ by increasing $\tilde\rho_{ia}$ by $\xi$ and
decreasing $\tilde\rho_{ib}$ by $\xi$ satisfies $f(\rho')>f(\tilde\rho)$.
But this contradicts the maximality of $f(\tilde\rho)$ subject to i.--ii.
Thus, we conclude that
	$\min_{j\in J}\tilde\rho_{ij}=\tilde\rho_{ia}=\tilde\rho_{ib}=\max_{j\in J}\tilde\rho_{ib}$.
Therefore, (\ref{eqCor_Var0}) implies that $\tilde\rho=\hat\rho$ is the unique maximizer of $f$ subject to i.--ii.
\qed

\subsection{Proof of \Prop~\ref{Prop_singly}}\label{Sec_singly}
To proof is based on two key lemmas.
The first one rules out that $f(\rho)$ takes its maximum over $\rho\in\cS$ at a matrix with an entry close to $1/2$.

\begin{lemma}\label{Lemma_P1}
If $\rho \in \cS$ has an entry $\rho_{ij}\in\brk{0.49,0.51}$, then there is $\rho' \in \cS$ such that 	$f(\rho')\geq f(\rho)+\frac{\ln k}{5k}$.
\end{lemma}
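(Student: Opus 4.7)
The strategy is to take $\rho' \in \cS$ to be the matrix obtained from $\rho$ by replacing the $i$-th row entirely with the uniform distribution on $\brk k$---that is, $\rho'_{il} = 1/k$ for all $l$, with the other rows unchanged. Only row $i$ affects the entropy, so by (\ref{eqRowEntropy}), $H(k^{-1}\rho') - H(k^{-1}\rho) = (\ln k - H(\rho_i))/k$. For the energy, $E(\rho) = \frac d2 \ln(1 - 2/k + y/k^2)$ with $y = \norm\rho_2^2$ is concave in $y$, so the chord inequality yields $E(\rho) - E(\rho') \leq E'(\norm{\rho'}_2^2)(\norm{\rho_i}_2^2 - 1/k)$. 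Since $\norm{\rho'}_2^2 \geq 1$ for any row-stochastic matrix and $d \leq 2k\ln k$, estimate (\ref{eqEnergyDiff}) bounds $E'(\norm{\rho'}_2^2) \leq \frac{\ln k}{k}(1 + O(1/k))$. Combining these,
\[
f(\rho') - f(\rho) \geq \frac{\ln k}{k}\brk{1 - \frac{H(\rho_i)}{\ln k} - \norm{\rho_i}_2^2} + O\bcfr{\ln k}{k^2},
\]
so the task reduces to bounding $\frac{H(\rho_i)}{\ln k} + \norm{\rho_i}_2^2$ from above.

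To estimate this quantity, I split row $i$ at the distinguished coordinate $j$: writing $x = \rho_{ij} \in \brk{0.49,0.51}$ and letting $\mu$ be the probability distribution on $\brk k\setminus\cbc{j}$ defined by $\mu_l = \rho_{il}/(1-x)$, Fact~\ref{Prop_H} {\bf H2} gives $H(\rho_i) = h(x) + (1-x)H(\mu)$ and $\norm{\rho_i}_2^2 = x^2 + (1-x)^2 \norm\mu_2^2$. Thus the target becomes $G(\mu) := \alpha H(\mu) + \beta \norm\mu_2^2 \leq (1-x) + o_k(1)$, where $\alpha = (1-x)/\ln k$ and $\beta = (1-x)^2$. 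The key subtlety is that $H(\mu) \leq \ln(k-1)$ and $\norm\mu_2^2 \leq 1$ cannot be saturated simultaneously. I plan to exploit this via the observation $\norm\mu_2^2 \leq p := \max_l \mu_l$ (because $\sum \mu_l^2 \leq p \sum \mu_l$) combined with \Cor~\ref{Cor_H}(i) applied with $\cI$ the singleton argmax, giving $H(\mu) \leq h(p) + (1-p)\ln(k-2)$. Hence $G(\mu) \leq P(p) := \alpha h(p) + \alpha(1-p)\ln(k-2) + \beta p$; using the identity $h(p) - p h'(p) = -\ln(1-p)$, one obtains the closed-form maximum $P(p^*) = \alpha \ln((k-2)/(1-p^*))$ at the unique interior critical point $p^* \asymp k^{-x}$, which evaluates to $(1-x) + o_k(1)$.

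Putting it all together, $\frac{H(\rho_i)}{\ln k} + \norm{\rho_i}_2^2 \leq o_k(1) + x^2 + (1-x) = 1 - x + x^2 + o_k(1)$. Since $x \mapsto 1 - x + x^2$ is convex on $\brk{0.49,0.51}$ with maximum $0.7501$ attained at the endpoints, we get $f(\rho') - f(\rho) \geq (0.2499 - o_k(1))\ln k/k$, which exceeds $\ln k/(5k) = 0.2\ln k/k$ for $k \geq k_0$ with $k_0$ sufficiently large. The main obstacle will be the sharp bound $G(\mu) \leq (1-x) + o_k(1)$: naive estimates only give $G(\mu) \leq (1-x) + (1-x)^2$, which is too weak by nearly a factor of two and would leave $\frac{H(\rho_i)}{\ln k} + \norm{\rho_i}_2^2$ as large as $1$. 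The right move is to reduce to a one-dimensional optimization via $p = \max_l \mu_l$ and then extract the leading asymptotics using the identity above.
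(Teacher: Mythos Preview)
Your argument is correct and takes a genuinely different route from the paper. The paper replaces the offending row by the point mass $(1,0,\ldots,0)$, splits into two cases according to whether the row has a second entry above $0.49$, and in each case first invokes \Prop~\ref{Cor_Var} to flatten the remaining entries before computing the entropy and norm changes explicitly. You instead replace the row by the uniform vector $k^{-1}\vecone$ and reduce everything to a single scalar optimisation: bounding $H(\rho_i)/\ln k+\norm{\rho_i}_2^2$ via $\norm\mu_2^2\leq p=\max_l\mu_l$ together with \Cor~\ref{Cor_H}(i), and then maximising the resulting concave function $P(p)$ in closed form. Your derivation of $P(p^\ast)=\alpha\ln((k-2)/(1-p^\ast))$ from the identity $h(p)-ph'(p)=-\ln(1-p)$ is clean, and since $p^\ast\sim k^{-x}$ with $x\geq 0.49$ the $o_k(1)$ error is genuine. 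The payoff of your approach is that it is self-contained (no appeal to \Prop~\ref{Cor_Var}) and avoids the case split, yielding the uniform margin $0.2499$; the paper's approach is more modular and gets margins $0.22$ and roughly $0.47$ in its two cases. Either way the gap to $1/5$ is comfortable. One cosmetic point: your error term should strictly be $\tilde O_k(\ln k/k^2)$ rather than $O(\ln k/k^2)$, since~(\ref{eqEnergyDiff}) carries a $\tilde O_k(1/k)$ correction, but this is immaterial to the conclusion.
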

\begin{proof}
Without loss of generality we may assume that $(i,j)=(1,1)$ and that
$\rho\in\cS$ maximizes $f$ subject to the condition that $\rho_{11}\in\brk{0.49,0.51}$.
There are two cases.
\begin{description}
\item[Case 1: $\rho_{1j}<0.49$ for all $j\geq2$]
	Applying \Prop~\ref{Cor_Var} to the set $J=\cbc{2,\ldots,k}$ (with $\lambda=\frac{\ln(k-1)}{\ln k}$), we see that
	$\rho_{1j}=\frac{1-\rho_{11}}{k-1}$ for all $j\geq2$, due to the maximality of $f(\rho)$.
	Hence, \Cor~\ref{Cor_H} yields
		\begin{eqnarray}\label{eqEntropyMiddle1}
		H(\rho_1)&\leq&h(\rho_{11})+(1-\rho_{11})\ln(k-1)\leq\ln2+0.51\ln k.
		\end{eqnarray}
	Moreover, because $\rho_{11}\leq0.51$ we have
		\begin{eqnarray}\label{eqEntropyMiddle2}
		\norm{\rho_1}_2^2
			&\leq&0.51^2+(k-1)\bcfr{1-\rho_{11}}{k-1}^2\leq0.261.
		\end{eqnarray}
	Let $\rho'$ be the matrix obtained from $\rho$ by replacing the first row by $(1,0,\ldots,0)$.
	Since $H(1,0,\ldots,0)=0$, (\ref{eqRowEntropy}) and (\ref{eqEntropyMiddle1}) yield
		\begin{eqnarray}\nonumber
		f(\rho)-f(\rho')&=&H(k^{-1}\rho)-H(k^{-1}\rho')+E(\rho)-E(\rho')\\
			&=&\frac{H(\rho_1)-H(1,0,\ldots,0)}k+E(\rho)-E(\rho')\leq\frac{\ln2+0.51\ln k}k+E(\rho)-E(\rho').
			\label{eqEntropyMiddle3}
		\end{eqnarray}
	Furthermore, (\ref{eqEntropyMiddle2}) entails
		$\norm{\rho}_2^2-\norm{\rho'}_2^2\leq\norm{\rho_1}_2^2-1\leq-0.739.$
	Hence, (\ref{eqEnergyDiff}) yields
		\begin{equation}\label{eqEntropyMiddle4}
		E(\rho)-E(\rho')\leq-(0.739+\tilde O_k(1/k))\ln k/k\leq-0.73\ln k/k.
		\end{equation}
	Combining~(\ref{eqEntropyMiddle3}) and~(\ref{eqEntropyMiddle4}), we obtain $f(\rho)-f(\rho')\leq\frac1k\brk{\ln2-0.22\ln k}\leq-\frac{\ln k}{5k}$.
\item[Case 2: there is $j\geq2$ such that $\rho_{1j}>0.49$]
	We may assume that $j=2$.
	Because $\sum_{j}\rho_{1j}=1$, we see that $\max_{j\geq 3}\rho_{1j}\leq0.02$.
	Hence, we can apply \Prop~\ref{Cor_Var} to $J=\cbc{3,\ldots,k}$ (with, say, $\lambda=1/2$).
	Due to the maximality of $f(\rho)$, we obtain
	$\rho_{1j}=(1-\rho_{11}-\rho_{12})/(k-2)$ for all $j\geq3$.
	Hence, \Cor~\ref{Cor_H} yields
		\begin{eqnarray}\label{eqEntropyMiddle5}
		H(\rho_1)&\leq&h(\rho_{11})+h(\rho_{12})
			+0.02\ln(k-2)\leq2\ln2+0.02\ln k.
		\end{eqnarray}
	Further, because $\rho_{11}^2+\rho_{12}^2\leq0.51^2+0.49^2$ as $\rho_{11},\rho_{12}\in\brk{0.49,0.51}$ and
		$\rho_{11}+\rho_{12}\leq1$, we see that
		\begin{eqnarray}\label{eqEntropyMiddle6}
		\norm{\rho_1}_2^2&\leq&0.51^2+0.49^2+(k-2)\bcfr{1-\rho_{11}-\rho_{12}}{k-2}^2\leq0.501.
		\end{eqnarray}
	As in the first case, obtain $\rho'$ from $\rho$ by replacing the first row by $(1,0,\ldots,0)$.
	From~(\ref{eqEntropyMiddle6}) we obtain $\norm{\rho}_2^2-\norm{\rho'}_2^2\leq0.501-1=-0.499$.
	Hence, (\ref{eqEnergyDiff}) yields
		\begin{eqnarray}\label{eqEntropyMiddle7}
		E(\rho)-E(\rho')&\leq&-0.499(1+\tilde O_k(1/k))\ln k/k\leq-0.49\ln k/k.
		\end{eqnarray}
	Combining (\ref{eqEntropyMiddle5}) and~(\ref{eqEntropyMiddle7}), we find
		\begin{eqnarray*}
		f(\rho)-f(\rho')&=&H(k^{-1}\rho)-H(k^{-1}\rho')+E(\rho)-E(\rho')\\
			&\leq&\frac1k\brk{2\ln2+0.02\ln k-0.49\ln k}\leq-\frac{\ln k}{5k}.
		\end{eqnarray*}
\end{description}
Hence, in either case we obtain the desired bound.
\end{proof}

The second key ingredient is

\begin{lemma}\label{Lemma_P2}
We have $\max_{\rho\in\cS} f(\rho)\leq\frac{\ln k}{8k}+O_k(1/k).$
\end{lemma}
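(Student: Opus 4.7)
My plan is to combine \Lem~\ref{Lemma_P1} and \Prop~\ref{Cor_Var} to show that every maximizer of $f$ on $\cS$ has a highly restricted form, and then to evaluate $f$ on such matrices explicitly. Let $\rho\in\cS$ attain the maximum of $f$ (which exists by compactness of $\cS$ and continuity of $f$). By \Lem~\ref{Lemma_P1} we may assume no entry of $\rho$ lies in $[0.49,0.51]$, since otherwise we could strictly increase $f$. Because rows sum to $1$, each row has at most one entry exceeding $0.51$. I then apply \Prop~\ref{Cor_Var} row by row: if row $i$ has no entry $\geq0.51$ I take $J=\brk k$ and $\lambda=1$ (the hypothesis $\max_j\rho_{ij}<1/2-\ln\ln k/\ln k$ holds since all entries are $\leq0.49$), forcing the row to be uniform; if row $i$ has a unique large entry $x_i=\rho_{i,j_i}$ I take $J=\brk k\setminus\cbc{j_i}$ with $\lambda$ close to $1$, forcing the remaining entries to equal $(1-x_i)/(k-1)$. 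Thus the maximizer is parametrized by $t$ (the number of ``structured'' rows) and the values $x_1,\dots,x_t\in[0.51,1]$.

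With this canonical form, a direct computation yields
\begin{align*}
H(k^{-1}\rho)&=(2-t/k)\ln k+\frac{1}{k}\sum_{i=1}^t\brk{h(x_i)+(1-x_i)\ln(k-1)},\\
\norm\rho_2^2&=\sum_{i=1}^t\brk{x_i^2+(1-x_i)^2/(k-1)}+(k-t)/k.
\end{align*}
Using the identity $1-2/k+\norm\rho_2^2/k^2=(1-1/k)^2\brk{1+(\norm\rho_2^2-1)/(k-1)^2}$, I split
$$f(\rho)-f(\bar\rho)=-\frac{1}{k}\sum_{i=1}^t\brk{\ln k-h(x_i)-(1-x_i)\ln(k-1)}+\frac{d}{2}\ln\bc{1+\frac{\norm\rho_2^2-1}{(k-1)^2}},$$
where $f(\bar\rho)=2\ln k+d\ln(1-1/k)=O_k(1/k)$ by direct Taylor expansion using $d=(2k-1)\ln k-c$. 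The crux is to Taylor-expand $\ln(1+\cdot)$ to \emph{second} order: the naive linearization $\ln(1+x)\leq x$ overestimates the maximum by a factor of two. For the benchmark case $x_1=\dots=x_t=1$ (so the second term reduces to $\frac d2\ln(1+\alpha/(k-1))$ with $\alpha=t/k$), the expansion gives $\alpha\ln k+\alpha(\ln k-c)/(2k)-\alpha^2\ln k/(2k)+O_k(\ln k/k^2)$, which cancels the leading $-\alpha\ln k$ from the entropy loss and leaves $\frac{\alpha(1-\alpha)\ln k}{2k}+O_k(1/k)$. Maximizing the concave quantity $\alpha(1-\alpha)\leq 1/4$ gives precisely $\frac{\ln k}{8k}$.

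The main obstacle is extending this bound uniformly to arbitrary $x_i\in[0.51,1]$ rather than just $x_i=1$. I expect to handle this via a per-row analysis in $u_i=1-x_i$: the entropy correction $h(x_i)+(1-x_i)\ln(k-1)$ expands to $u_i\ln(k/u_i)+O_k(u_i)$ while the change in $x_i^2+(1-x_i)^2/(k-1)$ relative to $1$ is $-2u_i+O_k(u_i^2)$, and after being weighted by $d/(2(k-1)^2)\sim\ln k/k$ these pieces cancel at the $\ln k/k$ scale, leaving only an $O_k(1/k)$ residue. One may either carry out this calculation uniformly on $[0.51,1]$ via explicit estimates, or observe that the stationary value of $x_i$ is approximately $1-1/k$ and verify (by the same expansion) that $f(\rho)$ at this point coincides with the $x_i=1$ value up to $O_k(1/k)$. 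Either way, the $x_i$ dependence does not disturb the leading $\alpha(1-\alpha)\ln k/(2k)$ term, and combining all error terms yields the claimed $\max_{\rho\in\cS}f(\rho)\leq\frac{\ln k}{8k}+O_k(1/k)$.
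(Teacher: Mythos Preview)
Your proposal is correct and follows essentially the same approach as the paper: reduce to the canonical form via \Lem~\ref{Lemma_P1} and \Prop~\ref{Cor_Var}, then evaluate at the benchmark $x_i=1$ to obtain the $\frac{\alpha(1-\alpha)\ln k}{2k}$ term. The paper makes the remaining $x_i$-step rigorous via \Lem~\ref{Lemma_signly} (supported by the calculus in \Lem~\ref{Lemma_calculus}), which pins down the optimal $1-x_i=1/k+\tilde O_k(1/k^2)$ and then bounds $f(\rho)-f(\rho')\leq 3/k$ exactly as in your second suggested option; your ``uniform per-row'' alternative also works but needs a short case split (for $u_i\gtrsim 1/\ln k$ the net per-row term $-u_i\ln(ku_i)+u_i^2\ln k+O(u_i)$ is negative, while for smaller $u_i$ it is $O(1/k)$), since the ``cancellation'' you describe is really an upper bound rather than an exact cancellation.
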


\noindent
The proof of \Lem~\ref{Lemma_P2} requires two intermediate steps.
We start with the following exercise in calculus.

\begin{lemma}\label{Lemma_calculus}
Let $\xi:b\in(0,k/2)\mapsto k^{2b/k}(b^{-1}-k^{-1})$.
Let $\mu=\frac k2(1-\sqrt{1-2/\ln k})$.
Then $\xi$ is decreasing on the interval $(0,\mu)$ and increasing on $(\mu,k/2)$.
Furthermore, we have 
	\begin{equation}\label{eqLemma_calculus}
	-1/2\leq \xi'(b)\leq -3/2\qquad\mbox{for $b\in(0.99,1.01)$}.
	\end{equation}
\end{lemma}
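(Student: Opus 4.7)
The plan is to attack this by direct differentiation and sign analysis, since $\xi$ is an explicit elementary function and large-$k$ asymptotics will give the bound near $b=1$ comfortably.

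First I would compute $\xi'(b)$ using the product rule. Writing $\phi(b)=k^{2b/k}=\exp(2b(\ln k)/k)$ and $\psi(b)=b^{-1}-k^{-1}$, one has $\phi'(b)=\frac{2\ln k}{k}\phi(b)$ and $\psi'(b)=-b^{-2}$, so
\[
\xi'(b)=\phi(b)\left[\tfrac{2\ln k}{k}(b^{-1}-k^{-1})-b^{-2}\right].
\]
The factor $\phi(b)$ is positive, so the sign of $\xi'(b)$ agrees with the sign of the bracket. Multiplying the bracket by the positive quantity $kb^{2}$, one sees that $\sign\xi'(b)$ equals the sign of the quadratic $q(b):=-\tfrac{2\ln k}{k}b^{2}+2(\ln k)\,b-k$.

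Next I would solve $q(b)=0$ via the quadratic formula. The discriminant is $4(\ln k)^{2}-8\ln k$, which is positive for $k$ sufficiently large (as is standing in this paper), and the two roots simplify to $\frac{k}{2}\bigl(1\mp\sqrt{1-2/\ln k}\bigr)$. The smaller root is exactly $\mu$, and the larger root is strictly greater than $k/2$, hence lies outside the domain $(0,k/2)$. Since $q$ is a downward-opening parabola, $q<0$ on $(0,\mu)$ and $q>0$ on $(\mu,k/2)$, which gives the claimed monotonicity of $\xi$.

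For the estimate \eqref{eqLemma_calculus} on $(0.99,1.01)$, I would proceed by large-$k$ expansion. For any $b$ bounded (in particular $b\in(0.99,1.01)$), $k^{2b/k}=\exp(2b(\ln k)/k)=1+O_k(\ln k/k)$ uniformly, and $\tfrac{2\ln k}{k}(b^{-1}-k^{-1})=O_k(\ln k/k)$ uniformly. Substituting into the displayed formula for $\xi'(b)$ above gives
\[
\xi'(b)=-b^{-2}+O_k\!\left(\tfrac{\ln k}{k}\right),
\]
again uniformly for $b\in(0.99,1.01)$. Since $b^{-2}\in[(1.01)^{-2},(0.99)^{-2}]\subset(0.98,1.03)$ on this interval, for all $k$ large enough $\xi'(b)$ lies well inside $[-3/2,-1/2]$, proving \eqref{eqLemma_calculus}.

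There is no real obstacle here: the lemma is essentially a calculus exercise. The only thing worth being a little careful about is keeping the signs straight when passing from the bracket to the quadratic $q(b)$ (since multiplication by $kb^{2}>0$ preserves sign but the leading coefficient of $q$ is negative), and verifying that the larger root of $q$ exceeds $k/2$ so that it drops out of the domain.
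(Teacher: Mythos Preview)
Your proposal is correct and follows essentially the same approach as the paper: compute $\xi'(b)$ explicitly, locate its zeros, and read off the sign. The paper additionally writes down $\xi''$ to classify $\mu$ as a local minimum, whereas you more efficiently reduce the sign of the bracket to that of the downward-opening quadratic $q(b)$, which immediately gives the sign pattern on $(0,\mu)$ and $(\mu,k/2)$; this is a minor streamlining but not a genuinely different route. Your asymptotic argument for \eqref{eqLemma_calculus} is exactly the ``direct inspection'' the paper invokes.
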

\begin{proof}
The derivatives of $\xi$ are
	\begin{eqnarray*}
	\xi'(b)&=&k^{2b/k}\brk{\frac{2\ln k}k\bc{\frac1b-\frac1k}-\frac1{b^2}},\quad
	\xi''(b)=2k^{2b/k}\brk{\frac{2\ln^2k}{k^2}\bc{\frac1b-\frac1k}-\frac{2\ln k}{kb^2}+\frac1{b^3}}.
	\end{eqnarray*}
The first derivative vanishes at the two points $b=\frac k2(1\pm\sqrt{1-2/\ln k})$ only.
Moreover, an elementary calculation shows that $\mu=\frac k2(1-\sqrt{1-2/\ln k})$ is a local minimum,
while $\frac k2(1+\sqrt{1-2/\ln k})>k/2$ is a local maximum. 
Hence, $\xi$ is decreasing on the interval $(0,\mu)$ and increasing on $(\mu,k/2)$.
The last assertion follows by direct inspection of the above expression for $\xi'$.
\end{proof}

\begin{lemma}\label{Lemma_signly}
Let $\rho\in\cS$.
Suppose that $i\in\brk k$ is such that $\rho_{ij}\not\in\brk{0.49,0.51}$ for all $j\in\brk k$.
\begin{enumerate}
\item Suppose that $\rho_{ij}\leq0.49$ for all $j\in\brk k$.
	Let $\rho'$ be the stochastic matrix with entries
		$$\rho'_{hj}=\rho_{hj}\mbox{ and }\rho_{ij}'=1/k\quad\mbox{ for all }j\in\brk k,h\in\brk k\setminus\cbc i.$$
	Then $f(\rho)\leq f(\rho')$.
\item Suppose that $\rho_{ij}\geq0.51$ for some $j\in\brk k$.
	Then there is a number $\alpha=1/k+\tilde O_k(1/k^2)$ such that for
	the stochastic  matrix $\rho''$ with entries
		$$\rho''_{hj}=\rho_{hj}\mbox{ and }\rho_{ii}''=1-\alpha,\
			\rho''_{ih}=\frac{1-\alpha}{k-1}\quad\mbox{ for all }j\in\brk k,h\in\brk k\setminus\cbc i$$
 	we have
		$f(\rho)\leq f(\rho'').$
\end{enumerate}
\end{lemma}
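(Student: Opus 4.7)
The plan is to apply Proposition~\ref{Cor_Var} to row $i$ in both parts, following it in Part (ii) by a one-variable optimisation over the ``star-shaped'' family of rows produced by the averaging step. For (i), every entry of row $i$ satisfies $\rho_{ij}\leq 0.49 < 1/2-\ln\ln k/\ln k$ for $k$ large, so Proposition~\ref{Cor_Var} applies with $J=[k]$ and $\lambda=1$. The averaged row $\hat\rho_{ij}=\frac{1}{k}\sum_{j}\rho_{ij}=\frac{1}{k}$ coincides with row $i$ of $\rho'$, so $f(\rho)\leq f(\hat\rho)=f(\rho')$.

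For (ii), note that $f$ is invariant under column permutations (it depends on each row only through its multiset of entries, and through $\|\rho\|_2^2$), so we may assume the heavy entry lies at $(i,i)$; thus $\rho_{ii}\geq 0.51$ and $\rho_{ij}\leq 0.49$ for $j\neq i$. Apply Proposition~\ref{Cor_Var} with $J=[k]\setminus\{i\}$ and $\lambda=\ln(k-1)/\ln k$; the condition $\max_{j\in J}\rho_{ij}<\lambda/2-\ln\ln k/\ln k$ holds for large $k$. The resulting $\tilde\rho$ satisfies $f(\tilde\rho)\geq f(\rho)$; its $i$th row has $1-\alpha^*$ at position $(i,i)$ and $\alpha^*/(k-1)$ elsewhere, where $\alpha^*:=1-\rho_{ii}\in[0,0.49]$. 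So $\tilde\rho$ is exactly $\rho''$ with $\alpha=\alpha^*$.

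Now let $g(\alpha):=f(\rho^{(\alpha)})$ denote $f$ applied to the matrix whose $i$th row has the above form for a parameter $\alpha$ and whose other rows agree with $\tilde\rho$. Using~(\ref{eqHdiff}), (\ref{eqEnergyDiff}) and $\|\rho^{(\alpha)}_i\|_2^2=(1-\alpha)^2+\alpha^2/(k-1)$, one finds
\[
kg'(\alpha)=\ln\frac{(k-1)(1-\alpha)}{\alpha}+\frac{d\,(k\alpha-(k-1))}{k(k-1)\bigl(1-2/k+\|\rho^{(\alpha)}\|_2^2/k^2\bigr)}.
\]
Substituting $\alpha=1/k$ and $d=2k\ln k-\ln k-c$, the two leading terms are $2\ln(k-1)$ and $-2\ln k\cdot(k-2)/(k-1)$, which cancel to $\tilde O_k(k^{-1})$; hence $g'(1/k)=\tilde O_k(k^{-2})$. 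Meanwhile $g''(1/k)=-k/(k-1)+O_k(\ln k/k)=-1+o_k(1)$, so a Newton step produces a critical point $\alpha_0=1/k+\tilde O_k(k^{-2})$, which is a local maximum by negativity of $g''$. A direct sign check on $[0,0.49]$---using that the entropy part of $g'$ is strictly decreasing, the energy part is slowly varying, and $g'(0.49)$ is of order $-\ln k/k<0$---rules out any other critical point in $[0,0.49]$. Therefore $g(\alpha_0)\geq g(\alpha^*)$, and setting $\rho''=\rho^{(\alpha_0)}$ yields $f(\rho'')\geq f(\tilde\rho)\geq f(\rho)$.

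The main obstacle is the quantitative bound $\alpha_0-1/k=\tilde O_k(k^{-2})$: because the leading entropy and energy terms in $g'(1/k)$ cancel, the Taylor expansion must be carried out carefully (tracking the $-\ln k-c$ correction inside $d$ and the small $\alpha$-variation of the energy denominator $1-2/k+\|\rho^{(\alpha)}\|_2^2/k^2$) to isolate a remainder of the claimed order rather than only $O_k(1/k)$. Once the correct order of $g'(1/k)$ is secured, uniqueness and location of the global maximiser on $[0,0.49]$ follow from routine monotonicity arguments on the two summands of $g'$.
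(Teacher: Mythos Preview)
Your approach is essentially the paper's: Part~(i) is identical, and for~(ii) both arguments reduce to a one-variable optimisation over the family of rows $(1-\alpha,\frac\alpha{k-1},\ldots,\frac\alpha{k-1})$ with the other rows held fixed. The paper takes the global maximiser over $\{\hat\rho_{11}\geq0.51\}$ directly and then shows via \Prop~\ref{Cor_Var} that it has the averaged form; you first average and then maximise over~$\alpha$---an equivalent two-step formulation.

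Where your argument needs tightening is the uniqueness claim. The ``energy part'' of $kg'(\alpha)$ is \emph{not} slowly varying: its $\alpha$-derivative is $\approx 2\ln k$, which for $\alpha\gtrsim1/(2\ln k)$ matches or exceeds the derivative of the entropy part (namely $-\frac1\alpha-\frac1{1-\alpha}$). So $kg'$ is not monotone on $(0,0.49)$, and your sentence as written does not exclude a second critical point. The right observation is that $kg''(\alpha)=-\frac1\alpha-\frac1{1-\alpha}+2\ln k+O_k(\ln k/k)$ changes sign exactly once on $(0,0.49)$, so $kg'$ has a unique local minimum there; combined with $kg'(0^+)=+\infty$ and $kg'(0.49)\approx-0.02\ln k<0$ this gives exactly one zero. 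The paper carries out this same analysis in different coordinates: it rewrites the first-order condition as the equation $1+\delta/\hat\rho_{12}=\exp(\delta d/(kQ))$ of \Lem~\ref{Lemma_variational}, and then invokes \Lem~\ref{Lemma_calculus} (monotonicity of $\xi(b)=k^{2b/k}(b^{-1}-k^{-1})$ on two explicit intervals) to pin the solution to $\beta=k\alpha\in[1-\gamma,1+\gamma]$ with $\gamma=\ln^2 k/k$, yielding $\alpha-1/k=\tilde O_k(k^{-2})$ directly without a Newton step.

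(A minor slip: your intermediate formula $g''(1/k)=-k/(k-1)+O_k(\ln k/k)$ drops the $-1/\alpha\big|_{\alpha=1/k}=-k$ term from the entropy; the correct value is $g''(1/k)=-1-\tfrac1{k-1}+\tfrac{2\ln k}{k}+o_k(1/k)=-1+o_k(1)$, so your conclusion is unaffected.)
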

\begin{proof}
To obtain the first assertion, we simply apply \Prop~\ref{Cor_Var} to row $i$ and $J=\brk k$ (with $\lambda=1$).
With respect to the second claim, we may assume without loss that $i=j=1$ and $\rho_{11}\geq0.51$. 
Let $\hat\rho\in\cS$ be the matrix that maximizes $f$ subject to the conditions
\begin{enumerate}
\item[i.] $\hat\rho_{11}\geq0.51$.
\item[ii.] $\hat\rho_a=\rho_a$ for all $a\in\cbc{2,\ldots,k}$.
	(In words, the last $k-1$ rows of $\hat\rho$ and $\rho$ coincide.)
\end{enumerate}
Since $\hat\rho_{1j}\leq1-\hat\rho_{11}\leq0.49$ for all $j\geq2$,
\Prop~\ref{Cor_Var} applies to $J=\cbc{2,\ldots,k}$  (with $\lambda=\frac{\ln(k-1)}{\ln k}$) and yields
	\begin{equation}\label{eqNecMax}
	\hat\rho_{12}=\cdots=\hat\rho_{1k}=\frac{1-\hat\rho_{11}}{k-1}.
	\end{equation}
Let $\delta=\hat\rho_{11}-\hat\rho_{12}$, let $0\leq\beta\leq0.49k$ be such that $\hat\rho_{11}=1-\beta/k$
and let $Q=1-1/k+\|\hat\rho\|_2^2/k^2.$

Because $\hat\rho$ is the maximizer of $f$ subject to i.\ and ii., \Lem~\ref{Lemma_variational} implies that
	\begin{equation}\label{eqNecMax2}
	\mbox{either $\beta\in\cbc{0,0.49k}$,  or }1+\frac{\delta}{\hat\rho_{12}}=\exp\bc{\frac{\delta d}{kQ}}.
	\end{equation}
We are going to argue that~(\ref{eqNecMax2}) entails that $\beta=1+\tilde O_k(1/k)$.

First, we observe that $\beta>0$.
For~(\ref{eqHdiff}) shows that the derivative $\partial H(\rho_1)/\partial\rho_{11}$ of the entropy of row $\rho_1$ tends to $-\infty$ as $\rho_{11}$ approaches $1$,
	while (\ref{eqEnergyDiff}) implies that the derivative $\partial E(\rho)/\partial\rho_{11}$ remains bounded in absolute value.
Hence, the maximality of $f(\rho)$ implies that $\beta>0$.

Further, since $\|\hat\rho\|_2^2\in\brk{1,k}$, we have $Q\geq(1-1/k)^2$.
Moreover, (\ref{eqNecMax}) implies that $\delta=\hat\rho_{11}-O_k(1/k)$.
Therefore, recalling that $d=2k\ln k+O_k(\ln k)$, we obtain
	\begin{eqnarray*}
	\exp\bcfr{\delta d}{kQ}&=&k^{2\hat\rho_{11}}\bc{1+\tilde O_k(1/k)}=k^{2(1-\beta/k)}(1+O_k(\ln k/k)),\\
	1+\frac{\delta}{\hat\rho_{12}}&=&\frac{\hat\rho_{11}}{\hat\rho_{12}}=\frac{(k-1)\hat\rho_{11}}{1-\hat\rho_{11}}=k^2(1/\beta-1/k) (1+O_k(1/k))
		\qquad\mbox{[as $\rho_{11}=1-\beta/k$]}.
	\end{eqnarray*}
Thus, with
	$\xi(b)=k^{2b/k}(b^{-1}-k^{-1})$ the function from \Lem~\ref{Lemma_calculus}, we see that for a certain $\eta=O_k(\ln k/k)$,
	\begin{equation}\label{eqUglyXi}
	(1-\eta)\cdot\xi(\beta)\leq\bc{1+\frac{\delta}{\hat\rho_{12}}}\exp\bc{-\frac{\delta d}{kQ}}\leq(1+\eta)\cdot\xi(\beta).
	\end{equation}
Let $\mu=\frac k2(1-\sqrt{1-2/\ln k})=(1+o_k(1))\frac{k}{2\ln k}$.
By \Lem~\ref{Lemma_calculus}, $\xi$ is decreasing on $(0,\mu)$.
Moreover, $\xi'(b)$ is negative and bounded
away from $0$ for $b$ close to $1$.
Hence, setting $\gamma=\ln^2 k/k$, we find
	\begin{align*}
		\xi(\beta)&\leq \xi(1+\gamma)<(1+\eta)^{-1}&\mbox{if  $\beta\in\brk{1+\gamma,\mu}$.}
	\end{align*}
In addition, $\xi$ is increasing on $(\mu,k/2)$.
Thus,
	\begin{align*}
		\xi(\beta)&\leq\xi(0.49k)\leq
			k^{0.98}
			\bc{\frac1{0.49k}-\frac1k}
				<(1+\eta)^{-1}&\mbox{if  $\beta\in\brk{\mu,0.49k}$.}
	\end{align*}
Plugging these two bounds into~(\ref{eqUglyXi}), we get
	\begin{align}\label{eqsinglyI}
	1+\frac{\delta}{\hat\rho_{12}}&<\exp\bc{\frac{\delta d}{kQ}}&\mbox{if $\beta\in[1+\gamma,0.49k]$.}
	\end{align}
Similarly, because $\mu$ is the unique local minimum of $\xi$, 
we have
	$$\xi(\beta)\geq\xi(1-\gamma)>(1-\eta)^{-1}\qquad\mbox{if }\beta\in(0,1-\gamma).$$
Hence, (\ref{eqUglyXi}) yields
	\begin{align}\label{eqsinglyII}
	1+\frac{\delta}{\hat\rho_{12}}&>\exp\bc{\frac{\delta d}{kQ}}&\mbox{if $\beta\in(0,1-\gamma)$.}
	\end{align}

Since we already know that $\beta>0$, (\ref{eqNecMax2}), (\ref{eqsinglyI}) and~(\ref{eqsinglyII}) imply $\beta\in[1-\gamma,1+\gamma]$.
Thus, $\beta=1+\tilde O_k(1/k)$ and  consequently $\hat\rho_{11}=1-\beta/k=1-1/k+\tilde O_k(k^{-2})$, as desired.
\end{proof}

\begin{proof}[Proof of \Lem~\ref{Lemma_P2}]
\Lem~\ref{Lemma_P1} implies that $\max_{\rho\in\cS}f(\rho)$ is attained at a matrix $\rho$ without entries in $\brk{0.49,0.51}$.
Therefore, \Lem~\ref{Lemma_signly} shows that the maximizer $\rho$ has the following form
	for some integer $0\leq s\leq k$ and certain $\alpha_i=1/k+\tilde O_k(1/k^2)$: 
	\begin{eqnarray}	
        \rho_{ij}=\left\{\begin{array}{cl}\label{eq:rho_maxi}
		1-\alpha_i&\mbox{ if }i=j\in\brk s,\\
		\frac{\alpha_i}{k-1}&\mbox{ if }i\in\brk s,j\neq i,\\
		1/k&\mbox{ otherwise.}
		\end{array}\right.
   \end{eqnarray}
Thus, for $i\in\brk s$ we have
	\begin{eqnarray}\label{eqLemmaP2_1}
	H(\rho_i)&=&h(1-\alpha_i)+\alpha_i\ln(k-1)\leq h(\alpha_i)+\alpha_i\ln k,\\
	\norm{\rho_i}_2^2&=&(1-\alpha_i)^2+\alpha_i^2/(k-1).\label{eqLemmaP2_2}
	\end{eqnarray}
Let $\rho'$ be the matrix obtained from $\rho$ by replacing the first $s$ rows by $(1,0,\ldots,0)$.
This matrix satisfies
	\begin{eqnarray}\label{eqLemmaP2_3}
	H(\rho_i')&=&0,\qquad\norm{\rho_i'}_2^2=1\qquad\mbox{for }i\in\brk s.
	\end{eqnarray}
Set $\alpha=\frac1s\sum_{i=1}^s\alpha_i=\frac1k+\tilde O_k(k^{-2})$.
Then (\ref{eqRowEntropy}), (\ref{eqLemmaP2_1})--(\ref{eqLemmaP2_3}) and the concavity of $h$ imply that
	\begin{eqnarray}\label{eqStochUpperCoarse1}
	H(k^{-1}\rho)-H(k^{-1}\rho')&=&\frac1k\sum_{i=1}^sH(\rho_i)\leq\frac sk\brk{h(\alpha)+\alpha\ln k}\leq\frac{\alpha s}k\brk{1-\ln\alpha+\ln k}\leq\frac{2\alpha s}k\brk{1+\ln k},
		\qquad\\
	\norm{\rho}_2^2-\norm{\rho'}_2^2&\leq&\sum_{i=1}^s\brk{(1-\alpha_i)^2+\frac{\alpha_i^2}{k-1}-1}
		=\sum_{i=1}^s\alpha_i \brk{-2+\alpha_i(1+1/(k-1))}\nonumber\\
		&=&\alpha s\brk{-2+O_k(1/k)}.\label{eqStochUpperCoarse3}
	\end{eqnarray}
Plugging~(\ref{eqStochUpperCoarse3}) into~(\ref{eqEnergyDiff}), we obtain
	\begin{eqnarray}\label{eqStochUpperCoarse2}
	E(\rho)-E(\rho')&\leq&\alpha s\brk{-2+O_k(1/k)}\cdot(1+\tilde O_k(1/k))\frac{\ln k}{k}
		\leq-\frac{2\alpha s}{k}\brk{\ln k+\tilde O_k(1/k)}.
	\end{eqnarray}
Combining~(\ref{eqStochUpperCoarse1}) and~(\ref{eqStochUpperCoarse2}) and recalling that $\alpha=1/k+\tilde O_k(1/k^2)$, we see that
	\begin{equation}\label{eqStochUpperCoarse99}
	f(\rho)-f(\rho')\leq\frac{2\alpha s}{k}\brk{1+\tilde O_k(1/k)}\leq 3/k.
	\end{equation}

To complete the proof, we calculate $f(\rho')$.
Recall that $d = 2k \ln k - \ln k - c$ with $c$ bounded.
Moreover, (\ref{eqLemmaP2_3}) shows that $\norm{\rho'_i}_2^2=1$ for $i=1,\ldots,s$.
In addition, since $\rho_{ij}'=1/k$ for all $i>s$, $j\in\brk k$, we get $\norm{\rho_i'}_2^2=1/k$ for $i>s$.
Hence, $\norm{\rho'}_2^2=1+(1-1/k)s$.
Thus, using~(\ref{eqEapprox}) and performing an elementary calculation, we get
	\begin{eqnarray*}
	E(\rho')&=&\frac d{2k^2}\brk{-2k+\norm{\rho'}_2^2-2\bc{1-\frac{\norm{\rho'}_2^2}{2k}}^2}+o_k(1/k)\\
		&=&-2\ln k+\frac{c}k+\frac{s\ln k}{k}\bc{1+\frac{1}{2k}-\frac{s}{2k^2}}-\frac{cs}{2k^2}+o_k(1/k).
	\end{eqnarray*}
Further, 
	$H(\rho_i')=0$ for $i\leq s$, while $H(\rho_i')=\ln k$ for $i>s$.
Hence, (\ref{eqRowEntropy}) yields $H(k^{-1}\rho')=\ln k+(1-s/k)\ln k=2\ln k-\frac sk\ln k$.
Thus,
	\begin{eqnarray}\nonumber
	f(\rho')&=&H(\rho')+E(\rho')=\frac{c}k+\frac{s\ln k}k\bc{\frac{1}{2k}-\frac{s}{2k^2}}-\frac{cs}{2k^2}+o_k(1/k)\\
		&=&\frac{c}k+\frac sk(1-s/k)\cdot\frac{\ln k}{2k}-\frac{cs}{2k^2}+o_k(1/k)=\frac sk(1-s/k)\cdot\frac{\ln k}{2k}+O_k(1/k).
			\label{eqexcess}
	\end{eqnarray}
Finally, combining~(\ref{eqStochUpperCoarse99}) and~(\ref{eqexcess}), we see that
	$f(\rho)\leq\frac sk(1-s/k)\cdot\frac{\ln k}{2k}+O_k(1/k)\leq\frac{\ln k}{8k}+O_k(1/k)$, as claimed.
\end{proof}

\begin{proof}[Proof of Proposition \ref{Prop_singly}]
Suppose that $\rho\in\cS$ has an entry $\rho_{ij}\in\brk{0.49,0.51}$.  We claim that $f(\rho)<0$.
Indeed, by \Lem s~\ref{Lemma_P1} and~\ref{Lemma_P2}
	$$f(\rho)\leq\max_{\rho'\in\cS}f(\rho')-\frac{\ln k}{5k}\leq\frac{\ln k}{8k}+O_k(1/k)-\frac{\ln k}{5k}<0.$$

Now, suppose that $\rho\in \cS$ has a row $i$ such that $\max_{j\in\brk k}\rho_{ij}\in\brk{0.15,0.49}$.
Without loss of generality, we may assume $i=1$ and $\rho_{11}=\max_{j\in\brk k}\rho_{ij}$.
In fact, we may assume that $\rho$ is the maximizer of $f$ subject to the condition $\rho_{11}=\max_j\rho_{1j}\in\brk{0.15,0.49}$.
Again, we show that $f(\rho)<0$.

What can we say about this maximizer $\rho$?
We apply \Prop~\ref{Cor_Var} to $i=1$ and $J=\cbc{2,\ldots,k}$:
if we let $\lambda=\ln(k-1)/\ln k$, then $|J|=k-1\geq k^{\lambda}$.
Moreover, $\rho_{1j}\leq0.49<\lambda/2-10/\ln k$ for all $j\in J$.
Hence, \Prop~\ref{Cor_Var} implies that
	\begin{equation}\label{eqallRhosTheSame}
	\rho_{12}=\cdots=\rho_{1k}.
	\end{equation}
Thus, \Cor~\ref{Cor_H} shows that the entropy of $\rho_1$ 
	is
	\begin{eqnarray*}
	H(\rho_1)&\leq&h(\rho_{11})+(1-\rho_{11})\ln(k-1).
	\end{eqnarray*}
By comparison,
let $\hat\rho$ be the matrix obtained from $\rho$ by replacing the first row by $\frac1k\vecone$.
Then
	$H(\hat\rho_1)=\ln k$.
Therefore, (\ref{eqRowEntropy}) yields
	\begin{eqnarray}\label{eqLemmaHalfBetter1}
	H(k^{-1}\rho)-H(k^{-1}\hat\rho)
		&=&-\frac1k\brk{\ln k-h(\rho_{11})-\bc{1-\rho_{11}}\ln\bc{k-1}}\leq-\rho_{11}\frac{\ln k}k+O_k(1/k).
	\end{eqnarray}
Moreover, (\ref{eqallRhosTheSame}) yields
$\norm{\rho_{1}}_2^2=\rho_{11}^2+(1-\rho_{11})^2/(k-1)$ and $\norm{\hat\rho_{1}}_2^2=1/k$, whence
	\begin{eqnarray*}
	\norm{\rho}_2^2-\norm{\hat\rho}_2^2&\leq&\rho_{11}^2+\frac{(1-\rho_{11})^2}{k-1}-1/k\leq\rho_{11}^2.
	\end{eqnarray*}
Hence, (\ref{eqEnergyDiff}) implies
	$E(\rho)-E(\hat\rho)\leq\rho_{11}^2\frac{\ln k}{k}+\tilde O_k(1/k^2).$
Combining this estimate with~(\ref{eqLemmaHalfBetter1}), we get
	\begin{equation}\label{eqLemmaHalfBetter1a}
	f(\rho)-f(\hat\rho)=H(k^{-1}\rho)-H(k^{-1}\hat\rho)+E(\rho)-E(\hat\rho)\leq-\rho_{11}(1-\rho_{11})\frac{\ln k}{k}+O_k(1/k).
	\end{equation}
Since $f(\hat\rho)\leq\frac{\ln k}{8k}+O_k(1/k)$ by \Lem~\ref{Lemma_P2}, we obtain from~(\ref{eqLemmaHalfBetter1a})
	$$f(\rho)\leq\brk{\frac18-\rho_{11}(1-\rho_{11})}\frac{\ln k}k+O_k(1/k).$$
The assertion follows because $\rho_{11}(1-\rho_{11})>1/8$ for $\rho_{11}\in\brk{0.15,0.49}$.
\end{proof}

\subsection{Proof of \Prop~\ref{Prop_fewStable}}\label{Sec_fewStable}

Let $1\leq s\leq k^{0.999}$ and let $\rho\in\Dgs$ be the maximiser of $f$.
Without loss of generality we may assume that $\rho_{ii}\geq0.51$ 
for $i=1,\ldots,s$ and $f(\rho_{ij})<0.51$ for all $(i,j)\not\in\cbc{(1,1),\ldots,(s,s)}$.
Because $\rho$ is separable, this implies that in fact $\rho_{ii}\geq 1-\kappa$ for $i=1,\ldots,s$, with $\kappa=\ln^{20}k/k$ as in~(\ref{eqseparable}). 
Furthermore,  if there is a pair $(i,j)\not\in\cbc{(1,1),\ldots,(s,s)}$ such that $\rho_{ij}\geq0.15$, then \Prop~\ref{Prop_singly} implies that $f(\rho)<0$.
In this case we are done, because $f(\bar\rho)>0$ by \Prop~\ref{Prop_first}.
Thus, assume from now on that $\rho_{ij}<0.15$ for all $(i,j)\not\in\cbc{(1,1),\ldots,(s,s)}$.

Let $\hat\rho$ be the singly-stochastic matrix with entries
	$$\hat\rho_{ij}=\left\{
		\begin{array}{cl}
		\rho_{ij}&\mbox{ if }i\in\brk k,j\leq s,\\
		\frac1{k-s}\sum_{l>s}\rho_{il}&\mbox{ if }i\in\brk k,j>s.
		\end{array}
		\right.$$
Since $k-s=(1-o_k(1))k$ and $\max_{j>s}\rho_{ij}<0.15$,
we can apply \Prop~\ref{Cor_Var} to $J=\brk k\setminus\brk s$ for any $i\in\brk k$ (with, say, $\lambda=1/2$).
Hence,
	\begin{equation}\label{eqProp_fewStable_1}
	f(\rho)\leq f(\hat\rho).
	\end{equation}
We are going to compare $f(\hat\rho)$ with $f(\rhoss)$, the barycentre of the face of $\Birk$ where the first $s$ diagonal entries are equal to one.
To this end, we need to estimate $f(\hat\rho)=H(k^{-1}\hat\rho)+E(\hat\rho)$.

As $\hat\rho$ is stochastic and $\hat\rho_{ii}=\rho_{ii}\geq1-\kappa$ for $i\leq s$, we find that
	\begin{equation}\label{eqClaim_eqfewStable4_0}
	q_i=\sum_{j\neq i}\hat\rho_{ij}=1-\rho_{ii}\leq\kappa\qquad\mbox{for $i\leq s$.}
	\end{equation}
Further, let $q_i=\sum_{j=1}^s\hat\rho_{ij}$ for $i>s$.
Because $\rho$ is doubly-stochastic and $\rho_{ii}\geq1-\kappa$ for $i\leq s$, we see that
	\begin{eqnarray}\label{eqfewStable1}
	\sum_{i>s}q_i=\sum_{i>s}\sum_{j=1}^s\hat\rho_{ij}=\sum_{i>s}\sum_{j=1}^s\rho_{ij}=\sum_{i=1}^s\sum_{j>s}\rho_{ij}\leq\kappa s.
	\end{eqnarray}
Based on~(\ref{eqClaim_eqfewStable4_0})--(\ref{eqfewStable1}), we obtain the following estimate of the entropy.

\begin{claim}\label{Claim_eqfewStable4}
We have $H(k^{-1}\hat\rho)\leq H(k^{-1}\rhoss)+o_k(1/k)$.
\end{claim}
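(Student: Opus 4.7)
The plan is to exploit the row decomposition of the entropy given by (\ref{eqRowEntropy}): since $H(k^{-1}\hat\rho) = \ln k + \frac{1}{k}\sum_{i=1}^k H(\hat\rho_i)$ and, by the computation leading to (\ref{eqLemma_pure1}), $H(k^{-1}\rhoss) = \ln k + \frac{k-s}{k}\ln(k-s)$, the claim reduces to showing that $\sum_{i=1}^k H(\hat\rho_i) \leq (k-s)\ln(k-s) + o_k(1)$. I will split the sum according to whether $i \leq s$ or $i > s$ and estimate each piece using Corollary~\ref{Cor_H}(i) combined with the bounds on $q_i$ supplied by (\ref{eqClaim_eqfewStable4_0}) and (\ref{eqfewStable1}).

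For $i \leq s$, I would apply Corollary~\ref{Cor_H}(i) with $\cI = \{i\}$, giving $H(\hat\rho_i) \leq h(q_i) + q_i \ln(k-1)$ where $q_i = 1 - \rho_{ii} \leq \kappa = \ln^{20}k/k$. Using the standard bound $h(z) \leq z(1 - \ln z)$ and monotonicity for small $z$, each such row contributes $O_k(\kappa \ln k) = O_k(\ln^{21}k/k)$, so the total contribution from rows $i \leq s$ is $O_k(s \ln^{21} k / k)$. Since $s \leq k^{0.999}$, this is $o_k(1)$.

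For $i > s$, I apply Corollary~\ref{Cor_H}(i) with $\cI = \brk s$, yielding
\[ H(\hat\rho_i) \leq h(q_i) + q_i \ln s + (1-q_i)\ln(k-s), \]
where now $q_i = \sum_{j=1}^s \hat\rho_{ij}$. Summing over $i > s$, the last term contributes at most $(k-s)\ln(k-s)$, which gives the target. The two remaining pieces, $\sum_{i>s} h(q_i)$ and $\sum_{i>s} q_i \ln s$, must be absorbed into the $o_k(1)$ error. Here I would use (\ref{eqfewStable1}), which says $\sum_{i>s} q_i \leq \kappa s$, to bound $\sum_{i>s} q_i \ln s \leq \kappa s \ln s = O_k(s \ln^{21}k / k)$, and apply concavity of $h$ (Jensen's inequality) to bound $\sum_{i>s} h(q_i) \leq (k-s) h(\kappa s/(k-s)) = O_k(\kappa s \ln k) = O_k(s \ln^{21}k/k)$. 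Again using $s \leq k^{0.999}$, both contributions are $o_k(1)$.

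The main obstacle is ensuring the error terms cleanly absorb into $o_k(1/k)$ after dividing by $k$; this forces a careful book-keeping of the logarithmic factors in $\kappa$ against the allowed range $s \leq k^{0.999}$, since it is precisely the gap between the exponent $0.999$ and $1$ that tames $s \cdot \kappa \cdot \mathrm{polylog}(k) = o_k(1)$. Once all three pieces are collected, adding the contributions from $i \leq s$ and $i > s$ yields $\sum_{i=1}^k H(\hat\rho_i) \leq (k-s)\ln(k-s) + o_k(1)$, and dividing by $k$ and adding $\ln k$ delivers $H(k^{-1}\hat\rho) \leq H(k^{-1}\rhoss) + o_k(1/k)$ as required.
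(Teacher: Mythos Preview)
Your proposal is correct and follows essentially the same approach as the paper's proof: both split the row-entropy sum into $i\leq s$ and $i>s$, apply Corollary~\ref{Cor_H}(i) to each row, use concavity of $h$ (Jensen) together with the mass bound~(\ref{eqfewStable1}) to control the $h(q_i)$ terms for $i>s$, and then verify that all residual terms are $o_k(1)$ via $s\leq k^{0.999}$ and $\kappa=\ln^{20}k/k$. The only cosmetic differences are that the paper bounds $h(q_i)\leq h(\kappa)$ directly for $i\leq s$ and normalises the Jensen step over $k$ rather than $k-s$ terms (obtaining $h(\kappa s/k)$ instead of $\frac{k-s}{k}h(\kappa s/(k-s))$), but these yield the same order estimates.
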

\begin{proof}
By \Cor~\ref{Cor_H} and~(\ref{eqClaim_eqfewStable4_0}),
	\begin{eqnarray}\label{eqfewStable2}
	H(\hat\rho_i)&\leq&
		h(q_i)+q_i\ln k\leq h(\kappa)+\kappa\ln k\qquad\mbox{ for $i\leq s$}.
	\end{eqnarray}
Once more by \Cor~\ref{Cor_H},
	\begin{eqnarray}\label{eqfewStable2a}
	H(\hat\rho_i)&\leq&h(q_i)+q_i\ln s+(1-q_i)\ln(k-s)\leq h(q_i)+q_i\ln s+\ln(k-s)\qquad\mbox{for }i>s.
	\end{eqnarray}
Since $h$ is concave, (\ref{eqfewStable1}) and (\ref{eqfewStable2a}) yield
	\begin{equation}\label{eqfewStable3}
	\frac1k\sum_{i>s}H(\hat\rho_i)\leq\frac{k-s}k\ln(k-s)+\frac1k\sum_{i>s}(h(q_i)+q_i\ln s)\leq\frac{k-s}k\ln(k-s)+h\bcfr{\kappa s}k+\frac{\kappa s}k\ln s.
	\end{equation}
Plugging the bounds~(\ref{eqfewStable2}) and~(\ref{eqfewStable3}) into~(\ref{eqRowEntropy}), we arrive at
	\begin{align*}
	H(k^{-1}\hat\rho)&=\ln k+\frac1k\sum_{i=1}^kH(\hat\rho_i)\\
		&\leq\ln k+\frac sk\bc{h(\kappa)+\kappa\ln k}+
		\frac{k-s}k\ln(k-s)+h(\kappa s/k)+\frac{\kappa s}k\ln s\nonumber\\
		&\leq\ln k+\frac{k-s}k\ln(k-s)+o_k(1/k)&\mbox{[as $\kappa=\tilde O_k(1/k)$ and $s\leq k^{0.999}$]}\\
		&=H(k^{-1}\rhoss)+o_k(1/k)&[\mbox{by~(\ref{eqLemma_pure1})}],
	\end{align*}
thereby proving the claim.
\end{proof}

\begin{claim}\label{Claim_eqfewStable_E}
We have $E(\hat\rho)\leq E(\rhoss)+o_k(1/k).$
\end{claim}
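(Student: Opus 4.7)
The plan is to use the approximation~(\ref{eqEapprox}) to reduce the claim to a comparison of Frobenius norms. Since $E(\rho)$ depends on $\rho$ only through $\|\rho\|_2^2$ (up to an additive $o_k(1/k)$ error), $\|\rhoss\|_2^2 = s+1$, and $\partial E/\partial\|\rho\|_2^2 = (1+\tilde O_k(1/k))\,\ln k / k$ by~(\ref{eqEnergyDiff}), it suffices to establish
\begin{equation*}
\|\hat\rho\|_2^2 \;\leq\; s + 1 + o_k(1/\ln k).
\end{equation*}
Combined with the derivative estimate, this yields $|E(\hat\rho) - E(\rhoss)| \leq \tfrac{\ln k}{k}\cdot o_k(1/\ln k) + o_k(1/k) = o_k(1/k)$, as desired.

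To bound $\|\hat\rho\|_2^2$, the plan is to split the double sum into four blocks according to whether the indices $i,j$ lie in $[s]$ or in $[k]\setminus[s]$. The ``stable block'' $\{i,j\leq s\}$ contributes $\sum_{i\leq s}\rho_{ii}^2 + \sum_{i\leq s}\sum_{j\leq s,\,j\neq i}\rho_{ij}^2 \leq s + s\kappa^2$, using $\rho_{ii}\leq 1$ together with $\sum_{j\leq s,\,j\neq i}\rho_{ij} \leq 1-\rho_{ii}\leq\kappa$ and each such entry being of size at most $\kappa$. The two mixed blocks are negligible: for $i\leq s, j>s$, the row-sum bound $\sum_{l>s}\rho_{il}\leq 1-\rho_{ii}\leq \kappa$ gives block sum at most $s\kappa^2/(k-s) = o_k(1/k)$; for $i>s, j\leq s$, the assumption $\rho_{ij}<0.15$ together with (\ref{eqfewStable1}), i.e.\ $\sum_{i>s}q_i \leq \kappa s$, gives block sum at most $0.15\kappa s$. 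For the ``off-stable block'' $\{i,j>s\}$, the construction of $\hat\rho$ makes $\hat\rho_{ij}=(1-q_i)/(k-s)$ constant across each row, so $\sum_{j>s}\hat\rho_{ij}^2 = (1-q_i)^2/(k-s) \leq (1-q_i)/(k-s)$, and summing over $i>s$ yields at most $1$.

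Adding the four contributions gives $\|\hat\rho\|_2^2 \leq s + 1 + O(\kappa s)$, and since $\kappa=\ln^{20}k/k$ and $s\leq k^{0.999}$, the excess is $\kappa s \leq \ln^{20}k\cdot k^{-0.001} = o_k(1/\ln k)$, which suffices. The main technical hurdle is the off-stable block $\{i,j>s\}$: using only $\rho_{ij}<0.15$ on the original matrix $\rho$ would give a block contribution of order $0.15(k-s)$, which is far too large. This is precisely the step where the averaging in the definition of $\hat\rho$ (enabled by an earlier appeal to \Prop~\ref{Cor_Var}) is indispensable; once the entries are constant along each row, Jensen's inequality collapses the contribution to at most $1$, matching $\|\rhoss\|_2^2$ in that block. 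The remaining blocks are routine consequences of the stability property $\rho_{ii}\geq 1-\kappa$ and of the global mass estimate (\ref{eqfewStable1}).
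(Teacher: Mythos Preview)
Your proof is correct and follows essentially the same approach as the paper: reduce to a Frobenius-norm comparison via~(\ref{eqEnergyDiff}), then decompose $\|\hat\rho\|_2^2$ into blocks. The only differences are cosmetic: the paper groups the top rows $i\leq s$ into a single block and bounds it by~$s$ via the crude $\|\hat\rho_i\|_2^2\leq 1$, and for the mixed block $i>s,\,j\leq s$ it uses $\sum\hat\rho_{ij}^2\leq(\sum\hat\rho_{ij})^2\leq(\kappa s)^2$, which is sharper than your $0.15\kappa s$ but, as you note, $\kappa s=o_k(1/\ln k)$ already suffices.
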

\begin{proof}
As a first step, we show that there is a constant $\gamma>0$ such that
	\begin{equation}\label{eqSimpleFrobBound}
	\norm{\rho}_2^2\leq s+1+(\kappa s)^2\leq s+1+k^{-\gamma}.
	\end{equation}
Indeed, as $\hat\rho$ is a stochastic matrix, we have
	\begin{equation}\label{eqSimpleFrobBound1}
	\norm{\hat\rho_i}_2^2\leq1\quad\mbox{ for $i=1,\ldots,s$}.
	\end{equation}
Furthermore, since $\sum_{j>s}\rho_{ij}\leq1$ for each $i\in\brk k\setminus\brk s$, we have
	\begin{equation}\label{eqSimpleFrobBound2}
	\sum_{i>s}\sum_{j>s}\hat\rho_{ij}^2=(k-s)\sum_{i>s}\bcfr{\sum_{j>s}\rho_{ij}}{k-s}^2\leq1. 
	\end{equation}
Moreover, (\ref{eqfewStable1}) shows that $\sum_{i>s}q_i=\sum_{i>s}\sum_{j\leq s}\hat\rho_{ij}\leq\kappa s$.
Hence,
	\begin{equation}\label{eqSimpleFrobBound3}
	\sum_{i>s}\sum_{j\leq s}\hat\rho_{ij}^2\leq\big(\sum_{i>s}\sum_{j\leq s}\hat\rho_{ij}\big)^2\leq(\kappa s)^2. 
	\end{equation}
As $s\leq k^{0.999}$ and because $\kappa=\ln^{20}k/k$, there is a constant $\gamma>0$ such that
 $\kappa s\leq k^{-0.001}\ln^{20}k\leq k^{-\gamma/2}$ (provided that $k$ is sufficiently large).
Thus, combining~(\ref{eqSimpleFrobBound1})--(\ref{eqSimpleFrobBound3}), we obtain~(\ref{eqSimpleFrobBound}).

By comparison, we have $\|\rhoss\|_2^2=s+1$.
Hence, the bound (\ref{eqEnergyDiff}) on the derivative of $E$ and~(\ref{eqSimpleFrobBound}) yield
	$E(\hat\rho)\leq E(\rhoss)+o_k(1/k)$, as claimed.
\end{proof}

Combining Claims~\ref{Claim_eqfewStable4} and~\ref{Claim_eqfewStable_E}, we see that $f(\hat\rho)\leq f(\rhoss)+o_k(1/k)$.
Hence, (\ref{eqProp_fewStable_1}) yields
	\begin{align*}
	f(\rho)&\leq f(\hat\rho)\leq f(\rhoss)+o(1/k)\\
		&\leq \frac{c}k+(1-s/k)\ln(1-s/k)
			+\frac{s\ln k}{2k^2}\bc{3-\frac{s}{k}}-\frac{cs}{2k^2}+o_k(1/k)&[\mbox{due to~(\ref{Lemma_pure})}]\\
		&\leq \frac{c}k+(1-s/k)\ln(1-s/k)+o_k(1/k)&[\mbox{because $s\leq k^{0.999}$}]\\
		&\leq \frac{c}k-\frac sk(1-s/k)+o_k(1/k)&[\mbox{as $\ln(1-x)\leq-x$}]\\
		&= f(\bar\rho)-\frac sk(1-s/k)+o_k(1/k)&[\mbox{by \Prop~\ref{Prop_first}}].
	\end{align*}
The last expression is decreasing in $s$ (for $1\leq s\leq k^{0.999}$).
Thus, $f(\rho)< f(\bar\rho)-1/k+o_k(1/k)$.
This implies the assertion because we chose $\rho$ to be the maximizer of $f$ over $\Dgs$.
\qed

\subsection{Proof of \Prop~\ref{Prop_intermediate}}\label{Sec_intermediate}

Suppose that $k^{0.999}<s<k-k^{0.49}$ and let $\rho\in\Dgs$ be the maximizer of $f$ over $\Dgs$.
We may assume without loss that $\rho_{ii}\geq0.51$ for $i=1,\ldots,s$ and $\rho_{ij}<0.51$ for $(i,j)\not\in\cbc{(1,1),\ldots,(s,s)}$.
Due to separability, we thus have $\rho_{ii}\geq 1-\kappa$ for $i=1,\ldots,s$.
Further, we may assume that $\rho_{ij}\leq0.15$ for all $(i,j)\not\in\cbc{(1,1),\ldots,(s,s)}$ as otherwise
\Prop~\ref{Prop_singly} yields $f(\rho)<0<f(\bar\rho)$.

Let $\hat\rho$ be the stochastic matrix with entries
	$$\hat\rho_{ij}=\left\{
		\begin{array}{cl}
		\rho_{ij}&\mbox{ if }i=j\in\brk s,\\
		\frac1{s-1}{\sum_{l\in\brk s\setminus\cbc i}\rho_{il}}&\mbox{ if }i,j\leq s,\,i\neq j,\\
		\frac1{k-s}{\sum_{l>s}\rho_{il}}&\mbox{ if }j>s,\\
		\frac1{s}{\sum_{l\leq s}\rho_{il}}&\mbox{ if }j\leq s<i.
		\end{array}
		\right.$$
Since $\max_{i\neq j}\rho_{ij}\leq0.15$ and $s,k-s>k^{0.49}$, we can apply \Prop~\ref{Cor_Var}
to $J_i=\brk{k}\setminus\brk s$ and to $J_i'=\brk s\setminus\cbc i$ for all $i\in\brk k$ (with, say, $\lambda=0.4$).
We thus obtain
	\begin{equation}\label{eqProp_intermediate_avg}
	f(\rho)\leq f(\hat\rho).
	\end{equation}

To estimate $f(\hat\rho)$, let
	$$q_i=\sum_{j>s}\rho_{ij}=\sum_{j>s}\hat\rho_{ij}\mbox{ for $i\leq s$ and }q_i=\sum_{j\leq s}\rho_{ij}=\sum_{j\leq s}\hat\rho_{ij}\mbox{ for }i>s.$$
Since $\rho$ is doubly-stochastic and $\rho_{ii}\geq1-\kappa$ for $i\leq s$, we see that
	\begin{eqnarray}\label{eqintermediate4}
	q&=&\sum_{i>s}q_i=\sum_{i\leq s}q_i\leq\sum_{i=1}^s1-\rho_{ii}\leq\kappa s.
	\end{eqnarray}
In addition, let
	\begin{eqnarray}
	t_i&=&\sum_{ j\in\brk s\setminus\cbc i}\hat\rho_{ij}=
			\sum_{j\in\brk s\setminus\cbc i}\rho_{ij}\leq1-\rho_{ii}\leq\kappa\qquad\mbox{for }i\leq s.\label{eqintermediate5}
	\end{eqnarray}

\begin{claim}\label{Claim_eqintermediate77}
We have
	$\displaystyle
	H(\hat\rho)\leq2\ln k+\frac{3q(2+\ln k)}k+\bc{1-s/k}\ln(1-s/k)-\frac{s\ln k}k+\frac{2\ln k}k\sum_{i=1}^st_i+O_k(1/k)$.
\end{claim}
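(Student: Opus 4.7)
\emph{Proof plan for Claim~\ref{Claim_eqintermediate77}.}
The plan is to bound the row entropies $H(\hat\rho_i)$ separately for the two index groups $i\leq s$ and $i>s$, and then to sum. Writing $T := \sum_{i=1}^{s}t_i$ and recalling that $H(k^{-1}\hat\rho) = \ln k + k^{-1}\sum_i H(\hat\rho_i)$ by~(\ref{eqRowEntropy}), the algebraic identity $\ln k + \tfrac{k-s}{k}\ln(k-s) = 2\ln k + (1-s/k)\ln(1-s/k) - \tfrac{s\ln k}{k}$ shows that, after multiplying the claim through by $k$, it suffices to establish
$$\sum_{i=1}^{k}H(\hat\rho_i)\;\leq\;(k-s)\ln(k-s) + 2T\ln k + 3q(2+\ln k) + O(1).$$

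First I would handle the rows $i>s$. The two-block form of $\hat\rho_i$ yields the exact expression $H(\hat\rho_i) = h(q_i) + q_i\ln s + (1-q_i)\ln(k-s)$. In the regime $k^{0.999} < s < k-k^{0.49}$ the ratio $q/(k-s) \leq \kappa s/(k-s)$ tends to zero, so Jensen's inequality for the concave $h$, together with the bound $h(z) \leq z(1 + \ln(1/z))$ valid for $z \leq 1/e$ and with $\sum_{i>s} q_i = q$ from~(\ref{eqintermediate4}), gives $\sum_{i>s} h(q_i) \leq q + q\ln((k-s)/q)$. Crucially, the $+q\ln(k-s)$ term in this bound cancels against the $-q\ln(k-s)$ hidden inside $((k-s)-q)\ln(k-s) = (k-s)\ln(k-s) - q\ln(k-s)$, leaving
$$\sum_{i>s}H(\hat\rho_i)\;\leq\;(k-s)\ln(k-s) + q\ln s + O(q+1).$$

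Next I would handle the rows $i\leq s$. Each $\hat\rho_i$ is a three-block distribution (mass $\rho_{ii}$ on $\{i\}$; $t_i$ spread uniformly over $[s]\setminus\{i\}$; $q_i$ spread uniformly over $[k]\setminus[s]$), and \Cor~\ref{Cor_H}(ii), with the special coordinate taken to be $i$ and $I=[s]\setminus\{i\}$, produces the identity
$$H(\hat\rho_i)\;=\;h(\rho_{ii}) + \epsilon_i\,h(t_i/\epsilon_i) + t_i\ln(s-1) + q_i\ln(k-s),\quad \epsilon_i=t_i+q_i.$$
Using the symmetry $h(\rho_{ii}) = h(\epsilon_i)$, Jensen's inequality applied to $\sum h(\epsilon_i)$ with $\sum_{i\leq s}\epsilon_i = T+q$ (a consequence of~(\ref{eqintermediate4}) and the assumption $\rho_{ii}\geq 1-\kappa$), and the crude bound $\epsilon_i h(t_i/\epsilon_i) \leq \epsilon_i\ln 2$ on the inner split, summation yields
$$\sum_{i\leq s}H(\hat\rho_i)\;\leq\;2T\ln s + q\ln s + q\ln(k-s) + O(T+q+1).$$

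Adding the two estimates gives $(k-s)\ln(k-s) + 2T\ln s + q(2\ln s + \ln(k-s)) + O(T+q+1)$. The inequalities $\ln s \leq \ln k$ and $\ln(s^2(k-s)) \leq 3\ln k$ then produce $(k-s)\ln(k-s) + 2T\ln k + 3q\ln k + O(T+q+1)$; after dividing through by $k$, the $\tfrac{O(T+q+1)}{k}$ slack is absorbed into the $\tfrac{6q}{k} + O_k(1/k)$ on the right-hand side of the claim (using $T \leq s\kappa$ from~(\ref{eqintermediate5}) together with $s<k$). The main technical obstacle is the bookkeeping: each intermediate $\ln(\cdot)$ multiplies either $t_i$, $q_i$, or $\epsilon_i$, and it is essential both to identify the within-group cancellation of the $\pm q\ln(k-s)$ terms in $\sum_{i>s}$ and to recombine the $t_i\ln(s-1)$ and $q_i\ln(k-s)$ contributions in $\sum_{i\leq s}$ carefully, so that the final coefficient of $T\ln k$ is exactly $2$ and of $q\ln k$ is exactly $3$; anything looser would not fit inside the $3q(2+\ln k)$ available in the target.
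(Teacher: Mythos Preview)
Your overall strategy matches the paper's: bound row entropies separately for $i\le s$ and $i>s$, extract the leading term $(k-s)\ln(k-s)$, and carefully track the coefficients of $T\ln k$ and $q\ln k$. The treatment of rows $i>s$ is fine and essentially identical to the paper's.

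The gap is in the final absorption step. Your bound for $i\le s$ carries an additive $O(T)$ error (coming both from the Jensen step $\sum h(\epsilon_i)\le (T+q)\ln s+O(T+q+1)$ and from the crude $\sum\epsilon_i h(t_i/\epsilon_i)\le(T+q)\ln 2$). You then assert that $O(T)/k$ is absorbed into $O_k(1/k)$ via $T\le s\kappa<k\kappa$. But $k\kappa=\ln^{20}k$, so $T/k$ is only $\tilde O_k(1/k)$, not $O_k(1/k)$; the claim as stated requires the sharper error, and your argument delivers the wrong order.

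The reason the paper avoids this is that it does \emph{not} lump $t_i$ and $q_i$ into $\epsilon_i$. Instead it keeps $h(t_i)$ separate (via the decomposition $H(\hat\rho_i)\le h(t_i)+t_i\ln s+h(q_i)+q_i\ln(k-s)$) and then applies the pointwise bound~(\ref{eqSimpleMax}), namely $h(t_i)-t_i\ln k\le 1/k$, term by term. Summing over $i\le s$ gives $\sum_i h(t_i)\le T\ln k+s/k\le T\ln k+1$, so the passage from $\sum h(t_i)+T\ln s$ to $2T\ln k$ costs only $O(1)$, not $O(T)$. Your Jensen step cannot achieve this because after averaging you only control $(T+q)/s$, and $s h((T+q)/s)-(T+q)\ln s=(T+q)-(T+q)\ln(T+q)$ is genuinely of order $T+q$ when $T+q$ is bounded away from both $0$ and $\infty$. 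The fix is straightforward: replace your grouping $h(\epsilon_i)+\epsilon_i h(t_i/\epsilon_i)$ by the equivalent $h(t_i)+(1-t_i)h(q_i/(1-t_i))$, bound the second piece by $h(q_i)$ (or by $q_i(1-\ln q_i)$), and invoke~(\ref{eqSimpleMax}) on the $h(t_i)$'s.
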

\begin{proof}
Applying \Cor~\ref{Cor_H}, we obtain 
	\begin{eqnarray}\label{eqintermediate0}
	H(\hat\rho_i)&\leq&h(t_i)+t_i\ln s+h(q_i)+q_i\ln(k-s)\qquad\mbox{for }i\leq s.
	\end{eqnarray}
Set
	$$\tilde H=\frac1k\sum_{i\leq s}h(t_i)+t_i\ln s.$$
Summing (\ref{eqintermediate0}) up, recalling from~(\ref{eqintermediate4}) that $q=\sum_{i\leq s}q_i$, and using the convavity of $h$, we get
	\begin{eqnarray}\label{eqintermediate1}
	\frac1k\sum_{i=1}^sH(\hat\rho_i)&\leq&\tilde H+\frac sk h(q/s)+\frac qk\ln(k-s).
	\end{eqnarray}
Furthermore, again by \Cor~\ref{Cor_H}, for $i>s$ we have
	\begin{eqnarray*}
	H(\hat\rho_i)&\leq&h(q_i)+q_i\ln s+(1-q_i)\ln(k-s).
	\end{eqnarray*}
Once more due to the concavity of $h$ and as $q=\sum_{i>s}q_i$, we see that
	\begin{eqnarray}\label{eqintermediate2}
	\frac1k\sum_{i>s}H(\hat\rho_i)&\leq&\frac{k-s}k h(q/(k-s))+\frac qk\ln s+\frac{k-s-q}k\ln(k-s).
	\end{eqnarray}
Combining~(\ref{eqintermediate1}) and~(\ref{eqintermediate2}), we get
	\begin{eqnarray*}
	H(\hat\rho)&\leq&\tilde H+\ln k+\brk{\frac sk h(q/s)+\frac qk\ln(k-s)}+
		\brk{\frac{k-s}kh(q/(k-s))+\frac qk\ln s}
		+\frac{k-s-q}k\ln(k-s).
	\end{eqnarray*}
Using the elementary inequality $h(z)\leq z(1-\ln z)$ to
simplify the above, we get
	\begin{eqnarray}\nonumber
	H(\hat\rho)-\tilde H&\leq&\ln k+\frac qk\brk{2+\ln(s/q)+\ln((k-s)/q)+\ln s+\ln(k-s)}+\frac{k-s-q}k\ln(k-s)\\
		&\leq&\ln k+\frac qk\brk{2+2\ln(s)+\ln(k-s)-2\ln q}+\frac{k-s}k\ln(k-s)\nonumber\\
		&\leq&\ln k+\frac{3q(2+\ln k)}k+\frac{k-s}k\ln(k-s)+O_k(1/k)\qquad\mbox{[as $-z\ln z\leq1$ for all $z>0$]}\nonumber\\
		&=&2\ln k+\frac{3q(2+\ln k)}k+\bc{1-s/k}\ln(1-s/k)-\frac{s\ln k}k+O_k(1/k).
		\label{eqintermediate3}
	\end{eqnarray}
Since $s\leq k$, we obtain
	\begin{equation}		\label{eqintermediate7}
	\tilde H-\frac{2\ln k}k\sum_{i=1}^st_i
		=\frac1k\sum_{i\leq s}h(t_i)+t_i(\ln s-2\ln k)
		\leq\frac1k\sum_{i=1}^s h(t_i)-t_i\ln k\leq\frac1k\qquad\mbox{[due to~(\ref{eqSimpleMax})].}
	\end{equation}
Finally, the assertions follows by combining~(\ref{eqintermediate3}) and~(\ref{eqintermediate7}).
\end{proof}

\begin{claim}\label{Claim_eqintermediate6}
We have
	$ E(\hat\rho)
		=-2\ln k+\frac{s\ln k}{k}\bc{1+\frac{3}{2k}-\frac{s}{2k^2}}-\frac{2\ln k}k\sum_{i=1}^st_i+\tilde O_k(1/k).$
\end{claim}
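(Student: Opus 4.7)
The plan is to compute $\|\hat\rho\|_2^2$ explicitly from the block structure of $\hat\rho$, then feed the result into the expansion~(\ref{eqEapprox}) for $E$ and compare with the formula~(\ref{eqLemma_pure2}) for $E(\rhoss)$.

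First I would read off the rows. For $i\le s$, row $i$ of $\hat\rho$ has the diagonal entry $\rho_{ii}=1-t_i-q_i$, entry $t_i/(s-1)$ in each of the $s-1$ remaining columns of $\brk{s}$, and entry $q_i/(k-s)$ in each of the $k-s$ columns outside $\brk{s}$. For $i>s$, row $i$ has entries $q_i/s$ in the first $s$ columns and $(1-q_i)/(k-s)$ in the last $k-s$ columns. Squaring and summing,
\begin{align*}
\|\hat\rho\|_2^2 &= \sum_{i\le s}\left[(1-t_i-q_i)^2 + \frac{t_i^2}{s-1} + \frac{q_i^2}{k-s}\right] + \sum_{i>s}\left[\frac{q_i^2}{s} + \frac{(1-q_i)^2}{k-s}\right].
\end{align*}
Expanding the squares and using $\sum_{i\le s}q_i = \sum_{i>s}q_i = q$ from~(\ref{eqintermediate4}) gives $\|\hat\rho\|_2^2 = (s+1) - 2\sum_{i\le s}t_i + R$, where the remainder $R$ collects the stray linear $q$-terms and all quadratic cross-terms.

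Next I would bound $R$. By~(\ref{eqintermediate4}) and~(\ref{eqintermediate5}) we have $q\le\kappa s$ and $\sum_{i\le s}t_i\le\kappa s$; with $\kappa=\ln^{20}k/k$ and $s\le k$, both are $\tilde O_k(1)$. Each quadratic contribution is at most $\kappa$ times a linear sum (using $t_i,q_i\le\kappa$ for $i\le s$, and $q_i\le 1$ with $q=\sum_{i>s}q_i=\tilde O_k(1)$ for $i>s$), so every such piece is at most $\tilde O_k(1/k)$. The one remaining linear stray term, $-2q/(k-s)$ coming from $\sum_{i>s}(1-q_i)^2/(k-s)$, is $\tilde O_k(k^{-0.49})$. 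Here both the lower bound $s>k^{0.999}$ and the gap $k-s>k^{0.49}$ are essential, so that no denominator is too small. Altogether $R=\tilde O_k(1)$, and hence $\|\hat\rho\|_2^2 = s+1 - 2\sum_{i\le s}t_i + \tilde O_k(1)$.

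Finally I would substitute into~(\ref{eqEapprox}). Since $\partial E/\partial\|\rho\|_2^2 = (\ln k/k)(1+\tilde O_k(1/k))$ by~(\ref{eqEnergyDiff}) and the quadratic term in the Taylor expansion of $E$ in $\|\rho\|_2^2$ is smaller by a factor $1/k$, perturbing $\|\rho\|_2^2$ from $\|\rhoss\|_2^2 = s+1$ by $-2\sum t_i + \tilde O_k(1)$ changes $E$ by $-\tfrac{2\ln k}{k}\sum_{i\le s}t_i + \tilde O_k(1/k)$. Adding the explicit formula~(\ref{eqLemma_pure2}) for $E(\rhoss)$ and absorbing the bounded pieces $c/k$ and $cs/(2k^2)$ into $\tilde O_k(1/k)$, since $c=O_k(1)$, yields the claim. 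The hard part is the bookkeeping in the second step: one must simultaneously invoke $t_i,q_i\le\kappa$ for $i\le s$, the total-mass bound $q\le\kappa s$ forced by double-stochasticity, and both lower bounds on $s$ and $k-s$ to guarantee that every piece of $R$ absorbs into $\tilde O_k(1)$.
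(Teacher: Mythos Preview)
Your approach is correct and essentially the same as the paper's: compute (or bound) $\|\hat\rho\|_2^2$, compare with $\|\rhoss\|_2^2=s+1$, and use~(\ref{eqEnergyDiff}) together with~(\ref{eqLemma_pure2}). The only difference is that the paper drops the negative stray term $-2q$ (yielding merely the upper bound $\|\hat\rho\|_2^2\le s+1-2\sum_{i\le s}t_i+o_k(1/\ln k)$, which is all that is used downstream), whereas you keep it and carry $R=\tilde O_k(1)$ through to an equality; both produce the stated $\tilde O_k(1/k)$ error after multiplying by $\ln k/k$. Two minor cosmetic points: the piece $\sum_{i>s}q_i^2/(k-s)$ is only $\tilde O_k(k^{-0.49})$, not $\tilde O_k(1/k)$ as you wrote, though this still lands safely inside $\tilde O_k(1)$; and the lower bound $s>k^{0.999}$ is not actually needed for bounding $R$ (any $s\ge2$ suffices for the $1/(s-1)$ and $1/s$ denominators), only the gap $k-s>k^{0.49}$ is genuinely used.
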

\begin{proof}
As a first step, we show that
	\begin{equation}\label{eqFrobNotSoSimple}
	\norm{\rho}_2^2\leq s+1-2\sum_{i=1}^st_i+o_k(1/\ln k).
	\end{equation}
Indeed, together with the definition of $\hat\rho$, equation (\ref{eqintermediate5}) shows that for $i\in\brk s$,
	\begin{eqnarray}\label{eqFrobNotSoSimple1}
	\hat\rho_{ii}^2&\leq&(1-t_i)^2=1-2t_i+t_i^2\leq1-2t_i+\kappa^2\qquad\mbox{ and}\\
	\sum_{j\in\brk s\setminus\cbc i}\hat\rho_{ij}^2&=&(s-1)\cdot\bcfr{t_i}{s-1}^2\leq\frac{\kappa^2}{s-1}\leq\kappa^2.
	\end{eqnarray}
Moreover, since $\hat\rho$ is stochastic and $\hat\rho_{ii}\geq1-\kappa$ if $i\leq s$, we have
	\begin{equation}\label{eqFrobNotSoSimple3}
	\sum_{j\in\brk k\setminus\brk s}\hat\rho_{ij}^2\leq\kappa^2\qquad\mbox{for }i\in\brk s.
	\end{equation}
Combining~(\ref{eqFrobNotSoSimple1})--(\ref{eqFrobNotSoSimple3}) and recalling that $\kappa=\tilde O_k(k^{-1})$, we obtain
	\begin{eqnarray}\label{eqFrobNotSoSimple4}
	\sum_{i=1}^s\norm{\hat\rho_i}_2^2&\leq&s+3\kappa^2s-2\sum_{i=1}^st_i=s+o_k(1/\ln k)-2\sum_{i=1}^st_i.
	\end{eqnarray}
Further, since $\rho_{jj}\geq1-\kappa$ for $j\leq s$ and because $\rho$ is doubly-stochastic,
we have $\rho_{ij}\leq\kappa$ for all $j\leq s<i$. 
By the construction of $\hat\rho$, this implies that $\hat\rho_{ij}\leq\kappa$ for all $j\leq s<i$.
Furthermore, $q=\sum_{i>s}\sum_{j\in\brk s}\hat\rho_{ij}\leq\kappa s$ by~(\ref{eqintermediate4}). 
As a sum of squares is maximized if the summands are as unequal as possible, we obtain
	\begin{eqnarray}\label{eqFrobNotSoSimple5}
	\sum_{i>s}\sum_{j\in\brk s}\hat\rho_{ij}^2&\leq&\kappa^2s=o_k(1/\ln k).
	\end{eqnarray}
In addition, once more by the construction of $\hat\rho$,
	\begin{eqnarray}\label{eqFrobNotSoSimple6}
	\sum_{i>s}\sum_{j>s}\hat\rho_{ij}^2&=&
		\sum_{i>s}(k-s)\bcfr{\sum_{j>s}\rho_{ij}}{k-s}^2\leq
		(k-s)^2\cdot\bcfr{1}{k-s}^2=1.
	\end{eqnarray}
Combining~(\ref{eqFrobNotSoSimple4})--(\ref{eqFrobNotSoSimple6}), we obtain~(\ref{eqFrobNotSoSimple}).

By comparison, we have $\|\rhoss\|_2^2=s+1$.
Hence, (\ref{eqEnergyDiff}) implies together with (\ref{eqFrobNotSoSimple}) that
	$$E(\hat\rho)\leq E(\rhoss)-\frac{2\ln k}k\sum_{i=1}^st_i+\tilde O_k(1/k).$$
Plugging in the expression~(\ref{eqLemma_pure2}) for $E(\rhoss)$ yields the assertion.
\end{proof}

Finally, combining
Claims~\ref{Claim_eqintermediate77} and~\ref{Claim_eqintermediate6}, we see that
	\begin{eqnarray}\nonumber
	f(\rho)&\leq&f(\hat\rho)\leq\bc{1-s/k}\ln(1-s/k)+\frac{3q(2+\ln k)}k+
		\frac{s\ln k}{k}\bc{\frac{3}{2k}-\frac{s}{2k^2}}
			+\tilde O(1/k)\\
		&=&\bc{1-s/k}\ln(1-s/k)+\tilde O(1/k)\leq-\frac sk(1-s/k)+\tilde O_k(1/k).\label{eqeqintermediate666}
	\end{eqnarray}
Our assumption $k^{0.999}<s<k-k^{0.49}$ ensures that $-\frac sk(1-s/k)+\tilde O_k(1/k)<0$.
Thus, (\ref{eqeqintermediate666}) and \Prop~\ref{Prop_first} show that $f(\rho)<0<f(\bar\rho)$.
This completes the proof as $\rho$ was chosen to be the maximizer of $f$ over $\Dgs$.\qed

\subsection{Proof of \Prop~\ref{Prop_manyStable}}\label{Sec_manyStable}

Suppose that $k-\sqrt k\leq s\leq k-1$ and that $\rho\in\Dgs$ maximizes of $f$ over $\Dgs$.
As before, we assume without loss that $\rho_{ii}\geq0.51$ for $i=1,\ldots,s$ and $\rho_{ij}<0.51$ for $(i,j)\not\in\cbc{(1,1),\ldots,(s,s)}$.
Thus, $\rho_{ii}\geq 1-\kappa$ for $i=1,\ldots,s$ as $\rho$ is separable.
Further, if $\rho_{ij}>0.15$ for some $(i,j)\in\cbc{(1,1),\ldots,(s,s)}$, then $f(\rho)<0<f(\bar\rho)$ by \Prop~\ref{Prop_singly}.
Hence, we assume $\rho_{ij}\leq0.15$ for all $(i,j)\not\in\cbc{(1,1),\ldots,(s,s)}$.

Let $q_i=\sum_{j\neq i}\rho_{ij}$ for $i\in\brk s$.
Because $\rho$ is doubly-stochastic and $\rho_{ii}\geq1-\kappa$ for $i\leq s$, we see that
	\begin{equation}\label{eqmanyStableq}
	q=\sum_{i=1}^sq_i=
		\sum_{i=1}^s\sum_{j\neq i}\rho_{ij}=\sum_{i=1}^s1-\rho_{ii}\leq\kappa s.
	\end{equation}
In addition, let
	$$t_i=\sum_{j>s}\rho_{ij},\quad t=\sum_{i=1}^st_i.$$
Since $\rho$ is doubly-stochastic, we have
	\begin{equation}\label{eqmanyStableqt}
	t=\sum_{i=1}^s\sum_{j>s}\rho_{ij}=\sum_{i>s}\sum_{j=1}^s\rho_{ij}.
	\end{equation}
We are going to compare $f(\rho)$ with $f(\id)$, where $\id$ is the identity matrix (with ones on the diagonal and zeros elsewhere).

\begin{claim}\label{Claim_eqentropycrit3}
With
	$\cH=\frac1k\sum_{i=1}^s h(\rho_{ii})$ we have
	$H(k^{-1}\rho)\leq\ln k+\cH+\frac{q}k\ln k+0.51(k-s)\frac{\ln k}k$.
\end{claim}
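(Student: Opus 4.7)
The plan is to bound $H(k^{-1}\rho)$ using the row-entropy identity (\ref{eqRowEntropy}),
\[
H(k^{-1}\rho)=\ln k+\frac{1}{k}\sum_{i=1}^k H(\rho_i),
\]
and estimate $H(\rho_i)$ separately for the ``stable'' rows $i\leq s$ (where $\rho_{ii}\geq 1-\kappa$) and the ``residual'' rows $i>s$. The $\cH$ term on the right-hand side should arise from the diagonal entropies of the stable rows, the $\frac{q}{k}\ln k$ piece will absorb the off-diagonal mass in those same rows, and the $0.51(k-s)\frac{\ln k}{k}$ piece is meant to cover the residual rows, with the factor $0.51$ coming from the hypothesis $k-s\leq k^{0.49}$ (which yields $\ln(k-s)\leq 0.49\ln k$ plus a little slack).

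For the stable rows I would apply \Cor~\ref{Cor_H}(i) with $\cI=\{i\}$, giving $H(\rho_i)\leq h(\rho_{ii})+q_i\ln(k-1)$. Summing over $i\leq s$ and invoking (\ref{eqmanyStableq}) produces the $k\cH$ and $q\ln k$ contributions exactly. For the residual rows I would instead use \Cor~\ref{Cor_H}(i) with $\cI=\{1,\ldots,s\}$. Writing $t'_i=\sum_{j\leq s}\rho_{ij}$, this yields
\[
H(\rho_i)\leq h(t'_i)+t'_i\ln s+(1-t'_i)\ln(k-s).
\]
Summing over $i>s$, bounding $h(t'_i)\leq\ln 2$, and using (\ref{eqmanyStableqt}) to identify $\sum_{i>s}t'_i=t$ (which satisfies $t\leq q$), the total contribution of the residual rows is at most $(k-s)\ln(k-s)+t\ln s+(k-s)\ln 2$.

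To close the argument I would combine these two estimates with the hypothesis on $k-s$. The dominant piece $(k-s)\ln(k-s)\leq 0.49(k-s)\ln k$ fits neatly under $0.51(k-s)\frac{\ln k}{k}$ after dividing by $k$, the residual $t\ln s\leq q\ln k$ is absorbed into the $q\ln k$ budget via $t\leq q$, and the $(k-s)\ln 2$ term fits inside the remaining $0.02(k-s)\ln k$ of slack once $k$ is large enough. The main obstacle I foresee is precisely this bookkeeping: the off-diagonal mass $t$ in the stable rows ``leaks'' into the residual rows' bound and must be charged only once to stay within the $q\ln k$ allocation. To avoid a double count, I would sharpen the stable-row estimate by using \Cor~\ref{Cor_H}(ii) (or equivalently splitting the off-diagonal of each row $i\leq s$ into its mass in $(s,k]$ and its mass in $[s]\setminus\{i\}$), so that the $t\ln(k-s)$ and $(q-t)\ln s$ contributions combine cleanly with the residual-row bound. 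Once that splitting is set up, the final inequality is a routine arithmetic check that all error terms are $o_k(1)\cdot\frac{\ln k}{k}$ relative to the $0.51(k-s)\frac{\ln k}{k}$ budget.
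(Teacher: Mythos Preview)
Your outline is essentially the paper's proof: the same row-entropy decomposition via~(\ref{eqRowEntropy}), the same use of \Cor~\ref{Cor_H}(i) with $\cI=\brk s$ on the rows $i>s$, and the same refinement via \Cor~\ref{Cor_H}(ii) on the rows $i\leq s$ to split the off-diagonal mass $q_i$ into $t_i$ (columns $>s$) and $q_i-t_i$ (columns $\leq s$), so that the cross terms $t\ln(k-s)+(q-t)\ln s$ from the stable rows and $t\ln s+(k-s-t)\ln(k-s)$ from the residual rows collapse to $q\ln s+(k-s)\ln(k-s)$ with no double count.

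The one step that is not quite ``routine arithmetic'' is the extra entropy term $\frac1k\sum_{i\le s} q_i\,h(t_i/q_i)$ that \Cor~\ref{Cor_H}(ii) produces. A crude bound like $h\le\ln 2$ gives only $\frac{q}{k}\ln 2$, and since $q$ can be as large as $\kappa k=\ln^{20}k$, this swamps the $0.51(k-s)\frac{\ln k}{k}$ budget when $k-s$ is small. The paper instead uses concavity and $h(z)\le z(1-\ln z)$ to get $\frac{q}{k}h(t/q)\le \frac{t}{k}(1-\ln t+\ln q)$, and then invokes the two key facts $t\le k-s$ (from~(\ref{eqmanyStableqt}) and double stochasticity) and $\ln q\le\ln(\kappa k)=O_k(\ln\ln k)$ to conclude that this error is $\frac{k-s}{k}\cdot O_k(\ln\ln k)$, which does fit inside the slack. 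You should make this step explicit; otherwise the argument is the same as the paper's.
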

\begin{proof}
\Cor~\ref{Cor_H} implies together with the concavity of $h$ that
	\begin{eqnarray}
	\frac1k\sum_{i=1}^sH(\rho_i)
	&\leq&\frac1k\sum_{i=1}^sh(\rho_{ii})+q_i h(t_i/q_i)+t_i\ln(k-s)+(q_i-t_i)\ln s\nonumber\\
	&\leq&\cH+\frac qk h(t/q)+\frac tk\ln(k-s)+\frac{q-t}k\ln(s)\nonumber\\
	&\leq&\cH+\frac tk(1-\ln t+\ln q)+\frac tk\ln(k-s)+\frac{q-t}k\ln(s)\qquad
		 [\mbox{as $h(z)\leq z(1-\ln z)$}].
	\label{eqentropycrit1a}
	\end{eqnarray}
Because $-z\ln z\leq1$ for all $z>0$, we have $-\frac tk\ln t\leq 1/k$.
Moreover, as $\rho$ is doubly-stochastic (\ref{eqmanyStableqt}) implies that $t\leq k-s$.
Additionally, (\ref{eqmanyStableq}) shows that $q\leq \kappa s\leq\kappa k=\tilde O_k(1)$, because $\kappa=\ln^{20}k/k$.
Thus,
	$$\frac tk(1-\ln t+\ln q)\leq\frac{k-s}k\cdot O_k(\ln\ln k).$$
Plugging this last estimate into~(\ref{eqentropycrit1a}), we obtain	
	\begin{eqnarray}	\label{eqentropycrit1}
	\frac1k\sum_{i=1}^sH(\rho_i)&\leq&\cH+\frac tk\ln(k-s)+\frac{q-t}k\ln(s)+\frac{k-s}k\cdot O_k(\ln\ln k).
	\end{eqnarray}
Furthermore, using  \Cor~\ref{Cor_H}, (\ref{eqmanyStableqt}) and the concavity of $h$, we see that
	\begin{eqnarray}
	\frac1k\sum_{i>s}H(\rho_i)&\leq&
			\frac1k\sum_{i>s}h\bc{\sum_{j=1}^s\rho_{ij}}+\sum_{j=1}^s\rho_{ij}\ln(s)+\bc{1-\sum_{j=1}^s\rho_{ij}}\ln(k-s)\nonumber\\
		&\leq&\frac{k-s}k h\bcfr{t}{k-s}+\frac tk\ln s+\frac{k-s-t}k\ln(k-s)\nonumber\\
		&\leq&\frac{k-s}k\ln2+\frac tk\ln s+\frac{k-s-t}k\ln(k-s)\qquad[\mbox{as $h(z)\leq\ln2$ for all $z$}].\label{eqentropycrit2}
	\end{eqnarray}
Plugging~(\ref{eqentropycrit1}) and~(\ref{eqentropycrit2}) into~(\ref{eqRowEntropy}), we find
	\begin{eqnarray}
	H(k^{-1}\rho)
		&\leq&\ln k+\cH+\frac{q}k\ln k+\frac{k-s}k\ln(k-s)+\frac{k-s}k\cdot O_k(\ln\ln k)\nonumber\\
		&\leq&\ln k+\cH+\frac{q}k\ln k+\frac{k-s}{2k}\ln k+\frac{k-s}k\cdot O_k(\ln\ln k)\qquad[\mbox{as $k-s\leq\sqrt k$}]\nonumber\\
		&\leq&\ln k+\cH+\frac{q}k\ln k+0.51(k-s)\frac{\ln k}k,		\label{eqentropycrit3}
	\end{eqnarray}
as claimed.
\end{proof}

\begin{claim}\label{Claim_eqentropycrit4}
We have
	$E(\rho)\leq E(\id)+(1+\tilde O_k(1/k))\frac{\ln k}{k}\bc{-0.85(k-s)+\sum_{i=1}^s(\rho_{ii}^2-1)}.$
\end{claim}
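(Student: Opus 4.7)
The plan is to reduce this claim to a linearisation of $E$ as a function of $\|\rho\|_2^2$, and then to estimate $\|\rho\|_2^2-k = \|\rho\|_2^2 - \|\id\|_2^2$ using both the separability bound $\rho_{ii}\geq 1-\kappa$ for $i\leq s$ and the assumption $\rho_{ij}\leq 0.15$ for all $(i,j)\notin\{(1,1),\ldots,(s,s)\}$ that has been enforced at the start of the subsection.

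First, I would write $E(\rho)-E(\id)=\frac d2\ln\bc{1+\frac{(\|\rho\|_2^2-k)/k^2}{1-1/k}}$ and apply the elementary bound $\ln(1+x)\leq x$ to get
$$E(\rho)-E(\id)\leq \frac{d}{2k^2(1-1/k)}\bc{\|\rho\|_2^2-k}.$$
Since $d=2k\ln k-\ln k-c$ with $c$ bounded, the leading coefficient equals $\frac{\ln k}{k}(1+\tilde O_k(1/k))$, which is exactly the multiplicative factor appearing on the right-hand side of the claim. This is essentially the bound~(\ref{eqEnergyDiff}), used here with $y=\|\rho\|_2^2$ and expanded around $y=k$.

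Second, I would bound $\|\rho\|_2^2-k$ by splitting rows. For $i\leq s$, since $\rho_{ij}\geq 0$ one has $\sum_{j\neq i}\rho_{ij}^2\leq\bc{\sum_{j\neq i}\rho_{ij}}^2=q_i^2$, where $q_i=1-\rho_{ii}\leq\kappa$. Summing, $\sum_{i\leq s}\sum_{j\neq i}\rho_{ij}^2\leq(\max_i q_i)\sum_i q_i\leq\kappa q\leq \kappa^2 s\leq\kappa^2 k=\tilde O_k(1/k)$. For $i>s$, every $\rho_{ij}\leq 0.15$, so $\sum_j\rho_{ij}^2\leq 0.15\sum_j\rho_{ij}=0.15$, giving $\sum_{i>s}\sum_j\rho_{ij}^2\leq 0.15(k-s)$. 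Putting these together,
$$\|\rho\|_2^2\leq\sum_{i=1}^s\rho_{ii}^2+0.15(k-s)+\tilde O_k(1/k),$$
so $\|\rho\|_2^2-k\leq \sum_{i=1}^s(\rho_{ii}^2-1)-0.85(k-s)+\tilde O_k(1/k)$.

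Finally, combining the two steps,
$$E(\rho)-E(\id)\leq \frac{\ln k}{k}(1+\tilde O_k(1/k))\brk{\sum_{i=1}^s(\rho_{ii}^2-1)-0.85(k-s)}+\tilde O_k\bcfr{\ln k}{k^2}.$$
Because $k-s\geq 1$ and $\rho_{ii}\leq 1$, the bracketed quantity is at most $-0.85$, so the additive $\tilde O_k(\ln k/k^2)$ error can be absorbed into the $(1+\tilde O_k(1/k))$ prefactor, yielding the claim. The only real obstacle is getting the error term down to $\tilde O_k(1/k)$: the naive bound $\sum_{j\neq i}\rho_{ij}^2\leq 0.15\cdot q_i$ for $i\leq s$ would leave an error of size $\tilde O_k(1)$, which is too large. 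The fix is to use the crude but tighter estimate $\sum_{j\neq i}\rho_{ij}^2\leq q_i^2$, which works because $q_i\leq\kappa=\ln^{20}k/k$ is much smaller than the uniform bound $0.15$.
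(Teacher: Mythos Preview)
Your argument is correct and follows essentially the same route as the paper: bound $\|\rho\|_2^2$ row-by-row (using $\rho_{ij}\leq 0.15$ for $i>s$ and the smallness of the off-diagonal mass for $i\leq s$), then linearise $E$ around $\|\id\|_2^2=k$. The only cosmetic difference is in how the off-diagonal squares for $i\leq s$ are handled---you use $\sum_{j\neq i}\rho_{ij}^2\leq q_i^2$ and sum, while the paper uses the extremal bound $\sum\rho_{ij}^2\leq\lceil q/\kappa\rceil\kappa^2$ with the cap $\rho_{ij}\leq\kappa$---but both give the same $\tilde O_k(1/k)$ contribution.
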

\begin{proof}
The Frobenius norm of $\rho$ can be estimated as follows.
Since $\rho_{ii}\geq1-\kappa$ for all $i\leq s$ and $\rho$ is stochastic,
we have $\rho_{ij}\leq\kappa$ for all $i\leq s$, $j\neq i$.
Hence, the bound~(\ref{eqmanyStableq}) implies together with the fact that a sum of squares is maximized by having the summands
as unequal as possible that
	\begin{eqnarray}\label{eqUltFrob1}
	\sum_{i=1}^s\norm{\rho_i}_2^2\leq \left\lceil\frac{q}\kappa\right\rceil\cdot\kappa^2+
		\sum_{i=1}^s\rho_{ii}^2
		\leq s\kappa^2+\sum_{i=1}^s\rho_{ii}^2
		\leq\tilde O_k(1/k)+\sum_{i=1}^s\rho_{ii}^2\qquad[\mbox{as $\kappa\leq\ln^{20}k/k$}].
	\end{eqnarray}
A similar argument applies to the remaining rows.
More precisely, if $i>s$ then $\rho_{ij}\leq 0.15$ for all $j$ by our initial assumption on $\rho$.
Therefore,
	\begin{equation}\label{eqUltFrob2}
	\sum_{i>s}\norm{\rho_i}_2^2\leq \frac{k-s}{0.15}\cdot(0.15)^2=0.15(k-s).
	\end{equation}
Combining~(\ref{eqUltFrob1}) and~(\ref{eqUltFrob2}), we arrive at
	\begin{equation}\label{eqUltFrob3}
	\norm{\rho}_2^2\leq 
			\sum_{i=1}^s\rho_{ii}^2+0.15(k-s)+\tilde O_k(1/k).
	\end{equation}
By comparison, $\norm\id_2^2=k$.
Thus, (\ref{eqUltFrob3}) yields
	$\norm\rho_2^2-\norm\id_2^2\leq
		-0.85(k-s)+\sum_{i=1}^s(\rho_{ii}^2-1)+\tilde O_k(1/k)$.
Combining this estimate with (\ref{eqEnergyDiff}) completes the proof.
\end{proof}

Observing that $H(k^{-1}\id)=\ln k$ and
using
Claims~\ref{Claim_eqentropycrit3} and~\ref{Claim_eqentropycrit4}, we obtain 
	\begin{eqnarray}\nonumber
	f(\rho)-f(\id)&=&H(k^{-1}\rho)-\ln k+E(\rho)-E(\id)\\
		&\leq&\cH+\frac qk\ln k-
		\frac{k-s}{3k}\ln k+(1+\tilde O_k(1/k))\frac{\ln k}k\sum_{i=1}^s(\rho_{ii}^2-1).
				\label{eqmanyStablef}
	\end{eqnarray}
To complete the proof, let $r_i=1-\rho_{ii}$ for $i=1,\ldots,s$.
Then~(\ref{eqmanyStableq}) shows that $q=\sum_{i=1}^sr_i$.
Moreover, $\cH=\frac1k\sum_{i=1}^sh(r_i)$, as $h(1-z)=h(z)$ for all $z$.
Since $r_i\leq\kappa=\tilde O_k(1/k)$, we have
	\begin{eqnarray*}
	\cH+\frac qk\ln k+\frac{\ln k}k\sum_{i=1}^s(\rho_{ii}^2-1)&=&
		\frac1k\sum_{i=1}^s\brk{h(r_i)+r_i\ln k+((1-r_i)^2-1)\ln k}\\
		&=&\frac1k\sum_{i=1}^s\brk{h(r_i)+r_i\ln k+(r_i^2-2r_i)\ln k}\\
		&\leq&\tilde O_k(1/k^2)+\frac1k\sum_{i=1}^sh(r_i)-r_i\ln k
			\leq O_k(1/k)\qquad\mbox{[by~(\ref{eqSimpleMax})]}.
	\end{eqnarray*}
Plugging this bound into~(\ref{eqmanyStablef}) and recalling that $s\leq k-1$, we get
	\begin{equation}\label{eqeqmanyStable_final}
	f(\rho)\leq-\frac{k-s}{3k}\ln k+O_k(1/k)+f(\id)\leq f(\id)-\frac{k-s}{3k}\ln k+O_k(1/k)<f(\id).
	\end{equation}
Finally, we calculate $f(\id)=\ln k+\frac d2\ln(1-1/k)=\frac12f(\bar\rho)$.
Since $f(\bar\rho)>0$ (by \Prop~\ref{Prop_first}), we conclude that $f(\id)<f(\bar\rho)$.
Thus, the assertion follows from~(\ref{eqeqmanyStable_final}).

\section{The Laplace method}\label{Sec_sufficient}

\noindent{\em In this section we keep the assumptions of 
\Prop~\ref{Prop_second} and the
	notation introduced in \Sec~\ref{sec:outline}.}

\smallskip
\noindent
In this section we prove \Prop~\ref{Prop_sufficient}.
Recalling that $\cR=\cR_{n,k}$ is the (discrete) set of overlap matrices, let
	$$Z_{\rho',\mathrm{\good}}=\abs{\cbc{(\sigma,\tau)\in\cB\times\cB:\text {$\sigma,\tau$ are \good\ $k$-colorings of $\gnm$ and $\rho(\sigma,\tau)=\rho'$}}}
		\qquad\text{for }\rho'\in\cR.$$
Then we can cast the second moment as
	\begin{equation}\label{eqdecomp}
	\Erw\brk{\Zkg^2}=\sum_{\rho\in\cR}\Erw\brk{\Zrg}.
	\end{equation}
Because any \good\ $k$-coloring is balanced, Fact~\ref{Lemma_frho} yields
	\begin{equation}\label{eqfrho}
	\Erw\brk{\Zrg}\leq\Erw\brk{\Zrb}\leq O(n^{(1-k^2)/2})\cdot\exp(n\cdot f(\rho))\qquad\mbox{uniformly for }\rho\in\cR.
	\end{equation}
By Taylor-expanding $f$ around $\bar\rho$, we can estimate the contribution to the sum~(\ref{eqdecomp})
resulting from $\rho$ near~$\bar\rho$.

\begin{lemma}\label{Lemma_central}
There exist $C=C(k)>0$ and $\eta=\eta(k)>0$ such that with $\cR_0=\cbc{\rho\in\cR:\norm{\rho-\bar\rho}_2<\eta}$ we have
	$$\sum_{\rho\in\cR_0}\Erw\brk{\Zrg}\leq C\cdot\Erw[\Zkg]^2.$$
\end{lemma}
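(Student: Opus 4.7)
The plan is to implement a standard Laplace-method argument around the stationary point $\bar\rho$, estimating the lattice sum by a Gaussian integral.

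First I would Taylor-expand $f$ around $\bar\rho$. Because $\bar\rho$ lies in the relative interior of $\cD$ and is the strict global maximum of $f$ on $\Dg$ by \Prop~\ref{Prop_opt}, the gradient of $f$ must vanish on the tangent space $T=\{u\in\RR^{k\times k}:\sum_{j}u_{ij}=\sum_{i}u_{ij}=0\}$ of $\cD$ at $\bar\rho$. A direct calculation then yields the Hessian: the Hessian of $H(k^{-1}\rho)$ at $\bar\rho$ is diagonal with entries $\partial^2 H/\partial\rho_{ij}^2=-1/(k\rho_{ij})=-1$, so $u^T\nabla^2 H\,u=-\|u\|_2^2$ for every $u\in T$. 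The Hessian of $E$ at $\bar\rho$ has the form $\alpha I-\gamma vv^T$ with $\alpha=d/(k^2(1-1/k)^2)=\tilde O_k(1/k)$ and $v=\mathrm{vec}(\bar\rho)=(1/k)\mathbf 1$; since $u^Tv=k^{-1}\sum_{ij}u_{ij}=0$ for $u\in T$, the rank-one term vanishes and $u^T\nabla^2 E\,u=\alpha\|u\|_2^2$. Hence $u^T\nabla^2 f\,u=(\alpha-1)\|u\|_2^2\leq-\tfrac12\|u\|_2^2$ for $k\geq k_0$, and there exist $c=c(k)>0$ and $\eta=\eta(k)>0$ such that
$$f(\rho)\leq f(\bar\rho)-c\|\rho-\bar\rho\|_2^2\qquad\text{whenever $\rho\in\cD$ and $\|\rho-\bar\rho\|_2\leq\eta$.}$$

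Combining this quadratic upper bound with Fact~\ref{Lemma_frho} and the trivial inequality $\Erw[\Zrg]\leq\Erw[\Zrb]$,
$$\sum_{\rho\in\cR_0}\Erw[\Zrg]\leq O(n^{(1-k^2)/2})\cdot\exp(n f(\bar\rho))\cdot\sum_{\rho\in\cR_0}\exp\bc{-cn\|\rho-\bar\rho\|_2^2}.$$
For any $\sigma,\tau\in\cB$ the overlap matrix $\rho(\sigma,\tau)$ has entries in $(k/n)\ZZ$ and row/column sums within $O(k/\sqrt n)$ of $1$, so inside the affine span of $\cD$ (of dimension $(k-1)^2$) the set $\cR$ sits in a lattice of spacing $\Theta(k/n)$. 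A standard Riemann-sum comparison with the $(k-1)^2$-dimensional Gaussian integral then gives
$$\sum_{\rho\in\cR_0}\exp(-cn\|\rho-\bar\rho\|_2^2)\leq C_1(k)\cdot(n/k)^{(k-1)^2}\int_{T}\exp(-cn\|x\|_2^2)\,dx\leq C_2(k)\cdot n^{(k-1)^2/2}.$$

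Putting the pieces together and using the arithmetic identity $(1-k^2)/2+(k-1)^2/2=1-k$,
$$\sum_{\rho\in\cR_0}\Erw[\Zrg]\leq C_3(k)\cdot n^{1-k}\cdot\exp(n f(\bar\rho)).$$
Finally, \Prop~\ref{Prop_first} gives $\Erw[\Zkg]\sim\Erw[\Zkb]=\Theta(\exp(\tfrac{n}{2}f(\bar\rho)))$, so $\exp(nf(\bar\rho))=\Theta(\Erw[\Zkg]^2)$; since $n^{1-k}\leq1$ for $k\geq 1$, we conclude $\sum_{\rho\in\cR_0}\Erw[\Zrg]\leq C(k)\,\Erw[\Zkg]^2$, as required. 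The main obstacle is the Hessian calculation on the tangent space $T$ of the Birkhoff polytope and the verification that its eigenvalues stay uniformly bounded away from $0$ in a small neighborhood of $\bar\rho$; granted that, the Gaussian lattice sum is a routine Riemann-approximation exercise.
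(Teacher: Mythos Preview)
Your approach is the same Laplace method the paper uses, and your Hessian computation is correct, but there is a gap in the lattice-sum step. You establish the quadratic bound $f(\rho)\le f(\bar\rho)-c\|\rho-\bar\rho\|_2^2$ only for $\rho\in\cD$, and then assert that $\cR$ ``sits in'' the $(k-1)^2$-dimensional affine span of $\cD$. It does not: balancedness only gives row and column sums equal to $1+O(k/\sqrt n)$, so the points of $\cR_0$ lie \emph{near} the affine span of $\cD$ but not on it. The only exact linear constraint on $\cR$ is $\sum_{i,j}\rho_{ij}=k$, so $\cR$ lives in a $(k^2-1)$-dimensional hyperplane, and this is the space in which the lattice sum must be carried out.

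The fix is already implicit in your own calculation. When you argue that the rank-one piece of $\nabla^2 E$ vanishes because $u^Tv=k^{-1}\sum_{ij}u_{ij}=0$, you are only using $\sum_{ij}u_{ij}=0$, not the stronger condition $u\in T$. Hence $u^T\nabla^2 f\,u=(\alpha-1)\|u\|_2^2$ holds for every $u$ in the full hyperplane $\{\sum u_{ij}=0\}$, and the quadratic upper bound is valid for all $\rho\in\cR_0$, not just those in $\cD$. This is exactly what the paper does: it parameterises the hyperplane by dropping the $(k,k)$ entry, computes the Hessian of $f\circ\cL$ on $\RR^{k^2-1}$ directly (finding $-(1-\alpha)(\id+\vecone)$, negative definite), and then compares the $(k^2-1)$-dimensional lattice sum to a $(k^2-1)$-dimensional Gaussian integral. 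The resulting factor $n^{(k^2-1)/2}$ cancels the $n^{(1-k^2)/2}$ from Fact~\ref{Lemma_frho} to give $O(1)$ rather than your $n^{1-k}$; the extra saving you obtain is an artefact of undercounting the lattice dimension. A minor point: you need not invoke \Prop~\ref{Prop_opt} to conclude $\bar\rho$ is a stationary point --- this follows by direct calculation (the gradient of $f$ at $\bar\rho$ is a constant multiple of $\vecone$), and doing so keeps the proof of \Lem~\ref{Lemma_central} self-contained, as in the paper.
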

\begin{proof}
By construction, we have $\sum_{i,j=1}^k\rho_{ij}=k$ for all $\rho\in\cR$.
Therefore, we can parameterize $\cR$ as follows.
Let
	\begin{eqnarray*}
	\cL:\brk{0,1}^{k^2-1}\ra\brk{0,1}^{k^2},&&\hat\rho=(\hat\rho_{ij})_{(i,j)\in\brk k^2\setminus\cbc{(k,k)}}\mapsto\cL(\hat\rho)=(\cL_{ij}(\hat\rho))_{i,j\in\brk k},\mbox{ where}\\
	\cL_{ij}(\hat\rho)=\hat\rho_{ij}\mbox{ for $(i,j)\neq(k,k)$}&\mbox{and}&\cL_{kk}(\hat\rho)=k-\sum_{(i,j)\neq(k,k)}\hat\rho_{ij}.
	\end{eqnarray*}
Moreover, let $\hat\cR=\cL^{-1}(\cR)\mbox{ and }\tilde\rho=\cL^{-1}(\bar\rho).$

We compute the Hessian of $f\circ\cL=H\circ \cL+E\circ\cL$ at $\tilde\rho$.
A direct calculation yields for $(a,b)\neq(i,j)$
	\begin{equation}\label{eqcentralNew1}
	\frac\partial{\partial\hat\rho_{ij}}H\circ\cL(\hat\rho)\big|_{\hat\rho=\tilde\rho}=0,\quad
	\frac{\partial^2}{\partial\hat\rho_{ij}^2}H\circ\cL(\hat\rho)\big|_{\hat\rho=\tilde\rho}=-2,\quad
	\frac{\partial^2}{\partial\hat\rho_{ij}\partial\hat\rho_{ab}}H\circ\cL(\hat\rho)\big|_{\hat\rho=\tilde\rho}=-1.
	\end{equation}
Furthermore,
	$$
	\frac\partial{\partial\hat\rho_{ij}}\norm{\cL(\hat\rho)}_2^2\big|_{\hat\rho=\tilde\rho}=0,\quad
	\frac{\partial^2}{\partial\hat\rho_{ij}^2}\norm{\cL(\hat\rho)}_2^2\big|_{\hat\rho=\tilde\rho}=4,\quad
	\frac{\partial^2}{\partial\hat\rho_{ij}\partial\hat\rho_{ab}}\norm{\cL(\hat\rho)}_2^2\big|_{\hat\rho=\tilde\rho}=2.
	$$
Thus, by the chain rule
	\begin{equation}\label{eqcentralNew2}
	\frac\partial{\partial\hat\rho_{ij}}E\circ\cL(\hat\rho)\big|_{\hat\rho=\tilde\rho}=0,\quad
	\frac{\partial^2}{\partial\hat\rho_{ij}^2}E\circ\cL(\hat\rho)\big|_{\hat\rho=\tilde\rho}=\frac{2d}{k^2(1-1/k)^2},\quad
	\frac{\partial^2}{\partial\hat\rho_{ij}\partial\hat\rho_{ab}}E\circ\cL(\hat\rho)=
		\frac{d}{k^2(1-1/k)^2}.
	\end{equation}
Combining~(\ref{eqcentralNew1}) and~(\ref{eqcentralNew2}), we see that the first derivative
of $f\circ\cL$ at the point $\tilde\rho$ vanishes, and that the Hessian is
	\begin{equation}\label{eqcentralNew3}
	D^2f\circ\cL(\hat\rho)|_{\hat\rho=\tilde\rho}=-\bc{1-\frac{d}{k^2(1-1/k)^2}}\cdot\bc{\id+\vecone},
	\end{equation}
where $\vecone$ denotes the matrix with all entries equal to one and $\id$ is the identity matrix.

As $\id$ is positive definite, $\vecone$ is positive semidefinite and $d/(k^2(1-1/k)^2)=O_k(\ln k/k)<\frac12$,
	(\ref{eqcentralNew3}) shows that the Hessian is negative definite at $\tilde\rho$.
In fact, by continuity there exist numbers $\tilde\eta,\tilde\xi>0$ independent of $n$ such that the largest eigenvalue of $D^2f\circ\cL$ is smaller than $-\tilde\xi$ at all points
$\hat\rho$ such that $\|\hat\rho-\tilde\rho\|_2<\tilde\eta$.
Further, because $\cL$ is linear there is an $n$-independent $\eta>0$ such that for all $\rho\in\cR_0=\cbc{\rho\in\cR:\norm{\rho-\bar\rho}_2<\eta}$
we have $\|\cL^{-1}(\rho)-\tilde\rho\|_2<\tilde\eta$.
Hence, by Taylor's formula there is a number $\xi>0$ that does not depend on $n$ such that
	\begin{equation}\label{eqcentral3}
	f\circ\cL(\hat\rho)\leq f(\bar\rho)-\xi\sum_{(i,j)\neq(k,k)}(\hat\rho_{ij}-1/k)^2\qquad\mbox{for all }\hat\rho\in \hat\cR_0=\cL^{-1}(\cR_0).
	\end{equation}
Combining~(\ref{eqfrho}) and~(\ref{eqcentral3}), we obtain
	\begin{eqnarray}\nonumber
	\sum_{\rho\in\cR_0}\Erw\brk{\Zrg}
		&\leq&\exp\bc{f(\bar\rho)n}\cdot
			O(n^{(1-k^2)/2})\sum_{\hat\rho\in\hat\cR_0}\exp\brk{-n\cdot\xi\sum_{(i,j)\neq(k,k)}(\hat\rho_{ij}-1/k)^2}\nonumber\\
		&\leq&\exp\bc{f(\bar\rho)n}\cdot O(1)\int_{\RR^{k^2-1}}\exp\brk{-\xi\sum_{(i,j)\neq(k,k)}(\hat z_{ij}-1/k)^2}d\hat z\nonumber\\
		&\leq&\exp\bc{f(\bar\rho)n}\cdot O(1)\brk{\int_{-\infty}^\infty\exp\brk{-\xi z^2}d z}^{k^2-1}
		=O(1)\cdot\exp\bc{f(\bar\rho)n}.
			\label{eqcentral4}
	\end{eqnarray}
Finally, a direct calculation shows that $f(\bar\rho)=2(\ln k+\frac d2\ln(1-1/k))$, whence $\exp\bc{f(\bar\rho)n}=O(k^n(1-1/k)^m)^2$
	(as $m=\lceil dn/2\rceil$).
Thus, the assertion follows from \Prop~\ref{Prop_first} and~(\ref{eqcentral4}).
\end{proof}

To estimate the contribution of $\rho\not\in\cR_0$, we decompose $\cR\setminus\cR_0$ into three subsets:
	\begin{eqnarray*}
	\cR_1&=&\cbc{\rho\in\cR\setminus\cR_0:\rho\mbox{ fails to be separable}},\\
	\cR_2&=&\cbc{\rho\in\cR\setminus(\cR_0\cup\cR_1):\mbox{for each $i$ there is $j$ such that $\rho_{ij}>0.51$}},\\
	\cR_3&=&\cR \setminus \bc{\cR_0 \cup \cR_1 \cup \cR_2}.
	\end{eqnarray*}
Condition {\bf T2} from Definition~\ref{Def_good} directly implies that
	\begin{equation}\label{eqProp_sufficient1}
	\Erw\brk{\Zrg}=0\qquad\mbox{for all }\rho\in\cR_1.
	\end{equation}
With respect to $\cR_2$, we have

\begin{lemma}\label{Lemma_R2}
There is a number $C=C(k)>0$ such that
	$\sum_{\rho\in\cR_2}\Erw\brk{\Zrg}\leq C\cdot\Erw[\Zkg]^2.$
\end{lemma}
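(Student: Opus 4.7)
The plan is to bound the $k$-stable part of the second moment by exploiting the cluster-size bound \textbf{T3} built into the definition of \good. The starting point is the observation that every $\rho\in\cR_2$ is essentially a permutation matrix. By definition each row of $\rho$ contains some entry exceeding $0.51$; because row sums of $\rho\in\cR$ are $1+o(1)$ this entry is unique, and the associated column indices $\pi(i)$ must be pairwise distinct (else some column sum would exceed $1$). Hence $\pi\in S_k$ is a permutation; by separability one in fact has $\rho_{i,\pi(i)}\geq 1-\kappa$.

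Next I will turn this structural observation into a counting bound. Suppose $\sigma,\tau$ are \good\ $k$-colorings of $\gnm$ with $\rho(\sigma,\tau)\in\cR_2$, and let $\pi$ be the permutation just extracted. Set $\tilde\tau=\pi^{-1}\circ\tau$, which is again a balanced $k$-coloring of $\gnm$. By construction $\rho_{ii}(\sigma,\tilde\tau)=\rho_{i,\pi(i)}(\sigma,\tau)>0.51$ for every $i$, so $\tilde\tau\in\cC(\sigma)$. Since the pair $(\pi,\tilde\tau)$ determines $\tau$, for each \good\ $\sigma$ the number of \good\ $\tau$ with $\rho(\sigma,\tau)\in\cR_2$ is at most $k!\cdot|\cC(\sigma)|$. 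Invoking \textbf{T3} of Definition \ref{Def_good}, $|\cC(\sigma)|\leq\Erw[\Zkc(\gnm)]$, which is a deterministic quantity. Summing over \good\ $\sigma$ and taking expectations therefore yields
\[
\sum_{\rho\in\cR_2}\Erw\brk{\Zrg}\;\leq\;k!\cdot\Erw\brk{\Zkc(\gnm)}\cdot\Erw\brk{\Zkg}.
\]

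To close the argument I will use the standard first-moment comparison $\Erw[\Zkc(\gnm)]=O(\Erw[\Zkb])$, which follows from a Gaussian / large-deviations calculation over class-size vectors (the contribution of non-balanced maps is suppressed exponentially in the class-size deviation), together with $\Erw[\Zkg]\sim\Erw[\Zkb]$ from Proposition \ref{Prop_first}. These combine to produce a constant $C=C(k)$ with $\sum_{\rho\in\cR_2}\Erw[\Zrg]\leq C\cdot\Erw[\Zkg]^2$, as required. The single idea driving the whole proof is that membership of $\rho(\sigma,\tau)$ in $\cR_2$ encodes ``$\tau$ lies in the cluster of $\sigma$ up to a global color permutation'', so that \textbf{T3} provides the needed a priori control; the only point requiring care is the $\Erw[\Zkc]$-versus-$\Erw[\Zkb]$ comparison, and that is routine.
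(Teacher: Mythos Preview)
Your argument is correct and follows essentially the same route as the paper: reduce $\cR_2$ via a color permutation to the diagonal case $\rho_{ii}>0.51$, recognize that then $\tau\in\cC(\sigma)$, and invoke \textbf{T3} to bound the inner count by the first moment. The only (minor) difference is that you are more explicit than the paper about the fact that \textbf{T3} bounds $|\cC(\sigma)|$ by $\Erw[\Zkc]$ rather than $\Erw[\Zkb]$, and you correctly note that the missing comparison $\Erw[\Zkc]=O(\Erw[\Zkb])$ is a routine Gaussian/large-deviations estimate; the paper simply writes $\Erw[\Zkb]$ at that step.
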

\begin{proof}
Let $\cR_2'$ be the set of all $k$-stable $\rho'\in\cR$ (i.e., $\rho_{ii}'>0.51$ for all $i\in\brk k$).
Because we restrict ourselves to balanced $k$-colorings, the row and column sums of each matrix $\rho\in\cR$ are $1+O(n^{-1/2})$.
Hence, for any matrix $\rho\in\cR$ there is at most one entry greater than $0.51$ in each row or column.
Thus, suppose that $\sigma,\tau$ are \good\ $k$-colorings of $\gnm$ such that $\rho(\sigma,\tau)\in\cR_2$.
Then each row and each column of $\rho(\sigma,\tau)$ have {\em exactly} one entry that is greater than $0.51$.
Therefore, there exists a permutation $\pi:\brk k\ra\brk k$ such that $\sigma,\pi\circ\tau$ are two colorings such that $\rho(\sigma,\pi\circ\tau)\in\cR_2'$.
Consequently,
	\begin{equation}\label{eqProp_sufficient2}
	\sum_{\rho\in\cR_2}\Erw\brk{\Zrg}\leq k!\sum_{\rho\in\cR_2'}\Erw\brk{\Zrg}.
	\end{equation}
Further, if $\sigma,\tau$ are $k$-colorings such that $\rho(\sigma,\tau)\in\cR_2'$, then $\tau\in\cC(\sigma)$ by the very definition of the cluster $\cC(\sigma)$.
Therefore, by the linearity of expectation and Bayes' formula, we have
	\begin{equation}\label{eqProp_sufficient3}
	\sum_{\rho\in\cR_2'}\Erw\brk{\Zrg}=\sum_{\sigma\in\cB}
		\Erw\brk{\cC\bc\sigma|\sigma\mbox{ is a \good\ $k$-coloring}}\cdot\pr\brk{\sigma\mbox{ is a \good\ $k$-coloring}}
	\end{equation}
Now, if $\sigma$ is a \good\ $k$-coloring, then by {\bf T3} we know that $\cC\bc\sigma\leq\Erw[\Zkb]$ with certainty.
Thus, (\ref{eqProp_sufficient2}) yields
	\begin{eqnarray}\nonumber
	\sum_{\rho\in\cR_2'}\Erw\brk{\Zrg}&\leq&\Erw\brk{\Zkb}\sum_{\sigma\in\cB}
		\pr\brk{\sigma\mbox{ is a \good\ $k$-coloring}}\leq\Erw\brk{\Zkb}\cdot\Erw\brk{\Zkg}\\
		&\leq&(1+o(1))\Erw\brk{\Zkg}^2\qquad\mbox{[by \Prop~\ref{Prop_first}].}
			\label{eqProp_sufficient4}
	\end{eqnarray}
Combining~(\ref{eqProp_sufficient2}) and~(\ref{eqProp_sufficient4}), we get
	$\sum_{\rho\in\cR_2}\Erw\brk{\Zrg}\leq O(\Erw\brk{\Zkg}^2),$
as claimed.
\end{proof}

To bound the contribution of $\rho\in\cR_3$,
 we need the following observation.

\begin{lemma}\label{Lemma_fluctuations}
There is a number $C=C(k)>0$ such that
for any $\rho\in\cR$ there is $\rho'\in\Birk$ with $\norm{\rho-\rho'}_2< C/\sqrt n$. 
\end{lemma}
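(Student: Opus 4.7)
The plan is to observe that every overlap matrix $\rho\in\cR$ is already very close to being doubly-stochastic, since it arises from two balanced maps $\sigma,\tau\in\cB$, and then to produce $\rho'\in\Birk$ by a two-step surgery: a small ``inflation'' toward the barycentre $\bar\rho$ followed by an explicit row/column correction.

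More concretely, since $\sigma,\tau$ are balanced, for every $i,j\in\brk k$ the row and column sums satisfy
\[
r_i \;:=\; \sum_{j}\rho_{ij}-1 \;=\; \frac{k}{n}|\sigma^{-1}(i)|-1 \;=\; O(k/\sqrt n),\qquad
c_j \;:=\; \sum_{i}\rho_{ij}-1 \;=\; O(k/\sqrt n),
\]
and since $\sum_{ij}\rho_{ij}=k$ we automatically have $\sum_i r_i=\sum_j c_j=0$. Given the bound on $|r_i|,|c_j|$, the natural candidate is
\[
\rho^\flat_{ij} \;=\; \rho_{ij}-r_i/k-c_j/k,
\]
and a direct check using $\sum r_i=\sum c_j=0$ shows that $\rho^\flat$ has all row and column sums equal to $1$; moreover $\|\rho-\rho^\flat\|_2^2\leq (2/k)(\sum_i r_i^2+\sum_j c_j^2)=O(k^2/n)$.

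The only issue is that $\rho^\flat$ might have entries slightly below zero where $\rho_{ij}=0$, since we subtract quantities of order $1/\sqrt n$. To handle this I will first pre-mix: set $\eps=C_0 k/\sqrt n$ for a large enough constant $C_0=C_0(k)$ and define $\hat\rho=(1-\eps)\rho+\eps\bar\rho$, whose entries are all at least $\eps/k=C_0/\sqrt n$. Applying the same correction to $\hat\rho$ (whose row and column deficits are merely scaled by $1-\eps$ and are still $O(k/\sqrt n)$) produces a matrix $\rho'$ with row and column sums exactly $1$, and with entries at least $C_0/\sqrt n - O(1/\sqrt n)\geq 0$ provided $C_0$ is chosen sufficiently large. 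Hence $\rho'\in\Birk$.

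Finally, the triangle inequality gives $\|\rho-\rho'\|_2\leq\|\rho-\hat\rho\|_2+\|\hat\rho-\rho'\|_2$; the first term is $\eps\|\rho-\bar\rho\|_2=O(k^2/\sqrt n)$ because $\|\rho-\bar\rho\|_2\leq k$, and the second is $O(k/\sqrt n)$ by the Frobenius bound above. Thus $\|\rho-\rho'\|_2\leq C(k)/\sqrt n$, as required. The only real obstacle in this argument is the nonnegativity of the entries after subtracting $r_i/k$ and $c_j/k$; the preliminary mixing with $\bar\rho$ is precisely what neutralises it, without spoiling the $O(1/\sqrt n)$ distance estimate once $k$ is regarded as fixed.
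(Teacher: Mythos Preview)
Your proof is correct but follows a different route from the paper. Both arguments start from the fact that the row and column deficits $r_i,c_j$ are $O(k/\sqrt n)$ with $\sum_i r_i=\sum_j c_j=0$. The paper proceeds iteratively by a multiplicative rebalancing: while some row $i$ has sum $1+\alpha>1$ and some row $l$ has sum $<1$, replace row $i$ by $(1-\alpha'')\rho_i$ and row $l$ by $\rho_l+\alpha''\rho_i$; after at most $k-1$ such steps the matrix is row-stochastic, and then the same is done for the columns (which preserves the row sums, since each column move is a convex combination within every row). Nonnegativity is automatic throughout. Your argument instead performs a single additive rank-two correction $\rho_{ij}\mapsto\rho_{ij}-r_i/k-c_j/k$, which fixes all row and column sums in one shot; the price is that entries can dip below zero, and you neutralise this cleanly by first mixing with $\bar\rho$ by an amount $\eps=\Theta(k/\sqrt n)$. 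Your approach is more explicit and non-iterative, and makes the constant easy to read off; the paper's is a bit more operational but sidesteps the nonnegativity issue by construction. Either way the resulting $\rho'\in\Birk$ satisfies $\|\rho-\rho'\|_2=O_k(1/\sqrt n)$.
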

\begin{proof}
Let $\rho\in\cR$.
By construction, we have $\sum_{i,j}\rho_{ij}=k$.
Hence, while there is $i\in[k]$ such that the row sum is $\sum_{j}\rho_{ij}=1+\alpha>1$,
there must be another row $l$ such that $\sum_{j}\rho_{lj}=1-\alpha'<1$.
Thus, by replacing row $i$ by $(1-\alpha'')\rho_i$ and row $l$ by $\rho_l+\alpha''\rho_i$ for some suitable $\alpha''\leq 2k/\sqrt n$,
we can ensure that at least one of the row sums is one.
After at most $k-1$ steps, we thus obtain a stochastic matrix $\rho''$ such that $\|\rho-\rho''\|_2=2k^3/\sqrt n$.
Repeating the same operation for the columns yields the desired doubly-stochastic $\rho'$.
\end{proof}

\begin{lemma}\label{Lemma_R3}
If $f(\rho)<f(\bar\rho)$ for any $\rho\in\Dg\setminus\cbc{\bar\rho}$, then
	$\sum_{\rho\in\cR_3}\Erw\brk{\Zrg}\leq \Erw[\Zkg]^2.$
\end{lemma}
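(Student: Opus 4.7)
The plan is to establish a uniform exponential gap $f(\rho)\leq f(\bar\rho)-\delta$ on all of $\cR_3$ for some constant $\delta=\delta(d,k)>0$, and then to exploit the fact that $|\cR_3|\leq|\cR|\leq n^{k^2}$ is only polynomial in $n$. Combined with the first-moment bound~(\ref{eqfrho}) this would give
\begin{equation*}
\sum_{\rho\in\cR_3}\Erw\brk{\Zrg}\leq n^{O_k(1)}\cdot\exp\bc{n(f(\bar\rho)-\delta)}=\exp(-\Omega_k(n))\cdot\exp\bc{n\cdot f(\bar\rho)},
\end{equation*}
and by \Prop~\ref{Prop_first} together with the identity $\Erw\brk{\Zkg}^2=\Theta(\exp(n\cdot f(\bar\rho)))$ this would be $o(\Erw\brk{\Zkg}^2)$, and in particular at most $\Erw\brk{\Zkg}^2$ for $n$ large.

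The uniform gap I would establish by contradiction and compactness. If no such $\delta$ existed, there would be $n_j\to\infty$ and matrices $\rho^{(j)}\in\cR_3$ (for the respective $n_j$) with $f(\rho^{(j)})\geq f(\bar\rho)-1/j$. Because all entries lie in $[0,1]$, passing to a subsequence gives entrywise convergence $\rho^{(j)}\to\rho^*\in\brk{0,1}^{k\times k}$; since there are only $k$ candidate rows witnessing non-$k$-stability, I may further assume the same row $i$ witnesses it for every $j$. Balancedness forces the row and column sums of $\rho^{(j)}$ to be $1+O(n_j^{-1/2})$, so the limit $\rho^*$ is doubly-stochastic. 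Each constraint defining $\cR_3$ then passes to $\rho^*$: the complement of the open interval $(0.51,1-\kappa)$ is closed, preserving separability; the condition $\rho_{ij}\leq 0.51$ for every $j$ is closed, preserving non-$k$-stability through row $i$; and the bound $\norm{\rho^{(j)}-\bar\rho}_2\geq\eta$ from the exclusion of $\cR_0$ is closed as well. Hence $\rho^*\in\Dg\setminus\cbc{\bar\rho}$, and the hypothesis of the lemma gives $f(\rho^*)<f(\bar\rho)$, contradicting the continuity-based limit $f(\rho^{(j)})\to f(\rho^*)\geq f(\bar\rho)$.

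The only delicate point is this passage to the limit: each defining property of $\cR_3$ has to be verified to be a closed condition on $\brk{0,1}^{k\times k}$, which hinges on the open-versus-closed phrasing of separability and $k$-stability, and the balancedness assumption built into $\cR$ has to be invoked to deduce that the limit of nearly-doubly-stochastic matrices actually lies in $\Birk$. Once this is settled the Laplace-type tail bound is routine and the factor $\exp(-\Omega_k(n))$ easily absorbs the polynomial $n^{O_k(1)}$, which is exactly the role that $\cR_3$ was constructed to play: the ``generic'' part of the second-moment sum in which no $\sigma,\tau$-pair is either too far from decorrelated ($\cR\setminus\cR_0$), nor clustered ($\cR_2$), nor forbidden by separability ($\cR_1$).
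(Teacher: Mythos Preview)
Your argument is correct and reaches the same conclusion as the paper, but by a somewhat different route. Both proofs rest on compactness and continuity of $f$; the difference lies in how the discrete set $\cR_3$ is connected to the continuous domain $\Dg$.

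The paper works forwards: it fixes the compact set $\cD'=\{\rho\in\Dg:\norm{\rho-\bar\rho}_2\geq\eta/2\}$, obtains a uniform gap $\gamma$ there, inflates $\cD'$ to a $\delta$-neighborhood $\cD''$ by uniform continuity, and then shows $\cR_3\subset\cD''$ by an explicit construction. That last step is where the work is: given $\rho\in\cR_3$, one first uses \Lem~\ref{Lemma_fluctuations} to find a nearby doubly-stochastic $\rho'$, and then---because $\rho'$ need not itself be separable or non-$k$-stable---takes a convex combination with a carefully chosen $\rho''\in\Dg$ to land in $\cD'$.

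Your sequential-compactness argument sidesteps this entirely. By passing to a convergent subsequence of putative near-maximizers and verifying that separability, non-$k$-stability (through a pigeonholed witness row), distance from $\bar\rho$, and double stochasticity are all closed conditions, you place the limit directly in $\Dg\setminus\{\bar\rho\}$ and invoke the hypothesis. This is cleaner: it eliminates the need for \Lem~\ref{Lemma_fluctuations} and the convex-combination trick. The paper's approach, on the other hand, is more explicit and makes the geometry of how $\cR_3$ sits near $\Dg$ visible, which may be preferable if one later wants quantitative control over the gap $\gamma$.
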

\begin{proof}
Let $\eta>0$ be the number from \Lem~\ref{Lemma_central} and let $\cD'$ be the set of all $\rho\in\Dg$ such that $\norm{\rho-\bar\rho}_2\geq\eta/2$.
The set $\cD'$ is compact.
Hence, our assumption that $f(\rho)<f(\bar\rho)$ for any $\rho\in\Dg\setminus\cbc{\bar\rho}$ implies that
there exists a number $\gamma>0$ (independent of $n$) such that
	\begin{equation}\label{eqProp_sufficient5}
	\max_{\rho\in\cD'}f(\rho)<f(\bar\rho)-\gamma.
	\end{equation}
In fact, because the function $f$ is uniformly continuous on $[0,1]^{k^2}$, there is $0<\delta<\eta/3$ such that
	\begin{equation}\label{eqProp_sufficient6}
	\max_{\rho\in\cD''}f(\rho)<f(\bar\rho)-\gamma/2,\qquad\mbox{where }\quad\cD''=\{\rho\in\brk{0,1}^{k^2}:\mbox{there is $\rho'\in\cD'$ with }\norm{\rho-\rho'}_2<\delta\}.
	\end{equation}

We claim that $\cR_3\subset\cD''$.
Indeed, any $\rho\in\cR_3$ satisfies $\norm{\rho-\bar\rho}_2\geq\eta$ (as otherwise $\rho\in\cR_0$), is separable (as otherwise $\rho\in\cR_1$),
	and is not stable (as otherwise $\rho\in\cR_2$).
Moreover, by \Lem~\ref{Lemma_fluctuations} there is a doubly-stochastic $\rho'$ such that $\norm{\rho-\rho'}_2<C/\sqrt n$.
However, this matrix $\rho'$ may or may not be separable and/or stable.
To rectify this, we form a convex combination between $\rho'$ and a suitable doubly-stochastic matrix.
More precisely, suppose that the matrix $\rho$ has precisely $l<k-1$ entries that are greater than $0.51$.
Each row and each column contain at most one such entry (as $\rho\in\cB$).
Thus, we may assume without loss of generality that $\rho_{11},\ldots,\rho_{ll}>0.51$.
Now, let $\rho''$ be the doubly-stochastic matrix with $\rho''_{11}=\cdots=\rho_{ll}''=1$ and $\rho_{ij}''=(k-l)^{-1}$ for $i,j>l$.
If $\beta>0$ is a small enough number, then $\rho'''=(1-\beta)\rho'+\beta\rho''\in\cD'$ and $\norm{\rho-\rho'''}_2<\delta$.
Thus, $\rho\in\cD''$.

As $\cR_3\subset\cD''$, (\ref{eqProp_sufficient6}) yields
	\begin{equation}\label{eqProp_sufficient7}
	\max_{\rho\in\cR_3}f(\rho)<f(\bar\rho)-\gamma/2.
	\end{equation}
Thus, (\ref{eqfrho}) implies
	\begin{eqnarray}\nonumber
	\sum_{\rho\in\cR_3}\Erw[\Zrg]&\leq&
		|\cR_3|\exp(n(f(\bar\rho)-\gamma/2))
		\leq \abs\cR\exp(n(f(\bar\rho)-\gamma/2))\\[-3mm]
		&\leq&n^{k^2}\exp(n(f(\bar\rho)-\gamma/2))\leq\exp(n(f(\bar\rho)-\gamma/3)).
			\label{eqProp_sufficient8}
	\end{eqnarray}
Upon direct inspection, we find
	$f(\bar\rho)=2(\ln k+\frac d2\ln(1-1/k)).$
Recalling that $m=\lceil dn/2\rceil$, we thus obtain from \Prop~\ref{Prop_first}
	\begin{equation}		\label{eqProp_sufficient10}	
	\exp(n(f(\bar\rho)-\gamma/3))\leq \Erw\brk{\Zkg}^2\cdot\exp(-\gamma n/4).
	\end{equation}
Combining~(\ref{eqProp_sufficient8}) and~(\ref{eqProp_sufficient10}), we obtain
	$$
	\sum_{\rho\in\cR_3}\Erw[\Zrg]=\Erw\brk{\Zkg}^2\cdot n^{k^2}\exp(-\gamma n/4)\leq\Erw\brk{\Zkg}^2,
	$$
thereby completing the proof.
\end{proof}

\noindent
Finally, \Prop~\ref{Prop_sufficient} follows from~(\ref{eqProp_sufficient1}) and \Lem s \ref{Lemma_central}, \ref{Lemma_R2} and~\ref{Lemma_R3}.

\appendix

\section{Proof of \Lem~\ref{Lemma_P}}\label{apx:PropProof} 

\noindent
{\em Throughout this section, we assume that $2k\ln k-\ln k-2\leq d\leq2k\ln k$.
	In addition, we fix some $\sigma\in\cB$ and we let $V_i=\sigma^{-1}(i)$ for $i=1,\ldots,n$.}

To simplify the calculations we consider the following variant of the planted model.
Given $\sigma$, $n$ and $q\in(0,1)$, we let $\cG(n,q,\sigma)$ be the random graph
in which any two vertices $v,w$ with $\sigma(v)\neq\sigma(w)$ are adjacent with probability $p$ independently.
The following observation relates this model to the planted model $G(n,m,\sigma)$ from \Lem~\ref{Lemma_P}.

\begin{fact}\label{Lemma_equivalentModel}
Given $\sigma\in\cB$, let $p$ be such that the expected number of edges in $\cG(n,p,\sigma)$ is equal to $m=\lceil dn/2\rceil$.
There is a number $C=C(k)>0$ such that
	$$\pr\brk{G(n,m,\sigma)\in\cA}\leq C\sqrt n\cdot\pr\brk{\cG(n,p,\sigma)\in\cA}\qquad\mbox{ for any event $\cA$}.$$
\end{fact}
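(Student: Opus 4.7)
The plan is to exploit the fact that $\cG(n,p,\sigma)$, conditioned on having exactly $m$ edges, is distributed identically to $G(n,m,\sigma)$.
First I would let $N_\sigma = \binom{n}{2} - \sum_{i=1}^k \binom{|V_i|}{2}$ denote the total number of bichromatic pairs under $\sigma$; since $\sigma \in \cB$ is balanced, $N_\sigma = (1-1/k)\binom{n}{2} + O(n)$.  Then by construction the number of edges $M = |E(\cG(n,p,\sigma))|$ is $\mathrm{Bin}(N_\sigma, p)$, with $p$ chosen so that $\Erw[M] = N_\sigma p = m = \lceil dn/2 \rceil$.  Moreover, conditional on $\{M = m\}$, the edge set of $\cG(n,p,\sigma)$ is a uniformly random $m$-subset of the bichromatic pairs, which is precisely the law of $G(n,m,\sigma)$.

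Given this coupling, for any event $\cA$,
	$$\pr\brk{\cG(n,p,\sigma)\in\cA} \geq \pr\brk{\cG(n,p,\sigma)\in\cA,\,M=m} = \pr\brk{G(n,m,\sigma)\in\cA}\cdot\pr\brk{M=m},$$
so it suffices to show that $\pr\brk{M=m} \geq c(k)/\sqrt{n}$ for some $c(k) > 0$.

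The main (and only non-trivial) step is a local central limit-type estimate for $\pr\brk{\mathrm{Bin}(N_\sigma,p) = m}$.  Since $p$ is chosen so that the mean of the binomial is exactly the integer $m$ (modulo the ceiling, which introduces an $O(1)$ shift), and since the variance is $N_\sigma p(1-p) = \Theta(n)$, Stirling's formula applied directly to $\binom{N_\sigma}{m} p^m (1-p)^{N_\sigma - m}$ yields $\pr\brk{M=m} = \Theta(1/\sqrt{n})$ with the constant depending only on $k$ through $p = d/((k-1)n) + O(1/n^2)$.  Combining these ingredients gives
	$$\pr\brk{G(n,m,\sigma)\in\cA} = \frac{\pr\brk{\cG(n,p,\sigma)\in\cA,\,M=m}}{\pr\brk{M=m}} \leq \frac{\pr\brk{\cG(n,p,\sigma)\in\cA}}{\pr\brk{M=m}} \leq C\sqrt n \cdot \pr\brk{\cG(n,p,\sigma)\in\cA}$$
with $C = C(k)$, as claimed.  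The hard part is really just verifying the local limit bound, which is routine since $m$ is a deterministic integer lying within $O(1)$ of the mean and $\mathrm{Var}(M) = \Theta(n)$; there are no delicate concentration issues because we only need the probability of a single atom to be of order $1/\sqrt n$.
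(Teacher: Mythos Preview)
Your proposal is correct and follows essentially the same route as the paper: condition $\cG(n,p,\sigma)$ on its (binomial) edge count hitting $m$, identify the conditional law with $G(n,m,\sigma)$, and use Stirling to get $\pr[M=m]=\Theta(1/\sqrt n)$. One minor slip: since $p$ is chosen so that $N_\sigma p=m$ exactly, there is no ``$O(1)$ shift'' from the ceiling, and the correct asymptotic is $p\sim\frac{k}{k-1}\cdot\frac dn$ rather than $\frac{d}{(k-1)n}$; neither affects the argument.
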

\begin{proof}
By the choice of $p$, the number $e(\cG(n,p,\sigma))$ of edges of the random graph $G(n,p,\sigma)$ has a binomial distribution with mean
	\begin{equation}\label{eqLemma_equivalentModel}
	p\brk{\bink n2-\sum_{i=1}^k\bink{|V_i|}{2}}=m.
	\end{equation}
Hence, Stirling's formula shows that for some number $C=C(k)>0$ we have
	$\pr\brk{e(\cG(n,p,\sigma))=m}\geq (C\sqrt n)^{-1}$.
Further, given that $e(\cG(n,p,\sigma))=m$, the distribution of the random graph $\cG(n,p,\sigma))$ is identical to that of $G(n,m,\sigma)$.
Thus, for any event $\cA$
	\begin{eqnarray*}
	\pr\brk{G(n,m,\sigma)\in\cA}
		&\leq&\frac{\pr\brk{\cG(n,p,\sigma)\in\cA}}{\pr\brk{e(\cG(n,p,\sigma))=m}}
		\leq C\sqrt n\cdot \pr\brk{\cG(n,p,\sigma)\in\cA},
	\end{eqnarray*}
as claimed.
\end{proof}

\noindent
{\em From here on out, we fix $\sigma\in\cB$ and choose $p\in(0,1)$ such that the expected number of edges in $\cG(n,p,\sigma)$ is equal to $m$;
because $\sigma$ is balanced, (\ref{eqLemma_equivalentModel}) implies that
	\begin{equation}\label{eqplantedp}
	p\sim\frac{k}{k-1}\cdot\frac dn.
	\end{equation}}%
In the following, we are going to show that the properties {\bf P1--P4} are satisfied in $\cG(n,p,\sigma)$ with
probability $1-O(1/n)$.
Then Fact~\ref{Lemma_equivalentModel} readily implies that they hold in $G(n,m,\sigma)$ \whp

The following instalment of the Chernoff bound will prove useful.

\begin{lemma}[\cite{JLR}]\label{Lemma_Chernoff}
Let $\varphi(x)=(1+x)\ln(1+x)-x$.
Let $X$ be a binomial random variable with mean $\mu>0$.
Then for any $t>0$,
	\begin{eqnarray*}
	\pr\brk{X>\Erw\brk X+t}\leq\exp(-\mu\cdot\varphi(t/\mu)),&&
	\pr\brk{X<\Erw\brk X-t}\leq\exp(-\mu\cdot\varphi(-t/\mu)).
	\end{eqnarray*}
In particular, for any $t>1$ we have
	$\pr\brk{X>t\mu}\leq\exp\brk{-t\mu\ln(t/\eul)}.$
\end{lemma}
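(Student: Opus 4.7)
The plan is to derive both tail bounds by the standard exponential moments (Chernoff) method, then deduce the ``In particular'' clause by plugging $t \mapsto (t-1)\mu$ into the upper tail. Write $X = \sum_{i=1}^n X_i$ with $X_i$ independent Bernoulli$(p)$ variables and $\mu = np$.

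For the upper tail, fix $\lambda > 0$ and apply Markov's inequality to $e^{\lambda X}$:
\begin{equation*}
\pr\brk{X > \mu + t} = \pr\brk{\eul^{\lambda X} > \eul^{\lambda(\mu+t)}} \leq \eul^{-\lambda(\mu+t)}\Erw\brk{\eul^{\lambda X}}.
\end{equation*}
By independence, $\Erw[\eul^{\lambda X}] = (1-p+p\eul^\lambda)^n$, and the elementary estimate $1+x\leq \eul^x$ applied with $x = p(\eul^\lambda -1)$ gives $\Erw[\eul^{\lambda X}] \leq \exp(\mu(\eul^\lambda-1))$. Thus
\begin{equation*}
\pr\brk{X > \mu + t} \leq \exp\bc{\mu(\eul^\lambda - 1) - \lambda(\mu+t)}.
\end{equation*}
Choose $\lambda$ to minimise the exponent. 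Differentiating yields $\lambda^* = \ln(1 + t/\mu)$, and substituting produces
\begin{equation*}
\mu\bc{(1 + t/\mu) - 1} - (\mu + t)\ln(1 + t/\mu) = t - (\mu + t)\ln(1 + t/\mu) = -\mu\cdot\varphi(t/\mu),
\end{equation*}
which is exactly the first bound.

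For the lower tail, take $\lambda > 0$ and apply the same argument to $-X$:
\begin{equation*}
\pr\brk{X < \mu - t} \leq \eul^{\lambda(\mu-t)}\Erw\brk{\eul^{-\lambda X}} \leq \exp\bc{\mu(\eul^{-\lambda}-1) + \lambda(\mu-t)}.
\end{equation*}
For $t < \mu$ the optimal choice is $\lambda^* = -\ln(1 - t/\mu) > 0$, which turns the exponent into $-\mu\cdot\varphi(-t/\mu)$ by an identical algebraic simplification (the case $t \geq \mu$ is trivial since then $\pr[X < \mu-t] \leq \pr[X \leq 0]$ is either zero or equal to $(1-p)^n \leq \eul^{-\mu}$, matching the right-hand side at $\varphi(-1)=1$).

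For the ``In particular'' clause, apply the first bound with $t$ replaced by $(t-1)\mu$ (valid since $t>1$):
\begin{equation*}
\pr\brk{X > t\mu} \leq \exp\bc{-\mu\cdot\varphi(t-1)} = \exp\bc{-\mu\brk{t\ln t - t + 1}} \leq \exp\bc{-t\mu\ln(t/\eul)},
\end{equation*}
using $\varphi(t-1) = t\ln t - t + 1$ and discarding the additional $-\mu \leq 0$ term in the exponent.

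The only mildly delicate step is verifying that $\lambda^* = \ln(1+t/\mu)$ is indeed the global minimiser of $\lambda \mapsto \mu(\eul^\lambda -1) - \lambda(\mu+t)$ on $(0,\infty)$, but this function is strictly convex in $\lambda$ so the first-order condition suffices; there is no real obstacle here and the remaining work is routine algebra.
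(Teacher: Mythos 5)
Your proof is correct; the paper itself does not prove this lemma but cites it from [JLR], and your exponential-moment (Markov plus optimal $\lambda$) derivation, together with the substitution $t\mapsto(t-1)\mu$ and the identity $t\ln t-t=t\ln(t/\eul)$ for the ``in particular'' clause, is exactly the standard argument given there. All the algebra checks out, including the degenerate case $t\geq\mu$ in the lower tail.
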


\subsection{Proof of P1.}
We may assume $i=1$ without loss of generality.
Let $0.509\leq\alpha\leq1-k^{-0.499}$ and let $S\subset V_1$ be a set of size $|S|=\alpha n/k$.
Because in $\cG(n,p,\sigma)$ edges occur independently,
for any $v\in V\setminus V_1$ the number of neighbors of $v$ in $S$ has distribution $\Bin(\alpha n/k,p)$.
Hence, as $\sigma$ is balanced the number $X_S$ of $v\in  V\setminus V_1$ with no neighbor in $S$ has a binomial distribution
with mean $n(1-1/k+o(1))(1-p)^{\alpha n/k}$.
Our assumption on $d$ and~(\ref{eqplantedp}) imply that $(1-p)^{\alpha n/k}\leq\exp\brk{-\alpha np/k}\leq 2k^{-2\alpha}$.
Thus,
	\begin{equation}			\label{eqBlowupX3}
	\Erw\brk{X_S}\leq(1+o(1))  n(1-1/k)\cdot 2k^{-2\alpha}.
	\end{equation}
Consequently, by \Lem~\ref{Lemma_Chernoff}
	\begin{eqnarray}
	\pr\brk{X_S\geq(1-\alpha)n/k-n^{2/3}}
		&\leq&\exp\brk{-(1-\alpha+o(1))\frac nk\cdot\ln\bc{\frac{1-\alpha}{2\eul}\cdot k^{2\alpha-1}}}.
			\label{eqBlowupX2}
	\end{eqnarray}
By comparison, because $\sigma$ is balanced, for a given $\alpha$ the number of ways to choose $S$ is
	\begin{eqnarray}\label{eqBlowupX1}
	\bink{(1+o(1))n/k}{(1-\alpha+o(1))n/k}&\leq&\bcfr{\eul }{1-\alpha}^{(1-\alpha+o(1))\frac nk}=\exp\brk{\frac{n}k(1-\alpha+o(1))\bc{1-\ln(1-\alpha)}}.
	\end{eqnarray}
Let us call $S$ \emph{$\alpha$-bad} if $X_S\geq (1-\alpha)\frac nk-n^{2/3}$.
Combining~(\ref{eqBlowupX3}), (\ref{eqBlowupX2}) and~(\ref{eqBlowupX1}) and taking the union bound over $S\subset V_1$ with $|S|=\alpha n/k$, we obtain
	\begin{eqnarray*}
	\pr\brk{\mbox{there is an $\alpha$-bad $S$}}&\leq&\exp\brk{\frac{(1-\alpha)n}k\cdot\bc{1-\ln(1-\alpha)-\ln\bc{\frac{1-\alpha}{2\eul}\cdot k^{2\alpha-1}}}+o(n)}.
	\end{eqnarray*}
To complete the proof of {\bf P1}, we are going to show that the right hand side is $\exp(-\Omega(n))$.

Thus, we need to estimate
	\begin{eqnarray*}
	1-\ln(1-\alpha)-\ln\bc{\frac{1-\alpha}{2\eul}\cdot k^{2\alpha-1}}&=&\ln\bc{\frac{2\eul^2}{(1-\alpha)^2}k^{1-2\alpha}}.
	\end{eqnarray*}
This is negative iff
	\begin{equation}\label{eqLemma_blowup1}
	\exp\brk{\bc{\frac12-\alpha}\ln k}<\frac{1-\alpha}{\sqrt 2\eul}.
	\end{equation}
By convexity, the exponential function on the l.h.s.\ and the linear function on the r.h.s.\ intersect at most twice,
and between these two intersections the linear function is greater.
Further, an explicit calculation verifies that the r.h.s.\ of~(\ref{eqLemma_blowup1}) is larger than the l.h.s.\ at both $\alpha=0.509$ and
$\alpha=1-k^{-0.499}$.
Thus, (\ref{eqLemma_blowup1}) is true in the entire range $0.509<\alpha<1-k^{-0.499}$.
\qed

\subsection{Proof of P2}
In $\cG(n,p,\sigma)$, for each vertex $v\in V\setminus V_i$ the number of neighbors of $v$ in $V_i$ has distribution $\Bin(|V_i|,p)$.
Due to~(\ref{eqplantedp}) and because $\sigma$ is balanced, the mean is $\lambda=|V_i|p\sim\frac nk p>2\ln k$.
Hence, by Stirling's formula the probability that $v$ has fewer than $15$ neighbors in $V_i$ is
	$q\leq 2\lambda^{14}\exp(-\lambda)\leq2k^{-2}\ln^{14}k.$
Further, because the event of having fewer than $15$ neighbors in $V_i$ occurs independently for all $v\in V\setminus V_i$,
	the total number $Y_i$ of such vertices has a binomial distribution $\Bin(|V\setminus V_i|,q)$.
As $\sigma$ is balanced, the mean is
	$|V\setminus V_i|q\leq (1-1/k+o(1))n\cdot q\leq 3k^{-2}\ln^{14}k.$
Since we chose $\kappa=k^{-1}\ln^{20}k$,
a straightforward application of \Lem~\ref{Lemma_Chernoff} (the Chernoff bound) implies that
	$\pr\brk{Y_i>\frac{\kappa n}{3k}}\leq\exp(-\Omega(n)),$
as desired.\qed

\subsection{Proof of P3}\label{Sec_P3}
Let $0<\alpha<k^{-4/3}$ and let $S\subset V$ of size $|S|=\alpha n$.
The number $e(S)$ of edges spanned by $S$ in $\cG(n,p,\sigma)$ is stochastically dominated by a random variable with
distribution $\Bin(\bink{\alpha n}2,p)$.
For any two vertices $v,w\in S$ are connected with probability at most $p$ in $\cG(n,p,\sigma)$ (as the probability is exactly $p$ if $\sigma(v)\neq\sigma(w)$
and $0$ otherwise).
Thus,
	$$\pr\brk{e(S)\geq 5|S|}\leq\pr\brk{\Bin\bc{\bink{\alpha n}2,p}\geq 5\alpha n}\leq\bink{\bink{\alpha n}{2}}{5\alpha n}p^{5\alpha n}.$$
Now, let $X_\alpha$ be the number of sets $S$ of size $|S|=\alpha n$ such that $e(S)\geq 5|S|$.
Let $d'=pn\sim\frac{dk}{k-1}$.
By the union bound,
	\begin{equation}\label{eqP3proof}
	\pr\brk{X_\alpha>0}\leq\bink{n}{\alpha n}\bink{\bink{\alpha n}{2}}{5\alpha n}p^{5\alpha n}
		\leq \bcfr{\eul}{\alpha}^{\alpha n}\bcfr{\eul\alpha d'}{10}^{5\alpha n}
		\leq\brk{\eul\bcfr{\eul d'}{10}^5\alpha^4}^{\alpha n}.
	\end{equation}
Further, let $X=\sum_\alpha X_\alpha$, where the sum ranges over $0<\alpha<k^{-4/3}$ such that $\alpha n$ is an integer.
Then~(\ref{eqP3proof}) implies together with the assumption that $\alpha<k^{-4/3}$ that
	$$\pr\brk{X>0}\leq\sum_\alpha \brk{\eul\bcfr{\eul d'}{10}^5\alpha^4}^{\alpha n}=O(1/n).$$
Thus, the probability that there is a set violating {\bf P3} is $O(1/n)$.
\qed

\subsection{Proof of P4}
We start by estimating the size of the core;
	the proof of the following proposition draws on arguments developed in~\cite{Barriers,AK}.

\begin{proposition}\label{Prop_Core_size}
With probability $1-\exp\bc{-\Omega(n)}$, the core of $\cG(n,p,\sigma)$ contains
$(1-\tilde O_k(k^{-1}))n$ vertices.
\end{proposition}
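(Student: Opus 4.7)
The plan is as follows. Via Fact~\ref{Lemma_equivalentModel} it suffices to prove the claim for $\cG(n,p,\sigma)$, since the factor $O(\sqrt n)$ in that fact does not affect probabilities of the form $1-\exp(-\Omega(n))$. I would introduce a \emph{seed set}
$$W=\cbc{v\in V:|N(v)\cap V_i|<K\text{ for some }i\neq\sigma(v)},$$
with $K$ a suitably large absolute constant (say $K=200$), chosen so that any vertex outside $W$ still meets the core threshold $100$ even after losing up to $K-100=100$ neighbors of any single color. Since $|N(v)\cap V_i|$ in $\cG(n,p,\sigma)$ is a binomial with mean $\sim 2\ln k$, Stirling's formula combined with \Lem~\ref{Lemma_Chernoff} yields $\pr[v\in W]\leq\tilde O_k(k^{-1})$, and hence $\Erw|W|\leq\tilde O_k(k^{-1})n$; a standard Azuma-type bounded-differences argument (each edge of $\cG(n,p,\sigma)$ affects the $W$-membership of at most two vertices) upgrades this to $|W|\leq\tilde O_k(k^{-1})n$ with probability $1-\exp(-\Omega(n))$.

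Next, I would realize the core as the fixed point of the natural peeling process $V=V_0'\supseteq V_1'\supseteq\cdots$, at each step discarding every vertex with fewer than $100$ surviving neighbors in some $V_i$ with $i\neq\sigma(\cdot)$; set $U=V\setminus V_\infty'$. The key local observation is that for every $v\in U\setminus W$, at the time of its removal $v$ has $<100$ color-$i$ neighbors in the current set but $\geq K=200$ in the full $V_i$, so at least $100$ of its color-$i$ neighbors must lie in $U\cap V_i\subseteq U$. Summing over $v\in U\setminus W$ and counting each edge at most twice gives the lower bound $e(U)\geq 50\,|U\setminus W|$.

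I would then combine this with a sparsity statement for $\cG(n,p,\sigma)$: with probability $1-\exp(-\Omega(n))$, no set $S\subset V$ of size $|S|\leq\alpha_0 n$, for a suitable $\alpha_0=\tilde O_k(k^{-1})$, spans more than $c|S|$ edges with $c<50$ an absolute constant. Given this, together with a short bootstrap argument that $|U_t|\leq\alpha_0 n$ is maintained along the entire peeling (using $|W|\ll\alpha_0 n$), one concludes $|U|\leq\frac{50}{50-c}|W|=\tilde O_k(k^{-1})n$, which is the desired bound.

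The main obstacle is establishing the sparsity statement at the scale $\alpha_0=\tilde O_k(k^{-1})$. A naive first-moment count of subsets of that size with $\geq c|S|$ edges is not sharp enough: for $\alpha$ of order $1/k$ the expected edge density $\alpha d'/2$ is already of order $\ln k$, which would force $c\gtrsim\ln k$ and break the combining step. The resolution is to exploit the refined structure that every $v\in U\setminus W$ has its $\geq 100$ inward neighbors all lying in \emph{one} color class $V_i\cap U$ rather than spread over $U$. Indexing the first-moment computation by $\sigma(v)$ and the ``removal direction'' $i$ introduces a harmless $k^2$ factor, absorbed by the combinatorial saving from restricting the targets of the $100$ edges to $V_i\cap U$ instead of to all of $U$. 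This color-aware first-moment argument, in the spirit of~\cite{Barriers,AK,BollobasBCKW01,Molloy}, is the technical heart of the appendix proof.
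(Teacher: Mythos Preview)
Your overall architecture---a seed set $W$ of vertices with too few neighbours in some colour class, followed by a peeling argument controlled by a sparsity bound---matches the paper's. But there is a real difference in how the sparsity obstacle is handled, and your proposed workaround is where the argument is incomplete.

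You correctly diagnose the problem: with $|W|=\tilde O_k(k^{-1})n$, you need expansion/sparsity at scale $\alpha_0=\tilde O_k(k^{-1})$, and at that scale the expected density $\alpha_0 d/2$ is polylogarithmic in $k$, so no bound of the form ``$e(S)\le c|S|$ with $c<50$'' is available. Your fix is to exploit that each $v\in U\setminus W$ sends its $\ge 100$ witnesses into a \emph{single} colour class $U\cap V_i$. However, restricting targets to $U\cap V_i$ does not, by itself, bring the mean below $1$: a priori $|U\cap V_i|$ can be as large as $|U|=\tilde O_k(k^{-1})n$, so the expected number of $v$-neighbours in $U\cap V_i$ is still $p\cdot\tilde O_k(k^{-1})n=\tilde O_k(\ln k)$, and the Chernoff tail is only $\tilde O_k(1)$, not $o_k(1)$. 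The claimed ``combinatorial saving'' from colour-awareness materialises only if one can force $|U\cap V_i|=\tilde O_k(k^{-2})n$ for the relevant $i$, and nothing in your sketch establishes that; the first-moment sum over shapes $(\beta_1,\ldots,\beta_k)$ of $U$ does not close.

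The paper sidesteps the issue by inserting an intermediate layer rather than refining the sparsity bound. After defining $W$ (threshold $300$), it sets $U=\{v: e(v,W_j)>100\mbox{ for some }j\}$ and proves \emph{directly} (no peeling, no sparsity) that $|U|\le n/k^{30}$. The point is that $|W_j|=\tilde O_k(k^{-2})n$ per colour, so the expected number of $W_j$-neighbours of any $v$ is $p\cdot\tilde O_k(k^{-2})n=\tilde O_k(k^{-1})=o_k(1)$; now Chernoff gives a genuinely tiny tail, and $|U|$ is polynomially small in $k$. Only then does the paper run the peeling (their set $Z$), starting from $U$ rather than from $W$, and invokes sparsity (\textbf{P3}) at the scale $k^{-4/3}$, which is easy. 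The core is then $V\setminus(W\cup Z)$, and $|W|$ alone accounts for the $\tilde O_k(k^{-1})n$ loss. In short: your one-stage peel from $W$ forces sparsity at the hard scale; the paper's two-stage construction $W\to U\to Z$ pushes the sparsity requirement down to a trivial scale.
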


The proof of Proposition \ref{Prop_Core_size} is constructive:
basically, we iteratively remove vertices of that have too few neighbors of some color other than their own among the remaining vertices.
More precisely, we consider the following process.
For a vertex $v$ and a set $S$ of vertices let $e(v,S)$ denote the number of neighbors of $v$ in $S$ in $\cG(n,p,\sigma)$.
\begin{description}
  \item[CR1] For $i ,j\in\brk k$, $i\neq j$, let $W_{ij}=\{v \in V_i:e(v,V_j) < 300\}$, $W_{ii}=\emptyset$, $W_{i}=\cup_{j=1}^k W_{ij}$, and $W=\cup_{i=1}^k W_i$.
  \item[CR2] For $i \ne j$, let $U_{ij}=\{v \in V_i
  									: e(v,W_j) > 100\}$ and $U = \cup_{i \ne j} U_{ij}$.
  \item[CR3] Set $Z^{(0)} = U$ and repeat the following for $i\geq0$:\\
		$\mbox{\ }\qquad\bullet$	if there is  $v \in V \setminus Z^{(i)}$ such that $e(v,Z^{\bc i})\geq100$, pick one such $v$ and let $Z^{(i+1)}=Z^{(i)} \cup \{v\}$;\\
		$\mbox{\ }\qquad\bullet$		otherwise, let $Z^{(i+1)}=Z^{(i)} \cup \{v\}$.
\end{description}
Let $Z=\cup_{i\geq 0}Z^{\bc i}$ be the final set resulting from {\bf CR3}.
By construction, the set $V\setminus(W\cup Z)$ is contained in the core.
To complete the proof of \Prop~\ref{Prop_Core_size}, we bound the sizes of $W$, $U$ and $Z$ (Lemmas \ref{Lem_size_of_W}, \ref{Lem_size_of_U} and \ref{Lem_size_of_Z}).

\begin{lemma}\label{Lem_size_of_W}
With probability at least $1-\exp\bc{-\Omega(n)}$ we have $|W_{ij}|\leq\tilde O_k(k^{-3})$ for any $i,j$.
\end{lemma}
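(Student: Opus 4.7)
The plan is to work in the independent-edge model $\cG(n,p,\sigma)$ and then transfer to $G(n,m,\sigma)$ via Fact~\ref{Lemma_equivalentModel}. Fix $i\neq j$. For a vertex $v\in V_i$ the quantity $e(v,V_j)$ is, in $\cG(n,p,\sigma)$, a sum of independent Bernoulli$(p)$ variables indexed by the edges between $v$ and $V_j$, so $e(v,V_j)\sim\Bin(|V_j|,p)$. Because $\sigma$ is balanced, $|V_j|\sim n/k$, and by~(\ref{eqplantedp}) the mean is
\[
\lambda:=|V_j|p\sim\frac{d}{k-1}=(2+o_k(1))\ln k.
\]

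The first step is to bound $q:=\pr[v\in W_{ij}]=\pr[\Bin(|V_j|,p)<300]$. Since $300$ is a constant while $\lambda\to\infty$ with $k$, a direct Stirling/Chernoff estimate gives
\[
q\;\le\;\mathrm{e}^{-\lambda}\sum_{r<300}\frac{\lambda^{r}}{r!}\;\le\;O(\lambda^{299})\mathrm{e}^{-\lambda}\;=\;\tilde O_k(k^{-2}),
\]
because $\mathrm{e}^{-\lambda}=\tilde O_k(k^{-2})$ and the polynomial factor is $(\ln k)^{O(1)}$.

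Next, observe that the events $\{v\in W_{ij}\}$ for different $v\in V_i$ are determined by disjoint edge sets (each involves only edges between $v$ and $V_j$), hence are mutually independent. Therefore $|W_{ij}|\sim\Bin(|V_i|,q)$ with mean $\mu=|V_i|\,q=\tilde O_k(k^{-3})\,n$. Setting $t=\mu$ in the Chernoff inequality \Lem~\ref{Lemma_Chernoff} yields
\[
\pr\bigl[|W_{ij}|>2\mu\bigr]\;\le\;\exp\bigl(-\mu\cdot\varphi(1)\bigr)\;=\;\exp(-\Omega(n)),
\]
where the implicit constant in $\Omega(n)$ depends on $k$ (which is fixed). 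A union bound over the $k(k-1)$ ordered pairs $(i,j)$ with $i\neq j$ preserves the $\exp(-\Omega(n))$ bound, so in $\cG(n,p,\sigma)$ we have $|W_{ij}|\le\tilde O_k(k^{-3})n$ simultaneously for all $i\neq j$ with probability $1-\exp(-\Omega(n))$.

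Finally, to pass to the planted model $G(n,m,\sigma)$ we apply Fact~\ref{Lemma_equivalentModel}: the event under consideration has probability at most $C\sqrt n\cdot\exp(-\Omega(n))=\exp(-\Omega(n))$ in $G(n,m,\sigma)$ as well. No step is genuinely hard; the only mildly delicate point is verifying that the constant threshold $300$ in the definition of $W_{ij}$ is small compared to $\lambda\sim 2\ln k$, which is why the bound on $q$ beats $k^{-2}$ up to logarithmic factors and yields the claimed $\tilde O_k(k^{-3})$ density of bad vertices.
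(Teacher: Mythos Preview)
Your argument is correct and matches the paper's proof essentially step for step: bound $\pr[e(v,V_j)<300]=\tilde O_k(k^{-2})$ via the lower tail of a binomial with mean $\sim 2\ln k$, observe that $|W_{ij}|$ is itself binomial by edge-independence, and apply Chernoff. Two inessential remarks: the appendix works entirely in $\cG(n,p,\sigma)$, so the final transfer via Fact~\ref{Lemma_equivalentModel} is not needed for this particular lemma; and when you write ``$\lambda\to\infty$ with $k$'' bear in mind $k$ is fixed---the point is simply that $\lambda\ge 2\ln k$ is large enough that $e^{-\lambda}\le k^{-2}$, which is exactly what you use.
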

\begin{proof}
Fix $i,j$, $i \ne j$.
Due to the independence of the edges in $\cG(n,p,\sigma)$, for any $v \in V_i$ the number $e(v,V_j)$ of neighbors in $V_j$ has distribution $\Bin(|V_j|,p)$.
As $\sigma$ is balanced, (\ref{eqplantedp}) shows that the mean is $\mu=|V_j|p\geq2\ln k$.
Using the Chernoff bound (Lemma \ref{Lemma_Chernoff}), 
 we obtain
	$\pr\left[|e(v,V_j)| \le 300\right] \le \exp\left(-2\ln k +O_k\left(\ln \ln k\right)\right) = \tilde O_k(k^{-2}) .$
Hence, by the linearity of expectation and because $\sigma$ is balanced, $\EX[|W_{ij}|] \le \tilde O_k(k^{-2}) \cdot |V_i|= n\cdot \tilde O_k(k^{-3})$.
Further, once more due to the independence of the edges in $\cG(n,p,\sigma)$, $|W_{ij}|$ is a binomial random variable.
Thus, using the Chernoff bound once more (with, say, $t= k^{-4}n$),
we see that $\pr[|W_{ij}|\leq\tilde O_k(k^{-3})n]\geq1-\exp(-\Omega(n)),$ as required.
\end{proof}

\begin{lemma}\label{Lem_size_of_U} With probability at least $1-\exp\bc{-\Omega(n)}$ we have $|U| \le n/k^{30}$.
\end{lemma}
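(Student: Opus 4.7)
The plan is to bound $|U_{ij}|$ for each pair $i \ne j$ and to conclude $|U| \le k(k-1)\max_{i,j}|U_{ij}| \le n/k^{30}$ via a final union bound; throughout I work in $\cG(n,p,\sigma)$ (which suffices by Fact~\ref{Lemma_equivalentModel}) and condition on the $1-\exp(-\Omega(n))$ event from \Lem~\ref{Lem_size_of_W} that $|W_{ji}| \le \tilde O_k(k^{-3})n$ for every pair.

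First I would compute a very small bound on $\EX[|U_{ij}|]$. For $v \in V_i$, the event $v \in U_{ij}$ forces the existence of $101$ distinct $w_1,\dots,w_{101} \in V_j$ with $v \sim w_\ell$ and $w_\ell \in W_j$ for every $\ell$. Given the events $\{v \sim w_\ell\}_{\ell=1}^{101}$ (probability $p^{101}$), the events $\{w_\ell \in W_j\}$ are mutually independent across $\ell$---they depend on the edges from $w_\ell$ to $\bigcup_{i'\ne j}V_{i'}$, which are disjoint for distinct $w_\ell$---and each has conditional probability at most $(k-1)\cdot\tilde O_k(k^{-2})=\tilde O_k(k^{-1})$ by the argument in the proof of \Lem~\ref{Lem_size_of_W} summed over the $k-1$ choices of $i'$. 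Combining with $\binom{|V_j|}{101}p^{101} \sim (|V_j|p)^{101}/101! \sim (2\ln k)^{101}/101! = \tilde O_k(1)$ yields $\pr[v \in U_{ij}] \le \tilde O_k(k^{-101})$ and hence $\EX[|U_{ij}|] \le \tilde O_k(k^{-102})n$, far below the target $n/k^{32}$.

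The harder task is upgrading this expectation to an $\exp(-\Omega(n))$ tail bound. I would decompose $W_j = W_{ji} \cup T$ with $T := \bigcup_{i' \ne i,j} W_{ji'}$. Crucially, $T$ depends only on the edges between $V_j$ and $V \setminus (V_i \cup V_j)$, and is therefore \emph{independent} of the edges between $V_i$ and $V_j$; by \Lem~\ref{Lem_size_of_W}, $|T| \le \tilde O_k(k^{-2})n$ with probability $1-\exp(-\Omega(n))$. Since $e(v,W_j) \le e(v,W_{ji}) + e(v,T)$, the event $v \in U_{ij}$ entails $\max\{e(v,W_{ji}),e(v,T)\} > 50$. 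For the $T$-alternative, condition on $T$: the indicators $\mathbf{1}\{e(v,T)>50\}$ are then mutually independent across $v \in V_i$, each with mean $\le (|T|p)^{50}/50! = \tilde O_k(k^{-50})$, and Chernoff (\Lem~\ref{Lemma_Chernoff}) yields a count $\le n/k^{40}$ with probability $1-\exp(-\Omega(n))$. For the $W_{ji}$-alternative, I would repeat the first-moment computation within the bipartite random graph on $V_i \cup V_j$ (whose edges are independent Bernoullis) at threshold $51$ to obtain expectation $\tilde O_k(k^{-c})n$ for a large constant $c$; the concentration can then be extracted by an edge-exposure or vertex-exposure martingale over $V_j$, using that each $w$ contributing to $W_{ji}$ has $<300$ neighbors in $V_i$, so its (possibly flipping) membership can affect the count $|\{v: e(v,W_{ji})>50\}|$ by at most $O(1)$, which combined with Azuma delivers the desired $\exp(-\Omega(n))$ bound.

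The principal obstacle is precisely this last concentration for the $W_{ji}$-alternative: both $W_{ji}$ and the quantities $e(v,W_{ji})$ are determined by the \emph{same} bipartite subgraph between $V_i$ and $V_j$, so a naive decoupling by conditioning on $W_{ji}$ fails. The martingale argument must carefully exploit the $300$-degree cap on members of $W_{ji}$ in order to keep the per-exposure influence bounded by a constant; once this is in place, the union bound over the $k(k-1)$ ordered pairs $(i,j)$ closes the argument.
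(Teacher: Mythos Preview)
Your proposal is correct and follows essentially the same route as the paper: the same split $U_{ij}\subset U'_{ij}\cup U''_{ij}$ according to whether $>50$ of the offending neighbors lie in $T\supseteq W_j\setminus W_{ji}$ (independent of the $V_i$--$V_j$ bipartite graph, so Chernoff applies after conditioning on $|T|\le\tilde O_k(k^{-2})n$) or in $W_{ji}$ (the dependent part). The only real difference is in how the $W_{ji}$-alternative is concentrated. The paper observes that, conditionally on $w\in W_{ji}$ and on the value $e(w,V_i)<300$, the actual neighbor set of $w$ in $V_i$ is a uniformly random subset of that size; this recasts the problem as a balls-and-bins experiment in which each $w\in W_{ji}$ tosses at most $300$ balls into the bins $V_i$, whereupon a first-moment bound plus Azuma (each ball has $O(1)$ influence) finishes. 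Your direct vertex-exposure martingale over $V_j$, with Lipschitz constant bounded by $2\cdot 299$ via the degree cap on members of $W_{ji}$, works just as well and avoids the reformulation; the trade-off is a somewhat larger Lipschitz constant and hence a weaker---but still $\exp(-\Omega(n))$ for fixed $k$---tail bound. Your first paragraph's $101$-tuple expectation bound is a correct warm-up but not needed once the two-part argument is in place.
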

\begin{proof}
We define two sets whose union contains $U_{ij}$:
$$U'_{ij}=\{v \in V_i  : e(v,W_j\setminus W_{ji}) \ge 50\}, \quad  U''_{ij}=\{v \in V_i  : e(v,W_{ji}) \ge 50\}.$$
Thus, it suffices to bound the sizes of $U'_{ij}$, $U''_{ij}$ separately.

Let's start with $U'_{ij}$.
By construction, which vertices belong to $W_j\setminus W_{ji}$ is independent of the edges between color classes $V_i,V_j$.
Hence, for any $v\in V_i$ the number $e(v,W_i\setminus W_{ji})$ has distribution $\Bin(|W_i\setminus W_{ji}|,p)$.
Thus, 
	$$\Erw\brk{e(v,W_i\setminus W_{ji})\,\big|\,|W_j\setminus W_{ji}| \le n\cdot\tilde O_k(k^{-2})}\leq pn \cdot\tilde O_k(k^{-2})\leq\tilde O_k(k^{-1}).$$
Therefore, the Chernoff bound (Lemma \ref{Lemma_Chernoff}) applied with, say, $t=45$ yields
	\begin{equation}\label{eqLemma_size_of_U_1}
	\pr\brk{v\in U_{ij}'\,\big|\,|W_j\setminus W_{ji}| \le n\cdot\tilde O_k(k^{-2})}\leq \tilde O_k(k^{-45}).
	\end{equation}
Once more due to the independence of the edges in $\cG(n,p,\sigma)$, the events $v\in U_{ij}'$ are mutually independent for $v\in V_i$.
by \Lem~\ref{Lem_size_of_W}, this event occurs with probability $1-\exp(-\Omega(n))$.
In effect, given $|W_j\setminus W_{ji}| \le n\cdot\tilde O_k(k^{-2})$, $|U_{ij}'|$ has a binomial distribution.
Thus, (\ref{eqLemma_size_of_U_1}) implies together with the Chernoff bound (applied with, say, $t=k^{-100}n$) that
	\begin{equation}\label{eqLemma_size_of_U_2}
	\pr\brk{|U_{ij}'|>nk^{-40}\,\big|\,|W_j\setminus W_{ji}| \le n\cdot\tilde O_k(k^{-2})}\leq \exp(-\Omega(n)).
	\end{equation}
Further, \Lem~\ref{Lem_size_of_W} implies that
	$\pr[|W_j\setminus W_{ji}| \le n\cdot\tilde O_k(k^{-2})]\geq1-\exp(-\Omega(n))$.
Combining this bound with (\ref{eqLemma_size_of_U_2}), we obtain
	\begin{equation}\label{eqLemma_size_of_U_2}
	\pr\brk{|U_{ij}'|>nk^{-40}}\leq \exp(-\Omega(n)).
	\end{equation}

With respect to $U''_{ij}$, we observe the following.
Given that $w\in W_{ji}$, we know that $w$ has fewer than $300$ neighbors in $V_i$.
But the fact that $w\in W_{ji}$ has no implications as to {\em which} $v\in V_i$ vertex $w$ is adjacent to.
Thus, given that $w\in W_{ji}$ and given $e(w,V_i)$, the actual set of neighbors of $w$ in $V_i$ is a random subset of $V_i$
of size $e(w,V_i)\leq300$.
In fact, these sets are mutually independent for all $w\in W_{ji}$.
Thus, we can bound $|U_{ij}''|$ by means of the following balls and bins experiment:
	let us think of the vertices in $V_i$ as bins.
Then each vertex $w\in W_{ji}$ tosses $300$ balls randomly into the bins $V_i$, independently of all other vertices in $W_{ji}$.
In this experiment, let $\cX$ be the set of $v\in V_i$ that receive at least 50 balls.
Then $|U_{ij}''|$ is dominated by $|\cX|$ stochastically.

Now, consider one $v\in V_i$.
Given $|W_{ji}|$, the number of balls that land in $v$ has distribution $\Bin(300|W_{ji}|,|V_i|^{-1})$.
Therefore, the Chernoff bound yields
	$$\pr\brk{v\in\cX\big| |W_{ji}|\leq n\cdot\tilde O_k(k^{-3})}\leq
		\pr\brk{\Bin(\tilde O_k(k^{-3})n,(1+o(1))k/n)\geq50}\leq k^{-45}.$$
Hence, by the linearity of expectation $\Erw|\cX|\leq nk^{-45}$.
Hence, Azuma's inequality yields
	$$\pr\brk{|U_{ij}''|> n k^{-40}\big| |W_{ji}|\leq n\cdot\tilde O_k(k^{-3})}\leq
	\pr\brk{|\cX|> n k^{-40}\big| |W_{ji}|\leq n\cdot\tilde O_k(k^{-3})}\leq
		\exp(-\Omega(n)).
	$$
Thus, \Lem~\ref{Lem_size_of_W} implies
	\begin{equation}\label{eqLemma_size_of_U_4}
	\pr\brk{|U_{ij}''|> n k^{-40}}\leq	\exp(-\Omega(n)).
	\end{equation}
Finally, the assertion follows from~(\ref{eqLemma_size_of_U_2}) and~(\ref{eqLemma_size_of_U_4}), with room to spare.
\end{proof}

\begin{lemma}\label{Lem_size_of_Z} With probability at least $1-\exp\bc{-\Omega(n)}$ we have $|Z| \le n/k^{29}$.
\end{lemma}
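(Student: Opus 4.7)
The plan is to argue by contradiction using property \textbf{P3}. Condition on the event (which holds with probability $1-\exp(-\Omega(n))$ by \Lem~\ref{Lem_size_of_U}) that $|U|\leq n/k^{30}$, and also on \textbf{P3} (which, as established in \Sec~\ref{Sec_P3}, holds with probability $1-O(1/n)$ in $\cG(n,p,\sigma)$). Suppose for contradiction that $|Z|>n/k^{29}$. Then there exists a smallest index $t$ such that $|Z^{(t)}|\geq\lceil n/k^{29}\rceil$, and by minimality $s:=|Z^{(t)}|\leq\lceil n/k^{29}\rceil$, which is at most $n/k^{4/3}$ once $k$ is sufficiently large.

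The key observation is an edge-counting lower bound inside $Z^{(t)}$. By the construction in \textbf{CR3}, each of the $t=s-|U|$ vertices added after the initialization brought with it at least $100$ edges into the current $Z^{(i)}\subseteq Z^{(t)}$; these edges are distinct across iterations because they are indexed by the newly added endpoint. Hence
$$e(Z^{(t)})\;\geq\; 100\,(s-|U|)\;\geq\; 100\,s\,(1-1/k)\;\geq\; 50\,s,$$
using $|U|\leq n/k^{30}\leq s/k$. On the other hand, since $|Z^{(t)}|=s\leq n/k^{4/3}$, property \textbf{P3} forces $e(Z^{(t)})\leq 5s$. This contradicts the previous inequality for all $k$ large enough, and therefore $|Z|\leq n/k^{29}$ with the claimed probability.

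I anticipate no serious obstacle: the main point is simply to keep the edge-counting honest and to observe that the edges exhibited by \textbf{CR3} are genuine edges of $\cG(n,p,\sigma)$ to which \textbf{P3} applies. A minor care-point is that the choice of which vertex to add at each step of \textbf{CR3} depends on the graph, so one must note that for the contradiction we are merely exhibiting the single set $Z^{(t)}$ as a dense subgraph, and \textbf{P3} rules out the existence of \emph{any} such set of size at most $n/k^{4/3}$. Combining the failure probabilities of \textbf{P3} and of the bound $|U|\leq n/k^{30}$ via the union bound yields the desired $1-\exp(-\Omega(n))$ probability, completing the proof.
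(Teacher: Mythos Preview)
Your proof is correct and follows essentially the same approach as the paper: stop the \textbf{CR3} process once the set has grown to a fixed small fraction of $n$, count the $\geq 100$ edges contributed by each added vertex, and contradict the density bound \textbf{P3}. The paper stops after $i^*=n/k^{30}$ \emph{steps} (so $|S|\leq 2n/k^{30}$), while you stop when the \emph{set} reaches size $\lceil n/k^{29}\rceil$; both choices work identically.

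One bookkeeping point: you correctly cite that \textbf{P3} is shown in \Sec~\ref{Sec_P3} to hold with probability $1-O(1/n)$, but then conclude the lemma with probability $1-\exp(-\Omega(n))$. To match the lemma's stated bound you should invoke only the instance of the \textbf{P3} calculation for the single size $s=\lceil n/k^{29}\rceil$; for a fixed $\alpha=\Theta(k^{-29})$ the estimate (\ref{eqP3proof}) gives $\pr[X_\alpha>0]\leq\exp(-\Omega(n))$ directly, and that is all you need. (The paper makes the same leap.)
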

\begin{proof}
Lemma \ref{Lem_size_of_U} entails that with probability at least $1-\exp\bc{-\Omega(n)}$, $|U| \le n/k^{30}$.
Assume that this is indeed the case.
Further, suppose that $|Z\setminus U|\geq i^*=n/k^{30}$.
Let us stop the process {\bf CR3} at this point, and let $Z^*=Z^{(i^*)}$.
By construction, the graph induced on $S=U \cup Z^*$ spans at least $100i^*\geq50|S|$ edges, while $|S|\leq2k^{-30}n$.
Thus, the set $S$ violates condition {\bf P3}.
But since we saw in \Sec~\ref{Sec_P3} that {\bf P3} is satisfied with probability $1-\exp(-\Omega(n))$, the assertion follows.
\end{proof}

Now, \Prop~\ref{Prop_Core_size} is immediate from \Lem s~\ref{Lem_size_of_W}--\ref{Lem_size_of_Z}.
For a set $Y\subset V$ let us denote by $N(Y)$ the set of all vertices $v\in V$ that have a neighbor in $Y$ in $\cG(n,p,\sigma)$.
As a further step towards the proof of {\bf P4}, we establish

\begin{lemma}\label{Lem_expansion_of_Z}
With probability $1-\exp(-\Omega(n))$ the random graph $\cG(n,p,\sigma)$ has the following property.
	\begin{equation}
	\parbox{14cm}{Let $Y\subset V$ be a set of $|Y|\leq nk^{-29}$ vertices.
	Then $|N(Y)|\leq nk^{-20}$.}
	\end{equation}
\end{lemma}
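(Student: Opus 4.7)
The plan is a routine first-moment union bound. For $Y\subset V$, let $e(Y)$ denote the number of edges of $\cG(n,p,\sigma)$ with at least one endpoint in $Y$. Because each such edge contributes at most one vertex to $N(Y)\setminus Y$, we have $|N(Y)|\leq|Y|+e(Y)$. Since $|Y|\leq nk^{-29}\ll nk^{-20}$, I will reduce the claim to showing that with probability $1-\exp(-\Omega(n))$, no $Y$ with $|Y|\leq nk^{-29}$ satisfies $e(Y)\geq s:=\lceil nk^{-20}/2\rceil$.

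For a fixed $Y$ of size $y$, the edges of $\cG(n,p,\sigma)$ are independent and each pair is adjacent with probability at most $p$, so $e(Y)$ is stochastically dominated by $\Bin(yn,p)$. Using (\ref{eqplantedp}) to bound $p\leq 3k\ln k/n$ for large $k$,
$$\pr\brk{e(Y)\geq s}\leq\binom{yn}{s}p^s\leq\bc{\frac{eynp}{s}}^s\leq\bc{6ek^{-8}\ln k}^s\leq\exp\bc{-3s\ln k},$$
which is $\exp(-\Omega(nk^{-20}\ln k))$. A union bound over $Y$ of size $y$ adds a factor of $\binom{n}{y}\leq(en/y)^y$. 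Since $y\mapsto y\ln(en/y)$ is increasing on $(0,n)$, this factor is at most $\exp(nk^{-29}(1+29\ln k))\leq\exp(30nk^{-29}\ln k)$ throughout the range $y\leq nk^{-29}$. Multiplying and summing over the $O(n)$ admissible values of $y$ yields a failure probability of at most
$$n\cdot\exp\bc{30nk^{-29}\ln k-\tfrac{3}{2}nk^{-20}\ln k}=\exp(-\Omega(n)),$$
the second term dominating the first by a factor of order $k^9$ (and $k$ is fixed).

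I do not expect any substantive obstacle. The two exponents in the statement, $-29$ and $-20$, are calibrated precisely so that the Chernoff tail beats the entropy cost of choosing $Y$ with a polynomial-in-$k$ slack; the only thing to verify carefully is this numerology, after which the lemma drops out. One minor point worth noting is that the same argument gives a bound in $\cG(n,p,\sigma)$ only, but since the event $\cbc{|N(Y)|\leq nk^{-20}\text{ for all }|Y|\leq nk^{-29}}$ is monotone in edges and $\exp(-\Omega(n))\cdot\sqrt n=\exp(-\Omega(n))$, Fact~\ref{Lemma_equivalentModel} transfers the estimate to $G(n,m,\sigma)$ costlessly if needed downstream.
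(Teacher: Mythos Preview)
Your proof is correct and takes essentially the same approach as the paper: a Chernoff tail bound on the neighborhood of a fixed $Y$, followed by a union bound over all admissible $Y$. The paper bounds $|N(Y)\setminus Y|$ directly via the per-vertex adjacency probability $q=1-(1-p)^{|Y|}\leq|Y|p$ rather than via the incident-edge count $e(Y)$, and it exploits monotonicity of $N(\cdot)$ to restrict the union bound to the single size $|Y|=\lfloor nk^{-29}\rfloor$ rather than summing over all $y$; these are cosmetic differences and the numerology is the same.
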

\begin{proof}
Let $\alpha<k^{-29}$ be the largest number such that $\alpha n$ is an integer and let $q=1-(1-p)^{\alpha n}$.
For a set $Y\subset V$ with $|Y|=\alpha n$ the number of vertices $v\in V\setminus Y$ that have a neighbor in $Y$ in $\cG(n,p,\sigma)$
is stochastically dominated by $\Bin(n,q)$.
This is because for any vertex $y\in Y$ the probability that $v,y$ are adjacent is either $p$ (if $\sigma(v)\neq \sigma(y)$) or $0$ (if $\sigma(v)=\sigma(y)$).
Hence, observing that $p\leq \alpha np$ and using the Chernoff bound, we get
	\begin{equation}\label{eqLem_expansion_of_Z1}
	\pr\brk{|N(Y)\setminus Y|\geq nk^{-21}}\leq \pr\brk{\Bin(n,q)\geq nk^{-21}}\leq\exp(-nk^{-21}).
	\end{equation}
Now, let $X$ be the number of sets $Y$ with $|Y|=\alpha n$ such that $|N(Y)\setminus Y|\geq nk^{-21}$.
Together with the union bound, (\ref{eqLem_expansion_of_Z1}) shows
	\begin{eqnarray}\label{eqLem_expansion_of_Z2}
	\pr\brk{X>0}&\leq&\bink n{\alpha n}\exp(-nk^{-21})\leq
		\exp\brk{n\bc{\alpha(1-\ln\alpha)-k^{-21}}}\leq\exp(-\Omega(n));
	\end{eqnarray}
the last inequality follows because $\alpha(1-\ln\alpha)\leq 32k^{-29}\ln k$ for $0<\alpha<k^{-29}$.
Thus, we obtain from~(\ref{eqLem_expansion_of_Z2}) that $X_\alpha=0$ for all such $\alpha$ with probability $1-\exp(-\Omega(n))$.
If so, we see that any set $Y$ of size $|Y|\leq nk^{-29}$ satisfies
	$|N(Y)|\leq |Y|+|N(Y)\setminus Y|\leq n(k^{-29}+k^{-21})\leq nk^{-20},$
as claimed.
\end{proof}

\begin{corollary}\label{Cor_expansion_of_Z}
With probability $1-\exp(-\Omega(n))$ we have $|N(Z)|\leq n k^{-20}$.
\end{corollary}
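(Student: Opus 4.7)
The plan is to derive this corollary by a straightforward union bound that combines the size bound on $Z$ from \Lem~\ref{Lem_size_of_Z} with the uniform neighborhood expansion bound from \Lem~\ref{Lem_expansion_of_Z}. Since $Z$ is a random subset of $V$ produced by the process \textbf{CR1}--\textbf{CR3}, it is not immediate that a neighborhood bound established ``for each fixed small set'' applies to $Z$; the key observation is that \Lem~\ref{Lem_expansion_of_Z} was stated as a \emph{uniform} bound over \emph{all} subsets $Y\subset V$ of cardinality at most $nk^{-29}$. Thus, as soon as the random set $Z$ happens to satisfy $|Z|\le nk^{-29}$, we may simply substitute $Y=Z$ into the conclusion of \Lem~\ref{Lem_expansion_of_Z}.

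More concretely, I would let $\cE_1$ denote the event $\{|Z|\le nk^{-29}\}$ and $\cE_2$ the event that every set $Y\subset V$ with $|Y|\le nk^{-29}$ satisfies $|N(Y)|\le nk^{-20}$. By \Lem~\ref{Lem_size_of_Z} we have $\pr[\cE_1]\ge 1-\exp(-\Omega(n))$, and by \Lem~\ref{Lem_expansion_of_Z} we have $\pr[\cE_2]\ge 1-\exp(-\Omega(n))$. The union bound then gives $\pr[\cE_1\cap\cE_2]\ge 1-\exp(-\Omega(n))$, and on $\cE_1\cap\cE_2$ the set $Z$ itself qualifies as an admissible $Y$ in \Lem~\ref{Lem_expansion_of_Z}, whence $|N(Z)|\le nk^{-20}$.

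There is essentially no technical obstacle: the only point that deserves a brief mention is precisely the remark above, namely that we invoke \Lem~\ref{Lem_expansion_of_Z} in its uniform form so that the random (graph-dependent) nature of $Z$ is not an issue. All probabilistic heavy lifting has already been done in \Lem s~\ref{Lem_size_of_W}--\ref{Lem_expansion_of_Z}, so the corollary follows in a few lines.
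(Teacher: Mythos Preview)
Your proposal is correct and matches the paper's own proof, which simply states that the corollary is immediate from \Lem s~\ref{Lem_size_of_Z} and~\ref{Lem_expansion_of_Z}. Your write-up just spells out explicitly the union bound and the observation that the uniform formulation of \Lem~\ref{Lem_expansion_of_Z} is what allows it to be applied to the random set $Z$.
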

\begin{proof}
This is immediate from \Lem s~\ref{Lem_size_of_Z} and~\ref{Lem_expansion_of_Z}.
\end{proof}

We define two sets of vertices, which capture the 1-free and 2-free vertices.
In what follows, when always let $i,j\in\brk k$, $i\neq j$.
Let $S_0$ be the set of vertices that have zero neighbors in some color class other than their own.
Moreover,
	$S_1 = \{ v\in V \setminus S_0 : \exists i,j \text{ s.t. } v \in V_i \text{ and } N(v)\cap V_j \subseteq W_j\}.$
By the construction of the core, we have

\begin{fact}\label{Fact_1-free}
If $v$ is $1$-free, then $v\in S_0\cup S_1\cup Z\cup N(Z)$.
\end{fact}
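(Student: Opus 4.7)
The plan is to argue by contrapositive: assume that $v\notin S_0\cup S_1\cup Z\cup N(Z)$ and deduce that $v$ fails to be $1$-free. Recall that $v$ is $1$-free iff at least two colors $i$ satisfy $N(v)\cap V'\cap V_i=\emptyset$, where $V'$ denotes the core. Since $\sigma$ is a proper coloring, $i=\sigma(v)$ trivially belongs to this set of ``missing'' colors. Hence $v$ is \emph{not} $1$-free precisely when, for every $j\neq\sigma(v)$, the vertex $v$ has a neighbor in $V'\cap V_j$; this is what I will establish.

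The key combinatorial input is the containment $V\setminus V'\subseteq W\cup Z$, which is immediate from the construction \textbf{CR1}--\textbf{CR3}: any vertex outside the core was either pruned in \textbf{CR1}--\textbf{CR2} (landing in $W\cup U\subseteq W\cup Z$) or was added to $Z$ during \textbf{CR3}. Since $W_{ii}=\emptyset$, intersecting with $V_j$ yields the sharper statement $V_j\setminus V'\subseteq W_j\cup Z$.

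Now fix an arbitrary $j\neq\sigma(v)$. From $v\notin S_0$, $v$ has at least one neighbor in $V_j$. From $v\notin S_1$, not all of these neighbors can lie in $W_j$, so I can pick $u\in N(v)\cap V_j$ with $u\notin W_j$. From $v\notin N(Z)$, no neighbor of $v$ lies in $Z$, so $u\notin Z$ either. The containment $V_j\setminus V'\subseteq W_j\cup Z$ then forces $u\in V'\cap V_j$, supplying the required core neighbor of color $j$. Since $j\neq\sigma(v)$ was arbitrary, $v$ is not $1$-free.

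I do not anticipate any serious obstacle: the statement is a bookkeeping exercise unpacking the definitions of $S_0$, $S_1$, and the pruning procedure \textbf{CR1}--\textbf{CR3}. The only point worth stating cleanly is the basic containment $V\setminus V'\subseteq W\cup Z$, which is essentially the defining property of the process producing $Z$.
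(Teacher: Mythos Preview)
Your argument is correct and is exactly the intended elaboration; the paper itself offers no proof beyond the phrase ``by the construction of the core,'' and your contrapositive via the containment $V\setminus V'\subseteq W\cup Z$ (which the paper states explicitly just before Lemma~\ref{Lem_size_of_W}) is the natural way to unpack it. Two cosmetic remarks: the justification you give for $V\setminus V'\subseteq W\cup Z$ is phrased a bit backwards (the process {\bf CR1}--{\bf CR3} does not compute the complement of the core but rather exhibits a set $V\setminus(W\cup Z)$ satisfying the core property, whence the containment follows by maximality of $V'$); and the hypothesis $v\notin Z$ is never actually used in your argument, which is harmless.
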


\noindent
We proceed by estimating the sizes of $S_0$, $S_1$.

\begin{lemma}\label{Lemma_S0}
With probability $1-\exp(-\Omega(n))$ we have $|S_0|\leq\frac nk$.
\end{lemma}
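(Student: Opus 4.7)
I would work in the convenient model $\cG(n,p,\sigma)$ and transfer to $G(n,m,\sigma)$ at the very end via Fact~\ref{Lemma_equivalentModel}, which only costs a factor $C\sqrt n$ and thus preserves exponentially small error probabilities. The idea is to bound $|S_0|$ by a sum over color-pairs of independent binomial random variables.

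For $i\neq j$, define $Y_{i,j}=|\{v\in V_i:N(v)\cap V_j=\emptyset\}|$. Every $v\in S_0$ lies in some $V_i$ and misses some color class $V_j$ entirely, so
$|S_0|\leq\sum_{i=1}^k\sum_{j\neq i}Y_{i,j}$.
The key structural observation is that for fixed $i,j$, the indicator events $\{N(v)\cap V_j=\emptyset\}$ as $v$ ranges over $V_i$ depend on disjoint edge-slots of $\cG(n,p,\sigma)$, hence are mutually independent. Therefore $Y_{i,j}$ has distribution $\Bin(|V_i|,q_j)$ with $q_j=(1-p)^{|V_j|}$.

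Next I would estimate $q_j$. By~(\ref{eqplantedp}) and balance of $\sigma$, $p|V_j|=d/(k-1)+o(1)$. Writing $d\geq 2k\ln k-\ln k-2$ and using $(2k-1)/(k-1)=2+1/(k-1)$ yields $p|V_j|\geq 2\ln k+(\ln k-2)/(k-1)+o(1)$, and so
$q_j\leq\exp(-p|V_j|)\leq k^{-2}\cdot\exp\bc{-(\ln k-2)/(k-1)+o(1)}\leq(1-\gamma_k)k^{-2}$
for some $\gamma_k=\Theta(\ln k/k)>0$. Consequently, summing over the $k(k-1)$ ordered pairs,
$\sum_{i}\sum_{j\neq i}\Erw Y_{i,j}\leq k(k-1)(n/k+\sqrt n)(1-\gamma_k)k^{-2}\leq(1-1/k)(1-\gamma_k)\cdot n/k+o(n/k)=n/k-\Theta(n\ln k/k^2).$

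Finally I apply the Chernoff bound \Lem~\ref{Lemma_Chernoff} to each individual $Y_{i,j}$ with deviation $t=n/(4k^4)$: since $\Erw Y_{i,j}=\Theta(n)$ (for $k$ fixed as $n\to\infty$), this gives $\pr[Y_{i,j}>\Erw Y_{i,j}+t]\leq\exp(-\Omega(n))$. A union bound over the $k(k-1)$ pairs then yields
$\sum_{i,j\neq i}Y_{i,j}\leq\sum_{i,j\neq i}\Erw Y_{i,j}+n/(4k^2)<n/k$
with probability $1-\exp(-\Omega(n))$, whence $|S_0|\leq n/k$ in $\cG(n,p,\sigma)$. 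Fact~\ref{Lemma_equivalentModel} then pushes this to $G(n,m,\sigma)$ with the same asymptotic error probability. The only delicate point is that the first-moment upper bound is only smaller than $n/k$ by a factor $(1-\Theta(\ln k/k))$, so I must ensure that the Chernoff fluctuation $t$ is polynomially smaller than this gap---which it is, since $n/k^2\gg n/k^4$.
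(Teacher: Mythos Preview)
Your proof is correct and follows essentially the same strategy as the paper's: bound the probability that a single vertex lies in $S_0$ by roughly $(k-1)k^{-2}$, exploit the independence of the relevant events across vertices to get a binomial, and apply the Chernoff bound. The paper's decomposition is slightly cleaner, however: rather than over-counting via $|S_0|\leq\sum_{i}\sum_{j\neq i}Y_{i,j}$, it observes directly that the events $\{v\in S_0\}$ are mutually independent for $v\in V_i$ (since they depend on disjoint edge-sets), so $|S_0\cap V_i|$ itself is binomial with mean at most $|V_i|(k-1)k^{-2}<n/k^2$, and then Chernoff plus a union bound over $i$ gives $|S_0|\leq k\cdot n/k^2=n/k$. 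This sidesteps your careful bookkeeping of the $\gamma_k=\Theta(\ln k/k)$ slack and the comparison of the aggregated Chernoff fluctuation $k(k-1)t\approx n/(4k^2)$ against that slack. Incidentally, your closing remark ``since $n/k^2\gg n/k^4$'' slightly understates what you need: after summing over the $k(k-1)$ pairs the total fluctuation is $\approx n/(4k^2)$, not $n/(4k^4)$, so the relevant inequality is $n/(4k^2)<\Theta(n\ln k/k^2)$---still true, but only by a $\ln k$ factor rather than a polynomial one.
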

\begin{proof}
Consider a vertex $v\in V_i$.
The number $e(v,V_j)$ of neighbors of $V_i$ in $V_j$ has distribution $\Bin(|V_j|,p)$.
Since $\sigma$ is balanced, (\ref{eqplantedp}) yields $\pr\brk{e(v,V_j)=0}\leq(1-p)^{|V_j|}\leq k^{-2}$.
Thus, by the union bound,
	\begin{equation}\label{eqLemma_S0}
	\pr\brk{v\in S_0}\leq\sum_{j}\pr\brk{e(v,V_j)=0}\leq (k-1)k^{-2}.
	\end{equation}
Because the events $\{v\in S_0\}$ are mutually independent for all $v\in V_i$, the Chernoff bound and~(\ref{eqLemma_S0}) yield
	$\pr\brk{|S_0\cap V_i|>n/k^2}\leq\exp(-\Omega(n)).$
Taking the union bound over $i$ completes the proof.
\end{proof}

\begin{lemma}\label{Lemma_S1}
With probability $1-\exp(-\Omega(n))$ we have $|S_1|\leq\tilde O_k(k^{-2})n$.
\end{lemma}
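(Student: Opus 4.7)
The plan is to bound $\Erw|S_1|$ first and then upgrade to a high-probability statement. Using $S_1\subseteq\bigcup_j S_1^{(j)}$ with $S_1^{(j)}=\{v\in V\setminus V_j:\emptyset\neq N(v)\cap V_j\subseteq W_j\}$, it suffices to show that $\pr[v\in S_1^{(j)}]\leq\tilde O_k(k^{-3})+\exp(-\Omega(n))$ for every fixed $i\neq j$ and every $v\in V_i$, whence $\Erw|S_1|\leq\tilde O_k(k^{-2})n$ by summing over $v$, $i$ and $j$.

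The key is to decouple the dependence of $W_{ji}$ on the edges incident to $v$. Set $\tilde W_{ji}(v)=\{w\in V_j:e(w,V_i\setminus\{v\})\leq 298\}$ and $\tilde W_j(v)=\tilde W_{ji}(v)\cup\bigcup_{l\neq i,j}W_{jl}$. Since $e(w,V_i)=e(w,V_i\setminus\{v\})+1$ whenever $w\in N(v)\cap V_j$, we have the equivalence $w\in W_{ji}\Leftrightarrow w\in\tilde W_{ji}(v)$ for every such $w$, and hence $N(v)\cap V_j\subseteq W_j$ iff $N(v)\cap V_j\subseteq\tilde W_j(v)$. Crucially, $\tilde W_j(v)$ is measurable with respect to the edges \emph{not} incident to $v$, and the Chernoff computation of Lemma~\ref{Lem_size_of_W} adapts verbatim to yield $|\tilde W_j(v)|\leq\tilde O_k(k^{-2})n$ with probability $1-\exp(-\Omega(n))$. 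Conditional on $\tilde W_j(v)$, the edges from $v$ into $V_j$ are independent Bernoulli$(p)$, so
\[
\pr[v\in S_1^{(j)}\mid\tilde W_j(v)]=\bigl(1-(1-p)^{|\tilde W_j(v)|}\bigr)(1-p)^{|V_j|-|\tilde W_j(v)|}\leq p\,|\tilde W_j(v)|\cdot\exp\bigl(-p(|V_j|-|\tilde W_j(v)|)\bigr).
\]
Plugging in $p\sim \tfrac{kd}{(k-1)n}$ with $d\geq 2k\ln k-\ln k-2$ and $|V_j|\sim n/k$, the exponential factor is $k^{-2}(1+o_k(1))$ while $p|\tilde W_j(v)|\leq\tilde O_k(k^{-1})$, yielding the claimed per-vertex bound.

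For the high-probability estimate, I would apply an edge-exposure martingale to $|S_1|$. After conditioning on the high-probability event that $\cG(n,p,\sigma)$ has maximum degree $O(\log n)$ and that the size bounds of Lemma~\ref{Lem_size_of_W} hold for every $W_{lm}$, flipping a single edge $\{u,w\}$ alters the membership in $W$ of at most two vertices, and each such change can flip the $S_1$-status only of vertices in $N(u)\cup N(w)$. Hence $|S_1|$ is $O(\log n)$-Lipschitz on this event, and Azuma's inequality combined with the method of typical bounded differences gives $\pr[|S_1|>2\Erw|S_1|]\leq\exp(-\Omega(n/\mathrm{polylog}(n)))$, which is amply sufficient when composed with the $\sqrt n$ loss in Fact~\ref{Lemma_equivalentModel}.

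The main obstacle is the expected-value bound itself: the target per-vertex probability $\tilde O_k(k^{-3})$ is tight, and the naive estimate that merely uses that $v$ has \emph{some} neighbor in $W_j$ only yields $\tilde O_k(k^{-1})$, well short of what the lemma demands. The auxiliary decoupled set $\tilde W_j(v)$ is what allows us to exploit simultaneously the rareness of $W_j$ inside $V_j$ and the rareness of $v$'s $\Theta_k(\ln k)$ neighbors in $V_j$ \emph{all} landing in that rare set; the tightness of the $k^{-2}$ factor from $(1-p)^{n/k}$ is precisely what enforces the assumption $d\geq 2k\ln k-\ln k-2$.
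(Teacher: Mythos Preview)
Your first-moment computation is correct and rather clean. The decoupling device $\tilde W_{ji}(v)=\{w\in V_j:e(w,V_i\setminus\{v\})\le 298\}$ really does make $\tilde W_j(v)$ independent of the edges incident to $v$, and your formula
\[
\pr\bigl[v\in S_1^{(j)}\mid\tilde W_j(v)\bigr]=\bigl(1-(1-p)^{|\tilde W_j(v)|}\bigr)(1-p)^{|V_j|-|\tilde W_j(v)|}\le \tilde O_k(k^{-1})\cdot k^{-2}
\]
is sound; summing over $v$ and $j$ gives $\Erw|S_1|\le\tilde O_k(k^{-2})n$ as required.

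The concentration step, however, is underspecified. In $\cG(n,p,\sigma)$ the natural ``edge-exposure'' martingale reveals the $\Theta(n^2)$ potential edges; with a Lipschitz constant $c=O(\log n)$ per coordinate, McDiarmid gives only $\exp(-t^2/(n^2\log^2 n))$, which is useless for $t=\Theta(n)$. Warnke-style typical bounded differences does not rescue this directly either, because the sum $\sum_i c_i^2$ is still $\Theta(n^2\log^2 n)$. What you actually need is either a Bernstein-type bounded-differences inequality that exploits $\sum_i c_i^2\,\Var X_i=\Theta(n\log^3 n)$, or vertex-exposure (length $n$, Lipschitz $O(\log^2 n)$), or to pass to the fixed-edge model and expose the $m=\Theta(n\ln k)$ edges. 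Any of these yields $\exp(-\Omega(n/\mathrm{polylog}\,n))$, which, as you note, suffices for the application but does not literally prove the $\exp(-\Omega(n))$ stated in the lemma.

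For comparison, the paper takes a different route that avoids the per-vertex decoupling and the separate concentration step altogether. It first bounds globally the edge counts $b=e(V_i,V_j\setminus W_j)$ and $r=e(V_i,W_j)$ (handling the dependence of $W_{ji}$ on $V_i$--$V_j$ edges deterministically via $e(V_i,W_{ji})\le 300|W_{ji}|$), and then conditions on $b,r$ and runs a balls-and-bins model: the $b+r=\Theta(n\ln k/k)$ edge-endpoints in $V_i$ are the balls, and $\cX_{ij}$ counts bins receiving a red ball but no blue one. A hypergeometric estimate gives $\Erw|\cX_{ij}|=\tilde O_k(k^{-4})n$, and Azuma along the ball placements (length $\Theta(n\ln k/k)$, Lipschitz $1$) yields the full $\exp(-\Omega(n))$. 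Your approach is conceptually more direct for the expectation; the paper's buys a shorter martingale and hence the sharper tail.
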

\begin{proof}
Fix $i\neq j$.
The total number $e(V_i,V_j)$ of edges joining $V_i$ and $V_j$ in $\cG(n,p,\sigma)$ has distribution $\Bin(|V_i\times V_j|,p)$.
Because $\sigma$ is balanced, the Chernoff bound yields
	\begin{equation}\label{eqLemma_S1_1}
	\pr\brk{e(V_i,V_j)\geq \frac12k^{-2}n^2p}\geq1-\exp(-\Omega(n)).
	\end{equation}
In addition, we claim that the number $e(V_i,W_j)$ of $V_i$-$W_j$-edges satisfies
	\begin{equation}\label{eqLemma_S1_2}
	\pr\brk{e(V_i,W_j)\leq\tilde O_k(k^{-3})n^2p}\geq1-\exp(-\Omega(n)).
	\end{equation}
Indeed, by \Lem~\ref{Lem_size_of_W} we may assume that $|W_j\setminus W_{ji}|\leq\tilde O_k(k^{-2})n$.
By construction, the set $W_j\setminus W_{ji}$ is independent of the random bipartite subgraph
of $\cG(n,p,\sigma)$ consisting of the $V_i$-$V_j$-edges.
Hence, the number $e(V_i,W_j\setminus W_{ji})$ of edges between $V_i$ and $W_j\setminus W_{ji}$
has distribution $\Bin(|V_i\times(W_j\setminus W_{ji}),p)$.
Given the upper bound on $|W_j\setminus W_{ji}|$, the Chernoff bound thus implies that
	\begin{equation}\label{eqLemma_S1_3}
	\pr\brk{e(V_i,W_j \setminus W_{ji})\leq\tilde O_k(k^{-3})n^2p}\geq1-\exp(-\Omega(n)).
	\end{equation}
Further, by construction the number of $V_i$-$W_{ji}$-edges is bounded by $300|W_{ji}|$.
Since by \Lem~\ref{Lem_size_of_W} we may assume that $|W_{ji}|\leq n\tilde O_k(k^{-3})$,
(\ref{eqLemma_S1_3}) implies~(\ref{eqLemma_S1_2}).

Let us condition on the event $\cA$ that
	$b=e(V_i,V_j\setminus W_j)\geq\frac13k^{-2}n^2p$ and $r=e(V_i,W_j)\leq O_k(k^{-3})\leq n^2p$.
Let us think of the vertices in $V_i$ as bins, and of the $V_i$-$V_j\setminus V_j$ edges as balls that are tossed independently and uniformly into the bins.
More precisely, we think of the $V_i$-$V_j\setminus W_j$ edges as blue balls, and of the $V_i$-$W_j$-edges as red balls.
Let $\cX_{ij}$ be the number of bins $v\in V_i$ that receive at least one ball but that do not receive a blue ball.
Now, given that $v$ receives $l$ balls in total, the probability that all the balls it receives are red is equal to the probability that
a hypergeometric random variable with parameters $l,b,r$ takes the value $l$.
Therefore, summing over all $l\geq1$ and using our conditions on $b,r$, we see that
	$\pr\brk{v\in\cX_{ij}}\leq\tilde O_k(k^{-3}).$
Because $\sigma$ is balanced, we thus obtain
	\begin{equation}\label{eqLemma_S1_4}
	\Erw[|\cX_{ij}|\,|\,\cA]\leq \frac nk\cdot O_k(k^{-3}).
	\end{equation}
In fact, because the balls are tossed into the bins independently of each other, Azuma's inequality implies together with~(\ref{eqLemma_S1_4}) that
	\begin{equation}\label{eqLemma_S1_5}
	\pr[|\cX_{ij}|\leq\tilde O_k(k^{-4})n\,|\,\cA]\geq 1-\exp(-\Omega(n)).
	\end{equation}
Since $\pr[\cA]\geq1-\exp(-\Omega(n))$ by~(\ref{eqLemma_S1_1}) and~(\ref{eqLemma_S1_2}),
(\ref{eqLemma_S1_5}) yields that $\pr[|\cX_{ij}|\leq\tilde O_k(k^{-4})n]\geq 1-\exp(-\Omega(n))$.
Taking the union bound over $i,j$ completes the proof because
	$S_1\subset\cup_{i,j}\cX_{ij}$.
\end{proof}

Fact~\ref{Fact_1-free} implies together with \Lem~\ref{Lem_size_of_Z}, \Cor~\ref{Cor_expansion_of_Z}, \Lem~\ref{Lemma_S0} and \Lem~\ref{Lemma_S1}
the desired bound on the number of $1$-free vertices.
To bound the number of $2$-free variables, we need

\begin{lemma}\label{Lemma_2-free}
Let $i,j,l\in\brk k$ be distinct.
With probability at least $1-\exp(-\Omega(n))$ there are no more than $n\tilde O_k(k^{-5})$ vertices $v\in V_i$
such that $e(v,V_j)\leq100$ and $e(v,V_l)\leq100$.
\end{lemma}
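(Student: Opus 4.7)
The plan is to compute the probability that a single vertex $v\in V_i$ has at most $100$ neighbors in each of $V_j,V_l$, then use the fact that this event is a function of the independent edges in $\cG(n,p,\sigma)$ to apply a Chernoff-type concentration bound.

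First, I would fix $v\in V_i$ and observe that, in $\cG(n,p,\sigma)$, the random variables $e(v,V_j)$ and $e(v,V_l)$ are \emph{independent}, because they depend on disjoint sets of potential edges (those between $V_i\times V_j$ and those between $V_i\times V_l$). Each has distribution $\Bin(|V_j|,p)$ and $\Bin(|V_l|,p)$ respectively, and since $\sigma$ is balanced, formula~(\ref{eqplantedp}) gives $\mu:=|V_j|p = |V_l|p = (1+o(1))\frac{d}{k-1} \geq 2\ln k - O_k(1)$. A standard Poisson-tail estimate (essentially Stirling's formula, or \Lem~\ref{Lemma_Chernoff} combined with bounding the dominant term at index $100$) yields
\[
\pr[\Bin(|V_j|,p)\leq 100]\leq \eul^{-\mu}\sum_{t=0}^{100}\frac{\mu^t}{t!}\leq \frac{(2\ln k)^{100}}{100!}\,\eul^{-\mu}(1+o_k(1))=\tilde O_k(k^{-2}).
\]
Independence of $e(v,V_j)$ and $e(v,V_l)$ then gives
\[
\pr[e(v,V_j)\leq 100\text{ and }e(v,V_l)\leq 100]\leq \tilde O_k(k^{-4}).
\]

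Second, let $X$ denote the number of vertices $v\in V_i$ satisfying both conditions. By linearity of expectation and the balance $|V_i|=(1+o(1))n/k$, we have $\Erw[X]\leq \tilde O_k(k^{-5})\cdot n$.

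Third, because edges in $\cG(n,p,\sigma)$ appear independently and because for distinct $v,v'\in V_i$ the events $\{e(v,V_j)\leq100,\,e(v,V_l)\leq100\}$ depend on disjoint sets of edges (those touching $v$ versus those touching $v'$), the random variable $X$ is a sum of mutually independent indicators, hence binomially distributed. The Chernoff bound (\Lem~\ref{Lemma_Chernoff}), applied with deviation $t$ of order, say, $nk^{-5}\ln^{C}k$ for a suitable constant $C$, then yields $\pr[X>n\cdot\tilde O_k(k^{-5})]\leq\exp(-\Omega(n))$, as claimed. No step here is really the main obstacle — the only thing one needs to be careful about is the independence of the two counts $e(v,V_j)$ and $e(v,V_l)$, which follows immediately from the product structure of $\cG(n,p,\sigma)$ and the fact that $V_j\cap V_l=\emptyset$.
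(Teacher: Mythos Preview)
Your argument is correct and follows essentially the same route as the paper: observe that $e(v,V_j)$ and $e(v,V_l)$ are independent binomials with mean $\sim 2\ln k$, so the joint event has probability $\tilde O_k(k^{-4})$; then note that the count $X$ over $v\in V_i$ is itself binomial and apply the Chernoff bound. The paper's proof is terser but structurally identical.
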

\begin{proof}
For any $v$, $e(v,V_j)$, $e(v,V_l)$ are independent binomial variables.
Because $\sigma$ is balanced, their means are $(1+o(1))\frac nkp$.
Hence, (\ref{eqplantedp}) shows that $\pr\brk{e(v,V_j),e(v,V_l)\leq100}\leq\tilde O_k(k^{-4})$.
Consequently, the expected number of $v\in V_i$ with $e(v,V_j),e(v,V_l)\leq100$ is $nO_k(k^{-5})$.
In fact, this is a binomial random variable due to the independence of the edges in $\cG(n,p,\sigma)$.
Thus, the assertion follows from the Chernoff bound.
\end{proof}

Now, let $S_2$ be the set of all $v\in V_i$ such that there exist distinct $j,l\in[k]\setminus\cbc i$ such that
$e(v,V_j)\leq100$ and $e(v,V_l)\leq100$.
By construction, if $v$ is $2$-free, then $v\in S_2\cup Z\cup N(Z)$ (note that $U\subset Z$).
Thus, the desired bound on the number of $2$-free vertices follows  from
\Lem~\ref{Lem_size_of_Z}, \Cor~\ref{Cor_expansion_of_Z} and \Lem~\ref{Lemma_2-free}.
\qed

\end{document}